    \newtheorem{theorem}{Theorem}[section]
    \newtheorem{lemma}[theorem]{Lemma}
    \newtheorem{corollary}[theorem]{Corollary}
    \newtheorem{proposition}[theorem]{Proposition}
    \newtheorem{definition}[theorem]{Definition}
\newcommand{\myproof}[1]{
\ifthenelse{\boolean{withproofs}}{#1}{}}
\newcommand{\withproofs}[1]{
\ifthenelse{\boolean{withproofs}}{#1}{}}
\newcommand{\withoutproofs}[1]{
\ifthenelse{\boolean{withproofs}}{}{#1}}
\newcommand{\tm}{t}
\newcommand{\tmtwo}{u}
\newcommand{\tmthree}{r}
\newcommand{\tmfour}{w}
\newcommand{\tmfive}{s}
\newcommand{\var}{x}
\newcommand{\vartwo}{y}
\newcommand{\varthree}{z}
\newcommand{\Rew}[1]{\rightarrow_{#1}}
\renewcommand{\to}{\Rew{}}
\newcommand{\towh}{\Rew{wh}}
\newcommand{\symfont}[1]{\mathsf{#1}}
\newcommand{\varsym}{{\symfont{var}}}
\newcommand{\ctxholep}[1]{[#1]}
\newcommand{\ctxhole}{\ctxholep{\cdot}}
\newcommand{\ctx}{C}
\newcommand{\ctxtwo}{D}
\newcommand{\ctxthree}{E}
\newcommand{\ctxp}[1]{\ctx\ctxholep{#1}}
\newcommand{\hctx}{H}
\newcommand{\hctxtwo}{K}
\newcommand{\hctxthree}{G}
\newcommand{\hctxp}[1]{\hctx\ctxholep{#1}}
\newcommand{\hctxtwop}[1]{\hctxtwo\ctxholep{#1}}
\newcommand{\hctxthreep}[1]{\hctxthree\ctxholep{#1}}
\newcommand{\nbvctxtwo}[1]{\nbvctxtwo{#1}}
\newcommand{\defeq}{:=}
\newcommand{\eqdef}{=:}
\newcommand{\grameq}{::=}
\newcommand{\isub}[2]{\{#1/#2\}}
\newcommand{\llbrace}{\{ \kern -0.27em \vert}
\newcommand{\rrbrace}{\vert \kern -0.27em \}}
\newcommand{\grammarpipe}{\mathrel{\big |}}
\renewcommand{\l}{\lambda}
\newcommand{\ie}{\textit{i.e.}\xspace}
\newcommand{\eg}{\textit{e.g.}\xspace}
\newcommand{\ih}{\textit{i.h.}\xspace}
\newcommand{\fv}[1]{\symfont{fv}(#1)}
\newcommand{\red}[1]{{\color{red} {#1}}}
\newcommand{\blue}[1]{{\color{blue} {#1}}}
\newcommand{\ignore}[1]{}
\newcommand{\myinput}[1]{\ifthenelse{\boolean{withimages}}{\input{#1}}{}}
\newcommand{\nat}{\mathbb{N}}
\newcommand{\size}[1]{|#1|}
\newcommand{\sizeparam}[2]{|#1|_{#2}}
\newcommand{\statetwo}{{s'}}
\newcommand{\statethree}{s''}
\newcounter{numberone}
\newcounter{numbertwo}
\newcommand{\trpos}{logged position\xspace}
\newcommand{\trposs}{logged positions\xspace}
\newcommand{\TrPoss}{Logged Positions\xspace}
\newcommand{\Log}{Log\xspace}
\renewcommand{\ctxholep}[1]{\langle #1\rangle}
\newcommand{\ctxtwop}[1]{\ctxtwo\ctxholep{#1}}
\newcommand{\ctxthreep}[1]{\ctxthree\ctxholep{#1}}
\newcommand{\dom}[1]{dom(#1)}
\newcommand{\reflemma}[1]{Lemma~\ref{l:#1}}
\newcommand{\refprop}[1]{Prop.~\ref{prop:#1}}
\newcommand{\refpropp}[2]{Prop.~\ref{prop:#1}.\ref{p:#1-#2}}
\newcommand{\refsect}[1]{Sect.~\ref{sect:#1}}
\newcommand{\refthm}[1]{Theorem~\ref{thm:#1}}
\newcommand{\reffig}[1]{Fig.~\ref{fig:#1}}
\newcommand{\refapp}[1]{Appendix~\ref{sect:#1}}
\renewcommand{\isub}[2]{\{#1{\shortleftarrow}#2\}}
\newcommand{\resm}{\psym}
\renewcommand{\resm}{\bullet}
\newcommand{\lpos}{p}
\renewcommand{\lpos}{l}
\newcommand{\sizelpos}[1]{\size{#1}_{\lpos}}
\newcommand{\upp}{\blue{\uparrow}}
\newcommand{\downp}{\red{\downarrow}}
\newcommand{\uppt}{\red{\uparrow}}
\newcommand{\downpt}{\blue{\downarrow}}
\newcommand{\tlog}{L}
\newcommand{\tlogtwo}{\tlog'}
\newcommand{\tlogn}{L_n}
\newcommand{\tape}{T}
\newcommand{\tapetwo}{\tape'}
\newcommand{\pol}{d}
\newcommand{\poltwo}{\pol'}
\newcommand{\run}{\pi}
\newcommand{\runtwo}{\sigma}
\newcommand{\relf}{{\blacktriangleright}}
\newcommand{\nopolstate}[5]{(#1,#2,#4,#3,#5)}
\newcommand{\dstate}[4]{(\red{\underline{#1}},#2,#4,#3)}
\newcommand{\ustate}[4]{(#1,\blue{\underline{#2}},#4,#3)}
\newcommand{\dstatetab}[4]{\red{\underline{#1}} & #2 & #4 & #3 }
\newcommand{\ustatetab}[4]{#1 & \blue{\underline{#2}} & #4 & #3 }
\newcommand{\ndstatetab}[5]{\red{\underline{#1}} & #2 & #4 & #3 & #5}
\newcommand{\nustatetab}[5]{#1 & \blue{\underline{#2}} & #4 & #3 & #5}
\newcommand{\cons}{{\cdot}}
\newcommand{\mach}{\mathrm{M}}
\newcommand{\IAM}{IAM\xspace}
\newcommand{\LIAM}{$\lambda$IAM\xspace}
\newcommand{\SIAM}{SIAM\xspace}
\newcommand{\TIAM}{TIAM\xspace}
\newcommand{\KAM}{KAM\xspace}
\newcommand{\IAMold}{\mathrm{IAM}}
\newcommand{\tomachhole}[1]{\rightarrow_{#1}}
\newcommand{\tomach}{\tomachhole{}}
\newcommand{\btsym}{\mathsf{bt}}
\newcommand{\tomachdotone}{\tomachhole{\resm 1}}
\newcommand{\tomachdottwo}{\tomachhole{\resm 2}}
\newcommand{\tomachvar}{\tomachhole{\varsym}}
\newcommand{\tomachbttwo}{\tomachhole{\btsym 2}}
\newcommand{\iamdap}{\tomachdotone}
\newcommand{\iamdlamone}{\tomachdottwo}
\newcommand{\iamdvar}{\tomachvar}
\newcommand{\iamdlamtwo}{\tomachbttwo}
\newcommand{\argsym}{\mathsf{arg}}
\newcommand{\tomachdotthree}{\tomachhole{\resm 3}}
\newcommand{\tomachdotfour}{\tomachhole{\resm 4}}
\newcommand{\tomacharg}{\tomachhole{\argsym}}
\newcommand{\tomachbtone}{\tomachhole{\btsym 1}}
\newcommand{\iamuapltwo}{\tomachdotthree}
\newcommand{\iamulam}{\tomachdotfour}
\newcommand{\iamuaplone}{\tomacharg}
\newcommand{\iamuapr}{\tomachbtone}
\newcommand{\toliam}{\rightarrow_{\lambda\textsc{IAM}}}
\newcommand{\totiam}{\rightarrow_{\textsc{TIAM}}}
\newcommand{\stempty}{\epsilon}
\newcommand{\sizee}[1]{\sizelpos{#1}}
\newcommand{\la}[1]{\lambda #1.}
\newcommand{\sem}[2]{\llbracket#1\rrbracket_{#2}}
\newcommand{\exstates}{\mathcal{E}}
\newcommand{\midd}{\; \; \mbox{\Large{$\mid$}}\;\;}
\newcommand{\spacem}[1]{|#1|_\mathtt{s}}
\newcommand{\ccallbn}{Closed Call-by-Name\xspace}
\newcommand{\ccbn}{Closed CbN\xspace}
\newcommand{\mset}[1]{[#1]}
\newcommand{\emmset}{[\cdot]}
\newcommand{\initty}{\star}
\newcommand{\linty}{A}
\newcommand{\lintytwo}{\linty'}
\newcommand{\lintythree}{\linty''}
\newcommand{\gty}{G}
\newcommand{\gtytwo}{\gty'}
\newcommand{\gtyctx}{\mathbb{\gty}}
\newcommand{\gtyctxp}[1]{\gtyctx\ctxholep{#1}}
\newcommand{\tty}{T}
\newcommand{\ttytwo}{\tty'}
\newcommand{\ttyctx}{\mathbb{\tty}}
\newcommand{\ttyctxp}[1]{\ttyctx\ctxholep{#1}}
\newcommand{\ltyctx}{\mathbb{\linty}}
\newcommand{\ltyctxp}[1]{\ltyctx\ctxholep{#1}}
\newcommand{\ltyctxtwo}{\ltyctx'}
\newcommand{\ltyctxtwop}[1]{\ltyctxtwo\ctxholep{#1}}
\newcommand{\ltyctxthree}{\ltyctx''}
\newcommand{\ltyctxthreep}[1]{\ltyctxthree\ctxholep{#1}}
\newcommand{\leafctx}{\mathbb{L}}
\newcommand{\leafctxp}[1]{\leafctx\ctxholep{#1}}
\newcommand{\arr}[2]{#1\rightarrow #2}
\newcommand{\tye}{\Gamma}
\newcommand{\tyetwo}{\Delta}
\newcommand{\tjudg}[3]{#1\vdash #2:#3}
\newcommand{\treesyst}{\textsc{T}}
\newcommand{\ctjudg}[2]{\vdash #1:#2}
\newcommand{\wtjudgone}[1]{\vdash^{\textcolor{violet}{#1}}}
\newcommand{\wtjudg}[4]{#1\stackrel{\textcolor{violet}{#2}}{\vdash}#3:#4}
\newcommand{\tjudgw}[4]{#1\vdash^{\textcolor{violet}{#2}} #3:#4}
\newcommand{\tyvar}{\textsc{T-Var}}
\newcommand{\tylamstar}{\tylam_\star}
\newcommand{\tylam}{\textsc{T-}\lambda}
\newcommand{\tyapp}{\textsc{T-@}}
\newcommand{\tymany}{\textsc{T-many}}
\newcommand{\tymanys}{\textsc{T-m}}
\newcommand{\tynone}{\textsc{T-none}}
\newcommand{\ttyprom}{\treesyst_{\textsc{many}}}
\newcommand{\tyd}{\pi}
\newcommand{\tydtwo}{\tyd'}
\newcommand{\tydthree}{\tyd''}
\newcommand{\pof}{\;\triangleright}
\newcommand{\tsys}[1]{\mathsf{T}_{#1}}
\newcommand{\DiPref}[1]{\mathsf{DiPref}(#1)}
\newcommand{\ruleoc}{J}
\newcommand{\extr}[1]{\mathsf{ext}(#1)}
\newcommand{\extsym}{\symfont{ext}}
\newcommand{\bisimtypes}{\simeq_{\extsym}}
\newcommand{\etape}[1]{\tape_{\extsym}(#1)}
\newcommand{\etapeaux}[2]{\tape_{\extsym}^{#2}(#1)}
\newcommand{\etapeauxs}[1]{\tape_{\extsym}^{\state}(#1)}
\newcommand{\elog}[1]{\tlog_{\extsym}(#1)}
\newcommand{\elpos}[1]{\lpos_{\extsym}(#1)}
\newcommand{\estate}[1]{\state_{\extsym}(#1)}
\newcommand{\relfrdx}{\relf_\mathsf{rdx}}
	\newcommand{\relfbody}{\relf_\mathsf{body}}
	\newcommand{\relfarg}{\relf_\mathsf{arg}}
	\newcommand{\relfext}{\relf_\mathsf{ext}}
\newcommand{\focus}{f}
\newcommand\mydots{\hbox to .6em{.\hss.}}
\newcommand{\ssize}[1]{\norm{#1}}
\newcommand{\leaves}[1]{#1^{\ell}}
\newcommand{\flatt}[1]{\underline{#1}}
\newcommand{\bsize}[1]{\sizeparam{#1}{{\symfont b}}}
\newcommand{\ttm}[1]{\text{TTMeas}(#1)}
\renewcommand{\ttm}[1]{\bsize{#1}}
\newcommand{\tlm}[1]{\text{TLMeas}(#1)}
\renewcommand{\tlm}[1]{\bsize{#1}}
\newcommand{\lm}[1]{\lambda \text{Meas}(#1)}
\renewcommand{\lm}[1]{\sizeparam{#1}{\mathsf{sp}}}
\newcommand{\indet}{\mathsf{X}}
\newcommand{\tyy}{\mathbb{Y}}
\newcommand{\tyF}{\mathbb{F}}
\newcommand{\tyt}{\mathbb{T}}
\newcommand{\tyl}{{\overline{\linty}}}
\newcommand{\states}[1]{{\symfont{states}}(#1)}
\newcommand{\weight}{w}
\newcommand{\weighttwo}{v}
\protected\def\tikz@nonactivecolon{\ifmmode\mathrel{\mathop\ordinarycolon}\else:\fi} 
\renewcommand{\run}{\rho}
\begin{document}
%
\title{The Space of Interaction}

\author{\IEEEauthorblockN{Beniamino Accattoli}
\IEEEauthorblockA{Inria \& \'Ecole Polytechnique}
\and
\IEEEauthorblockN{Ugo Dal Lago}
\IEEEauthorblockA{Universit\`a di Bologna \& Inria}
\and
\IEEEauthorblockN{Gabriele Vanoni}
\IEEEauthorblockA{Universit\`a di Bologna \& Inria}}


%

\maketitle

\begin{abstract}
The space complexity of functional programs is not well understood. In 
particular, traditional implementation techniques are tailored to time 
efficiency, and space efficiency induces time \emph{in}efficiencies, as it 
prefers 
re-computing to saving. 
Girard's geometry of interaction underlies an alternative approach based on the 
interaction abstract machine (\IAM), claimed as space efficient in the 
literature. It has also been conjectured to provide a reasonable notion of 
space for the $\l$-calculus, but such an important result seems to be elusive.

In this paper we introduce a new intersection type system precisely measuring 
the 
space consumption of the \IAM on the typed term. Intersection types have been 
repeatedly used to measure time, which they achieve by dropping idempotency, 
turning intersections into \emph{multisets}. Here we show that the space 
consumption of the 
\IAM 
is connected to a further structural modification, turning multisets into 
\emph{trees}.
Tree intersection types lead to a finer understanding of some space complexity 
results from the literature. They also shed new light on the conjecture 
about reasonable space: we show that the usual way of encoding Turing machines 
into the $\l$-calculus \emph{cannot} be used to prove that the space of the 
\IAM is a 
reasonable cost model.
\end{abstract}


%
\IEEEpeerreviewmaketitle

\section{Introduction}

Type systems are a form of compositional abstraction in which the
behavior of programs, particularly higher-order programs, is described
by \emph{types}, that is, specifications of the kinds of objects programs 
expect in input and are supposed 
to
produce as output. Typed programs usually \emph{cannot go wrong}, as types 
guarantee the absence of run-time 
errors~\cite{DBLP:books/crc/tucker97/Cardelli97}. Some type systems ensure also 
other
properties such as termination or (various forms of) security. 

\paragraph*{Intersection Types and Time Complexity} Among the many existing type systems one can define on top of the 
pure,
untyped, $\lambda$-calculus~\cite{barendregt_lambda_1984,krivine1993lambda}, 
\emph{intersection types} have the peculiar
feature of \emph{characterizing} a property rather than merely
\emph{guaranteeing} it. This, in particular, is known to hold for
various forms of termination properties since the pioneering work by
Coppo and Dezani in 1978~\cite{DBLP:journals/aml/CoppoD78}. 
More recently,
a variant of intersection types has been proved to reflect 
\emph{quantitative}
properties of terms, such as the number of $\beta$-steps to normal form, or the number of steps of the Krivine abstract 
machine~\cite{krivine_call-by-name_2007} (shortened to \KAM), as discovered by 
de
Carvalho~\cite{Carvalho07,deCarvalho18,DBLP:journals/tcs/CarvalhoPF11}. The variant requires dropping idempotency of the
intersection operator, therefore considering $\linty \wedge \linty$ as 
\emph{not} equivalent to $\linty$, and ultimately making the type system 
strongly related 
to the modeling of resources as in linear logic. Such types are sometimes 
called \emph{multi 
types}, as intersections become multisets. 
In the last few years, de Carvalho's results have been dissected and 
generalized in 
various 
ways 
\cite{DBLP:journals/jfp/AccattoliGK20,DBLP:conf/aplas/AccattoliG18,DBLP:conf/esop/AccattoliGL19,DBLP:conf/types/AlvesKV19,DBLP:conf/flops/BucciarelliKRV20,DBLP:conf/lics/KesnerV20,DBLP:journals/pacmpl/LagoFR21}.
 In particular, research on the topic received renewed attention after some 
recent progress in the study 
of reasonable cost models for the $\l$-calculus by Accattoli and Dal Lago \cite{accattoli_leftmost-outermost_2016} made 
evident that counting $\beta$-steps gives rise to a reasonable cost model for 
time.

\paragraph*{Sharing, Time, and Space} The mainstream way of implementing the 
$\l$-calculus, at work also in the \KAM, consists of mimicking $\beta$-steps 
while replacing meta-level substitutions with a finer approach based on 
environments. Environments are a way of realizing a form of \emph{subterm 
sharing}, 
which is known to be mandatory in order to implement $\beta$-reduction in a time-efficient way. Traditional environment-based machines do not seem to be the right tool for space-efficiency. The reason is that these machines create a sharing 
annotation---counting as a unit of space---for \emph{every} $\beta$-step---that 
is, for every abstract unit of time. They never garbage collect, because garbage collection is postponable, and ignoring it is safe for time, as its cost is negligible (it is polynomial, if not linear, in the number of $\beta$-steps). 
As a consequence, their use of space is 
linear in their time usage, which is the worst possible use of space. To study 
space efficiency, then, there are two possible approaches: either refining 
environment machines with an explicit treatment of garbage collection, or 
exploring alternative execution schemas. In this work, we follow the second 
approach. 



\paragraph*{Evaluating Without Sharing} Beyond the mainstream approach, there is an alternative execution 
schema for $\l$-terms, rooted in Girard's Geometry of 
Interaction~\cite{GIRARD1989221} and Abramsky, Jagadeesan, and Malacaria game semantics \cite{DBLP:journals/iandc/AbramskyJM00}, which does \emph{not} rely on sharing. The 
\emph{interaction abstract machine} (shortened to \IAM), first studied by Mackie~\cite{mackie_geometry_1995} and Danos \& Regnier~\cite{danos_regnier_1995,DBLP:conf/lics/DanosHR96}, evaluates a 
$\l$-term without tracing every $\beta$-step, thus disentangling time and space. As it is the case for space-efficient 
Turing machines, the \IAM sometimes repeats computations to 
retrieve unsaved intermediate results---thus treading time for space. 
Environments are replaced by a \emph{token}, a 
 data structure where the machine saves minimal information, and the repetition of computations is realized via a 
sophisticated backtracking mechanism (unrelated to backtracking as in control 
operators or classical 
logic). The minimal information in the token amounts to tracing the 
points 
along the execution history where the \IAM
may need to backtrack. The entries of the token are \emph{trees} of pointers called \emph{logged positions} in a recent presentation of the \IAM by Accattoli, Dal Lago, and Vanoni \cite{IamPPDPtoAppear} (called instead \emph{exponential signatures} by Danos \& Regnier~\cite{danos_regnier_1995}). Everything else is ignored, in particular the token does not record encountered 
$\beta$-redexes.

Further evidence of the relavance for space of the interaction paradigm, comes from Ghica's \emph{Geometry of
  Synthesis}~\cite{ghica_geometry_2007,DBLP:journals/entcs/GhicaS10},
in which the geometry of interaction is seen as a compilation scheme towards 
circuits, whose computation space is \emph{finite}, and of paramount importance.

\paragraph*{The Subtle Complexity of the \IAM} These 
considerations suggest that the \IAM is, roughly, bad for time and good for 
space. Interestingly, the situation is subtler. About time, there are indeed 
examples showing that the \IAM is sometimes exponentially slower than 
environment machines. The slow-down however is not uniform, as in many cases 
the \IAM exhibits good time performances.

 The situation about space is 
not clear, either. The \IAM has been used in the literature for obtaining 
sub-linear space bounds for functional 
programs~(Dal Lago and Sch\"opp \cite{bllspace,dal_lago_computation_2016}, 
Mazza \cite{DBLP:conf/csl/Mazza15} and Ghica~\cite{ghica_geometry_2007}), 
something hardly 
achievable with traditional environment machines. Having a close look at these results, however, one 
realizes that those bounds rely crucially on some tweaks (restricting to certain 
$\l$-terms or extending 
the language with ad-hoc constructs) and that they do not seem to be achievable on ordinary $\l$-terms.

A further evidence of the ambiguous space behavior of the \IAM is that the folklore conjecture that the \IAM space 
usage is a reasonable space cost model (that is, linearly related to the one of Turing machines) has been circulating among specialists for years, but has never found an answer.

\paragraph*{Multi Types and the Time of the \IAM} Very recently, there have 
been advances in the understanding of the subtle time behavior of the \IAM. 
Accattoli, Dal Lago, and Vanoni have shown in~\cite{ADLVPOPL21} how to 
extract exact time bounds for the \IAM from multi types derivations. Interestingly,
they use the same types as de Carvalho's, despite the \IAM and the \KAM computing in very different ways. While the 
time of the \KAM is given by the multiplicity of the multi sets in multi types, they show that the time of the \IAM 
requires to take into account also the size of the involved types. 
The result also provides a high-level understanding for the time (in)efficiency of the \IAM: the inefficiency is proportional to the size of types. Since higher-order types are
bigger than first-order ones, the more a program uses higher-order types the more its execution with the \IAM is slower than with, say, the \KAM. 

\subsection{Contributions of the Paper}
The aim of this paper is to provide advances in the understanding of the subtle 
\emph{space} behavior of the \IAM. Inspired by the mentioned recent results by 
Accattoli, Dal Lago, and Vanoni, we introduce a new variant 
of multi types 
from which we extract \emph{exact 
space bounds} for the \IAM. To our knowledge, it is the first use of 
intersection/multi types to 
measure space. 

A key point is that multi types as used in~\cite{ADLVPOPL21}---as well as in the 
many recent papers to extract quantitative bounds on $\l$-terms cited above---cannot measure the space consumption of the \IAM, as they do not have enough 
structure. Our 
work shows that at the type level, indeed, one needs to add a 
\emph{tree} 
structure to multi sets, similarly to 
how measuring time requires switching from idempotent intersections to 
multisets.

Once the main result is proved, we show how to use it to understand the subtle 
space behavior of the \IAM. Here, the main 
insight is a negative perspective about the elusive conjecture that the space of the \IAM is a reasonable cost model. We also show that the 
new type system enlightens the key ingredients of the sub-linear space bounds in the literature.

\paragraph*{Tree Types} Usually, multi types are structured in two mutually recursive 
and disjoint layers, linear types and multisets of linear types. Our types also 
have two layers, but they are no longer 
disjoint, as multisets can also contain multisets, not only linear types. 
This way multisets can naturally be 
seen as \emph{trees}, whose internal nodes are multisets and whose leaves are 
linear types.
Such tree types are very natural, and probably of independent interest. While flattening the tree 
structure recovers an ordinary multi type, the converse operation cannot be done in a canonical way. This fact
shows that tree types add something new, they do not simply express a structure 
already present in multi types.

Actually, the tree structure reflects information about the token. The 
intuition is that while the number of leaves of a tree type counts the 
different uses of a variable/argument---as for multi types---internal nodes on 
the path to a leaf instead count how many backtracking pointers are stored in 
the token by the time the \IAM gets to that copy.

\paragraph*{The Complexity Analysis} Our space bounds are obtained by building 
on the technique developed by Accattoli, Dal Lago, and Vanoni in 
\cite{ADLVPOPL21}, which is inherently different from the one by de 
Carvalho and used in other recent works. The technique in \cite{ADLVPOPL21} 
amounts to first defining an auxiliary machine evaluating multi type 
derivations and showing it bisimilar to the \IAM. In this way, we obtain a representation of the states of the  \IAM run on the type derivation. Then, bounds are extracted 
from a global analysis based on \emph{weighted} type derivations. We proceed 
similarly, replacing multi types with tree types, and introducing a new system 
of weights for space, based on the depth of tree types as trees. 

Once the subtle bisimulation is established, our space complexity analysis is 
extremely simple, and also naturally provides exact bounds. This provides 
evidence 
that our types system is not ad-hoc. On the contrary, we believe it unveils a 
fundamental enrichment of multi types, deserving further studies.

\paragraph*{Background on Reasonable Space} In order to discuss the mentioned 
conjecture about reasonable space, let us provide some background. First of 
all, studying interesting space complexity classes such as \L, requires being 
able to measure \emph{sub-linear} space. There is a recent result in the 
literature about reasonable space for the 
$\l$-calculus, by Forster, Kunze, and Roth 
\cite{DBLP:journals/pacmpl/ForsterKR20}, but their cost model---namely, the 
size of the term---can  only measure
linear and super-linear space, and thus it is not a solution for the general problem.

 Now, showing that a cost model is reasonable requires studying the 
relationship with another known-to-be reasonable model, typically 
Turing machines. While for time the delicate direction is the simulation of the $\l$-calculus into Turing machines, for 
space the subtle one is the simulation of Turing machines in the $\l$-calculus, 
as it is hard to use as little space as a
Turing machine. 

The iteration of the transition function of Turing machines is a form of tail 
recursion, which in the $\l$-calculus is encoded via fixed-point operators. 
Such operators are also used to represent minimization in the encoding of 
partial recursive functions. Our insight about the reasonable space conjecture 
stems from an analysis of fixed-points operators in our type system.

\paragraph*{The Elusive Reasonable Space Conjecture}

Our contribution here is the fact that the \IAM\ performs poorly when 
evaluating fixed-point operators, namely using an amount of space which is 
always at least \emph{linear} in the number of
performed recursive calls. This is done by deriving the type (schema) of a 
specific fixed-point operator in our system, 
then showing that its use of space is proportional to its use of time. 

The specific operator we type is the one used in the encoding of Turing machines in the $\l$-calculus used by Accattoli 
and Dal Lago in their study of reasonable 
time~\cite{DBLP:journals/corr/abs-1711-10078,DBLP:conf/rta/AccattoliL12}, as 
well as by Forster, Kunze, 
and Roth in 
\cite{DBLP:journals/pacmpl/ForsterKR20}. Seeing it as the natural way of encoding tail recursion, it follows that the 
\IAM space performance is poor with tail recursion, and, in turn, with the natural way of encoding Turing machines. 

Summing up, our insight is that a positive answer to the conjecture cannot be done using the standard encoding of Turing machines, which explains the elusiveness of the conjecture. 

\paragraph*{Trees Density} A way of abstracting away from the issue 
with fixed-points is to look at how information is organized in the tree 
structure of tree types, itself reflecting 
the structure of (logged positions in) the token. When the tree is dense (its 
height is roughly the log of its number of nodes), 
then the \IAM execution is space-efficient, while when the tree is sparse (height close to the number of 
nodes) it is inefficient---the type schema for fixed-points is indeed sparse. 

\paragraph*{Perspectives} The insight about the density of trees allows to re-understand some 
results in the literature, and opens new perspectives. As we detail in 
Section~\ref{sect:inefficiency}, it sheds light on Dal Lago 
and Sch\"opp's space bounds for a functional language~\cite{bllspace,dal_lago_computation_2016}, as well as on the encoding of sub-linear
space computations in the $\l$-calculus by Mazza \cite{DBLP:conf/csl/Mazza15}.

%


\paragraph*{Proofs} Proofs are in the Appendix. 


\section{\ccallbn and Abstract Machines}

	Let $\mathcal{V}$ be a countable set 
	of variables. 
	Terms of the \emph{$\lambda$-calculus} $\Lambda$ are defined as follows.
	\begin{center}$\begin{array}{rrcl}
	\textsc{$\l$-terms} & \tm,\tmtwo,\tmthree & \grameq & x\in\mathcal{V}\midd 
	\lambda x.\tm\midd 
	\tm\tmtwo.
	\end{array}$
\end{center}
	\emph{Free} and \emph{bound variables} are defined as 
	usual: $\la\var\tm$ binds $\var$ in $\tm$. A term is \emph{closed} when 
	there are no free occurrences of variables in it.
	Terms are considered modulo $\alpha$-equivalence, and capture-avoiding (meta-level) substitution of 
	all the free occurrences of $\var$ for $\tmtwo$ in $\tm$ is noted 
	$\tm\isub\var\tmtwo$. Contexts are just $\lambda$-terms containing exactly 
	one occurrence of a special symbol, the \emph{hole} $\ctxhole$, intuitively standing for a removed subterm. Here we 
adopt 
	\emph{leveled} contexts, whose index, \ie\ the level, stands for the number of 
	arguments (that is, the 
	number of !-boxes in linear logic terminology) the hole lies in.
	\begin{center}$
	\begin{array}{rclrrrcl}
	\multicolumn{8}{c}{\textsc{Leveled contexts}}
	\\
	\ctx_0		& \grameq &	\ctxhole \midd \la\var\ctx_0 \midd \ctx_0\tm;
	\\
	\ctx_{n+1}	& \grameq &	\ctx_{n+1}\tm\midd\la\var\ctx_{n+1}\midd\tm\ctx_{n}.
	\end{array}
$\end{center}
	We simply write $\ctx$ for a context whenever the level is not relevant. 
	The operation replacing the hole $\ctxhole$ with a term $\tm$ 
in a context $\ctx$ is noted $\ctxp\tm$ and called \emph{plugging}.
	
	The operational semantics that we adopt here is weak head evaluation 
	$\towh$, defined as follows:
	\begin{center}$
(\la\vartwo \tm) \tmtwo \tmthree_1 \ldots \tmthree_h \ \ \towh \ \ \tm \isub\vartwo \tmtwo \tmthree_1 \ldots \tmthree_h.
$\end{center}
where $\tm \isub\vartwo \tmtwo$ is our notation for meta-level substitution.
We further restrict the setting by considering only closed terms, and refer to 
our framework as \emph{\ccallbn} (shortened to \ccbn). Basic well known facts 
are that in \ccbn the normal forms are precisely the abstractions and that 
$\towh$ is 
deterministic.

\paragraph*{Abstract Machines Glossary}  In this paper, an \emph{abstract 
machine} $\mach = (\state, \tomach)$ is a transition system $\tomach$ over a 
set of states, noted $\state$. 
The machine considered in this paper moves over the code without ever changing it. A \emph{position} in a term 
$\tm$ is represented as a pair $(\tmtwo,\ctx)$ of a sub-term $\tmtwo$ and a context $\ctx$ such that $\ctxp\tmtwo=\tm$. 
States are composed by a position $(\tmtwo,\ctx)$ plus some data 
structures. 
A state is \emph{initial}, and noted $\state_\tm$, if it is positioned on $(\tm,\ctxhole)$, $\tm$ is closed, and all the 
data structures are empty---$\tm$ is always implicitly 
considered closed, without further mention. A state is \emph{final} if no 
transitions apply.

 A \emph{run} $\run: \state \tomach^*\statetwo$ is a possibly empty sequence of transitions, whose length is noted 
$\size\run$. An \emph{initial run} is a run from an initial state $\state_\tm$, 
and it is also called \emph{a run from $\tm$}. A state $\state$ is 
\emph{reachable} if it 
is the target state of an initial run. A \emph{complete run} is an initial run ending on a final state. 


\section{The Interaction Abstract Machine, Revisited}
\label{sec:IJK}

In this section we provide an overview of the Interaction Abstract Machine 
(IAM). We adopt the $\l$-calculus 
presentation of the \IAM, rather called \LIAM and recently developed by Accattoli, Dal Lago, and Vanoni in 
\cite{IamPPDPtoAppear}---we refer to 
their work for an in-depth study of the \LIAM. 
The literature usually studies the ($\l$)\IAM with respect to head evaluation 
of potentially open terms, here 
we only deal with \ccbn, which is closer to the practice of functional 
programming.
Keep in mind that the \LIAM is an unusual machine, and that finding 
it hard to grasp is 
normal. Also, in \cite{IamPPDPtoAppear} there is an alternative 
explanation of the \LIAM, that may be 
helpful, together with the relationship with proof nets, which is however not 
needed here.
 \begin{figure*}[t]
	\input{machines/LIAM}
	\vspace{-8pt}
	\caption{Data structures and transitions of the $\lambda$ Interaction 
	Abstract Machine (\LIAM).}
	\label{fig:iam}
\end{figure*}

\paragraph*{Bird's Eye view of the \LIAM} Intuitively, the behaviour of the 
\LIAM can be seen as 
that of a token that travels around the syntax tree of the program under 
evaluation. Similarly to environment machines such as Krivine's, it looks 
for the head variable of a term. The peculiarity of the \LIAM is that it does 
not store the 
encountered 
$\beta$-redexes in an environment. When it finds 
the head variable, the \LIAM looks for the argument which should replace 
it, because having no environment, it 
cannot simply look it up. These two search mechanisms are realized by two 
different phases and directions of 
exploration of the code, noted $\downp$ and $\upp$. The functioning is actually 
more involved because there is also a 
backtracking mechanism (which however has nothing to do with backtracking as 
modeled by classical logic 
and continuations), requiring to save and manipulate code positions in the 
token. Last, the machine never duplicates 
the code, but it distinguishes different uses of a same code (position) using 
\emph{logs}. There are no easy 
intuitions about how logs handle different uses---this is both the magic and 
the mystery of the geometry of 
interaction.

\paragraph*{\LIAM States}
The data structures of the \LIAM are in 
\reffig{iam}.
The \LIAM travels on a $\l$-term $\tm$ carrying data 
structures---representing the token---storing 
information about the computation and determining the next transition to apply. 
It travels according to a
  \emph{direction} of navigation that is either 
$\downp$ or $\upp$, pronounced \emph{down} and \emph{up}.

The \emph{token} is given by two stacks, 
called \emph{log} and \emph{tape}, whose main components are \emph{\trposs}. 
Roughly, a log is a trace of the relevant 
positions in the history of a computation, and a logged 
position is a position plus a log, meant to trace the history that led to that 
position. Logs and logged positions are 
defined by mutual induction. Note that in the definition of a 
logged position, the log is required to have length $n$, where $n$ is the level 
of the context of the position.
We use $\cdot$ also to concatenate logs, 
writing, \eg, $\tlog_n\cdot\tlog$, using $\tlog$ for a log of unspecified 
length. The \emph{tape} $\tape$ is a list of logged positions plus occurrences 
of the special symbol 
$\resm$, needed to record the crossing of abstractions 
and applications. 

A \emph{state} of the machine is given by a position and a 
token (that is, a log $\tlog$ and a tape $\tape$), together with a 
\emph{direction}.
Initial states have the form $\state_{\tm}\defeq 
\dstate{\tm}{\ctxhole}{\epsilon}{\epsilon}$.
Directions are often omitted and represented via colors and underlining: 
$\downp$ is represented by a
\red{red} and underlined code term, $\upp$ by a \blue{blue} and underlined code 
context.

\paragraph*{Transitions} The transitions of the \LIAM are in 
\reffig{iam}. Their union is noted $\toliam$. The idea is 
that $\downp$-states 
$\dstate\tm\ctx\tape\tlog$ are queries about the head variable of (the head 
normal form of) $\tm$ and 
$\upp$-states $\ustate\tm\ctx\tape\tlog$ are queries about the argument of an 
abstraction. A key point is that navigation is done locally, moving only 
between adjacent 
positions\footnote{Transitions $\tomachvar$ and $\tomachbttwo$ might not look local, as they jump from a 
bound variable occurrence to its binder, and viceversa. If $\l$-terms are represented by implementing occurrences as 
pointers to their binders, as in the proof net 
  representation of $\l$-terms---upon which some concrete implementation schemes 
  are based, see \cite{DBLP:conf/ppdp/AccattoliB17}---then they are local.}. Intuitively, the machine
evaluates the term $\tm$ until the head abstraction of the head normal form is 
found (more explanations below).
The transitions realize three entangled mechanisms.

	\paragraph*{Mechanism 1: Search Up to $\beta$-Redexes} 
Note that $\iamdap$ skips the 
 argument and adds a $\resm$ on the tape. The idea is that $\resm$ keeps track 
 that an argument has been encountered---its identity is however forgotten. 
 Then $\iamdlamone$ does the dual job: it skips an abstraction when the tape 
 carries a $\resm$, that is, the trace of a previously encountered 
 argument. Note that, when the \LIAM moves through a $\beta$-redex with the 
 two steps one after the other, the token is 
left unchanged. This mechanism thus realizes search \emph{up to 
$\beta$-redexes}, 
 that is, without ever recording them. Note that 
 $\iamuapltwo$ and $\iamulam$ do the same during the $\upp$ phase. 
 
 Let us 
 illustrate this mechanism with an example: the first steps of 
 evaluation on 
 the term $\mathsf{I}((\la\var\var\var)\mathsf{I})$, where 
 $\mathsf{I}$ 
 is the identity combinator. One can notice that the \LIAM traverses one 
 $\beta$-redexes without altering the token, that is empty both at the 
 beginning and at the end.
 \[{\footnotesize
 	\begin{array}{l@{\hspace{0.1cm}}l|c|c|c|l}
 		&\mathsf{Sub}\mbox{-}\mathsf{term} & \mathsf{Context} & 
 		\mathsf{\Log} & 
 		\mathsf{Tape} & \mathsf{Dir}
 		\\
 		\cline{1-6}
 		&\ndstatetab{(\la\varthree\varthree)((\la\var\var\var)\mathsf{I})} 
 		{\ctxhole} 
 		{\epsilon} {\epsilon} 
 		\downp\\
 		\iamdap&\ndstatetab{\la\varthree\varthree} 
 		{\ctxhole((\la\var\var\var)\mathsf{I})} {\resm} {\epsilon} \downp\\
 		\iamdap&\ndstatetab{\varthree} 
 		{(\la\varthree\ctxhole)((\la\var\var\var)\mathsf{I})} 
 		{\stempty} {\epsilon} \downp\\
 \end{array}}
 \]

\paragraph*{Mechanism 2: Finding Variables and Arguments} As a first 
approximation, navigating in direction $\downp$ 
corresponds to looking for the head variable of the term code, while navigating 
with direction $\upp$ corresponds to 
looking for the sub-term to replace the previously found head variable, what we 
call \emph{the argument}. More 
precisely, when the head variable $\var$ of the active subterm is found, 
transition $\iamdvar$  switches direction from 
$\downp$ to $\upp$, and the machine starts looking for potential substitutions 
for $\var$. The \LIAM then moves to 
the position of the binder $\lambda\var$ of $\var$, and starts exploring the 
context $\ctx$, looking for the first 
argument up to $\beta$-redexes. The relative position of $\var$ w.r.t. its 
binder is recorded in a new \trpos that is 
added to the tape. Since the machine moves out of a context of level $n$, 
namely $\ctxtwo_n$, the \trpos contains the 
first $n$ \trposs of the log. Roughly, this is an encoding of the run that led 
from the level of 
$\la\var\ctxtwo_n\ctxholep\var$ to the occurrence of $\var$ at hand, in case 
the machine would later need to backtrack.
 
 When the argument $\tm$ for the abstraction binding the variable $\var$ in 
 $\lpos$ is found, transition $\iamuaplone$ switches direction from $\upp$ to 
 $\downp$, making the machine looking for 
the head variable of $\tm$. 
 Note that moving to $\tm$, the level 
 increases, and that the \trpos $\lpos$ is moved from the tape to the 
log. 
The idea is that $\lpos$ is now a completed argument query, 
 and it becomes part of the history of
  how the machine got to the current 
 position, to be potentially used for backtracking.
  We continue the example of 
 the previous point: the machine finds the head variable $\varthree$ and looks 
 for 
 its argument in $\upp$ mode. When it has been found, the direction turns to
 $\downp$ again and the process continues as before: first the head variable is 
 found and then the machine looks for its argument. Let us set 
 $\lpos_\varthree\defeq(\varthree,(\la\varthree\ctxhole)((\la\var\var\var)\mathsf{I}),\stempty)$,
 $\lpos_{\ctxhole\var}\defeq(\var,\la x \ctxhole x,\epsilon)$ and 
 $\lpos_\vartwo\defeq(\vartwo,\la\vartwo\ctxhole,\stempty)$.
 	\[{\footnotesize
 		\begin{array}{l@{\hspace{0.1cm}}l|c|c|c|c}
 			&\mathsf{Sub}\mbox{-}\mathsf{term} & \mathsf{Context} & 
 			\mathsf{\Log} & 
 			\mathsf{Tape} & \mathsf{Dir}
 			\\
 			\cline{1-6}
 			&\ndstatetab{\varthree} 
 			{(\la\varthree\ctxhole)((\la\var\var\var)\mathsf{I})} 
 			{\stempty} {\epsilon} \downp\\
 			\iamdvar&\nustatetab{\la\varthree\varthree} 
 			{\ctxhole((\la\var\var\var)\mathsf{I})} 
 			{\lpos_\varthree} {\epsilon} \upp\\
 			\iamuaplone&\ndstatetab{(\la x xx)\mathsf{I}} {\mathsf{I}\ctxhole} 
 			{\epsilon} 
 			{\lpos_\varthree} 
 			\downp\\
 			\iamdap&\ndstatetab{\la x xx} {\mathsf{I}(\ctxhole\mathsf{I})} 
 			{\resm} 
 			{\lpos_\varthree} 
 			\downp\\
 			\iamdlamone&\ndstatetab{xx} {\mathsf{I}((\la x 
 			\ctxhole)\mathsf{I}} {\epsilon} 
 			{\lpos_\varthree} \downp\\
 			\iamdap&\ndstatetab{x} {\mathsf{I}((\la x \ctxhole x)\mathsf{I})} 
 			{\resm} 
 			{\lpos_\varthree}
 			\downp\\
 			\iamdvar&\nustatetab{\la x xx} {\mathsf{I}(\ctxhole(\la y y))} 
 			{\lpos_{\ctxhole\var}\cdot\resm} {\lpos_\varthree}{\upp} \\
 			\iamuaplone&\ndstatetab{\la y y} {\mathsf{I}((\la x xx)\ctxhole)} 
 			{\resm} 
 			{\lpos_{\ctxhole\var}\cdot\lpos_\varthree}{\downp} \\
 			\iamdlamone&\ndstatetab{y} {\mathsf{I}((\la x xx)(\la y \ctxhole))} 
 			{\epsilon} 
 			{\lpos_{\ctxhole\var}\cdot\lpos_\varthree}\downp \\
 			\iamdvar&\nustatetab{\la y y} {\mathsf{I}((\la x xx)\ctxhole)} 
 			{\lpos_\vartwo} 
 			{\lpos_{\ctxhole\var}\cdot\lpos_\varthree}\upp \\
 	\end{array}}
 	\] 
\paragraph*{Mechanism 3: Backtracking} It is started by transition $\iamuapr$. 
 The idea is that the search for an argument of the $\upp$-phase has to 
 temporarily stop, because there are no arguments left at the current level. 
 The search of the argument then has to be done among the arguments of the 
 variable occurrence that triggered the search, encoded in $\lpos$. Then the 
 machine enters into backtracking mode, which is denoted by a $\downp$-phase 
 with a \trpos on the tape, to reach the position in $\lpos$. 
 Backtracking is over when $\iamdlamtwo$ is fired.
 
 The $\downp$-phase and the \trpos
 on the tape mean that the \LIAM is backtracking. During backtracking,
 the machine is 
 not looking for the head variable of the current code $\la\var\tm$, it is 
 rather going back to the variable position in the tape, to find its 
 argument. This is realized by moving to the position in the tape and 
 changing direction. Moreover, the log $\tlog_n$
 encapsulated in the \trpos is put back on the global log. An invariant 
  guarantees that the \trpos on the tape always contains a position 
 relative to the active abstraction. 
 
 In our example, a backtracking 
 phase starts when the \LIAM looks for the 
 argument of $\vartwo$. Since $\la\vartwo\vartwo$ has been virtually 
 substituted for $\var$, its argument is actually the second occurrence of 
 $\var$. Backtracking retrieves the variable which a term was virtually 
substituted for.
  \[{\footnotesize
  	\begin{array}{l@{\hspace{0.2cm}}l|c|c|c|c}
  	&\mathsf{Sub}\mbox{-}\mathsf{term} & \mathsf{Context} & \mathsf{\Log} & 
  	\mathsf{Tape} & \mathsf{Dir}
  	\\
  	\cline{1-6}
  	&\nustatetab{\la y y} {\mathsf{I}((\la x xx)\ctxhole)} 
  	{\lpos_\vartwo} 
  	{\lpos_{\ctxhole\var}\cdot\lpos_\varthree}\upp \\
  	\iamuapr&\ndstatetab{\la x xx} {\mathsf{I}(\ctxhole\mathsf{I})} 
  	{\lpos_{\ctxhole\var}\cdot\lpos_\vartwo} 
  	{\lpos_\varthree}\downp \\
  	\iamdlamtwo&\nustatetab{x} {\mathsf{I}((\la x \ctxhole x)\mathsf{I})} 
  	{\lpos_\vartwo} {\lpos_\varthree}\upp \\
  	\iamuaplone&\ndstatetab{x} {\mathsf{I}((\la x x\ctxhole)\mathsf{I})} 
  	{\epsilon} 
  	{\lpos_\vartwo\cdot\lpos_\varthree}\downp \\
  	\end{array}}\]
For the sake of completeness, we conclude the example, which runs until
 the head abstraction of the weak head normal form of the term under 
evaluation, namely the second occurrence of \textsf{I}, is found. We set 
$\lpos_{\var\ctxhole}\defeq(x,\la x x\ctxhole,\lpos_\vartwo)$.
\begin{center}${\footnotesize
	\begin{array}{l@{\hspace{0.2cm}}l|c|c|c|c}
	&\mathsf{Sub}\mbox{-}\mathsf{term} & \mathsf{Context} & \mathsf{\Log} & 
	\mathsf{Tape} & \mathsf{Dir}
	\\
	\cline{1-6}
	&\ndstatetab{x} {\mathsf{I}((\la x x\ctxhole)\mathsf{I})} 
	{\epsilon} 
	{\lpos_\vartwo\cdot\lpos_\varthree}\downp \\
	\iamdvar&\nustatetab{\la x xx} {\mathsf{I}(\ctxhole(\la y y))} 
	{\lpos_{\var\ctxhole}} {\lpos_\varthree}\upp \\
	\iamuaplone&\ndstatetab{\la y y} {\mathsf{I}((\la x xx)\ctxhole)} 
	{\epsilon} 
	{\lpos_{\var\ctxhole}\cdot\lpos_\varthree}\downp\\
	\end{array}}$\end{center}

 \paragraph*{Basic invariants} Given a state 
$(\tm,\ctx,\tlog,\tape,\pol)$, the log and the tape, \ie the token, 
verify two easy invariants connecting them to the position $(\tm,\ctx)$ and the 
direction $\pol$. The log $\tlog$ and the current position 
$(\tm,\ctx)$  form a \trpos, \ie the length of $\tlog$ is exactly the level 
of 
the code context 
$\ctx$\footnote{ Then, the length of $\tlog$ is exactly the number of (linear 
logic) \emph{boxes} in which the code term is contained.}. This fact guarantees 
that the \LIAM never gets stuck because the log is too short for 
transitions $\iamdvar$ and $\iamuapr$ to apply.

About the tape, note that every time the machine switches from a 
$\downp$-state to an $\upp$-state (or vice versa), a \trpos is 
pushed (or popped) from the tape $\tape$. Thus, for reachable states, the 
number of \trposs in $\tape$ gives the 
direction of the state. These intuitions are formalized by the \emph{tape and 
direction} invariant below. Given a direction $\pol$ we use
$\pol^n$ for the direction obtained by switching $\pol$ exactly $n$
times (i.e., $\downp^0=\downp$, $\upp^0=\upp$, $\downp^{n+1} =
\upp^{n}$ and $\upp^{n+1}=\downp^{n}$).
%
\begin{lemma}[\LIAM basic invariants]\label{l:invarianttwo}
  Let $\state = \nopolstate{\tm}{\ctx_n}{\tape}{\tlog}{\pol}$ be a reachable 
  state and $\sizee\tape$ the number of 
\trposs in $\tape$. Then 
  \begin{enumerate}
  	\item \emph{Position and log}: $(\tm,\ctx_n, \tlog)$ is a \trpos, and 
	\item \emph{Tape and direction}: $\pol=\downp^{\sizee\tape}$.
  \end{enumerate}
\end{lemma}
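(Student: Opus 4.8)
The plan is to establish the two invariants simultaneously, by induction on the length of an initial run $\state_\tm \toliam^* \state$ ending on the reachable state $\state$. For the base case, $\state = \state_\tm = \dstate{\tm}{\ctxhole}{\epsilon}{\epsilon}$: the context $\ctxhole$ has level $0$ and the log $\epsilon$ has length $0$, so $(\tm,\ctxhole,\epsilon)$ is a logged position; the tape is empty, so $\sizee\tape = 0$ and the direction is $\downp = \downp^0$. For the inductive step I would assume both invariants for the source state of the last transition and inspect the eight rules of Figure~\ref{fig:iam} one by one. The whole argument rests on two elementary facts, read off the definitions of leveled contexts and logs: (i) the level of a leveled context is additive under plugging, and plugging a hole context of shape $\ctxhole\tm$ or $\la\var\ctxhole$ leaves the level unchanged while $\tm\ctxhole$ raises it by one — in particular a context $\ctxtwo_n$, hence also $\la\var\ctxtwo_n$, contributes exactly $n$; (ii) a log $\tlog_n$ has length exactly $n$, so $\tlog_n\cdot\tlog$ has length $n + \size\tlog$.

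For invariant (1): the four ``search up to $\beta$-redex'' transitions $\iamdap$, $\iamdlamone$, $\iamuapltwo$, $\iamulam$ only replace the hole by $\ctxhole\tm$ or $\la\var\ctxhole$ and touch only the symbol $\resm$ on the tape, hence change neither the level of the context nor the log. The ``argument'' transitions $\iamuaplone$ and $\iamuapr$ move the position across an application node, between its left side $\ctxhole\tm$ and its right side $\tm\ctxhole$: the level changes by $\pm 1$ and a single logged position is correspondingly pushed onto or popped from the log, so $\size\tlog$ stays equal to the level (for $\iamuapr$ the source instance of (1) also guarantees the log is non-empty, so the rule applies). The ``binder jump'' transitions $\iamdvar$ and $\iamdlamtwo$ enter or leave a context $\la\var\ctxtwo_n$: the level changes by $\pm n$ and the block $\tlog_n$ of length $n$ is correspondingly added to or removed from the log, so again $\size\tlog$ equals the level; for $\iamdvar$ one invokes (1) on the source to know the log is long enough to be split as $\tlog_n\cdot\tlog$, which is exactly why the machine does not get stuck there.

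For invariant (2): the same four transitions $\iamdap$, $\iamdlamone$, $\iamuapltwo$, $\iamulam$ merely push or pop $\resm$, changing neither $\sizee\tape$ nor the direction. Each of the remaining four transitions switches the direction exactly once and at the same time pushes or pops exactly one logged position on the tape, thus flipping the parity of $\sizee\tape$; since by induction the source direction is $\downp$ raised to the number of logged positions on the source tape, the target direction is $\downp$ raised to the number of logged positions on the target tape. Explicitly: $\iamdvar$ and $\iamuapr$ switch direction while pushing a logged position, whereas $\iamdlamtwo$ and $\iamuaplone$ switch direction while popping one.

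I do not expect a genuine obstacle: this is a bookkeeping invariant and the induction is routine. The only delicate point is the binder-jump transitions $\iamdvar$ and $\iamdlamtwo$, where one must be precise about the fact that the sub-log $\tlog_n$ stored in (resp. retrieved from) the logged position on the tape has length exactly $n$, matching the level of the associated context $\la\var\ctxtwo_n$; this is where fact (i) (that $\la\var$ does not raise the level) and the mutually inductive definition of logs and logged positions are genuinely used.
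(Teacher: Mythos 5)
Your proof is correct: the simultaneous induction on the length of the initial run, using fact (i) on the additivity of context levels under plugging and fact (ii) on the lengths of logs, closes all eight transition cases, and you correctly isolate the only delicate points — that the induction hypothesis is what licenses splitting the log as $\tlog_n\cdot\tlog$ in $\tomachvar$ (and guarantees a non-empty log for $\tomachbtone$), that the definition of logged positions forces the stored sub-log to have length matching the level of its context in $\tomachbttwo$, and that the direction invariant only needs the parity of the number of logged positions on the tape. The paper itself states this lemma without proof, inheriting it from the cited prior work on the \LIAM, so there is no in-paper argument to compare against; yours is the standard bookkeeping induction one would expect.
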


\paragraph*{Final States}
If the \LIAM starts on the initial state 
$\state_{\tm}$, the execution may either 
never stop or end in a state $\state$ of the shape 
$\state=\dstate{\la\var\tmtwo}{\ctx}{\epsilon}{\tlog}$. 
The fact that no other shapes are possible for $s$ is proved 
in~\cite{IamPPDPtoAppear}. 
\begin{figure*}[t!]
	\[
	\begin{array}{c@{\hspace{1cm}}c@{\hspace{1cm}}c}
	\infer[\tyvar]{\tjudg{\var:\mset{\linty}}{\var}{\linty}}{}
	&	
	\infer[\tylam]{\tjudg{\tye}{\lambda\var.\tm}{\arr{\tty}{\linty}}} 
	{\tjudg{\tye,\var:\tty}{\tm}{\linty}}
	&
\infer[\tyapp]{\tjudg{\tye\uplus\tyetwo}{\tm\tmtwo}{\linty}}{\tjudg{\tye}{\tm}{\arr{\tty}{\linty}} & 
\tjudg{\tyetwo}{\tmtwo}{\tty}}
	\\[8pt]
	\infer[\tylamstar]{\tjudg{\tye}{\lambda\var.\tm}{\initty}}{}
	&
	
	\infer[\tymany]{\tjudg{\mset{\uplus_{i=1}^n\tye_i}}{\tm}{\mset{\gty_1,\mydots,\gty_n}}}
{\tjudg{\tye_i}{\tm}{\gty_i} & 1\leq i\leq n}
		& \infer[\tynone]{\tjudg{}{\tm}{\emmset}}{}

	\end{array}
	\]
	\vspace{-8pt}
	\caption{The tree type system.}
	\label{fig:asstypesystem}
	\end{figure*}
\paragraph*{Implementation} Usually, the \LIAM is shown to implement (a 
micro-step variant of) head reduction. The details are quite different from 
those in the usual notion of implementation for environment machines, such as 
the KAM. Essentially, it is shown that the \LIAM induces a 
semantics $\sem{\cdot}{\text{\LIAM}}$ of terms that is a sound and adequate 
with 
respect to head reduction, rather than showing a bisimulation between the 
machine and head reduction---this is explained at length in 
\cite{IamPPDPtoAppear}. 
For the sake of simplicity, here we restrict to \ccbn. The \LIAM semantics then 
reduces to observing 
termination: $\sem{\tm}{\text{\LIAM}}$ is defined if and only if weak head 
reduction terminates on $\tm$. Therefore, we avoid 
discussing semantics and only study termination. We say that the \emph{\LIAM
implements \ccbn} when its execution from the initial state $\state_\tm$ reaches a final state if and only if $\towh$ terminates on $\tm$, for every closed term $\tm$.

\begin{theorem}[\!\!\cite{IamPPDPtoAppear}]
	The \LIAM implements \ccbn.
\end{theorem}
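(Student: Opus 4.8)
The plan is to reduce the claim to a single \emph{head-reduction invariance} property of the \LIAM and then read off the two halves of the ``if and only if'': soundness, that $\towh$-termination forces the \LIAM to halt, and adequacy, the converse. I will use two facts that are immediate from the transition table of \reffig{iam}: the \LIAM is deterministic on reachable states (each reachable state, whose direction is fixed by Lemma~\ref{l:invarianttwo}, enables exactly one transition, or none), so ``the \LIAM halts on $\tm$'' unambiguously means that the unique run from $\state_\tm$ is finite, in which case it ends on a final state; and $\state_\tm$ is itself final exactly when $\tm$ is an abstraction, i.e.\ exactly when $\tm$ is $\towh$-normal, since $\iamdap$ fires on any application while on a closed abstraction with empty tape no transition applies. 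As a preliminary step I would strengthen Lemma~\ref{l:invarianttwo} with a well-formedness (``exhaustibility''-style) invariant on reachable states, ensuring that the logged positions stored in the log and on the tape always encode consistent sub-runs; this is what guarantees that $\iamdvar$, $\iamuapr$ and the backtracking transitions never get stuck, that final states indeed have the shape $\dstate{\lambda\var.\tmtwo}{\ctx}{\epsilon}{\tlog}$, and --- crucially --- it describes reachable states precisely enough to define the relation used below.

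The core is the following invariance lemma: if $\tm \towh \tm'$ --- so $\tm = (\lambda\var.u)\,w\,t_1\cdots t_h$ and firing the head redex $(\lambda\var.u)\,w$ yields $\tm' = u\isub\var w\,t_1\cdots t_h$ --- then the \LIAM halts on $\tm$ if and only if it halts on $\tm'$, and when both halt the run from $\tm$ is strictly longer than the run from $\tm'$. I would prove this by exhibiting a relation $R$ between the reachable states of the \LIAM on $\tm$ and those on $\tm'$ that is a strong bisimulation \emph{up to} the pair of transitions $\iamdap/\iamdlamone$ (and the corresponding $\upp$-directed transitions) that cross the fired redex --- that is, $R$-related states have $R$-related successors, except that a single step on the $\tm'$-side may be matched on the $\tm$-side by that step preceded or followed by one traversal of the redex --- such that $R$-related states are either both final or both non-final, and $\state_\tm \mathrel{R} \state_{\tm'}$. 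The definition of $R$ follows the ``lazy substitution'' reading of the \LIAM illustrated by Mechanisms~1 and~2: a state of the $\tm$-run whose code position lies in the body $u$, away from the relevant occurrences of $\var$, is matched with the state of the $\tm'$-run at the corresponding position of $u\isub\var w$; a state whose position lies inside the argument $w$ is matched with a state positioned inside one \emph{specific} copy of $w$ in $u\isub\var w$, the copy being singled out by the logged position that was pushed on the tape (and later moved to the log) when the triggering occurrence of $\var$ was crossed; states positioned inside some $t_i$ are matched directly, and states ``on'' the redex being traversed produce the ``up to'' slack. The substance of the proof is checking that the token data --- logs, tapes, directions --- are transported coherently along $R$, and this is precisely where logs earn their keep: $\beta$-reduction duplicates and erases the argument whereas the \LIAM never does, recording ``which copy'' only in the log, so a logged position of the $\tm$-side carrying a level and a sub-log must be converted into the matching logged position relative to the chosen copy of $w$ in $\tm'$.

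Granting the invariance lemma, the theorem follows easily. Soundness is an induction on the number $n$ of $\towh$-steps from $\tm$ to a normal form: for $n=0$, $\tm$ is an abstraction and $\state_\tm$ is already final; for $n+1$, write $\tm \towh \tm'$, so $\tm'$ reaches a normal form in $n$ steps, the induction hypothesis gives that the \LIAM halts on $\tm'$, and the lemma transfers this to $\tm$. For adequacy I argue by contraposition: assume $\towh$ diverges on $\tm$, so there is an infinite chain $\tm = \tm_0 \towh \tm_1 \towh \tm_2 \towh \cdots$; if the \LIAM halted on $\tm$, say in $k$ steps, then iterating the lemma it halts on each $\tm_j$ in at most $k-j$ steps, hence on $\tm_k$ in $0$ steps, so $\state_{\tm_k}$ is final and $\tm_k$ is $\towh$-normal, contradicting $\tm_k \towh \tm_{k+1}$. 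Hence the \LIAM diverges on $\tm$. Combining the two implications with determinism and the characterization of initial final states yields exactly the statement; as a by-product one also reads off that the number of head steps of $\tm$ is bounded by the length of the \LIAM run from $\tm$.

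The main obstacle is the invariance lemma, and inside it the handling of the token rather than of the positions. The position-level correspondence is the benign ``lazy substitution'' picture; the genuinely delicate work is defining $R$ on logs and logged positions and verifying the bisimulation clauses, because the \LIAM has to simulate duplication and erasure of the argument purely through the combinatorics of logs. Establishing an exhaustibility-style invariant strong enough that $R$ is even well defined --- and that the machine is deterministic and never gets stuck --- is the enabling technical layer. Both are carried out in detail in~\cite{IamPPDPtoAppear}, and the tree type system developed later in this paper will in any case re-derive, and refine, this correspondence.
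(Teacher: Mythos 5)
The paper does not actually prove this theorem: it is imported wholesale from \cite{IamPPDPtoAppear}, and the main text even remarks that the cited work phrases implementation via a sound and adequate induced semantics rather than via a bisimulation with head reduction, which in the closed weak-head setting collapses to the termination equivalence you prove. Your reconstruction is structurally sound and, in fact, matches the technique of the cited work that this paper recalls in its appendix: the \emph{improvements} machinery (termination- and length-aware simulations between the runs on $\tm$ and on its $\towh$-reduct, with exactly the ``up to one traversal of the fired redex'' slack you describe, cf.\ the clauses $n\leq m+1$ and $m\geq n+1$) is reused there to build the relation $\relf$ for the \TIAM, with the same four-way case split you sketch (position in the body, in the argument matched to the copy selected by the logged position, in an outer argument $t_i$, or on the redex itself). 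Your soundness induction and your contrapositive adequacy argument via strict decrease of run length are both fine --- strictness does hold because the run on $\tm$ traverses the head redex at least once before reaching the body --- though note that the recalled improvement proposition only states $|s|\geq|q|$, so the strict inequality needs the small extra observation about the initial prefix. The genuinely hard content, namely the token-level definition of the relation on logs and logged positions and the exhaustibility-style invariant making it well defined, is named but not carried out; since the paper itself defers exactly this to \cite{IamPPDPtoAppear}, your proposal is as complete as the paper's own treatment of the statement.
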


\paragraph*{\LIAM Space Consumption} The space needed to represent a  
\LIAM state is given by the following definition (the meta-variable $\Gamma$ to denote either a tape 
$\tape$ or a log $\tlog$):
\[
\begin{array}{rclrcl}
\multicolumn{6}{c}{\lm{\nopolstate{\tm}{\ctx}{\tape}{\tlog}{\pol}}\defeq\lm{\tlog}+\lm{\tape}}\\[3pt]
\lm{(\var,\ctxtwo,\tlogtwo)}&\defeq&\indet+\lm{\tlogtwo}&\lm{\stempty}&\defeq&0\\[3pt]
\lm{\lpos\cdot\Gamma}&\defeq&\lm{\lpos}+\lm{\Gamma}&\lm{\resm\cdot\tape}&\defeq&1+\lm{\tape}
\end{array}
\]
The value of the unknown $\indet$ is simply the size of a pointer to a subterm 
of the term under evaluation, \ie $\indet=\log\size{\ctxp\tm}$. Then, we are 
able to 
define the space of a \LIAM run by taking the maximum size of the states 
reached during the run.
\begin{definition}
	Let $\run:\state_0\toliam^*\state$ be a \LIAM run. Then,
	\[\lm{\run}\defeq\max_{\statetwo\in\run}\lm{\statetwo}\]
\end{definition}
It is worth noticing what happens in the case of diverging computations. In 
principle, two cases could occur: either the space consumption is finite, or it 
is infinite. Actually, it is easy to prove that the first case is not 
possible.
\begin{proposition}\label{prop:inf}
	Let $\run$ be an infinite \LIAM run. \\Then $\lm{\run}=\infty$.
\end{proposition}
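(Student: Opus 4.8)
The plan is to prove the contrapositive: if the run $\run$ uses only \emph{bounded} space, then it is finite. So assume $\lm\run = N < \infty$, so that every state occurring in $\run$ is reachable and has $\lm{\cdot}$-measure at most $N$. \emph{Step 1: only finitely many such states exist.} The \LIAM never modifies the code, so the position of a reachable state is one of the finitely many positions of the program, and its direction is one of two values. For the token, recall that $\lm{\cdot}$ counts --- recursively through the entire nested structure of logs, \trposs, and tapes --- each occurrence of $\resm$ with weight $1$ and each \trpos with weight $\indet \ge 1$; hence a state of measure $\le N$ has a token with at most $N$ nodes in total. Inspecting Figure~\ref{fig:iam}, each \trpos ever appearing carries a variable and a context drawn from a fixed finite set determined by the program (essentially its subterm occurrences), while $\resm$ carries no data, so there are only finitely many tokens with at most $N$ nodes, and therefore only finitely many reachable states of measure $\le N$.

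\emph{Step 2: the \LIAM is reversible.} I would then use two facts, both established in \cite{IamPPDPtoAppear} by inspecting the transitions of Figure~\ref{fig:iam}: (a) on reachable states $\toliam$ is \emph{backward deterministic}, i.e.\ if $\state \toliam \statetwo$ and $\state' \toliam \statetwo$ with $\state,\state'$ reachable then $\state = \state'$, since each rule pushes or pops entirely determined data and so the source is recoverable from the target; and (b) the initial state $\state_\tm = \dstate{\tm}{\ctxhole}{\epsilon}{\epsilon}$ is the target of no transition, because no rule produces a $\downp$-state with empty log, empty tape, and trivial context $\ctxhole$.

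\emph{Step 3: conclusion.} Suppose, towards a contradiction, that $\run : \state_0 \toliam \state_1 \toliam \cdots$ is infinite, with $\state_0 = \state_\tm$. By Step~1 the states $\state_k$ take only finitely many values, so by pigeonhole some value repeats; let $j$ be least such that $\state_j = \state_i$ for some $i < j$, so that $\state_0, \dots, \state_{j-1}$ are pairwise distinct. If $i \ge 1$, then $\state_{i-1}$ and $\state_{j-1}$ are both reachable and both $\toliam$-map to $\state_i = \state_j$, so backward determinism~(a) gives $\state_{i-1} = \state_{j-1}$, contradicting the minimality of $j$ since $i-1 < j-1$. If $i = 0$, then $\state_{j-1} \toliam \state_0 = \state_\tm$, contradicting~(b). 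Either way we reach a contradiction, so $\run$ cannot be infinite --- equivalently, every infinite run has $\lm\run = \infty$. The main obstacle is making the two facts of Step~2 precise, in particular backward determinism, which is a routine but somewhat tedious case analysis over the eight transitions; Steps~1 and~3 are then an easy combinatorial argument.
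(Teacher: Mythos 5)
Your proposal is correct and follows essentially the same route as the paper: both arguments rest on backward determinism (reversibility) of the \LIAM, which forbids a reachable state from being visited twice, combined with the observation that only finitely many states fit in bounded space. The paper phrases this as ``acyclicity implies an infinite run visits infinitely many distinct states, hence unbounded space,'' while you take the contrapositive via pigeonhole; your explicit handling of the case where the repeated state is the initial one (via the fact that $\state_\tm$ is the target of no transition) is a detail the paper's induction glosses over, but the substance is the same.
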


\renewcommand{\tty}{T}


\section{Tree (Intersection) Types}
\label{sect:types}
Here we introduce a type system that we shall use to measure the space used by \LIAM runs.

\paragraph*{From Intersections Types to Tree Types} The framework that we adopt is the one of intersection types, with 
three tweaks:
\begin{enumerate}
	\item \emph{No idempotency}: we use the non-idempotent variant, where the intersection 
type $A\wedge A$ is not equivalent 
to $A$, and with 
 strong ties to linear logic and time analyses, because it takes into account how many times a resource/type $A$ is 
used, and not just whether $A$ is used or not. Non-idempotent intersections are 
multisets, which is why these types 
are 
sometimes called \emph{multi types} and an intersection $A\wedge B\wedge A$ is 
rather noted $\mset{A,B,A}$. 
 	\item \emph{Nesting/tree shape}: multi types are usually defined by two mutually 
dependent layers, a linear one containing ground types and (linear) arrow 
types, and the multiset level containing 
linear types. Here we also have two layers, except that we allow multisets to 
also contain multisets, thus we can 
have $\mset{A,\mset{\mset{B,B},A,\mset A},A,B}$. A nested multiset is a 
\emph{tree} 
whose leaves are linear types and whose internal nodes are nested multisets.

 	\item \emph{No commutativity}: we also consider \emph{non-commutative} tree types. 
Removing commutativity turns multisets into lists, or sequences, and trees 
into ordered trees. Removing commutativity is an inessential tweak. Our study does not depend on the ordered
structure, we shall only 
need bijections between multisets, to describe the reformulation of the \LIAM on type derivations, and these 
bijections are just more easily managed 
if commutativity is removed. This \emph{rigid} approach is also used by Tsukada, Asada, and Ong \cite{ongrigid} and Mazza, Pellissier, and Vial
\cite{MazzaPellissierVial}. The inessential aspect is stressed by referring to our types as to \emph{tree types}, 
rather than as to \emph{ordered tree types}, despite adopting the ordered variant.
	\end{enumerate}
\paragraph*{Basic Definitions} As for multi types, there are two mutually defined layers of types, \emph{linear types} and \emph{tree 
types}.
	\begin{center}$
	\begin{array}{rrcl@{\hspace{.6cm}}rrcl}
	\textsc{Linear types}&\linty,\lintytwo&\grameq&\initty\grammarpipe\arr{\tty}{\linty} \\
	\textsc{Tree types}&\tty,\ttytwo&\grameq&\mset{\gty_1,\ldots,\gty_n} & n\geq 0\\
	\textsc{(Generic) Types}&\gty,\gtytwo&\grameq&\linty\grammarpipe\tty
	\end{array}$
	\end{center}
Note that there is a ground 
type $\initty$, which can be thought as 
the type of normal forms, that in \ccbn are precisely abstractions. Note also that arrow (linear) types 
$\arr{\tty}{\linty}$ 
can have a tree type only on the left. About trees, since commutativity is 
ruled out, we have, for instance, that
$\mset{\linty,\lintytwo} \neq 
\mset{\lintytwo,\linty}$. 
Note that the empty tree type/sequence is a valid type, which is
noted $\emmset$. The concatenation of two sequences $\tty$ and $\ttytwo$ is noted $\tty\uplus\ttytwo$. We use $\size\tty$ for the length of $\tty$ as a sequence, that is, 
$\size{\mset{\gty_1,\mydots,\gty_n}} \defeq n$. 

Type 
judgments have the form $\tjudg{\tye}{\tm}{\gty}$, where $\tye$ is a type 
environment, defined below. Type 
derivations 
are noted $\tyd$ and we write 
$\tyd\pof\tjudg{\tye}{\tm}{\gty}$ for a type derivation $\tyd$ of ending 
judgment $\tjudg{\tye}{\tm}{\gty}$.  
Type environments, ranged over by $\tye,\tyetwo$ are total maps
from variables to tree types such that only finitely
many variables are mapped to non-empty tree types, and we write $\tye = 
\var_1:\tty_1,\ldots,\var_n:\tty_n$ if 
$\dom\tye = \set{\var_1,\ldots,\var_n}$---note that type environments 
\emph{are} commutative, what is non-commutative is 
the sequence constructor $\mset\cdot$, only. Given two type environments $\tye,\tyetwo$, we use
$\tye\uplus\tyetwo$ for the type environment
assigning to every variable $\var$ the list
$\tye(\var)\uplus\tyetwo(\var)$.

The 
typing rules are in \reffig{asstypesystem}. With respect to the literature, the 
difference is in rule $\tymany$. There 
are two differences. The first one is the already mentioned fact that premises may assign both linear types and tree 
types, while the literature usually only 
allows linear types. The second one is that the rule surrounds 
$\tye\defeq\uplus_{i=1}^n\tye_i$ with an additional nesting level---the 
notation $\mset{\tye}$ standing for the type environment 
$\var_1:\mset{\tty_1},\ldots,\var_n:\mset{\tty_n}$ if $\tye = \var_1:\tty_1,\ldots,\var_n:\tty_n$. 

\paragraph*{A Small Example} We show an instance of the rule $\tymany$ in the 
delicate case in which the premises contain the same free variable $\var$.
\[
\infer[\tymany]{\tjudg{\var:\mset{\mset{\linty_1,\linty_2}}} 
{\tm}{\mset{\gty_1,\gty_2}}}
{\tjudg{\var:\mset{\linty_1}}{\tm}{\gty_1}& 
\tjudg{\var:\mset{\linty_2}}{\tm}{\gty_2}}
\]
In particular, please note that first $\mset{\linty_1}$ and $\mset{\linty_2}$ 
are joined, and then they are surrounded by an additional nesting level. The other option would have been  
$\var:\mset{\mset{\linty_1},\mset{\linty_2}}$, but it is not what $\tymany$ does.

\paragraph*{Leaves Extraction and Leaf Contexts} Every tree type $\tty$ induces 
the sequence $\flatt\tty$--- 
equivalently, the flat tree type---of its leaves, defined by the following \emph{leaves extraction} operation.
\begin{center}$	\small\begin{array}{r@{\hspace{.33cm}}c@{\hspace{.33cm}}l}
\leaves{\emmset} \defeq \emmset
&
\leaves{(\mset{\linty}\uplus\tty)} \defeq \mset{\linty}\uplus\leaves{\tty}
&
\leaves{(\mset{\ttytwo}\uplus\tty)} \defeq \leaves\ttytwo\uplus\leaves{\tty}
	\end{array}
$\end{center}
We shall describe the leaves of a tree type also via a notion of leaf context.
\begin{center}$
	\begin{array}{r@{\hspace{.5cm}}lcl}
		\textsc{Leaf 
ctxs}&\leafctx\grameq\mset{\gty_1,\mydots,\ctxhole,\mydots,\gty_n}\grammarpipe
		\mset{\gty_{1},\mydots,\leafctx,\mydots,\gty_n}
	\end{array}$
\end{center}
If $\leaves\tty = \mset{\linty_1,\mydots,\linty_n}$ then for every $\linty_i$ there is a leaf context $\leafctx^i$ such 
that $\tty = \leafctx^i\ctxholep{\linty_i}$. Therefore, we shall use the notation $\tty = \leafctx^i\ctxholep{\linty}$, 
or even simply $\tty^i = \linty$, to say that the linear type $\linty$ is the $i$-th leaf of $\tty$.

In the following we use two basic properties of the type system, collected in the following straightforward lemma. One 
is the absence of 
weakening, and the other one is a correspondence between sequence types and 
axioms. 

\begin{lemma}[Relevance and axiom sequences]
\label{l:relevance}
If $\tyd \pof \tjudg{\tye}{\tm}{\linty}$ then $\dom\tye \subseteq \fv\tm$, thus if $\tm$ is closed then $\tye$ is 
empty. 
Moreover, there are exactly $\size{\leaves{\tye(\var)}}$ axioms typing $\var$ in $\tyd$, which appear from left to 
right 
as 
leaves of $\tyd$ (seen as an ordered tree) in the order given by $\leaves{\tye(\var)}= \mset{\linty_1,\ldots, 
\linty_k}$ 
and that the $i$-th axiom types $\var$ with $\linty_i$.
\end{lemma}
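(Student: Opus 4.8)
The plan is to establish both statements at once by structural induction on the derivation $\tyd$, with the induction hypothesis stated for a derivation of an arbitrary type $\gty$ (linear \emph{or} tree), since the right premise of $\tyapp$ and every premise of $\tymany$ type tree types. Before starting the induction I would record three routine facts. (i) Leaves extraction commutes with the tree operations: by an immediate induction on the left argument, $\leaves{\mset{\tty}} = \leaves{\tty}$ and $\leaves{\tty\uplus\ttytwo} = \leaves{\tty}\uplus\leaves{\ttytwo}$, hence $\leaves{\mset{(\uplus_{i=1}^n \tye_i)(\var)}} = \uplus_{i=1}^n \leaves{\tye_i(\var)}$; this identity is what makes the $\tymany$ case go through. (ii) The operations $\uplus$ and $\mset{\cdot}$ on environments preserve domains, so $\dom{(\tye\uplus\tyetwo)} = \dom\tye\cup\dom\tyetwo$ and $\dom{\mset{\tye}} = \dom\tye$. (iii) The rule $\tylamstar$ has no premises, so its conclusion carries the empty environment; $\tynone$ and $\tyvar$ are the only other leaf rules.

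With these in hand I would go through the six rules. The relevance part falls out immediately: $\tyvar$ gives $\dom\tye = \set\var = \fv\var$; $\tynone$ and $\tylamstar$ have empty environments; $\tyapp$ and $\tymany$ follow from the induction hypothesis and (ii); and $\tylam$ follows from the induction hypothesis together with $\fv{\lambda\var.\tm} = \fv\tm\setminus\set\var$, using $\alpha$-renaming so that the bound variable is fresh. The closing remark is just the instance $\fv\tm = \emptyset$.

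For the axiom count, which I read for $\var \in \dom\tye$, the leaf rules are trivial: $\tyvar$ has exactly one axiom, typing $\var$ with $\linty$, matching $\leaves{\mset{\linty}} = \mset{\linty}$, while $\tynone$ and $\tylamstar$ have none, matching the empty environment. In $\tylam$ no axiom is created or removed, and for the variables of the conclusion's environment (all distinct from the abstracted variable, by $\alpha$-renaming) the environment entry is unchanged from the premise, so the induction hypothesis transfers verbatim. The cases $\tyapp$ and $\tymany$ are where the ordering is at stake: reading the premises left to right, the axioms typing a fixed $\var$ in $\tyd$ seen as an ordered tree are exactly those of the first premise, then those of the second, and so on; applying the induction hypothesis to each premise, concatenating, and invoking the leaves identities from (i) --- for $\tyapp$ one only needs $\leaves{\tty\uplus\ttytwo} = \leaves{\tty}\uplus\leaves{\ttytwo}$, while for $\tymany$ one additionally needs $\leaves{\mset{\tty}} = \leaves{\tty}$ to absorb the extra nesting level added by the rule --- yields exactly $\size{\leaves{\tye(\var)}}$ axioms, in the order dictated by $\leaves{\tye(\var)}$, the $j$-th typing $\var$ with the $j$-th leaf.

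I do not expect a genuine obstacle here: the statement is bookkeeping, which is why it is called straightforward. The only points deserving care are keeping the induction hypothesis general enough to cover tree-typed subderivations; checking that the supplementary $\mset{\cdot}$ inserted by $\tymany$ does not perturb the leaf sequence (this is exactly $\leaves{\mset{\tty}} = \leaves{\tty}$); handling the abstracted variable in $\tylam$ via $\alpha$-renaming; and being explicit that the ``left-to-right order in $\tyd$'' of the axioms is the one inherited from the ordered premises of $\tyapp$ and $\tymany$ --- which is precisely the feature that makes the rigid, non-commutative presentation of the type system convenient in this argument.
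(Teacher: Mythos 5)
Your proof is correct and is exactly the intended argument: the paper states this lemma without proof (calling it straightforward), and the routine induction on $\tyd$ you give --- with the generalized induction hypothesis for tree-typed judgments and the two leaves identities $\leaves{\mset{\tty}} = \leaves{\tty}$ and $\leaves{\tty\uplus\ttytwo} = \leaves{\tty}\uplus\leaves{\ttytwo}$ handling the $\tymany$ and $\tyapp$ cases --- is precisely what the authors leave implicit. No gaps.
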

	
\paragraph*{Characterization of Termination} It is well-known that intersection and multi types characterize \ccbn 
termination, that is, they type \emph{all} and only those $\l$-terms that 
terminate with respect to \ccbn. Moreover, every term that is \ccbn normalizable can be typed with $\initty$. The same 
characterization holds with tree types, following the standard recipe\footnote{Namely, substitution lemma plus 
subject reduction for correctness,
and anti-substitution lemma, subject expansion, and typability of all normal forms for completeness (here trivial, because all normal forms are typed by $\tylamstar$).} for multi types, 
without surprises. See the Appendix for details.
\begin{theorem}[Correctness and completeness of tree types for \ccbn]
	A closed term $\tm$ is \ccbn normalizable if and only if there exists a 
	tree type derivation $\tyd \pof \ctjudg{\tm}{\initty}$.
\end{theorem}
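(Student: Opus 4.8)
The plan is to follow the standard recipe for non-idempotent intersection types; the only genuinely new ingredient is the bookkeeping of the extra tree/nesting structure, which is entirely governed by Lemma~\ref{l:relevance} and the leaf-context notation. For \emph{completeness} (\ccbn normalizable $\Rightarrow$ typable with $\initty$), note first that every \ccbn normal form is an abstraction and that rule $\tylamstar$ assigns $\initty$ to every abstraction; since the term is closed, this yields a derivation of $\ctjudg{\cdot}{\initty}$ for each normal form. It then suffices to establish \emph{subject expansion}: if $\tm \towh \tmtwo$ and $\tyd \pof \tjudg{\tye}{\tmtwo}{\gty}$ then $\tjudg{\tye}{\tm}{\gty}$ is derivable. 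Iterating it along the finite reduction of a normalizable term to its normal form produces the required derivation of $\ctjudg{\tm}{\initty}$. Subject expansion is proved by induction, the tree-type cases being handled by inverting $\tymany$ and $\tynone$, and the crucial case---a linear type passing through the head redex---reducing to an \emph{anti-substitution lemma}: if $\tjudg{\tye}{\tm\isub{\vartwo}{\tmtwo}}{\linty}$ is derivable then there are a tree type $\tty$, environments $\tye'$ and $\tye''$ with $\tye = \tye' \uplus \tye''$, and derivations of $\tjudg{\tye',\vartwo:\tty}{\tm}{\linty}$ and $\tjudg{\tye''}{\tmtwo}{\tty}$. The delicate point is the reconstruction of $\tty$: one isolates, in the given derivation, the subderivations rooted at the (pairwise disjoint) positions of $\tm\isub{\vartwo}{\tmtwo}$ that were occurrences of $\vartwo$ in $\tm$, reads off their types from left to right, and reassembles them into a tree type by replaying the pattern of $\tyapp$ rules (which concatenate via $\uplus$) and $\tymany$ rules (which add a nesting level) met on the paths from those positions to the root---so that, by Lemma~\ref{l:relevance}, the $i$-th axiom of the reconstructed derivation of $\tm$ types $\vartwo$ with the $i$-th leaf of $\tty$. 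The trailing arguments $\tmthree_1 \ldots \tmthree_h$ of the $\towh$-redex are untouched and merely require threading the reasoning through the application spine.

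For \emph{correctness} (typable with $\initty$ $\Rightarrow$ \ccbn normalizable) we prove \emph{weighted subject reduction}: if $\tm \towh \tmtwo$ and $\tyd \pof \tjudg{\tye}{\tm}{\gty}$ then there is a derivation of $\tjudg{\tye}{\tmtwo}{\gty}$ with strictly fewer rule occurrences than $\tyd$. This rests on the dual \emph{substitution lemma}: from $\tjudg{\tye',\vartwo:\tty}{\tm}{\linty}$ and $\tjudg{\tye''}{\tmtwo}{\tty}$ one builds a derivation of $\tm\isub{\vartwo}{\tmtwo}$ with conclusion type $\linty$ and environment $\tye' \uplus \tye''$, by replacing the $i$-th axiom on $\vartwo$---there are exactly $\size{\leaves{\tty}}$ of them, in order, by Lemma~\ref{l:relevance}---with the subderivation typing $\tmtwo$ with the $i$-th leaf of $\tty$; these subderivations are extracted by an easy induction on $\tty$, matched with the nesting it induces on $\tye''$. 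For the strict size decrease one checks that firing the head redex deletes at least the $\tyapp$ and $\tylam$ occurrences of the redex and the $\tymany$ (or $\tynone$) occurrence grouping the argument's type, while merely redistributing the argument's subderivation---of unchanged total size---among the axioms it replaces. Hence no infinite $\towh$-sequence can start from a typable term, since along it each term would be typable with a derivation of strictly smaller size; therefore $\towh$ terminates, i.e.\ $\tm$ is \ccbn normalizable (and this in fact holds for any typing, not only $\initty$).

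The routine part of the proof is everything already present in the multi-type literature; the new work lies in the two (anti-)substitution lemmas, and above all in the anti-substitution lemma, where the leaf types of the argument must be organized into a tree type coherent with the $\tymany$/$\tyapp$ pattern of the derivation---Lemma~\ref{l:relevance} being exactly what makes this organization well defined. A shorter but less self-contained alternative is to route the statement through \emph{flattening}: a tree-type derivation flattens to (the data of) an ordinary multi-type derivation of the same term and end type (modulo leaves extraction), and conversely every multi-type derivation can be turned into a tree-type one by inserting, at each grouping step, the nesting prescribed by $\tymany$; since neither direction changes the term nor the emptiness of the final type environment, the characterization transfers verbatim from the well-known multi-type case.
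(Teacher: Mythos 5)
Your main argument is correct and follows essentially the same route as the paper's own proof: a substitution lemma with the exact size accounting $\size{\tyd_{\tm\isub\var\tmtwo}} = \size{\tyd_{\tm}} + \size{\tyd_{\tmtwo}} -\size{\leaves\tty}$ feeding a quantitative subject reduction for correctness, and an anti-substitution lemma feeding subject expansion, together with the typability of normal forms by $\tylamstar$, for completeness. One caveat on your closing ``alternative via flattening'': the converse lifting from multi-type to tree-type derivations is precisely what the paper argues \emph{cannot} be defined by induction on derivations (the application case is problematic), so that shortcut would break down exactly in the completeness direction and should not be relied upon.
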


\paragraph*{Relationship with Multi Types} The leaves extraction operation can easily be extended to a flattening 
function turning a tree type into a multi type. Flattening can also be extended to derivations, by collapsing trees of 
$\tymany$ rules into the more traditional rule for multi sets that does not \emph{nodify} the type context. 
In this way, one obtains a forgetful transformation, easily defined by induction on derivations. A converse 
\emph{lifting} transformation, however, cannot be defined by induction on derivations---it is unclear how to define it 
on applications. This fact is evidence that tree types are strictly richer than multi types, because the tree structure 
cannot be inferred from the multiset one.


\section{The Tree IAM}
\label{sect:TIAM}
This section introduces a machine evaluating type derivations, the \emph{Tree \IAM}, or \emph{\TIAM}, that mimics the 
\LIAM directly on top of a type derivation $\tyd$. It is the key tool that we shall use to measure the space cost of 
\LIAM runs. The \TIAM is a very minor variation over the similar \SIAM machine 
evaluating type derivations for sequence 
types by Accattoli, Dal Lago, and Vanoni in \cite{ADLVPOPL21}. This and the next section mostly just recall and adapt 
notions and results from that paper.

\paragraph*{Preamble about Duplications and (No) Logs} $\beta$-reducing a $\l$-term (potentially) duplicates arguments, whose different 
copies may be used differently, typically being applied to different further arguments. The \LIAM 
never 
duplicate arguments, but has nonetheless to distinguish different uses of the 
same piece of code. This is why 
it uses \emph{logged} positions instead of simple 
positions: the log is a trace of (part of) the previous run that allows to distinguishing different uses of the 
position.

The key point of multi/tree type derivations is that duplication is explicitly 
accounted for, \emph{in 
advance}, by multisets/trees: all arguments come with as many type derivations 
as the times they are 
duplicated during evaluation. With tree types, the number of copies is the number of leaves of the tree. 
Note that a multi/tree type derivation may be way bigger than the term itself, 
while this is not possible with, say, simple types. 

The intuition behind the \TIAM is that the walk over the code done by the \LIAM can be rephrased and simplified on multi/tree type derivations, because the mechanism of logs---needed to distinguish different copies of arguments---is no longer needed, since all copies are already there: simple positions in the type 
derivation (not in the term!) are informative enough.
\begin{figure*}[t]

\begin{center}\footnotesize
\begin{tabular}{c}
$\begin{array}{ccc||ccc}
	\infer{\tjudg{}{\red{\tm\tmtwo}}{\ltyctxp{\initty_\uppt} (=\linty)}} 
	{\tjudg{}{\tm}{\arr{\tty}{\linty}} & \vdash} 
	&\tomachdotone&
	\infer{\tjudg{}{\tm\tmtwo}{\linty 
		}}{\tjudg{}{\red\tm}{\arr{\tty}{\ltyctxp{\initty_\uppt}}} & 
		\vdash}
		&
	\infer{\tjudg{}{\red{\lambda\var.\tm}}{\arr{\tty} 
			{\ltyctxp{\initty_\uppt}}}} 
	{\tjudg{}{\tm}{\linty (= \ltyctxp{\initty})}}
	& \tomachdottwo &
	\infer{\tjudg{}{\lambda\var.\tm}{\arr{\tty}{\linty}}} 
	{\tjudg{}{\red\tm}{\ltyctxp{\initty_\uppt}}}
	 \\[8pt]\hhline{======}&&&\\

	\infer{\tjudg{}{\tm\tmtwo}{\linty(= \ltyctxp{\initty}) 
		}}{\tjudg{}{\blue\tm}{\arr{\tty}{\ltyctxp{\initty_\downpt}}} & 
		\vdash}
		
	&\tomachdotthree&
		\infer{\tjudg{}{\blue{\tm\tmtwo}}{\ltyctxp{\initty_\downpt}}} 
		{\tjudg{}{\tm}{\arr{\tty}{\linty}} & \vdash}
		&
		\infer{\tjudg{}{\lambda\var.\tm}{\arr{\tty}{\linty}}} 
		{\tjudg{}{\blue\tm}{\ltyctxp{\initty_\downpt} (=\linty)}}
		 & \tomachdotfour &
		\infer{\tjudg{}{\blue{\lambda\var.\tm}}{\arr{\tty} 
				{\ltyctxp{\initty_\downpt}}}} 
		{\tjudg{}{\tm}{\linty}}
		\\[8pt]\hhline{======}&&&\\
		\infer*{\infer{\tjudg{}{\la\var\ctxp{\var}} 
			{\arr{\leafctxp{\linty_i}}\lintytwo}}{}}
	{\infer[i]{\tjudg{}{\red\var}{\ltyctxp{\initty_\uppt}_i (= \linty_i)}}{}}   
	&\tomachvar&
	 \infer*{\infer{\tjudg{}{\blue{\la\var\ctxp{\var}}} 
			{\arr{\leafctxp{\ltyctxp{\initty_\downpt}_i}}\lintytwo}}{}}
	{\infer[i]{\tjudg{}{\var}{\linty_i}}{}}
	&
	
	\infer*{\infer{\tjudg{}{\red{\la\var\ctxp{\var}}} 
			{\arr{\leafctxp{\ltyctxp{\initty_\uppt}_i}}\lintytwo}}{}}
	{\infer[i]{\tjudg{}{\var}{\linty_i (=\ltyctxp{\initty}_i)}}{}}
	 & \tomachbttwo &
	 \infer*{\infer{\tjudg{}{\la\var\ctxp{\var}} 
			{\arr{\leafctxp{\linty_i}}\lintytwo}}{}}
	{\infer[i]{\tjudg{}{\blue\var}{\ltyctxp{\initty_\downpt}_i}}{}} 
	\\[8pt]\hhline{======}\\
		\end{array}$
		\\
	$\begin{array}{ccccccc }
	%
%
	%
	
		%
		\infer{\tjudg{}{\tm\tmtwo}{\linty}} 
		{\tjudg{}{\blue\tm}{\arr{\leafctxp{\ltyctxp{\initty_\downpt}_i}}{\linty}}
			& \infer=[\tymany]{\tjudg{}{\tmtwo}{\tty (=\leafctxp{\ltyctxp{\initty}_i)}} }{
			\ldots\ \tjudg{}{\tmtwo}{\ltyctxp{\initty}_i}\ \ldots
			 } }
		& \tomacharg &
		\infer{\tjudg{}{\tm\tmtwo}{\linty}} 
		{\tjudg{}{\tm}{\arr{\tty}{\linty}}
			& \infer=[\tymany]{\tjudg{}{\tmtwo}{\leafctxp{\ltyctxp{\initty}_i}} }{
			\ldots\ \tjudg{}{\red\tmtwo}{\ltyctxp{\initty_\uppt}_i}\ \ldots
			 } }
	
		\\[8pt]\hhline{===}\\

		\infer{\tjudg{}{\tm\tmtwo}{\linty}} 
		{\tjudg{}{\tm}{\arr{\tty}{\linty}}
			& \infer=[\tymany]{\tjudg{}{\tmtwo}{\leafctxp{\ltyctxp{\initty}_i} (=\tty)} }{
			\ldots\ \tjudg{}{\blue\tmtwo}{\ltyctxp{\initty_\downpt}_i}\ \ldots
			 } }
		 & \tomachbtone &
		 		\infer{\tjudg{}{\tm\tmtwo}{\linty}} 
		{\tjudg{}{\red\tm}{\arr{\leafctxp{\ltyctxp{\initty_\uppt}_i}}{\linty}}
			& \infer=[\tymany]{\tjudg{}{\tmtwo}{\tty} }{
			\ldots\ \tjudg{}{\tmtwo}{\ltyctxp{\initty}_i}\ \ldots
			 } }
		\end{array}$
		\end{tabular}
		
		\end{center}

	\vspace{-8pt}
	\caption{The transitions of the Tree \IAM (\TIAM).}
	\label{fig:tiam}  
\end{figure*}
\paragraph*{The \TIAM} On the one hand, the \TIAM is simpler than the \LIAM because it has no logs, on the other hand, it is more technical to define because tree type derivations are less easily manipulated than $\l$-terms. The underlying idea however is simple. The \TIAM moves over a fixed type derivation $\tyd\pof\tjudg{}{\tm}{\initty}$, to be thought as the 
code, following the occurrence of $\initty$ in the final judgment through 
$\tyd$, according to the transitions in \reffig{tiam}. We shall now explain every 
involved concept. 

The position of the machine is given by an occurrence of a type judgment\footnote{A 
judgment may occur repeatedly in a derivation, which is why we talk about \emph{occurrences} of  
judgments. To avoid too many technicalities, however, we usually just write the judgment, leaving implicit that we 
refer to an 
occurrence of that judgment.}  $\ruleoc$ 
of $\tyd$. As the \LIAM, the \TIAM has two possible directions, noted $\downpt$ and $\uppt$\footnote{Type 
derivations are upside-down wrt to the term structure, then direction $\downp$ of the \LIAM becomes here $\uppt$, and 
$\upp$ is $\downpt$.}. In direction $\uppt$ the machine looks at the rule above the focused judgment, in direction 
$\downpt$ at the rule below. The only ``data structure''---encoding the tape of 
the \LIAM, as we shall explain---is a type context $\ltyctx$ isolating an 
occurrence of 
$\initty$ in the type $\linty$ of the focused judgment (occurrence) $\tjudg{\tye}{\tmtwo}{\linty}$, defined as follows 
(careful to not 
 confuse type contexts $\ltyctx$ and $\ttyctx$ with type environments $\tye$):
\begin{center}$
	\begin{array}{r@{\hspace{.4cm}}lll}
		\textsc{Linear ctxs}&\ltyctx \grameq  \ctxhole \grammarpipe \arr\tty\ltyctx 
		\grammarpipe \arr{\ttyctx}\linty
		\\[3pt]
		\textsc{Tree 
		ctxs}&\ttyctx\grameq\mset{\gty_1,\mydots,\gtyctx,\mydots,\gty_n}\\[3pt]
		\textsc{Type ctxs}&\gtyctx\grameq\ltyctx\grammarpipe\ttyctx
	\end{array}$
\end{center}
Summing up, a state $\state$ is a 
quadruple $(\tyd, \ruleoc, \ltyctx, \pol)$. If $\ruleoc$ is in the form 
$\tjudg{\tye}{\tmtwo}{\linty}$, we often write $\state$ as 
$\tjudg{}{\tmtwo}{\ltyctxp{\initty_\pol}}$, where $\ltyctxp{\initty}=\linty$. We shall see that type environments play 
no role.

\paragraph*{Intuitions about Positions (and Logs)} The intuition is that the 
active position $(\tm, \ctx_n)$ of a \LIAM state corresponds to the judgment 
occurrence $\ruleoc$ in the \TIAM, or, more precisely, to its position in the 
type derivation $\tyd$. The sub-term $\tm$ is exactly the term typed by 
$\ruleoc$. The context $\ctx_n$ is exactly the context giving the term 
$\ctx_n\ctxholep\tm$ typed by the final judgment of $\tyd$. The level $n$ of 
the context $\ctx_n$ of the active position counts the number of arguments in 
which the hole of $\ctx_n$ is contained. On the type derivation, each such 
argument is associated to a $\tyapp$ rule having the current judgment $\ruleoc$ 
in its right sub-derivation. Last, note that in the \LIAM the current log 
$\tlogn$ has length equal to $n$. We shall see in the next section that one can 
recover the log $\tlogn$ applying an extraction process to the $\tyapp$ rules 
found descending from $\ruleoc$ towards the final judgment.

\begin{figure*}[t]
	\[{\footnotesize
		\infer[\tyapp]{\tjudgw{}{3\indet}{(\la 
				\varthree\varthree)((\la\var\var\var)(\la\vartwo\vartwo))} 
			{\initty_{\uppt\red{1}}}}{\infer[\tylam]{\tjudgw{}{\indet} 
				{\la\varthree\varthree}{\arr{\mset{\initty_{\downpt\blue{4}}}}{\initty_{\uppt\red{2}}}}}
			{\infer[\tyvar]{\tjudgw{\varthree:\mset\initty}{0}{\varthree} 
					{\initty_{\uppt\red{3}}}}{}}& 
			\infer[\tymanys]{\tjudgw{}{3\indet}{(\la\var\var\var)(\la\vartwo\vartwo)}{\mset{\initty}}}{
				\infer[\tyapp]{\tjudgw{}{2\indet}{(\la\var\var\var)(\la\vartwo\vartwo)}{\initty_{\uppt{\red{5}}}}}{
					\infer[\tylam]{\tjudgw{}{2\indet}{\la\var{\var\var}}{\arr{\mset{\arr{\mset{\initty_{\uppt{\red
													{13}}}}}{\initty_{\downpt{\blue
												{9}}}},\mset{\initty_{\downpt{\blue
												{16}}}}}}{\initty_{\uppt{\red
										{6}}}}}}{
						\infer[\tyapp]{\tjudgw{\var:\mset{\arr{\mset{\initty}}{\initty},\mset{\initty}}}{\indet}{\var\var}{\initty_{\uppt{\red
										{7}}}}}{
							\infer[\tyvar]{\tjudgw{\var:\mset{\arr{\mset{\initty}}{\initty}}}{\indet}{\var}{\arr{\mset{\initty_{\downpt{\blue
													{14}}}}}{\initty_{\uppt{\red
												{8}}}}}}{}
							& 
							\infer[\tymanys]{\tjudgw{\var:\mset{\mset\initty}}{\indet}{\var}{\mset{\initty}}}
							{\infer[\tyvar]{\tjudgw{\var:\mset\initty}{0}{\var}{\initty_{\uppt{\red
												{15}}}}}{}}}} &
					\infer[\tymanys]{\tjudgw{}{2\indet}{\la\vartwo\vartwo}{\mset{\arr{\mset{\initty}}
								{\initty},\mset{\initty}}}}{\infer[\tylam]{\tjudgw{}{\indet}
							{\la\vartwo\vartwo}{\arr{\mset{\initty_{\downpt{\blue
												{12}}}}}{\initty_{\uppt{\red
											{10}}}}}}{
							\infer[\tyvar]{\tjudgw{\vartwo:\mset{\initty}}{0}{\vartwo}{\initty_{\uppt{\red
											{11}}}}}{}} &
						\infer[\tymanys]{\tjudgw{}{\indet}{\la\vartwo\vartwo}{\mset{\initty}}}
						{\infer[\tylamstar]{\tjudgw{}{0}{\la\vartwo\vartwo}{\initty_{\uppt{\red
											{17}}}}}{}}}}}}
	}\]
	\caption{An example of \TIAM execution.}
	\vspace{-8pt}
	\label{fig:tiam-example}
\end{figure*}

\paragraph*{Transitions} The \TIAM starts on the final judgment of $\tyd$, with empty type context $\ltyctx = \ctxhole$ 
and direction $\uppt$. It moves from judgment to judgment, following occurrences of $\initty$ around $\tyd$. To specify 
the transitions, we use the leaf contexts defined in the previous section.

The 
transitions are in \reffig{tiam}, their union is noted $\totiam$. We now 
explain them one by one---the transitions have the 
 labels of \LIAM transitions, because they correspond to each other, as we shall show. 

Let's start with the simplest, $\tomachdottwo$. The state focuses on the 
conclusion judgment $\ruleoc$ of a $\tylam$ 
rule with direction $\uppt$. The eventual type environment $\tye$ is omitted because the transition does not depend 
on it---none of the transitions does, so type environments are omitted from all transitions. The judgment assigns  
type $\arr\tty\linty$ to $\la\var\tm$, and the type context is $\arr\tty\ltyctx$, that is, it selects an occurrence of 
$\initty$ in the target type $\linty = \ltyctxp\initty$. The transition then simply moves to the judgment above, 
stripping 
down the type context to $\ltyctx$, and keeping the same direction. Transition $\resm 4$ does the opposite move, in 
direction $\downpt$, and transitions $\resm 1$ and $\resm 3$ behave similarly on $\tyapp$ rules: 
the extra premise $\vdash$ simply denotes the right premise of the $\tyapp$ rule that is left unspecified since not 
relevant to the transition.

Transitions $\tomacharg$: the focus is on the left premise of a $\tyapp$ rule, of type $\arr\tty{\lintytwo}$ isolating 
$\initty$ 
inside the $i$-th leaf $\linty_i$ of the tree type $\tty$. The transition then moves the state of the \TIAM to the 
$i$-th leaf of the tree of $\tymany$ rules on the right premise, changing direction. Transition $\tomachbtone$ does the 
opposite move.

Transitions $\tomachvar$ and $\tomachbttwo$ are based on the axiom sequences property of  \reflemma{relevance}. 
Consider 
a $\tylam$ rule occurrence whose right-hand type of the conclusion is $\arr\tty{\lintytwo}$. The premise has shape 
$\tjudg{\tye, \var:\tty}{\tm}{\lintytwo}$, and by the lemma there is a bijection between the leaves in 
$\tty$ and the axioms on $\var$, respecting the order in $\leaves\tty$. The left side of $\tomachbttwo$ focuses on the 
$i$-th leaf $\linty_i$ of $\tty$ and the \TIAM moves to the judgment of the axiom corresponding to that type, which is 
exactly 
the $i$-th from left to right seeing the derivation as a tree where the children of nodes are ordered as in the typing 
rules. Transition $\tomachvar$ does the opposite move, which can always happen because the code is the type derivation 
of a closed term. 

The only typing rule not inducing a transition is $\tylamstar$. Accordingly, 
when the \TIAM reaches a $\tylamstar$ rule, 
it is in a final state. Exactly as the \LIAM, the \TIAM is bi-deterministic.

\begin{proposition}
	The \TIAM is bi-deterministic for each type derivation $\tyd\pof\tjudg{}{\tm}{\initty}$.
\end{proposition}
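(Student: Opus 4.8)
The plan is to establish the two halves of bi-determinism separately: (i) forward determinism, i.e.\ that $\totiam$ is a partial function, and (ii) backward determinism, i.e.\ that $\totiam^{-1}$ is a partial function. For (i) I would argue by cases on the direction of the state $\state = (\tyd,\ruleoc,\ltyctx,\pol)$ and on the typing rule at the relevant position of $\tyd$, using the fact that in any state $\ltyctx$ singles out an occurrence of $\initty$ in the \emph{linear} type of the focused judgment $\ruleoc$ (this is built into the notion of state, and is in any case preserved by every transition). If $\pol=\uppt$, the rule deriving $\ruleoc$ is unique and pins down the transition: a $\tyvar$ occurrence triggers $\tomachvar$ (possible because $\tm$ is closed, so the bound occurrence does have a binder); a $\tyapp$ occurrence triggers $\tomachdotone$, whatever $\ltyctx$ is; a $\tylam$ occurrence triggers $\tomachdottwo$ or $\tomachbttwo$; and a $\tylamstar$ occurrence makes the state final. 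Dually, if $\pol=\downpt$, the rule immediately below $\ruleoc$ is unique and pins down the transition: $\tylam$, with $\ruleoc$ its body premise, triggers $\tomachdotfour$; $\tyapp$, with $\ruleoc$ its left premise, triggers $\tomachdotthree$ or $\tomacharg$; a $\tymany$ rule, in which case $\ruleoc$ is a leaf of a maximal tree of $\tymany$ rules which forms the right subderivation of some $\tyapp$ rule, triggers $\tomachbtone$; and if $\ruleoc$ is the final judgment of $\tyd$ the state is stuck.

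The only residual ambiguity therefore lies within the two pairs $\{\tomachdottwo,\tomachbttwo\}$ and $\{\tomachdotthree,\tomacharg\}$, each pair acting at a judgment whose type is an arrow $\arr\tty\linty$. It is resolved by one observation: since $\arr\tty\linty\neq\initty$, a linear type context over it has either the shape $\arr\tty{\ltyctx'}$, with its hole in the target, or the shape $\arr{\ttyctx}\linty$, with its hole inside the source tree, and the two shapes are mutually exclusive; the former is exactly what $\tomachdottwo$ (resp.\ $\tomachdotthree$) requires, the latter exactly what $\tomachbttwo$ (resp.\ $\tomacharg$) requires. That the unique applicable transition, when there is one, really does have a target appeals to \reflemma{relevance} in the cases $\tomachvar$ and $\tomachbttwo$ (the order-preserving bijection between the leaves of a tree type and the axioms on the corresponding variable), and is immediate in the other cases.

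For (ii) I would not repeat the analysis but exploit the evident symmetry of \reffig{tiam}. Let $\overline{\state}$ be obtained from $\state = (\tyd,\ruleoc,\ltyctx,\pol)$ by swapping the direction, $\uppt\leftrightarrow\downpt$, and leaving $\tyd,\ruleoc,\ltyctx$ unchanged; this is an involution on states. Inspecting the table, the eight transitions form four pairs that are swapped by $\overline{(\cdot)}$, namely $\tomachdotone/\tomachdotthree$, $\tomachdottwo/\tomachdotfour$, $\tomacharg/\tomachbtone$, and $\tomachvar/\tomachbttwo$, whence $\state\totiam\statetwo$ if and only if $\overline{\statetwo}\totiam\overline{\state}$. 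Thus $\totiam^{-1}$ is $\overline{(\cdot)}$-conjugate to $\totiam$, and backward determinism follows from (i). The bulk of the proof is the case bookkeeping; the one genuinely delicate point, inherited from the \SIAM of \cite{ADLVPOPL21}, is the target/source disambiguation of the hole of $\ltyctx$, which is precisely what lets the $\beta$-search transitions and the variable/backtracking transitions act at the same typing rules without clashing. Beyond that I anticipate no obstacle: in particular, checking the four $\overline{(\cdot)}$-dual pairs is a quick glance at the transition figure.
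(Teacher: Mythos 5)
Your proposal is correct and follows essentially the same route as the paper, which (as for the \LIAM's reversibility) treats bi-determinism as following from inspection of the transition rules, with backward determinism obtained by the direction-flipping symmetry that conjugates $\totiam$ with its inverse. Your case analysis — in particular the disambiguation of $\{\tomachdottwo,\tomachbttwo\}$ and $\{\tomachdotthree,\tomacharg\}$ via the mutually exclusive shapes $\arr\tty{\ltyctx'}$ versus $\arr{\ttyctx}\linty$ of the type context, and the four $\overline{(\cdot)}$-dual pairs of transitions — is exactly the inspection the paper leaves implicit.
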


\paragraph*{An example} In \reffig{tiam-example} we present the very same example analyzed in 
Section~\ref{sec:IJK}. We have reported its type derivation, with the 
occurrences of $\initty$ on the right of $\vdash$ annotated with increasing 
integers and a direction. The occurrence of $\initty$ marked with 1 represents 
the first state, and so on.
One can immediately notice that every occurrence of $\initty$ is visited 
exactly once. Moreover, the sequence of the visited subterms is the same as the 
one obtained in the example of Section~\ref{sec:IJK}. Please note that 
judgments are decorated with weights (such as $\indet$), which shall be 
introduced only in \refsect{space}, in order to later provide an example of 
decoration---they can be safely ignored for now.


\section{The \LIAM and \TIAM Bisimulation}
\label{sect:bisimulation}
The aim of this section is to explain the strong bisimulation between the \TIAM and the 
\LIAM, that is essentially the same between the \SIAM and the \LIAM studied in 
\cite{ADLVPOPL21}. 
A striking point of the \TIAM is that it does not have the log nor the tape. 
They are encoded in the position of the
judgment occurrence $\ruleoc$ and in the type context $\ltyctx$ of its states, 
as we shall show.


\paragraph*{Relating Logs and Tapes with Typed Positions} In the \LIAM, the log 
$\tlog=\lpos_1\cons\ldots\cons\lpos_n$ has a logged position for every argument $\tmtwo_1, \ldots, \tmtwo_n$ in which 
the position of the current state is contained. The argument $\tmtwo_i$ is the 
answer to the query of an argument for the variable in the logged position 
$\lpos_i$. The \TIAM does not keep a trace of the variables for which it 
completed a query, but the answers to those (forgotten) queries are simply 
given by the sub-derivations for 
$\tmtwo_1, \ldots, \tmtwo_n$ in which the current judgment occurrence $\ruleoc$ 
is contained---the way in which 
$\lpos_k$ identifies a copy of $\tmtwo_k$ in the \LIAM corresponds on the type derivation $\pi$ to the index $i$ of the 
leaf (in the tree of 
sub-derivations) typing $\tmtwo_k$ in which $\ruleoc$ is located. Note that the \LIAM manipulates the 
log only via 
transitions $\tomacharg$ and $\tomachbtone$, that on the \TIAM correspond exactly to entering/exiting derivations for 
arguments.
The tape, instead, contains logged positions for which the \LIAM either has not yet found the associated argument, or 
it is backtracking to. Note that the \LIAM puts logged positions on the tape via transitions $\tomachvar$ and 
$\tomachbtone$, and removes them using $\tomacharg$ and $\tomachbttwo$. By 
looking at \reffig{tiam}, it is 
clear that there is a logged position on the \LIAM tape for every type 
sequence of the flattening of $\tty$ in which it lies the hole 
$\ctxhole$ of the current type context $\ltyctx$ of the \TIAM.

\paragraph*{Extracting \LIAM States} These ideas are used to extract from every \TIAM state $\state$ a \LIAM state $\extr\state$ in a quite technical 
way. In particular, the extraction process retrieves a log $\elog\state$ from 
the judgement $\ruleoc$ of $\state$ and a tape $\etape\state$ from the type 
context $\ltyctx$ of $\state$, using a sophisticated \emph{T-exhaustible 
invariant} of the \TIAM to retrieve the exact shape of the logged positions in 
$\elog\state$ and $\etape\state$. 

Let us give a high-level description of how extraction works. The invariant is 
based on the pairing of every \TIAM state $\state$ with a set of \emph{test 
states}, some coming from the judgment $\ruleoc$ of $\state$, called 
\emph{judgment tests}, and some coming from the type context $\ltyctx$, called 
\emph{type (context) tests}. The invariant guarantees a certain recursive 
property of each test state. The extraction process uses this property to 
extract a logged position $\lpos_{\statetwo}$ from each test state $\statetwo$ 
of $\state$. 

Given a \TIAM state $\state=(\tyd, \ruleoc, \ltyctx, \pol)$, its judgment tests 
are associated to the $\tyapp$ rules having $\ruleoc$ in their right 
sub-derivation. Their extractions give logged positions 
$\lpos_1\cdot\mydots\cdot\lpos_n$ forming the extracted log $\elog\state$, 
following the correspondence described above.

Type tests are associated to the leaf contexts surrounding the hole of $\ltyctx$. The extraction of the tape $\etape\state$ from $\ltyctx$ is done according to the following schema:
\[\begin{array}{lcl}
			\etape{\ctxhole} &\defeq &\stempty 
			\\
			\etape{\arr\tty{\ltyctx}} & \defeq & \resm\cdot 
			\etape{\ltyctx}
			\\
			\etape{\arr{\leafctxp\ltyctx}\lintytwo}
			& \defeq & \elpos{\state_{\leafctx}}\cdot\etape{\ltyctx}
		\end{array}
		\]
where $\state_{\leafctx}$ is the state test associated to the leaf context $\leafctx$.

For lack of space, and because this is essentially identical to what is done in \cite{ADLVPOPL21} for the \SIAM, 
the technical 
development is in 
\begin{SHORT}~\cite{TRPOPL2021}\end{SHORT}\begin{LONG}\refapp{bisimulation}\end{LONG}.
 The 
extraction process induces a relation $\state \bisimtypes \extr\state$ that is easily proved to be a strong 
bisimulation between the \TIAM and the \LIAM.

\begin{proposition}[\TIAM and \LIAM bisimulation]
\label{prop:str-bisim-typed}
Let $\tm$ a closed term and $\tyd\pof\tjudg{}{\tm}{\initty}$ a tree type derivation. Then 
$\bisimtypes$ is a strong bisimulation between \TIAM states on $\tyd$ and \LIAM 
states on $\tm$. Moreover, if $\state_\tyd \bisimtypes \state_\l$ then 
$\state_\tyd$ is 
\TIAM reachable if and only if $\state_\l$ is 
\LIAM reachable.
\end{proposition}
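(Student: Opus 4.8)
The plan is to turn the informal extraction sketched above into a precise map $\extr{\cdot}$ from \TIAM states to \LIAM states and to check that the induced relation $\bisimtypes$ (where $\state\bisimtypes\extr\state$) is a strong bisimulation, following closely the \SIAM/\LIAM development of~\cite{ADLVPOPL21}; the only genuinely new ingredient is that type contexts are now \emph{trees} rather than sequences, a point that will be localized entirely in the axiom-sequence correspondence of \reflemma{relevance}. First I would fix, for a \TIAM state $\state=(\tyd,\ruleoc,\ltyctx,\pol)$ with $\ruleoc$ of the form $\tjudg{\tye}{\tmtwo}{\linty}$, the three components of $\extr\state$: (i) the \emph{position} $(\tmtwo,\ctx_n)$, where $\ctx_n\ctxholep\tmtwo$ is the term typed at the root of $\tyd$ and $n$ is the number of $\tyapp$-rules below $\ruleoc$ that have $\ruleoc$ in their right subderivation; (ii) the \emph{log} $\elog\state=\lpos_1\cdot\mydots\cdot\lpos_n$, with one logged position per such $\tyapp$-rule, ordered from the root outward; (iii) the \emph{tape} $\etape\state$, read off from $\ltyctx$ by the recursive schema already displayed in the excerpt (a $\resm$ for every $\arr\tty\ltyctx$ layer, and a logged position $\elpos{\state_\leafctx}$ for every traversed leaf context). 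The direction of $\extr\state$ is $\pol$ translated through $\downp\leftrightarrow\uppt$ and $\upp\leftrightarrow\downpt$.

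Second, to know that $\extr\state$ is actually a legal \LIAM state—in particular that each $\lpos_k$ and each $\elpos{\state_\leafctx}$ is a genuine logged position whose log has the length prescribed by the level of its context, as required by \reflemma{invarianttwo}(1)—I would establish the \emph{T-exhaustible invariant}: every reachable \TIAM state $\state$ carries a set of \emph{judgment tests} (one per $\tyapp$-rule below $\ruleoc$) and \emph{type tests} (one per leaf context around the hole of $\ltyctx$), and each test state is ``exhaustible'', meaning that peeling off its outermost $\tyapp$-rule, resp.\ its outermost leaf context, again yields an exhaustible configuration of strictly smaller size. Extraction of a logged position from a test state is then a terminating recursion on this measure, and it automatically respects the level/length constraint. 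I would prove the invariant by induction on \TIAM runs: it holds trivially at the initial state (no tests), and is preserved by each of the eight transitions of \reffig{tiam}. Here \reflemma{relevance} is exactly what makes the tree nesting harmless: the ordered bijection it provides between $\leaves\tty$ and the axioms on $\var$ is what lets transitions $\tomachvar$ and $\tomachbttwo$ relocate the focus through a possibly deeply nested tree type while keeping the tests—and their extractions—coherent.

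Third, with $\extr{\cdot}$ well defined on reachable states, I would verify that $\bisimtypes$ is a strong bisimulation by a transition-by-transition comparison of \reffig{iam} and \reffig{tiam}, whose labels are aligned precisely for this. The bookkeeping transitions $\tomachdotone$--$\tomachdotfour$ merely move the focus across an application or abstraction node and push/pop a $\resm$, which matches the clause $\etape{\arr\tty\ltyctx}=\resm\cdot\etape\ltyctx$ and the definition of the extracted position; transitions $\tomacharg$ and $\tomachbtone$ enter/leave the subderivation of an argument, which on the \LIAM side slides a logged position between tape and log, matching the $\elpos{\state_\leafctx}$ clause and the fact that $\elog$ grows/shrinks by exactly that test state; transitions $\tomachvar$ and $\tomachbttwo$ jump to/from an axiom, matching the \LIAM binder jump that pushes/pops $(\var,\la\var\ctxtwo_n,\tlogn)$ with $\tlogn$ the length-$n$ prefix of the current log, the coincidence of the pushed data being once more a direct reuse of the exhaustible invariant. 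Since the initial \TIAM state extracts to $\state_\tm$, since the \TIAM is final (at a $\tylamstar$ rule) exactly when its extraction has the final \LIAM shape $\dstate{\la\var\tmtwo}{\ctx}{\epsilon}{\tlog}$, and since both machines are bi-deterministic (shown earlier), the forward simulation immediately yields the backward one, so $\bisimtypes$ is a strong bisimulation. The ``moreover'' then follows: a \TIAM run from the initial state maps step-by-step onto a \LIAM run from $\state_\tm$, whence \TIAM reachability of $\state_\tyd$ implies \LIAM reachability of $\state_\l=\extr{\state_\tyd}$; conversely a \LIAM run from $\state_\tm$ lifts along $\bisimtypes$ to a \TIAM run from the initial state ending on a state $\bisimtypes$-related to $\state_\l$, and since on reachable states $\extr{\cdot}$ is injective—the log component pins down the copy via the same axiom-sequence bijection—this ending state must be $\state_\tyd$ itself.

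I expect the main obstacle to lie entirely in the second step: phrasing the T-exhaustible invariant so that it is simultaneously strong enough to guarantee that extraction produces well-formed logged positions of the correct length and robust enough to be preserved by all eight transitions, all while keeping the nesting of tree type contexts under control. Once that invariant is in place, the bisimulation proper is a routine—if tedious—case analysis, and the reachability clause is a one-line consequence.
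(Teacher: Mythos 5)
Your proposal follows essentially the same route as the paper's own proof: the T-exhaustible invariant with judgment and type tests established by induction on \TIAM runs, the extraction of logged positions, logs and tapes grounded in that invariant, a transition-by-transition check that extraction commutes with the eight transitions, and the reachability clause by induction on run length from the bisimilar initial states. The only deviations are cosmetic (your ``peeling off'' reading of exhaustibility versus the paper's run-to-an-axiom-respecting-the-focus formulation, and your explicit injectivity remark for the converse reachability direction, which the paper leaves implicit), so there is nothing substantive to flag.
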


The \emph{moreover} part of the above statement hints at a bijection between 
\emph{all} the states in $\tyd$ and reachable \LIAM states. However, there 
still could be the possibility that some of the states in $\tyd$ are not 
reachable. This is actually \emph{not} the case, and the technical development 
is in the Appendix.
\begin{proposition}
	Let $\tm$ a closed term and $\tyd\pof\tjudg{}{\tm}{\initty}$ a tree type 
	derivation. Then every state of $\tyd$ is reached exactly once.
\end{proposition}

\begin{figure*}[t!]
	\[\small
	\begin{array}{c@{\hspace{1cm}}c@{\hspace{1cm}}c}
	\infer[\tyvar]{\wtjudg{\var:\mset{\linty}}{\ssize{\linty}}{\var}{\linty}}{}
	&

\infer[\tylam]{\wtjudg{\tye}{\max\{w,\,\ssize{\arr{\tty}{\linty}}\}}{\lambda\var.\tm}{\arr{\tty}{\linty}}}{\wtjudg{\tye,
\var:\tty}{w}{\tm}{\linty}}
	&
\infer[\tyapp]{\wtjudg{\tye\uplus\tyetwo}{\max\{w,v\}}{\tm\tmtwo}{\linty
}}{\wtjudg{\tye}{w}{\tm}{\arr{\tty}{\linty}} & 
\wtjudg{\tyetwo}{v}{\tmtwo}{\tty}}
	\\[8pt]
	\infer[\tylamstar]{\wtjudg{\tye}{0}{\lambda\var.\tm}{\initty}}{}
	&
	\infer[\tymany]{\wtjudg{\mset{\uplus_{i=1}^n\tye_i}}{\indet+\max_{i}\set{w_i}}
	 {\tm}{\mset{\gty_1,\mydots,\gty_n}
}}
{\wtjudg{\tye_i}{w_i}{\tm}{\gty_i} & 1\leq i\leq n}
	& \infer[\tynone]{\wtjudg{}{0}{\tm}{\emmset}}{}
	\end{array}
	\]
	\vspace{-8pt}
	\caption{The tree type system with weights.}
	\label{fig:space-weigths}
	\end{figure*}
\section{Measuring the Space of Interactions}\label{sect:space}
This section contains the main contribution of the article: it gives a way of 
measuring the space consumed by the complete \LIAM run on the term $\tm$ via a 
quantitative analysis of the tree type derivation for $\tm$. We proceed in two 
steps. \begin{enumerate}
	\item \emph{The space of single extracted states}: given a \TIAM state $\state=(\tyd, \ruleoc, \ltyctx, \pol)$, we 
show how to measure the space of the extracted \LIAM state $\extr\state$ from $\tyd$, $\ruleoc$, and $\ltyctx$.
	\item \emph{The space of the whole execution}: we refine tree type 
	derivations adding \emph{weights} on judgments, 
and show that the weight of the final judgment coincides with the maximum space 
consumption over all extracted states, 
that is, along the whole \LIAM execution.
\end{enumerate}
\paragraph*{The Undetermined Pointers Size $\indet$} A technical point common to both parts is that the quantitative 
study of tree types derivations is relative to an undetermined value $\indet$. The reason for using $\indet$ is that 
our space analyses have both local and global components. Locally, we count how many occurrences of $\resm$ and how 
many 
logged positions are involved in a state (for step 1) or in all states in and 
above that judgment (for step 2). The 
global component comes from the fact that all logged positions of the \LIAM, independently of where they arise, are 
implemented via pointers to the \emph{global} code. Essentially, $\indet$ is meant to be replaced, at the very end of 
both analyses, by the size of pointers to the \LIAM global code, that is, by $\log\size\tm$, where $\tm$ is the term 
typed in the final judgment of the type derivation $\tyd$. Therefore, locally 
our measures shall include $\indet$, 
which shall substituted at the end with $\log\size\tm$.

\subsection{The Space of Single Extracted States}
\paragraph*{Trees and the Size of Extracted Logged Positions} Basically, given a \TIAM state $\state=(\tyd, \ruleoc, 
\ltyctx, \pol)$, the size of logged positions in $\extr\state$ is obtained by counting $\indet$ for 
\begin{itemize}
	\item \emph{Extracted tape}: every sequence constructor $\mset\cdot$ surrounding the hole $\ctxhole$ in $\ltyctx$;
	\item \emph{Extracted log}: every $\tymany$ rule on the path from $\ruleoc$ 
	to the final judgment of $\tyd$.
\end{itemize}
Clearly, it is the newly introduced tree structure that allows to measure the size of extracted logged positions, as 
expected.

First, we define a size of type contexts meant to measure the size of the extracted tapes.
\begin{definition}[Branch size of type contexts/extracted tapes]
	Let $\state=(\tyd, \ruleoc, \ltyctx, \pol)$ be a \TIAM state. The branch size $\ttm{\cdot}$ for type contexts is 
defined as follows:
	\[\begin{array}{rcl@{\hspace{.5cm}} rcl}
		\ttm{\ctxhole}&\defeq&0 & 
		\ttm{\arr\tty\ltyctx}&\defeq&1+ \ttm{\ltyctx}\\[2pt]
		\ttm{\arr{\ttyctx}\linty}&\defeq&\ttm{\ttyctx} &
		\ttm{\mset{\gty_1,\mydots,\gtyctx,\mydots,\gty_n}}&\defeq&\indet+
		\ttm{\gtyctx}
	\end{array}\]
\end{definition}
Let us interpret the branch size with respect to the tape extraction schema of 
the previous section. The $+1$ 
in the 
clause for $\arr\tty\ltyctx$ is there to count $\resm$. The clause for sequences instead gives $\size{\leafctx} = 
n\cdot\indet$ if the hole has height $n$ in the leaf context $\leafctx$ seen as a tree---whence the name \emph{branch 
size}. 

Then, we define a branch size for judgments, meant to measure the size of  
extracted logs, and a branch size for states.
\begin{definition}
	Let $\state=(\tyd, \ruleoc, \ltyctx, \pol)$ be a \TIAM state.
	\begin{itemize}
	\item \emph{Branch size of judgments/extracted logs}: let $n$ be 
	the number of $\tymany$ rules encountered descending from $\ruleoc$ to the 
	root of $\tyd$. Then $\tlm{\ruleoc} \defeq n\cdot\indet$.
	\item \emph{Branch size of \TIAM states}: $\bsize\state \defeq \ttm\ltyctx + \tlm{\ruleoc}$.
	\end{itemize}
\end{definition}

We prove that the defined branch sizes do correspond to their intended meanings, that is, the branch sizes of 
extracted logs and tapes, showing that the size of \TIAM states captures the space size of the extracted \LIAM state.
\begin{proposition}[Space of Single Extracted States]
\label{prop:space-single-states}
	Let $\state=(\tyd, \ruleoc, \ltyctx, \pol)$ be a reachable \TIAM state. 
	Then $\ttm{\ltyctx}=\lm{\etape\state}$ and 
	$\tlm{\ruleoc}=\lm{\elog\state}$, and thus $\bsize\state = \lm{\extr\state}$. Moreover,
	\begin{enumerate}
		\item if $\elog\state=\lpos_1\mydots\lpos_n$, and let $h_i$ be the 
		number 
		of $\tymany$ rules of the $i^{th}$ 
	$\tymany$ rule tree found descending from $\ruleoc$ to the root of $\tyd$, 
	then $\lm{\lpos_i}=h_i$;
	\item for each extracted tape position $\lpos$, \ie for each $\gtyctx$ such 
	that $\ltyctx=\gtyctxp{\arr{\leafctxp{\ltyctxtwop{\initty}}}\linty}$, then 
	$\lm\lpos=\size\leafctx\cons\indet$.
	\end{enumerate}
\end{proposition}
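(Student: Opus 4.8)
The plan is to establish the two equalities $\ttm{\ltyctx}=\lm{\etape\state}$ and $\tlm{\ruleoc}=\lm{\elog\state}$ by induction following, respectively, the structure of the type context $\ltyctx$ and the path in $\tyd$ descending from $\ruleoc$ to the root; the final equality $\bsize\state = \lm{\extr\state}$ is then immediate by unfolding the definitions $\bsize\state \defeq \ttm\ltyctx + \tlm{\ruleoc}$ and $\lm{\extr\state}=\lm{\elog\state}+\lm{\etape\state}$. For the tape, I would induct on $\ltyctx$ using exactly the extraction schema recalled in Section~\ref{sect:bisimulation}: the base case $\ltyctx=\ctxhole$ gives $\etape{\ctxhole}=\stempty$, $\lm{\stempty}=0=\ttm{\ctxhole}$; the case $\ltyctx = \arr\tty{\ltyctx'}$ matches $\etape{\arr\tty{\ltyctx'}}=\resm\cdot\etape{\ltyctx'}$, so $\lm{\resm\cdot\etape{\ltyctx'}} = 1+\lm{\etape{\ltyctx'}} \overset{IH}{=} 1+\ttm{\ltyctx'} = \ttm{\arr\tty{\ltyctx'}}$; and the case $\ltyctx = \arr{\leafctxp{\ltyctx'}}\linty$ matches $\etape{\arr{\leafctxp{\ltyctx'}}\linty} = \elpos{\state_{\leafctx}}\cdot\etape{\ltyctx'}$, so $\lm{\elpos{\state_\leafctx}\cdot\etape{\ltyctx'}} = \lm{\elpos{\state_\leafctx}} + \lm{\etape{\ltyctx'}}$; here I invoke the T-exhaustible invariant to know that $\elpos{\state_\leafctx}=(\var,\ctxtwo,\tlogtwo)$ with $\lm{\tlogtwo}$ equal to the number of $\tymany$-nodes above $\state_\leafctx$, but since $\ttm{\cdot}$ on the leaf-context layer adds exactly one $\indet$ per sequence constructor $\mset\cdot$, point (2) of the statement is precisely the combinatorial content: $\lm{\elpos{\state_\leafctx}}=\size\leafctx\cdot\indet = \indet + \ttm{\leafctx}$ — wait, I need to be careful that the $\indet$ accounting in $\ttm{\cdot}$ and the $\indet$ in $\lm{(\var,\ctxtwo,\tlogtwo)}$ line up, which is the crux of point (2) and where the bookkeeping must be done honestly.

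For the log, I would induct on the number $n$ of $\tymany$ rules met descending from $\ruleoc$ to the root of $\tyd$. When $n=0$ the extracted log is empty and $\tlm{\ruleoc}=0=\lm\epsilon$. For the inductive step, following the bisimulation analysis the extracted log $\elog\state = \lpos_1\cdots\lpos_n$ has one logged position per $\tyapp$ rule (equivalently, per maximal $\tymany$-tree) crossed on the way down, and each $\lpos_i=(\var_i,\ctxtwo_i,\tlog_i)$; by the extraction process each inner $\tlog_i$ is itself recovered from the $\tyapp$-rules nested inside the corresponding argument subderivation, so $\lm{\lpos_i}=\indet+\lm{\tlog_i}=\indet + h_i$ where $h_i$ is the number of $\tymany$ rules inside that $i$-th tree — this is exactly point (1). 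Summing, $\lm{\elog\state}=\sum_{i=1}^n(\indet + h_i)$; but $\tlm{\ruleoc}=n\cdot\indet$ only counts the \emph{root-level} $\tymany$ crossings, so I must observe that the $h_i$ are \emph{not} the same quantity — rather, the definition $\tlm{\ruleoc}\defeq n\cdot\indet$ counts $\tymany$ rules but, reading the paper's convention, $n$ here is the number of $\tymany$\emph{-trees} one exits, i.e. of $\tyapp$ premises, and the inner $h_i$ contribute to the $\lm{\lpos_i}$ terms not to $\tlm{\cdot}$; I will therefore double-check the precise reading of "number of $\tymany$ rules encountered descending" against Figure~\ref{fig:tiam} so that the two definitions genuinely agree, and phrase point (1) so that $\lm{\lpos_i}=h_i$ (the paper writes $h_i$, not $\indet+h_i$, because each $h_i$ is itself measured in units already including the $\indet$ for the position pointer — I would reconcile this with $\lm{(\var,\ctxtwo,\tlogtwo)}\defeq\indet+\lm{\tlogtwo}$ before committing).

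The main obstacle is not the inductions themselves, which are routine, but pinning down the exact correspondence between the three distinct counting conventions in play: (i) the $\indet$'s inserted by $\ttm{\cdot}$ and $\tlm{\cdot}$, (ii) the $\indet$'s and $+1$'s in the definition of $\lm{\cdot}$ on \LIAM states, and (iii) the shapes of the logged positions guaranteed by the T-exhaustible invariant. All of this is inherited from the \SIAM development in~\cite{ADLVPOPL21}, so I would prove the core size-preservation statements there by transporting the \SIAM results along the bijection of Proposition~\ref{prop:str-bisim-typed}, and reserve the genuinely new argument for the tree layer: that the branch size $\ttm{\cdot}$ of a leaf context equals $\size\leafctx\cdot\indet$ and that $\tlm{\cdot}$ correctly aggregates the per-tree heights $h_i$. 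Reachability of $\state$ is used only to ensure that $\extr\state$ is a well-defined reachable \LIAM state (via Proposition~\ref{prop:str-bisim-typed}) and that the T-exhaustible invariant applies; I would state this dependence explicitly at the start.
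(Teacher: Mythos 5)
Your decomposition into a tape part (structural induction on $\ltyctx$) and a log part (induction on the descent from $\ruleoc$ to the root) is not what the paper does: the paper proves the whole statement at once by induction on the length of the reachable run $\state_0\totiam^*\state$, with a case analysis on the last transition. This difference is not cosmetic, and it is where your plan has a genuine gap. The two places you explicitly defer --- ``the bookkeeping must be done honestly'' for $\lm{\elpos{\state_\leafctx}}=\size\leafctx\cdot\indet$, and the reconciliation of $\lm{\lpos_i}$ with $h_i$ --- are not side conditions to be checked: they \emph{are} items (1) and (2) of the proposition, i.e.\ the entire quantitative content. The T-exhaustible invariant does not hand you ``$\lm{\tlogtwo}$ equal to the number of $\tymany$-nodes above $\state_\leafctx$''; it only guarantees that an exhausting run exists, and the extracted logged position is then built recursively from the judgement tests of the axiom where that run ends. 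Its size is therefore determined by the history of the run, not by the shape of $\ltyctx$, so a structural induction on the type context has no handle on it. The run-length induction is precisely the mechanism that supplies this information: a logged position sitting on the tape of a reachable state was pushed there by an earlier $\tomachvar$ step as $(\var,\la\var\ctxtwo_n,\elog{i})$, where $\lm{\elog i}=k\cdot\indet$ is already known from the induction hypothesis on the previous state (with $k=\size\leafctx-1$ the number of $\tymany$ rules between the axiom and its binder), and it then persists unchanged until popped, so its size is carried along for free.

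Two further points. First, your fallback of ``transporting the \SIAM results along the bijection of Proposition~\ref{prop:str-bisim-typed}'' cannot work: the \SIAM development in~\cite{ADLVPOPL21} uses sequence types with no tree structure, so there are no branch sizes there to transport --- the correspondence between sequence-constructor nesting in types and $\tymany$-rule nesting in the derivation is exactly the new content of this paper. Second, your hesitation over whether $\lm{\lpos_i}$ is $h_i$ or $\indet+h_i$ is resolved by the $\tomacharg$ case of the paper's proof: crossing the $i$-th $\tymany$ tree adds $\size\leafctx$ rules to the descent count and moves a position of size $\size\leafctx\cdot\indet$ from tape to log, so the outermost $\tymany$ rule of each tree pays for the $\indet$ of the position pointer and the inner ones for the nested log; your sum $\sum_i(\indet+h_i)$ double-counts. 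If you want to keep a structural argument, you would first need a standalone lemma, proved by induction on the T-exhaustibility structure, characterizing the size of every extracted logged position of every reachable state --- at which point you have essentially reconstructed the paper's run induction.
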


\paragraph*{The Need for Tree Types} A subtle point is that the tree structure of types is not needed in order to define 
the extraction process---indeed, \LIAM states are extracted from \emph{sequence} type derivations in \cite{ADLVPOPL21}. 
Extraction is an indirect process---a sort of logical relation---whose functioning is guaranteed by an invariant (the 
T-exhaustible invariant in the Appendix). The process
does not describe explicitly the shape of the extracted logged positions, it only guarantees that adequate logged 
positions exist. Without tree types, the structure of multi types derivations somehow encodes enough information to 
retrieve $\extr\state$, but how many logged positions are involved can be discovered only by unfolding the 
whole extraction process, the information is not encoded into the types themselves. 

There is a further subtlety. Tree types trace the \emph{number} of pointers as precisely described by the 
\emph{moreover} part of \refprop{space-single-states}, but do not describe the internal structure of 
logged positions. Given a \TIAM state $\state$, we can easily know the length of  
$\elog\state$ and $\etape\state$, and know the number of pointers to implement each logged position $\lpos$ in them, 
which is enough to measure space. The internal structure of $\lpos$, however, cannot be read from tree types. Again, it 
is determined only by unfolding the whole extraction process.

\subsection{The Space of the Whole Execution}
\paragraph*{Type Weights} To obtain the space cost of the whole execution we endow tree types derivations with 
\emph{weights}\footnote{We introduce a different word for measuring space of 
the whole execution, because judgments 
are measured in two different ways: the \emph{branch size} measures what is 
below the judgment, and corresponds to the 
size of the extracted log for a state, the \emph{weight} measures what is above 
a judgment, and gives the maximum space 
over all states in the rooted sub-derivation.}. 
In turn, we first 
have to define a notion of weight for types. The intuition is that we are taking the max of the branch size for 
type contexts 
$\gtyctx$ used above, over all the ways of writing a type $\gty$ as 
$\gtyctxp\initty$, as confirmed by the associated lemma.
\[\begin{array}{c@{\hspace{.9cm}} c}
	\ssize{\initty}\defeq0
	&
	\ssize{\arr{\tty}{\linty}}\defeq\max\{\ssize{\tty},\ssize{\linty}+1\}
	\\[7pt]
	\multicolumn{2}{c}{
	\ssize{\mset{\gty_1,\mydots,\gty_n}}\defeq 
	\indet+\max_{i}\{\ssize{\gty_i}\}
	}
	\end{array}\]
\begin{lemma}
\label{l:weightthree}
	Let $\gty$ be a type. Then 
	$\ssize\gty=\max_{\gtyctx|\gty=\gtyctxp\initty}\ttm\gtyctx$.
\end{lemma}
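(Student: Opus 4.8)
The plan is to prove the identity by a routine structural induction on the type $\gty$, with a case analysis on its outermost constructor — the ground type $\initty$, an arrow $\arr{\tty}{\linty}$, or a tree type $\mset{\gty_1,\mydots,\gty_n}$ — and, correspondingly, on the outermost constructor of a type context $\gtyctx$ with $\gtyctxp{\initty}=\gty$. Say that a type context \emph{fills to} a type $\gty'$ when replacing its hole by $\initty$ yields $\gty'$, so that the statement reads $\ssize{\gty}=\max\set{\ttm{\gtyctx}\mid\gtyctx\text{ fills to }\gty}$. For each admissible shape of $\gtyctx$ one reads off the value $\ttm{\gtyctx}$ from its defining clause, and then the inductive hypothesis fires on the immediate subtypes of $\gty$. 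Types with no occurrence of $\initty$ (such as $\emmset$ or $\mset{\emmset}$) are degenerate — both sides are then empty maxima — and may be set aside. Throughout, each $\ssize{\cdot}$ and $\ttm{\cdot}$ is handled symbolically as an expression $k\indet+m$ with $k,m$ naturals, compared lexicographically on the pair $(k,m)$, so that all the maxima involved are well defined.

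If $\gty=\initty$, the only context filling to $\initty$ is $\ctxhole$, so the right-hand side is $\ttm{\ctxhole}=0=\ssize{\initty}$. If $\gty=\arr{\tty}{\linty}$, then any context filling to $\gty$ must be a linear context, since a tree context fills to a tree type and $\ctxhole$ fills to $\initty$; hence it has one of two forms: $\arr{\ttyctx}{\linty}$ with $\ttyctx$ filling to $\tty$, contributing $\ttm{\arr{\ttyctx}{\linty}}=\ttm{\ttyctx}$; or $\arr{\tty}{\ltyctx}$ with $\ltyctx$ filling to $\linty$, contributing $\ttm{\arr{\tty}{\ltyctx}}=1+\ttm{\ltyctx}$. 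Splitting the maximum over these two families therefore gives $\max\set{\max_{\ttyctx}\ttm{\ttyctx},\,1+\max_{\ltyctx}\ttm{\ltyctx}}$, which by the inductive hypothesis applied to the subtypes $\tty$ and $\linty$ equals $\max\set{\ssize{\tty},\,1+\ssize{\linty}}=\ssize{\arr{\tty}{\linty}}$.

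If $\gty=\mset{\gty_1,\mydots,\gty_n}$, then a context filling to $\gty$ is a tree context whose unique hole lies inside exactly one component, say the $j$-th; stripping off the surrounding layer $\mset{\gty_1,\mydots,\ctxhole,\mydots,\gty_n}$ leaves a type context $\gtyctx'$ filling to $\gty_j$, and by the $\ttm{\cdot}$ clause for tree contexts the whole context has branch size $\indet+\ttm{\gtyctx'}$; conversely every $\gtyctx'$ filling to some $\gty_j$ arises this way (in position $j$). Hence $\max_{\gtyctx\text{ fills to }\gty}\ttm{\gtyctx}=\max_{1\le j\le n}\bigl(\indet+\max_{\gtyctx'\text{ fills to }\gty_j}\ttm{\gtyctx'}\bigr)$, which by the inductive hypothesis on each $\gty_j$ equals $\max_{1\le j\le n}(\indet+\ssize{\gty_j})=\indet+\max_{1\le j\le n}\ssize{\gty_j}=\ssize{\mset{\gty_1,\mydots,\gty_n}}$, closing the induction.

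I do not expect a genuine obstacle: the induction is well-founded since every appeal to the inductive hypothesis is to a proper subtype of $\gty$. The only points requiring mild care are the exhaustiveness of the case analysis on the shape of a context with a prescribed filling — immediate by reading the grammars of linear, tree, and generic type contexts against the grammar of types — and the elementary bookkeeping of splitting the relevant set of contexts and recombining the nested maxima (over context shape, over the position of the hole, over the residual sub-context) into the single maximum to which the inductive hypothesis applies.
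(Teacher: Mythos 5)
Your proof is correct and follows essentially the same route as the paper's: a structural induction on $\gty$ with the cases $\initty$, $\arr{\tty}{\linty}$, and $\mset{\gty_1,\mydots,\gty_n}$, splitting the set of contexts according to where the hole sits and recombining the maxima via the inductive hypothesis. The only difference is cosmetic --- you explicitly flag the degenerate case of types with no occurrence of $\initty$, which the paper's proof leaves implicit.
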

Note that, via the space of single extracted states (\refprop{space-single-states}), the previous lemma states that the 
size of $\gty$ is the maximum space of all the tapes extracted from \TIAM 
states over a same judgment 
$\tjudg{\tye}{\tm}{\gty}$.

\paragraph*{Judgements and Derivations Weights} Weights are extended to 
judgments in \reffig{space-weigths}, and the 
weight of a derivation is the weight of its final judgment. The idea is that 
the weight $w$ of a 
weighted judgment $\ruleoc = \wtjudg{\tye}{w}{\tm}{\gty}$ gives the maximum 
space of all the states over $\ruleoc$ 
and---crucially---above $\ruleoc$.

Now, we prove that the weight of a judgment $\ruleoc$ is greater than the maximum size of 
the tape of the states in its derivation.

\begin{lemma}[Judgment weights bound extracted tapes]
\label{l:weightfour}
	Let $\tyd:\wtjudg{\tye}{w}{\tm}{\gty}$ be a weighted derivation and $\mathcal{J}$ be the set of all the judgments  
	occurring in $\tyd$. 
	Then $w\geq\max_{\tjudg{\tyetwo}{\tmtwo}{\gtytwo}\in\mathcal{J}}\ssize{\gtytwo}$.
\end{lemma}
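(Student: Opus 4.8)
The plan is a straightforward induction on the weighted derivation $\tyd$. The key observation is that the conclusion judgment $\tjudg{\tye}{\tm}{\gty}$ is itself one of the judgments in $\mathcal{J}$, so the statement to prove in particular contains the fact $w \ge \ssize{\gty}$, which is exactly what is needed when invoking the induction hypothesis on sub-derivations. Accordingly, for each typing rule of \reffig{space-weigths} I would verify two facts about the weight $w$ of its conclusion: (i) $w$ is greater than or equal to the weight $w'$ of each immediate premise; and (ii) $w \ge \ssize{\gty}$ for the type $\gty$ on the right of the conclusion. From (i), composed with the induction hypothesis applied to each premise, one gets $w \ge w' \ge \ssize{\gtytwo}$ for every judgment $\gtytwo$ occurring strictly above the conclusion; together with (ii) this covers all of $\mathcal{J}$, which is the claim.

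Point (i) is immediate in every case, because each rule defines the new weight as a maximum of the premises' weights, possibly increased by $\indet \ge 0$ (in $\tymany$) or maxed with the size of the conclusion type (in $\tylam$), and $\max$ is monotone; in $\tymany$ one additionally uses $\indet + \max_i w_i \ge w_i$, which holds since $\indet \ge 0$ (legitimately, as $\indet$ ultimately stands for $\log\size{\tm} \ge 0$). Point (ii) is handled rule by rule. For $\tyvar$ one has $w = \ssize{\linty} = \ssize{\gty}$ on the nose; for $\tylamstar$ and $\tynone$ one has $w = 0$, matching $\ssize{\initty} = 0$ and $\ssize{\emmset} = 0$ respectively (reading the size of the empty tree as $0$, consistently with $\tynone$). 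For $\tylam$ the new weight is by definition $\max\{w, \ssize{\arr{\tty}{\linty}}\} \ge \ssize{\arr{\tty}{\linty}} = \ssize{\gty}$. For $\tymany$ the new weight is $\indet + \max_i\{w_i\}$, and the induction hypothesis gives $w_i \ge \ssize{\gty_i}$, whence $\indet + \max_i\{w_i\} \ge \indet + \max_i\{\ssize{\gty_i}\} = \ssize{\mset{\gty_1,\mydots,\gty_n}}$.

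The only case requiring a small computation is $\tyapp$, whose conclusion type is $\linty$ while the relevant premise weight $w$ is bounded below, via the induction hypothesis, by $\ssize{\arr{\tty}{\linty}}$ rather than by $\ssize{\linty}$ directly. Here I would unfold $\ssize{\arr{\tty}{\linty}} = \max\{\ssize{\tty}, \ssize{\linty} + 1\} \ge \ssize{\linty}$, so that the new weight $\max\{w, v\}$ satisfies $\max\{w, v\} \ge w \ge \ssize{\arr{\tty}{\linty}} \ge \ssize{\linty} = \ssize{\gty}$. I do not expect any genuine obstacle: the whole argument is pure monotonicity bookkeeping, and the only points deserving a line of comment are this $\tyapp$ inequality, the convention $\ssize{\emmset} = 0$, and the harmless use of $\indet \ge 0$. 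Equivalently, the proof may be summarized as: weights are non-decreasing from the leaves of $\tyd$ to its root, and the weight of each judgment dominates the size of that judgment's own type.
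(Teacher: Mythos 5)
Your proposal is correct and coincides with the paper's own argument: the same induction on the weighted derivation, the same case analysis, the same key inequality $\ssize{\arr{\tty}{\linty}}=\max\{\ssize{\tty},\ssize{\linty}+1\}\geq\ssize{\linty}$ for $\tyapp$, and the same conventions $\ssize{\emmset}=0$ and $\indet\geq 0$ for $\tynone$ and $\tymany$. Nothing to add.
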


%

Judgement weights actually take also logs into account.
\begin{lemma}[Weights bound also extracted logs]
\label{l:weight}
	Let $\tyd\pof\wtjudg{\tye}{w}{\tm}{\gty}$ be a weighted derivation. Then 
	$w\geq v+\tlm{\ruleoc}$ for every weighted 
judgment $\ruleoc$ $\wtjudgone{v}$ in $\tyd$.
\end{lemma}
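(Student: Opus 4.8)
The plan is to prove the inequality by a straightforward structural induction on the weighted derivation $\tyd$. The engine of the induction is the shape of the weight rules in \reffig{space-weigths}: the weight of a judgment always dominates the weights of its premises, and it exceeds each of them by at least $\indet$ precisely in the $\tymany$ case. Since $\tlm{\ruleoc}$ counts, in units of $\indet$, the number of $\tymany$ rules on the path from $\ruleoc$ down to the root of $\tyd$, that extra $\indet$ is exactly what is needed to absorb the increase of $\tlm{\ruleoc}$ incurred each time such a rule is crossed on the way toward the root. So the induction should go through with essentially no effort.

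First I would dispatch the base cases $\tyvar$, $\tylamstar$, $\tynone$ (and, degenerately, a premiseless $\tymany$): there the only judgment occurrence in $\tyd$ is the root $\ruleoc$ itself, so $v=w$ and $\tlm{\ruleoc}=0$, whence $w\ge v+\tlm{\ruleoc}$ trivially. For an inductive case, if $\ruleoc$ is the root the same trivial observation applies, so assume $\ruleoc$ occurs in one of the premise sub-derivations $\tyd'$, of root weight $w'$, and let $d\cdot\indet$ be the branch size of $\ruleoc$ computed \emph{inside} $\tyd'$, so that the induction hypothesis on $\tyd'$ gives $w'\ge v+d\cdot\indet$. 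If the last rule of $\tyd$ is $\tylam$ or $\tyapp$ --- neither of which is a $\tymany$ rule --- then the path from $\ruleoc$ to the root of $\tyd$ traverses exactly the same $\tymany$ rules as the path to the root of $\tyd'$, so $\tlm{\ruleoc}=d\cdot\indet$; moreover $w\ge w'$ by the weight rules, and chaining gives $w\ge w'\ge v+d\cdot\indet=v+\tlm{\ruleoc}$. If instead the last rule is $\tymany$, it lies on the path from $\ruleoc$ to the root of $\tyd$ but not on the corresponding path inside $\tyd'$, so $\tlm{\ruleoc}=(d+1)\cdot\indet$; since $w=\indet+\max_i\{w_i\}\ge\indet+w'$, we conclude $w\ge\indet+w'\ge\indet+v+d\cdot\indet=v+\tlm{\ruleoc}$.

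I do not expect any genuine obstacle: the argument is a routine induction. The only point deserving attention is the bookkeeping in the $\tymany$ case, namely that descending past one $\tymany$ rule raises $\tlm{\ruleoc}$ by precisely one $\indet$, matched exactly by the $\indet$ summand that rule contributes to the weight --- which is, incidentally, the whole reason the weight of $\tymany$ was defined with that extra $\indet$. This lemma is the companion, on the side of extracted logs, of the tape bound of Lemma~\ref{l:weightfour}.
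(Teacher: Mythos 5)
Your proof is correct and matches the paper's argument in substance: the paper inducts on the length of the path from $\ruleoc$ to the final judgment (peeling off the rule of which $\ruleoc$ is a premise) whereas you do a structural induction peeling off the last rule at the root, but both reduce to the same two observations — weights are monotone along the path with an extra $\indet$ at each $\tymany$ rule, and $\tlm{\ruleoc}$ grows by exactly $\indet$ per $\tymany$ rule crossed. The bookkeeping in your $\tymany$ case is exactly the computation in the paper's proof, so this is essentially the same proof.
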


We then obtain that the weight of a derivation $\tyd$ for $\tm$ bounds the space used by the \TIAM execution of $\tyd$, 
and so by the \LIAM execution of $\tm$. 
\begin{theorem}[\LIAM space bounds]
	Let $\tyd\pof\wtjudg{}{\weight}{\tm}{\initty}$ be a weighted tree types derivation. Then $\lm{\extr\state} \leq w$ 
for every $\state\in \states\tyd$.
\end{theorem}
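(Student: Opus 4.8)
The plan is to obtain the bound by assembling the three quantitative lemmas of this section together with the earlier results on the \TIAM. Fix a state $\state=(\tyd,\ruleoc,\ltyctx,\pol)$ of $\tyd$ and write its focused judgment as $\ruleoc=\wtjudg{\tye}{v}{\tmtwo}{\gty}$, where $v$ is the weight annotating $\ruleoc$ and $\gty=\ltyctxp{\initty}$ is its type (a linear type, since \TIAM states never sit on the conclusion of a $\tymany$ rule). First I would recall that every state of $\tyd$ is reached exactly once, hence $\state$ is reachable, so \refprop{space-single-states} applies and gives
\[
\lm{\extr\state}\ =\ \bsize\state\ =\ \ttm{\ltyctx}+\tlm{\ruleoc}.
\]
It therefore suffices to prove $\ttm{\ltyctx}+\tlm{\ruleoc}\leq w$, which I would do by bounding the two summands separately.

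For the tape component, $\ltyctx$ is in particular a type context with $\gty=\ltyctxp{\initty}$, so \reflemma{weightthree} immediately yields $\ttm{\ltyctx}\leq\max_{\gtyctx|\gty=\gtyctxp{\initty}}\ttm{\gtyctx}=\ssize{\gty}$. The one genuinely new (and entirely routine) ingredient is then the local fact that the weight of a judgment dominates the $\ssize{\cdot}$ of the linear type it carries, i.e. $v\geq\ssize{\gty}$. This is immediate by induction on the derivation of $\ruleoc$: in each of $\tyvar$, $\tylamstar$, $\tylam$, $\tyapp$ the conclusion's weight is a maximum that, directly or via the inductive hypotheses on the (linearly typed) premises, dominates $\ssize{\cdot}$ of the conclusion's type. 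Combining, $\ttm{\ltyctx}\leq\ssize{\gty}\leq v$.

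For the log component, \reflemma{weight} applied to the weighted judgment $\ruleoc$ gives $w\geq v+\tlm{\ruleoc}$, and putting everything together
\[
\lm{\extr\state}\ =\ \ttm{\ltyctx}+\tlm{\ruleoc}\ \leq\ v+\tlm{\ruleoc}\ \leq\ w ,
\]
which is the claim. Equivalently, one may prove the statement in one shot by induction on $\tyd$, strengthening it to ``$\ttm{\ltyctx}+\tlm{\ruleoc}\leq w$ for every state of a weighted derivation of weight $w$'': the rules $\tylam$ and $\tyapp$ add no $\tymany$ node below $\ruleoc$ while their weight already dominates the premises' weights and $\ssize{\cdot}$ of the conclusion type, and the $\tymany$ case closes because the extra $\indet$ added to $\tlm{\ruleoc}$ by crossing the $\tymany$ node downwards is exactly the $\indet$ appearing in the weight rule $\tymany$.

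I do not expect a genuine obstacle here: the substance of the development has already been absorbed into \refprop{space-single-states} — resting on the T-exhaustible invariant and the \TIAM/\LIAM bisimulation — and into \reflemma{weightfour} and \reflemma{weight}. The only points that require care are bookkeeping ones: checking that the $\tymany$ nodes counted by $\tlm{\ruleoc}$ are in one-to-one correspondence with the nesting levels contributing a factor $\indet$ to the weight of the root, and that the branch size $\ttm{\ltyctx}$ is bounded by $\ssize{\gty}$ rather than by some strictly larger quantity — both of which are immediate from the definitions once \reflemma{weightthree} is in hand.
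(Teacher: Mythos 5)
Your proposal is correct and follows essentially the same route as the paper's proof: reduce $\lm{\extr\state}$ to $\bsize\state=\ttm\ltyctx+\tlm\ruleoc$ via \refprop{space-single-states}, bound $\ttm\ltyctx$ by $\ssize{\ltyctxp\initty}$ via \reflemma{weightthree} and then by the judgment's weight $v$ via \reflemma{weightfour} (which you re-derive inline as the ``routine induction'', but it is exactly that lemma applied to the sub-derivation rooted at $\ruleoc$), and finally conclude with \reflemma{weight} giving $w\geq v+\tlm\ruleoc$. No gap to report.
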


Last, we show that weights provide \emph{exact} bounds, as there always is a witness state 
using as much space as in the weight.
\begin{proposition}[Weight witness]
\label{prop:witness}
Let $\tyd\pof\wtjudg{\tye}{w}{\tm}{\gty}$ be a weighted derivation and 
$\gty\neq\emmset$. Then there exists a 
	\TIAM state $\state$ over $\tyd$ such that 	$w=\bsize\state$.
\end{proposition}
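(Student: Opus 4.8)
The plan is to prove the statement by induction on the structure of the weighted derivation $\tyd\pof\wtjudg{\tye}{w}{\tm}{\gty}$, exhibiting at each step a \TIAM state $\state$ over the (sub)derivation whose branch size $\bsize\state = \ttm{\ltyctx} + \tlm{\ruleoc}$ equals the weight $w$. The guiding intuition, already spelled out in the excerpt, is that the weight of a judgment records the \emph{maximum} space of all states over and above that judgment; so the witness is simply whatever state realizes that maximum. Since $\gty\neq\emmset$, the type always contains at least one occurrence of $\initty$ (a leaf), so there is always at least one candidate state to point to, namely one whose type context isolates such an occurrence, and whose judgment occurrence sits on some path up the derivation.

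First I would set up the induction carefully, strengthening the statement so the induction goes through: for every weighted (sub)derivation $\tyd$ with root judgment $\ruleoc = \wtjudg{\tye}{w}{\tm}{\gty}$ and $\gty\neq\emmset$, there is a \TIAM state $\state=(\tyd,\ruleoc',\ltyctx,\pol)$ with $\ruleoc'$ some judgment occurrence in $\tyd$ such that $\bsize\state = w$; moreover, when $\ruleoc' = \ruleoc$ itself, the branch contribution $\tlm{\ruleoc'}$ from the $\tymany$ rules below $\ruleoc$ inside $\tyd$ is $0$, so $\bsize\state = \ttm\ltyctx$, and this already matches Lemma~\ref{l:weightthree} (which says $\ssize\gty = \max_{\gtyctx\mid\gty=\gtyctxp\initty}\ttm\gtyctx$). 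Then I would proceed by cases on the last rule.

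For $\tyvar$, the weight is $\ssize\linty$; by Lemma~\ref{l:weightthree} pick the type context $\gtyctx$ realizing the max, giving a state on the axiom with $\bsize\state = \ttm\gtyctx = \ssize\linty = w$. For $\tylamstar$ the weight is $0$ and the final state over it has empty type context, so $\bsize = 0$. For $\tylam$, $w = \max\{w',\ssize{\arr\tty\linty}\}$: if the max is $w'$, apply the IH to the premise to get a witness state, noting that passing through the $\tylam$ rule (via $\tomachdottwo$/$\resm 4$) does not change $\ttm{\cdot}$ or $\tlm{\cdot}$; if the max is $\ssize{\arr\tty\linty}$, take a state on $\ruleoc$ itself using Lemma~\ref{l:weightthree}. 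For $\tyapp$, $w = \max\{w',v\}$ with $w'$ the left premise weight and $v$ the right premise weight (a tree type $\tty$); again apply the IH to whichever premise realizes the max, checking that the transitions $\tomachdotone$/$\tomacharg$ relating the root to the premises do not decrease the branch size — descending from the premise to $\ruleoc$ crosses the $\tyapp$ (no change) or, for the right premise, stays inside the $\tymany$ tree whose contribution is already counted in $v = \indet + \max_i\{w_i\}$. The genuinely delicate case is $\tymany$: here $w = \indet + \max_i\{w_i\}$, and I would take the index $i$ realizing the max, apply the IH to the $i$-th premise to get a witness state $\state_i$ in that sub-derivation, and then observe that relocating $\state_i$ into the full derivation adds exactly one $\tymany$ rule on the path down to the root — hence adds $\indet$ to $\tlm{\ruleoc'}$ — so $\bsize$ increases from $w_i$ to $w_i + \indet$, and picking the maximal $i$ gives exactly $w$. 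Care is needed because the premises of $\tymany$ may themselves be trees of $\tymany$ rules; the bookkeeping of "one extra $\tymany$ on the descending path" must be matched precisely against the single $+\indet$ in the weight clause, and this — reconciling the tree-of-$\tymany$ structure with the branch-size counting — is the main obstacle. Finally, I would double-check reachability: by Proposition~\ref{prop:str-bisim-typed} and the subsequent proposition, every state of $\tyd$ is reached, so the witness state is indeed a legitimate \TIAM state occurring in the run, which is what the statement requires.
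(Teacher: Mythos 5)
Your proposal is correct and follows essentially the same route as the paper's proof: structural induction on the derivation, using Lemma~\ref{l:weightthree} to produce a witness type context on the root judgment in the $\tyvar$ and $\tylam$ cases, picking the premise realizing the max in $\tyapp$, and in the $\tymany$ case relocating the witness state from the maximal premise into the enclosing derivation, where the one extra $\tymany$ rule on the descending path contributes exactly the $+\indet$ of the weight clause. The only additions beyond the paper's argument are the explicit reachability remark (which the paper delegates to the separate proposition that every state of $\tyd$ is reached exactly once) and the strengthened induction hypothesis, neither of which changes the substance.
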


We can then conclude our complexity analysis.

\begin{corollary}[\LIAM exact bound via tree types derivations]
	Let $\tyd\pof\wtjudg{}{\weight}{\tm}{\initty}$ be a tree types derivation and $\run$
	the complete \LIAM run on $\tm$. Then $\spacem{\run}=\weight$.
\end{corollary}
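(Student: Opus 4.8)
The plan is to assemble the corollary from the results already established; the only work is bookkeeping, since all the substantive content sits in the bisimulation, in \refprop{space-single-states}, in the weight theorem, and in the witness \refprop{witness}. First I would check that $\run$ is well defined and \emph{finite}: since $\tyd\pof\ctjudg{\tm}{\initty}$, correctness of tree types for \ccbn gives that $\tm$ is \ccbn-normalizable, hence by the fact that the \LIAM implements \ccbn its run from $\state_\tm$ reaches a final state. So $\run$ is a complete run of finite length and $\spacem\run = \max_{\statetwo\in\run}\lm{\statetwo}$ is a genuine (attained) maximum over the reachable \LIAM states traversed by $\run$; in particular \refprop{inf} plays no role here.

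Next I would identify that set of traversed states with $\states\tyd$. By \refprop{str-bisim-typed} the extraction relation $\bisimtypes$ is a strong bisimulation, and it restricts to a bijection between \TIAM-reachable states on $\tyd$ and \LIAM-reachable states on $\tm$, with $\state\bisimtypes\extr\state$; by the proposition following it, every state of $\tyd$ is actually reached (exactly once), so the \LIAM-reachable states are precisely $\set{\extr\state \mid \state\in\states\tyd}$. Since $\run$, being the complete run, traverses exactly the \LIAM-reachable states, we obtain $\spacem\run = \max_{\state\in\states\tyd}\lm{\extr\state}$, which by the first equalities of \refprop{space-single-states} equals $\max_{\state\in\states\tyd}\bsize\state$.

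The two inequalities are then immediate. For $\spacem\run\leq\weight$: the \LIAM space bound theorem gives $\lm{\extr\state}\leq\weight$ for every $\state\in\states\tyd$, hence the maximum is $\leq\weight$. For $\spacem\run\geq\weight$: apply \refprop{witness} to the final judgment $\wtjudg{}{\weight}{\tm}{\initty}$, whose hypothesis is met because $\gty=\initty\neq\emmset$, obtaining a \TIAM state $\state$ over $\tyd$ with $\weight=\bsize\state=\lm{\extr\state}\leq\spacem\run$. Combining the two yields $\spacem\run=\weight$.

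I expect no real obstacle, as the hard analysis is already encapsulated in the cited statements. The one point deserving care is making precise that the maximum defining $\spacem\run$ ranges over \emph{exactly} the same set of states as the maximum over $\states\tyd$: this is where the ``every state of $\tyd$ is reached exactly once'' proposition is essential (without it one would only get $\spacem\run\leq\max_{\state}\bsize\state$, and the witness produced by \refprop{witness} could a priori sit on an unvisited state), and where finiteness of $\run$ is needed so that $\spacem\run$ is attained rather than a mere supremum. With these in place the corollary follows directly.
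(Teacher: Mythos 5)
Your proposal is correct and follows essentially the same route as the paper: the upper bound from the \LIAM space bounds theorem, and the lower bound from the weight witness (\refprop{witness}) combined with \refprop{space-single-states}. The only difference is that you make explicit the identification of the states traversed by $\run$ with $\states\tyd$ (via the bisimulation and the ``every state is reached exactly once'' proposition), a step the paper's proof leaves implicit but which is indeed needed for the witness state to actually contribute to $\spacem\run$.
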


Now, the reader can fully understand and appreciate the weights in the derivation of 
\reffig{tiam-example}. Please note that we have considered 
$\max\{1,\indet\}=\indet$ when assigning the weights.

\section{On the \LIAM Space (In)Efficiency}\label{sect:inefficiency}
\paragraph*{Tree Density} As we have proved in the last section, the space 
consumed by a 
\LIAM run ultimately depends on the level of nesting of tree types. Let us 
be slightly more precise. When we are dealing with multi types, the 
argument $\tmtwo$ of an application $\tm\tmtwo$ has to be typed with a 
multiset, say $M \defeq\mset{\initty,\mydots,\initty}$ where $\size{M}=n$ and $n$ is the number of times that $\tmtwo$ shall be copied. 
In our type 
system, $M$ can be represented as a tree $\tty$, capturing the space needed for evaluating the copies of $\tmtwo$, in several ways. In the tree type derivation for $\tm\tmtwo$ only one of 
these representations is used, and depends on the type of
$\tm$. The important point is that different representations lead to very different space consumptions. The ideal case is the flat representation
$\tty_f\defeq\mset{\initty,\mydots,\initty}$, for which
$\ssize{\tty_f}=\indet$. Another good case is given by a full binary tree 
$\tty_b$, or, more generally, by a tree $\tty_d$ whose height is logarithmic in 
$n$, so that $\ssize{\tty_d}=\log n\cdot\indet$. Let us deem this case 
\emph{dense}. A bad case is given by linearly shaped trees such as 
$\tty_l\defeq\tty_n$, where $\tty_0\defeq\mset{\initty}$ and 
$\tty_{n+1}\defeq\mset{\initty,\tty_n}$, for which 
$\ssize{\tty_s}=(n+1)\cdot\indet\geq\ssize{\tty_d}$.
 We call \emph{sparse} a tree with $n$ leaves and height $n$ like $\tty_l$. Not that there can even be worse, as in general a tree can have an arbitrary number of internal nodes, for instance a multi set $\mset\initty$ can be represented as a tree also as $\mset{\mset{\mset{\mset{\initty}}}}$---again, it depends on the type of $\tm$. Intuitively, if a 
 term can be typed with flat or dense trees, the \LIAM evaluates it space 
 \emph{efficiently}, 
 while in the case of sparse trees (or worse) the evaluation is space \emph{inefficient}.
 \paragraph*{Fixed-Points Are Sparse} As is well known, the $\lambda$-calculus 
 is a universal (or Turing
 complete) model of computation. The fundamental ingredient that allows one
 to achieve universality is the presence of fixed-point
 combinators. These combinators allow for an encoding of 
 general recursion, as needed when simulating, e.g.,  partial recursive
 functions or Turing machines.  In
 particular, the fixed point combinator can capture unbounded
 iteration (i.e. \texttt{while} loops) or tail-recursion, as needed
 \eg in the encoding
 of minimization from Kleene algebra.
 
As an example, consider how the encoding of a Turing machine $M$ may
look like, in the $\l$-calculus. Let $\tm_M$ be the encoding of
$M$'s transition function (which is typically very simple if states
and tapes are encoded using, e.g., Scott's numerals~\cite{Wadsworth}). Then, the
(recursively defined) function that iterates $\tm_M$, thus
capturing the overall behavior of $M$, can be written as follows:
\[
\textsf{iter}=\underbrace{(\la f{\la s \texttt{if}\text{ }s\text{ is final } 
		\texttt{then}\text{ 
			halt }\texttt{else} \text{ 
			}f(\tm_Ms)})}_{\textsf{iteraux}}\textsf{iter}
\]
How can we build a solution to this equation in the form of a
$\l$-term? Apart from an encoding of the conditional operator, itself
very easy to write, we need a fixed-point combinator $\Theta$, such as 
Turing's: 
\begin{center}$\Theta\defeq\theta\theta\qquad\qquad\text{where 
	}\quad\theta\defeq\la\var{\la\vartwo{\vartwo(\var\var\vartwo)}}$\end{center}
We highlight that 
$\Theta\tm\towh(\la\vartwo{\vartwo(\Theta\vartwo)})\tm\towh\tm(\Theta\tm)$. 
Then, we can set, as expected, $\textsf{iter}\defeq\Theta\textsf{iteraux}$.

Let us analyze how $\Theta$ implements recursion, independently
on what the argument of $\Theta$ is.  Please note that
during the reduction of $\Theta\tm$, the variable $\vartwo$ is
substituted for the term $\tm$, which after two $\beta$-steps appears
twice, once in head position (call this occurrence $\tm_0$), and once
applied to $\Theta$. The latter copy of $\tm$, together with $\Theta$,
can be copied potentially many times, depending on how
$\tm_0$ uses its argument. Some of these copies, say $m$, will
eventually appear in head position, and the same process starts
again. In other words, the copies of $\tm$ that the combinator $\Theta$
will eventually create can be organized in a tree, see
Figure~\ref{fig:firstfigure}. This is a faithful description of
how recursion unfolds, independently on how $\tm$ uses its
argument.

If $\tm=\textsf{iteraux}$, however, the situation is much
simpler: $\tm$ uses its argument at most once, and the complicated
tree in Figure~\ref{fig:firstfigure} becomes the one in
Figure~\ref{fig:secondfigure}. Every copy $\tm_{01^n}$ of $\tm$ either
brings $\Theta\tm_{1^{n+1}}$ in head position (without copying it), or
discards it, depending on whether the current state is final or not.
Saying it another way, the height of the tree
in Figure~\ref{fig:secondfigure} is nothing more than the number of reduction
steps the Turing machine $M$ performs.
\begin{figure}
	\fbox{
		\begin{minipage}{.48\textwidth}
			\begin{center}
				\begin{subfigure}[b]{0.24\textwidth}\includegraphics{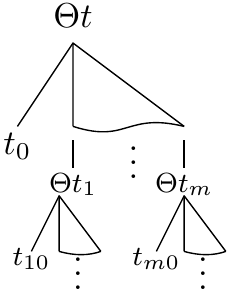}\caption{}\label{fig:firstfigure}\end{subfigure}
				\qquad\qquad
				\begin{subfigure}[b]{0.24\textwidth}\includegraphics{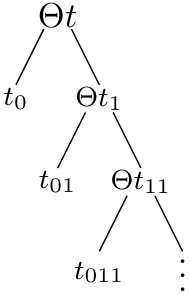}\caption{}\label{fig:secondfigure}\end{subfigure}
			\end{center}
		\end{minipage}}
		\caption{Different ways $\Theta$ can copy its argument.}
	\end{figure}
 
In order to understand if the \LIAM could be reasonable in space, \ie if it can 
simulate Turing machines with a constant overhead in space, we apply our 
technique by giving $\Theta$ a suitable type (schema) $\tyF_n^{\tyl}$, 
depending on a list of types 
$\tyl\defeq\linty_n,\mydots,\linty_0$ and on $n$, which is the number of times 
the fixed point is unfolded. In particular, 
$\tyF_n^{\tyl}$ turns out to be
 \emph{sparse}. The details of the technical development can be 
 found in the Appendix.
 \begin{proposition}
 	For each $n\geq 0$, and for each list of types $\tyl$ such that 
 	$\size\tyl\geq n+1$, $\wtjudg{}{\geq (2n+1)\indet}{\Theta}{\tyF_n^{\tyl}}$.
 \end{proposition}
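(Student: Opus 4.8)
The plan is to exhibit an explicit family of tree type derivations for $\Theta = \theta\theta$ and to compute (a lower bound on) the weight of the final judgment via the weighted system of \reffig{space-weigths}. First I would fix the target type schema $\tyF_n^{\tyl}$ by reverse-engineering it from the reduction behaviour $\Theta\tm \towh \tm(\Theta\tm)$: since $\Theta$ applied to an argument $\tm$ that uses its argument linearly (the only case we need, as explained in the discussion of \textsf{iteraux}) produces the sparse, linearly-shaped tree of Figure~\ref{fig:secondfigure}, the type of the argument slot of $\Theta$ at unfolding depth $n$ must be a nested singleton tree of depth roughly $n$. Concretely I expect $\tyF_n^{\tyl}$ to have the form $\arr{\tty_n}{\linty_0}$ where $\tty_n$ is built by iterating $\tty_{k+1} = \mset{\arr{\tty_k}{\linty_{k+1}}}$ (or a close variant with one extra leaf carrying $\linty_0$-like information at each level), so that $\tty_n$ is sparse in the sense of \refsect{inefficiency}: $n$ (or $2n$) nested sequence constructors on the branch to its single relevant leaf.

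The core of the proof is then an induction on $n$ constructing a weighted derivation $\tyd_n \pof \wtjudg{}{w_n}{\Theta}{\tyF_n^{\tyl}}$ with $w_n \geq (2n+1)\indet$. Since $\Theta = \theta\theta$, the derivation of $\tyF_n^{\tyl}$ for $\Theta$ ends with a $\tyapp$ rule whose left premise types $\theta = \la\var{\la\vartwo{\vartwo(\var\var\vartwo)}}$ with an arrow type $\arr{\ttytwo}{\tyF_n^{\tyl}}$ and whose right premise types $\theta$ with $\ttytwo$; the latter, typing the copies of $\theta$ that get duplicated down the recursion, is where the tree of depth $n$ gets assembled via nested $\tymany$ rules. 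Inside the left premise, the two $\tylam$ rules and the inner applications $\vartwo(\var\var\vartwo)$ recursively expose a subderivation for $\theta\theta = \Theta$ at the smaller schema $\tyF_{n-1}^{\tyl'}$ (with $\tyl'$ the appropriate sublist), which is where the induction hypothesis plugs in. Propagating weights through this structure: each of the two $\tylam$ rules contributes a $\max$ with $\ssize{\arr{\tty}{\linty}}$, which on a sparse type of branch depth $\sim n$ is $\sim n\indet$; the $\tyapp$ and $\tymany$ rules take further maxima and each $\tymany$ on the recursive path adds one $\indet$ (by the clause $\ssize{\mset{\dots}} = \indet + \max_i\ssize{\gty_i}$). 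Carefully accounting for these contributions at depth $n$ versus $n-1$ should yield an increment of $2\indet$ per unfolding step — one $\indet$ from the extra nesting level in the argument tree $\ttytwo$ and one from the growth of $\ssize{\tyF}$ seen by the $\tylam$ rules — giving $w_n \geq w_{n-1} + 2\indet$ and hence $w_n \geq (2n+1)\indet$ with base case $w_0 \geq \indet$ (a single $\tymany$ wrapping the axiom for $\vartwo$, or equivalently $\ssize{\tyF_0^{\tyl}} \geq \indet$).

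The main obstacle I anticipate is getting the type schema $\tyF_n^{\tyl}$ exactly right so that it is simultaneously (i) actually derivable for $\Theta$ — which requires the argument type of the duplicated $\theta$-copy to match what the head occurrence demands, a self-referential constraint that is delicate for a fixed-point combinator — and (ii) genuinely sparse, i.e.\ the branch to the relevant leaf passes through $\Theta(n)$ sequence constructors rather than collapsing. This is essentially the bookkeeping of which leaf of each $\tymany$ tree the recursion ``descends into'', mirroring the bijection between leaves and axioms from \reflemma{relevance}; I expect the list parameter $\tyl = \linty_n,\dots,\linty_0$ exists precisely to absorb the freedom in the types decorating the non-recursive leaves (the conditional/halt branches of \textsf{iteraux}) so the induction closes. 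A secondary, purely technical nuisance is threading the $\max$ operations: one must check that at each stage the dominant term in the $\max$ is indeed the sparse-type size rather than some smaller weight coming from a side branch, so that the lower bound is not lost — this is routine given $\ssize{\tyF_k^{\tyl}}$ grows monotonically in $k$, but needs to be stated explicitly. Since the statement only claims a lower bound $\geq (2n+1)\indet$, I do not need tight control of the $\max$es in the other direction, which simplifies matters considerably.
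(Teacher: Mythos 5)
Your overall strategy---fix an explicit sparse type schema for $\Theta$, induct on $n$, and extract the lower bound from the nesting of $\tymany$ rules together with $\ssize{\cdot}$---is the right one, and your identification of sparseness as the source of the bound is exactly the paper's point. But there are two concrete problems, and the first is a genuine gap rather than a routine detail.

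The gap is precisely the point you flag as ``delicate'' and then leave unresolved: typing $\theta\theta$ head-on by a root $\tyapp$ forces the right premise to assign the duplicated $\theta$ a tree type describing all $n{+}1$ future copies at once, and the self-application $\var\var$ inside $\theta$'s body reproduces the same self-referential constraint one level down. The paper never types a self-application at all: since $\Theta\towh\la\vartwo{\vartwo(\Theta\vartwo)}$ and typability with a given type is preserved by reduction and expansion, it suffices to type the reduct, in whose body $\Theta$ occurs as a \emph{literal subterm}---so the induction hypothesis $\wtjudg{}{\geq(2n+1)\indet}{\Theta}{\tyF_n^{\tyl}}$ plugs in directly as a sub-derivation and the induction closes trivially. (The weight lower bound survives the expansion step because, by Lemma~\ref{l:weightfour}, the weight of \emph{any} derivation dominates $\ssize{\tyF_n^{\tyl}}$, which is already $\geq(2n+1)\indet$ once the schema is fixed.)

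Second, your guessed schema $\tty_{k+1}=\mset{\arr{\tty_k}{\linty_{k+1}}}$ adds only \emph{one} sequence constructor per unfolding level on the branch to the deep leaf, so it yields roughly $n\indet$, not $(2n+1)\indet$; the extra $\indet$ you hope to recover from ``the growth of $\ssize{\tyF}$ at the $\tylam$ rules'' is the same quantity counted once, not an independent second source. The paper's schema is $\tyt_{n+1}^{\tyl}=\mset{\tyy_{n+1}^{\tyl},\mset{\tyt_n^{\tyl}}}$ with $\tyy_{n+1}^{\tyl}=\arr{\mset{\linty_n}}{\linty_{n+1}}$: the two elements correspond to the two occurrences of $\vartwo$ in $\vartwo(\Theta\vartwo)$, and the \emph{double} wrapping---the outer sequence constructor of $\tyt_{n+1}^{\tyl}$ itself plus the extra $\mset\cdot$ around the recursive occurrence $\tyt_n^{\tyl}$, each realized by a $\tymany$ rule in the derivation of $\vartwo:\tyt_{n+1}^{\tyl}$---is what produces the $2\indet$ increment per level (isolated in the paper as Lemma~\ref{lemma:vartree}). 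With the schema corrected and the subject-expansion detour adopted, the rest of your weight bookkeeping goes through essentially as you describe.
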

 In particular, $\Theta:\tyF_n^{\tyl}$ when $\Theta$ is unfolded $n$ times and 
 thus, inside the encoding of a Turing machine $M$ which takes $n$ steps to 
 halt. This way the \LIAM, independently of the space
 consumption of $M$, requires space at least linear in the number of
 reduction steps it performs, and there is thus no hope to stay within
 sub-linear space constraints.
 
 \paragraph*{(In)Efficiency in Related Works} How could we reconcile all this 
 with the claims 
 from the literature about
 the existence of sub-linear space bounds 
 (Mazza \cite{DBLP:conf/csl/Mazza15} and Dal Lago and Sch\"opp \cite{DBLP:conf/esop/LagoS10,dal_lago_computation_2016}) 
 within the realm of geometry of 
 interaction
 machines? The answer is relatively simple: the kind of machines considered
 in the cited works are \emph{fundamentally different} than ours,
 being based on the idea that the information stored in logged positions
 can be taken to be a natural number \emph{smaller or equal to} the
 cardinality of the underlying multi type. In Mazza 
 \cite{DBLP:conf/csl/Mazza15}, 
 this is possible due to the
 peculiarities of the underlying type system. There, the use of non-linearity
 is much more restricted than in the $\l$-calculus, being based on parsimonious types. This 
 allows for tail-recursive schemas \emph{only}: in our terminology, all trees 
 are linearly shaped, and then logged positions can be represented differently and with less space, namely by simply taking  
the height of the tree. The works 
 by Dal Lago and 
 Sch\"opp~\cite{DBLP:conf/esop/LagoS10,dal_lago_computation_2016} rely on the 
 fact that the underlying type
 system is resource-aware, this way allowing for a different representation of 
 logged positions as natural numbers rather than trees. Moreover,
 space-efficient simulations are done via an ad-hoc combinator for skewed 
 iterations, thus
 circumventing the problem with fixed-point operators. This, unfortunately, is 
 not available in the
 realm of the $\l$-calculus. 

 A question, however, remains. Is it possible \emph{at all} that the
\LIAM\ consumes an amount of computation \emph{space} significantly smaller
than computation \emph{time}? The answer is positive: there is a family
of terms $\{\tm_i\}_{i\in\mathbb{N}}$ such that $\tm_i$ reduces in \ccbn to normal form in time
exponential in $i$ (namely, taking an exponential number of $\beta$-steps) but requires space linear in $i$, when reduced
by the \LIAM. Details are in the Appendix.

\section{Conclusions}
Space efficiency of higher-order languages accommodating sub-linear complexity 
is a topic about which almost nothing is known. 
The literature has suggested that the right tool to achieve it is the alternative paradigm of the \IAM, a machine rooted in the geometry of 
interaction, without however clarifying whether its use of space can be used as a reasonable space cost model.

In this paper we develop a sharp tool---a type system---for the understanding of the space consumption of the ($\lambda$)\IAM. Our new tree intersection types provide---for the first time---exact space bounds, via a simple system of weights measuring the depth of the tree structure in the types. 

The tree structure seems to naturally complement the multiset one needed for 
measuring time, and it is of independent interest, given the relevance of 
intersection types in semantics. For instance, can such a structure be seen in game semantics? What does it measure with respect to cut-elimination or environment machines?

Beyond the theoretical result, the type system has a direct application, as we 
show by studying the space usage of the \IAM on the traditional encoding of 
Turing machines. Such a usage turns out to be very inefficient, providing 
negative insights on the elusive conjecture about the reasonable space usage of 
the \IAM.\bigskip

\noindent\textbf{Acknowledgements.}
The second author is funded by the ERC CoG ``DIAPASoN'' (GA
  818616). This work has been partially funded by the ANR JCJC grant
  ``COCA HOLA'' (ANR-16-CE40-004-01).

\bibliographystyle{IEEEtranS}
\bibliography{main}
\onecolumn
\pagebreak
\appendices

\section{Proof of Proposition~\ref{prop:inf}}
A key property of the \LIAM is bi-determinism, or
reversibility: the machine is deterministic, and moreover for each state $s$ 
there is at most one state $s'$ such that $s'\toliam s$. 
The property follows by simply inspecting the rules. Moreover, a run can be 
reverted by just switching the direction. 
\begin{proposition}[Reversibility]
	If $\nopolstate{\tm}{\ctx}{\tape}{\tlog}{\pol}\toliam
	\nopolstate{\tmtwo}{\ctxtwo}{\tapetwo}{\tlogtwo}{\poltwo}$, then
	$\nopolstate{\tmtwo}{\ctxtwo}{\tapetwo}{\tlogtwo}{\poltwo^1}\toliam
	\nopolstate{\tm}{\ctx}{\tape}{\tlog}{\pol^1}$.
\end{proposition}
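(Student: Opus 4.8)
Since the statement concerns a single step, the hypothesis $\nopolstate{\tm}{\ctx}{\tape}{\tlog}{\pol}\toliam\nopolstate{\tmtwo}{\ctxtwo}{\tapetwo}{\tlogtwo}{\poltwo}$ already pins down the shapes of both states: the step must instantiate exactly one of the eight rules of \reffig{iam}, and that rule fixes $\tm,\ctx,\tape,\tlog,\pol$ and $\tmtwo,\ctxtwo,\tapetwo,\tlogtwo,\poltwo$ simultaneously. The plan is therefore a finite case analysis on which rule fires, and the whole argument rests on one structural observation: the eight transitions organize into four \emph{mutually reverse} pairs. I would pair $\iamdap$ with $\iamuapltwo$, $\iamdlamone$ with $\iamulam$, $\iamdvar$ with $\iamdlamtwo$, and $\iamuaplone$ with $\iamuapr$. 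For each pair the claim to check is that switching the direction of the target state produces a configuration matching the left-hand side of the partner rule, and that firing the partner rule returns exactly the source state with its direction switched. Since $\pol^1$ is merely the opposite of $\pol$, directions always line up: the first two pairs match direction-preserving transitions with one another ($\downp\!\to\!\downp$ against $\upp\!\to\!\upp$), while the last two pairs match the direction-switching transitions.

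The direction-preserving pairs are immediate. Take $\iamdap$: its source has the form $\dstate{\tm\tmtwo}{\ctx}{\tape}{\tlog}$ and its target $\dstate{\tm}{\ctxp{\ctxhole\tmtwo}}{\resm\cons\tape}{\tlog}$; switching the direction of the target yields $\ustate{\tm}{\ctxp{\ctxhole\tmtwo}}{\resm\cons\tape}{\tlog}$, which is precisely the left-hand side of $\iamuapltwo$, and firing $\iamuapltwo$ produces $\ustate{\tm\tmtwo}{\ctx}{\tape}{\tlog}$, i.e. the source with its direction switched. The $\iamdlamone$/$\iamulam$ pair is identical, the only moving part being the $\resm$ popped from (resp. pushed onto) the tape across the $\la$-step. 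The two reverse directions of these pairs ($\iamuapltwo$ then $\iamdap$, and $\iamulam$ then $\iamdlamone$) are verified the same way.

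The only cases asking for a little care are the two direction-switching pairs, where a logged position migrates between the tape and the log. For $\iamdvar$ the source is $\dstate{\var}{\ctxp{\la\var\ctxtwo_n}}{\tape}{\tlogn\cons\tlog}$ and the target is $\ustate{\la\var\ctxtwo_n\ctxholep\var}{\ctx}{(\var,\la\var\ctxtwo_n,\tlogn)\cons\tape}{\tlog}$; switching the direction of the target places the machine in a $\downp$-state focused on $\la\var\ctxtwo_n\ctxholep\var$ with the logged position $(\var,\la\var\ctxtwo_n,\tlogn)$ on top of the tape, which is exactly the pattern triggering $\iamdlamtwo$. Here one must observe that the level-$n$ context $\ctxtwo_n$, the occurrence $\var$, and the recorded log prefix $\tlogn$ appearing in the pattern of $\iamdlamtwo$ coincide with the components of that logged position — and they do, by construction, since the position was assembled from exactly these pieces by $\iamdvar$. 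Firing $\iamdlamtwo$ then pops $(\var,\la\var\ctxtwo_n,\tlogn)$, reinstates $\tlogn$ as the prefix of the log, and returns $\ustate{\var}{\ctxp{\la\var\ctxtwo_n}}{\tape}{\tlogn\cons\tlog}$, the source with its direction switched. The $\iamuaplone$/$\iamuapr$ pair is symmetric, with a single logged position $\lpos$ moving between the top of the tape and the top of the log, and is checked the same way. As the four pairings are exhaustive and each direction of each has been matched, reversibility follows. The main (indeed the only) obstacle is bookkeeping: making sure the tape/log transfers in the variable and backtracking pairs are genuinely inverse to one another, which the shapes of the rules guarantee automatically.
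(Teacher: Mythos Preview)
Your proposal is correct and is precisely what the paper intends: the paper's own proof is the single sentence ``The property follows by simply inspecting the rules,'' and your case analysis organizing the eight transitions into the four mutually reverse pairs $(\tomachdotone,\tomachdotthree)$, $(\tomachdottwo,\tomachdotfour)$, $(\tomachvar,\tomachbttwo)$, $(\tomacharg,\tomachbtone)$ is exactly that inspection spelled out in full.
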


From bi-determinism it is immediate to prove acyclicity for reachable states.
\begin{proposition}
	Let $\state$ be a reachable \LIAM state. Then $\state$ is reached exactly once.
\end{proposition}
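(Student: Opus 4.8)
The plan is to get acyclicity of the unique initial run out of the bi-determinism recalled just above, via a minimal-counterexample descent. Since the \LIAM is deterministic there is a single maximal run $\rho : \state_\tm = \state_0 \toliam \state_1 \toliam \cdots$ from the initial state, and a state is reachable exactly when it occurs as some $\state_k$ along $\rho$; ``reached exactly once'' then says that this index $k$ is unique, i.e. $\rho$ never repeats a state. (If one prefers not to fix a maximal run: two initial runs ending in the same state must, by forward determinism, be one a prefix of the other, which brings us back to the same picture.)

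First I would record the auxiliary fact that no \LIAM transition has an initial state as target: scanning the right-hand sides in \reffig{iam}, every target either carries a non-empty tape or sits on a context strictly containing the hole (one of $\ctxp{\ctxhole\tmtwo}$, $\ctxp{\la\var\ctxhole}$, $\ctxp{\tm\ctxhole}$, $\ctxp{\ctxhole\tm}$), so it is never of the shape $\dstate{\tm}{\ctxhole}{\epsilon}{\epsilon}$. In particular $\state_\tm$ has no \LIAM-predecessor.

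Now suppose, for contradiction, that $\state_i = \state_j$ for some $i < j$, and pick such a pair with $i$ minimal. If $i = 0$, then $\state_\tm = \state_j$ with $j \geq 1$, so $\state_\tm$ is the target of the transition $\state_{j-1} \toliam \state_j$, against the previous paragraph. If $i > 0$, then $\state_{i-1} \toliam \state_i$ and $\state_{j-1} \toliam \state_j$ are two transitions with the same target $\state_i = \state_j$, so by bi-determinism (at most one predecessor per state) $\state_{i-1} = \state_{j-1}$, contradicting the minimality of $i$. Hence no repetition occurs, and a reachable $\state$ is reached exactly once. The whole argument is routine; the only mildly delicate bit is the ``no transition targets an initial state'' check, a one-line case analysis on the transition table, everything else being immediate from the bi-determinism stated above.
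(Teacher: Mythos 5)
Your proof is correct and takes essentially the same route as the paper's: bi\-/determinism forces acyclicity of the unique initial run, which the paper establishes by induction on the length of the run rather than by your minimal-counterexample descent, but the content is identical. If anything you are slightly more careful, since your explicit check that no transition targets an initial state covers the case where the repeated state is $\state_\tm$ itself, a case the paper's ``two different predecessors'' argument silently glosses over (note only that for $\tomachdotthree$ the target can have empty tape and hole context, and it is the $\upp$ direction, not your list of contexts, that rules out initiality there).
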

\begin{proof}
	We proceed by induction on the length of the run 
	$\run:\state_0\toliam^*\state$. If $\size\run=0$, the result is trivial. 
	Otherwise, if $\size\run>0$, we have $\run:\state_0\toliam^*\statetwo 
	\toliam\state$. Let us call $\runtwo:\state_0\toliam^*\statetwo$. By \ih, 
	every state in $\runtwo$ is reached exactly once. Then $\state$ cannot be 
	part of $\runtwo$, because otherwise it would have two different 
	predecessors, contradicting bi-determinism. Thus $\state$ is reached 
	exactly once in $\run$.
\end{proof}
Then, since there are no cycles, an infinite run goes through infinite different states and thus consumes unbounded space.
\begin{proposition}
	Let $\run$ be an infinite \LIAM run. Then $\lm{\run}=\infty$.
\end{proposition}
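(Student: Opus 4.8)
The plan is to reduce the statement to two facts: that the \LIAM is \emph{reversible} (hence never revisits a state along an initial run), and that for a fixed term only finitely many states have bounded space. First I would establish \emph{bi-determinism} by a quick inspection of the eight transitions in \reffig{iam}: each of them is deterministic, and reading it backwards while flipping the direction $\pol$ again yields a valid transition that pins down the \emph{unique} predecessor of its target. Concretely the transitions come in four mutually inverse pairs---$\iamdap$ with its $\upp$-phase counterpart, $\iamdlamone$ with its counterpart, and the two variable/backtracking pairs $\tomachvar/\tomachbttwo$ and $\tomacharg/\tomachbtone$---so that for every state $\statetwo$ there is at most one $\state'$ with $\state'\toliam\statetwo$. (One also notes that initial states are never transition targets, so they have no predecessor at all.)

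From bi-determinism I would then derive \emph{acyclicity}: along any initial run $\run:\state_\tm\toliam^*\state$, the final state $\state$ occurs exactly once. This is a straightforward induction on $\size\run$: the empty run is trivial, and if $\run$ ends with a step $\statetwo\toliam\state$, the induction hypothesis gives that no state repeats strictly before $\statetwo$, while $\state$ cannot already occur strictly before $\statetwo$ either, for then it would have two distinct predecessors inside $\run$, contradicting bi-determinism (and it cannot equal $\state_0$, which has no predecessor). In particular, along the infinite run $\state_0\toliam\state_1\toliam\cdots$ all the states $\state_i$ are pairwise distinct.

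The last ingredient is a finiteness fact: fixing the term $\tm$ under evaluation and any $k\in\mathbb{N}$, only finitely many \LIAM states $\state$ over $\tm$ satisfy $\lm{\state}\leq k$. Indeed the position of a state ranges over the finitely many subterm occurrences of $\tm$ and the direction over two values; and since every logged position contributes at least $\indet\geq 1$ and every $\resm$ contributes $1$ to $\lm{\cdot}$, a tape or log of space at most $k$ is a list of bounded length built out of logged positions of space at most $k$, of which---by an inner induction on the space, each nested log being again of bounded space and each node carrying a position from the same finite set---there are only finitely many. Hence $\{\state : \lm{\state}\leq k\}$ is finite for every $k$. Combining the three pieces, the infinite run traverses infinitely many pairwise distinct states, so for every $k$ it must eventually reach some $\state$ with $\lm{\state}>k$; therefore $\lm{\run}=\max_{\statetwo\in\run}\lm{\statetwo}=\infty$. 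Bi-determinism and acyclicity are pure bookkeeping on the rules; the step that needs a little care is the finiteness fact, precisely the inner induction bounding the shape of a logged position by its space.
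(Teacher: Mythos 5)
Your proposal is correct and follows essentially the same route as the paper: bi-determinism by inspection of the transitions, acyclicity of initial runs by induction using uniqueness of predecessors, and then the observation that only finitely many states fit in bounded space, so an infinite run of pairwise distinct states must use unbounded space. The only difference is that you spell out the finiteness of $\{\state : \lm{\state}\leq k\}$ (and the fact that the initial state has no predecessor) more explicitly than the paper, which simply asserts that a finite amount of memory can encode only finitely many configurations.
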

\begin{proof}
	The result comes from the fact that in finite amount of memory, only a 
	finite amount of configurations can be encoded. Since in an infinite run, 
	an infinite number of \emph{different states} are reached, then the \LIAM 
	needs an unbounded space to perform the computation.
\end{proof}
\section{Correctness and Completeness of the Tree Type System}
The size $\size{\tyd}$ of a tree 
types derivation $\tyd$ is the number of its rules 
that 
are not $\ttyprom$. It is the quantity that is used to prove the termination argument for typed terms.

\paragraph*{Correctness} In order to prove that typability implies termination via a simple combinatorial argument, we 
need to refine the standard statements of the substitution lemma and of subject reduction with quantitative information.

The next lemma is used in the substitution lemma (namely, the implication from left to right), and shall also be used 
in the anti-substitution lemma (the converse implication). 
\begin{lemma}[Tree splitting and merging]
\label{l:tree-split-merge}
 Let $\tty = \tty_1 \uplus\mydots\uplus\tty_k$. Then there exists $\tyd \pof \tjudg{\mset\tye}{\tm}{\tty}$ if and only if 
there exist $\tyd_i \pof \tjudg{\mset{\tye_i}}{\tm}{\tty_i}$ for $i\in \set{1,\mydots, k}$. Moreover, $\mset\tye = 
\mset{\tye_1 \uplus\mydots\uplus\tye_k}$ and $\size\tyd = \sum_{i=1}^n \size{\tyd_i}$.
\end{lemma}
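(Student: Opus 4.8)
The plan is to prove both implications by inspecting the last rule of the derivations involved, with no induction on $\tm$ needed: the statement is a purely local fact about the shape of rule $\tymany$. The crucial remark is that among the rules of Figure~\ref{fig:asstypesystem} only $\tymany$ and $\tynone$ conclude with a \emph{tree type} on the right of $\vdash$ (all the others conclude with a linear type). Hence any derivation $\tyd \pof \tjudg{\mset\tye}{\tm}{\tty}$ with $\tty = \mset{\gty_1,\mydots,\gty_n}$ ends in $\tynone$ when $n = 0$, and otherwise in an instance of $\tymany$ with $n$ premises $\tyd'_j \pof \tjudg{\tye'_j}{\tm}{\gty_j}$ (for $1 \le j \le n$), where the outer nodification $\mset{\cdot}$ on the environment is exactly what $\tymany$ adds, so that $\tye = \tye'_1 \uplus \mydots \uplus \tye'_n$. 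I freely identify $\tynone$ with the $n = 0$ instance of $\tymany$; both are the $\ttyprom$-rules that the measure $\size{\cdot}$ ignores.

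For the left-to-right direction, write $\tty_i = \mset{\gty^i_1,\mydots,\gty^i_{n_i}}$, so that the sequence $\gty_1,\mydots,\gty_n$ is the concatenation, in order, of the $\gty^i_j$ (non-commutativity of the sequence constructor $\mset{\cdot}$ makes this decomposition unambiguous). I would partition the premises $\tyd'_1,\mydots,\tyd'_n$ of the final $\tymany$ of $\tyd$ into the $k$ consecutive blocks $B_1,\mydots,B_k$ matching this decomposition, and re-apply $\tymany$ to the $i$-th block to get $\tyd_i \pof \tjudg{\mset{\tye_i}}{\tm}{\tty_i}$, where $\tye_i \defeq \uplus_{j \in B_i}\tye'_j$ (an empty block giving a $\tynone$ leaf with empty environment). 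For the converse, each $\tyd_i \pof \tjudg{\mset{\tye_i}}{\tm}{\tty_i}$ ends in $\tymany$ with premises $\tyd^i_j$ and $\tye_i = \uplus_j \tye^i_j$; I would simply concatenate all the $\tyd^i_j$ in the order $i = 1,\mydots,k$ and apply a single instance of $\tymany$, obtaining $\tyd \pof \tjudg{\mset\tye}{\tm}{\tty}$ with $\tty = \tty_1 \uplus \mydots \uplus \tty_k$ and $\tye \defeq \uplus_i \tye_i$.

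Finally, the \emph{moreover} clauses reduce to bookkeeping. The identity $\mset\tye = \mset{\tye_1 \uplus \mydots \uplus \tye_k}$ holds because $\tye_1 \uplus \mydots \uplus \tye_k$ is, by construction, a re-bracketing of $\tye'_1 \uplus \mydots \uplus \tye'_n = \tye$, using associativity and commutativity of $\uplus$ on environments; crucially the outer $\mset{\cdot}$ is applied once to the whole sum in every derivation, so no spurious nesting level is created. The size identity follows since the regrouping only inserts or deletes $\tymany$/$\tynone$ instances, which do not count towards $\size{\cdot}$: indeed $\size\tyd = \sum_{j=1}^n \size{\tyd'_j}$, each $\size{\tyd_i} = \sum_{j \in B_i}\size{\tyd'_j}$, and hence $\sum_{i=1}^k \size{\tyd_i} = \sum_{j=1}^n \size{\tyd'_j} = \size\tyd$. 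I expect no genuine obstacle; the two points worth a moment's care are the ``only $\tymany$/$\tynone$ conclude tree types'' remark and the degenerate empty-component case, both routine.
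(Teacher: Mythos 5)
Your proof is correct and follows essentially the same route as the paper's: inspect the final $\tymany$ (or $\tynone$) rule, whose premises are exactly the derivations for the individual leaves-level components, and regroup/concatenate them into blocks matching the decomposition $\tty = \tty_1 \uplus\mydots\uplus\tty_k$. The paper's own proof is terser (it only displays the splitting direction and omits the empty-component and bookkeeping details you spell out), but there is no difference in substance.
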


\begin{proof}
 We prove the statement by first examining the rule $\tymany$, which is the last rule used in $\tyd$, as $\tm$ is typed with a tree type.
 \[
 \infer[\tymany]{\tjudg{\mset{\uplus_{i=1}^n\tye_i}}{\tm}{\mset{\gty_1,\mydots,\gty_n}=\tty_1 \uplus\mydots\uplus\tty_k}}
 {\tjudg{\tye_i}{\tm}{\gty_i} & 1\leq i\leq n}
 \]
 We can prove the statement considering $k$ derivations, each of them deriving the judgment $\tm:\tty_j$.
 \[
 \infer[\tymany]{\tjudg{\mset{\uplus_{i=1}^{\size{\tty_j}}\tye_i}}{\tm}{\mset{\gty_1,\mydots,\gty_{\size{\tty_j}}}=\tty_j }}
 {\tjudg{\tye_i}{\tm}{\gty_i} & 1\leq i\leq \size{\tty_j}}
 \]
\end{proof}

\begin{lemma}[Quantitative substitution]
\label{l:q-sub-lemma}
	Let 
	$\tyd_\tm\pof\tjudg{\tye,\var:\tty}{\tm}{\gty}$
	and 
	$\tyd_\tmtwo\pof\tjudg{}{\tmtwo}{\tty}$. Then 
	there exists 
	$\tyd_{\tm\isub\var\tmtwo}\pof\tjudg{\tye}{\tm\isub\var\tmtwo}{\linty}$ such that $\size{\tyd_{\tm\isub\var\tmtwo}} = 
\size{\tyd_{\tm}} + \size{\tyd_{\tmtwo}} -\size{\leaves\tty}$.
\end{lemma}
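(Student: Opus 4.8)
The plan is to prove the Quantitative substitution lemma by induction on the structure of the derivation $\tyd_\tm\pof\tjudg{\tye,\var:\tty}{\tm}{\gty}$, following the standard recipe for multi types but paying attention to where the tree structure enters. The key bookkeeping fact is \reflemma{l:relevance} (relevance and axiom sequences): the number of axioms on $\var$ in $\tyd_\tm$ is exactly $\size{\leaves{\tty}}$, and they appear left-to-right in the order dictated by $\leaves\tty = \mset{\linty_1,\ldots,\linty_k}$. So substituting $\tmtwo$ for $\var$ removes exactly $\size{\leaves\tty}$ axiom rules (each contributing $1$ to the size) and grafts in a derivation of $\tmtwo$ for each such leaf; the size accounting $\size{\tyd_{\tm\isub\var\tmtwo}} = \size{\tyd_\tm} + \size{\tyd_\tmtwo} - \size{\leaves\tty}$ is then a direct count once we know that the total size of the grafted subderivations sums to $\size{\tyd_\tmtwo}$.

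First I would handle the base case where $\tyd_\tm$ is an axiom $\tyvar$ typing $\var$ itself with some $\linty$; then $\tty = \mset\linty$, $\size{\leaves\tty}=1$, $\tm\isub\var\tmtwo = \tmtwo$, and $\tyd_{\tmtwo}$ already has the right type, so the size equation reads $\size{\tyd_\tmtwo} = 1 + \size{\tyd_\tmtwo} - 1$. The axiom case for a variable $\vartwo\neq\var$ is immediate since $\var\notin\dom\tye$ is impossible only if $\tty=\emmset$, in which case $\size{\leaves\tty}=0$ and there is nothing to substitute. For the inductive step on $\tylam$ and $\tyapp$, the tree type $\tty = \tye(\var)$ splits across the premises: in the $\tyapp$ case $\tye\uplus\tyetwo$ means $\var$ gets $\tye(\var)\uplus\tyetwo(\var)$, so I would use \reflemma{l:tree-split-merge} to split $\tyd_\tmtwo\pof\tjudg{}{\tmtwo}{\tye(\var)\uplus\tyetwo(\var)}$ into $\tyd_\tmtwo^1$ and $\tyd_\tmtwo^2$ with $\size{\tyd_\tmtwo} = \size{\tyd_\tmtwo^1}+\size{\tyd_\tmtwo^2}$, apply the \ih\ to each premise, and recombine with $\tyapp$; the arithmetic then telescopes because $\size{\leaves{(\tty_1\uplus\tty_2)}} = \size{\leaves{\tty_1}}+\size{\leaves{\tty_2}}$. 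The $\tymany$ case (when $\gty$ is itself a tree type) is analogous, again splitting $\tyd_\tmtwo$ across the premises via \reflemma{l:tree-split-merge}, and the $\tylamstar$ and $\tynone$ cases are trivial since no axiom on $\var$ occurs.

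The main obstacle I anticipate is not conceptual but a matter of correctly matching the \emph{order} of leaves when a tree type is split and its pieces distributed to premises, since tree types are non-commutative: when we write $\tty = \tty_1\uplus\tty_2$ and hand $\tty_i$ to the $i$-th premise, we must make sure that the bijection between leaves of $\tty$ and axioms on $\var$ guaranteed by \reflemma{l:relevance} is exactly the concatenation of the two sub-bijections, so that grafting $\tyd_\tmtwo^i$ at the right axioms is coherent. This is where the ``rigid'' (ordered) formulation of tree types earns its keep: the left-to-right reading of the derivation as an ordered tree makes the matching canonical, so \reflemma{l:tree-split-merge} can be stated and used with the splitting respecting order. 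Once this alignment is set up cleanly, the size equation falls out by summing the per-leaf contributions, and the resulting derivation $\tyd_{\tm\isub\var\tmtwo}$ types $\tm\isub\var\tmtwo$ with the original $\gty$ (written $\linty$ in the statement, the $\gty=\linty$ case being the one actually invoked downstream in subject reduction).
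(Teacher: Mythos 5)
Your proposal is correct and follows essentially the same route as the paper: induction on $\tyd_\tm$, using the relevance/axiom-sequence lemma for the variable cases and the tree splitting--merging lemma to distribute $\tyd_\tmtwo$ across the premises of $\tyapp$ and $\tymany$, with the size arithmetic telescoping exactly as you describe. The only point to tighten is the axiom case $\tm=\var$: there $\tyd_\tmtwo$ derives the tree type $\mset\linty$ via a unary $\tymany$ rule, so one peels that rule off to obtain the required derivation of linear type $\linty$ (harmless for the count, since $\tymany$ rules are not counted in $\size{\cdot}$).
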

\begin{proof}
	By induction on the derivation $\tyd_\tm$.
	\begin{itemize}
		\item \emph{Rule $\tyvar$}. Two sub-cases:
		\begin{enumerate}
		 \item $\tm=\var$: then $\tm\isub\var\tmtwo = \tmtwo$, $\gty = \linty$, and $\tty = 
\mset \linty$ is a singleton. Then the hypothesis $\tyd_\tmtwo\pof\tjudg{}{\tmtwo}{\tty}$ is necessarily obtained by 
applying a unary $\tymany$ rule to a derivation of the form $\tydtwo_\tmtwo \pof\tjudg{}{\tmtwo}{\linty}$. The typing 
derivation $\tyd_{\tm\isub\var\tmtwo} \defeq \tydtwo_\tmtwo$ satisfies the statement because
$\size{\tyd_{\tm\isub\var\tmtwo}} = \size{\tydtwo_\tmtwo} = 1 + \size{\tydtwo_\tmtwo} -1 =
\size{\tyd_{\tm}} + \size{\tyd_{\tmtwo}} -\size{\leaves\tty}$.

		 \item $\tm=\vartwo$: then $\tm\isub\var\tmtwo = \vartwo$ and $\tty = \emmset$. Then the hypothesis 
$\tyd_\tmtwo\pof\tjudg{}{\tmtwo}{\tty}$ is necessarily obtained by 
applying a $\tynone$ rule, and $\size{\tyd_\tmtwo} = 0$. The typing 
derivation $\tyd_{\tm\isub\var\tmtwo} \defeq \tyd_\tm$ satisfies the statement because
$\size{\tyd_{\tm\isub\var\tmtwo}} = \size{\tyd_\tm} = \size{\tyd_\tm} + 0 -0 =
\size{\tyd_{\tm}} + \size{\tyd_{\tmtwo}} -\size{\leaves\tty}$.
		\end{enumerate}

		\item \emph{Rule $\tylamstar$}. Then $\tm = \la\vartwo\tmthree$, $\gty = \initty$, and $\tty = \emmset$. It goes as 
for the second 
variable case ($\tm = \vartwo$). Namely, the hypothesis 
$\tyd_\tmtwo\pof\tjudg{}{\tmtwo}{\tty}$ is necessarily obtained by 
applying a $\tynone$ rule, and $\size{\tyd_\tmtwo} = 0$. The typing 
derivation $\tyd_{\tm\isub\var\tmtwo}$ is also a single $\tylamstar$ rule, because $\tm\isub\var\tmtwo = 
(\la\vartwo\tmthree)\isub\var\tmtwo  = \la\vartwo\tmthree\isub\var\tmtwo$ is also an abstraction. Note that 
$\tyd_{\tm\isub\var\tmtwo}$  satisfies the statement because
$\size{\tyd_{\tm\isub\var\tmtwo}} = \size{\tyd_\tm} = \size{\tyd_\tm} + 0 -0 =
\size{\tyd_{\tm}} + \size{\tyd_{\tmtwo}} -\size{\leaves\tty}$.

 		\item \emph{Rule $\tylam$}. Then $\tyd_\tm$ has the following shape:
 		\[\infer*{\infer[\tylam]{  \tjudg{\tye,\var:\tty}
				{\lambda\vartwo.\tmthree}{\arr{\ttytwo}{\lintytwo}}}
			{\tyd_\tmthree 
\pof \tjudg{\tye,\var:\tty,\vartwo:\ttytwo}
				{\tmthree}{\lintytwo}}}{}
		\]
		with 
 		$\tm=\la\vartwo\tmthree$ and $\gty=\arr{\ttytwo}{\lintytwo}$. By \ih, there exists a derivation 
$\tyd_{\tmthree\isub\var\tmtwo}\pof\tjudg{\tye,\vartwo:\ttytwo}{\tmthree\isub\var\tmtwo}{\lintytwo}$ such that 
$\size{\tyd_{\tmthree\isub\var\tmtwo}} = 
\size{\tyd_{\tmthree}} + \size{\tyd_{\tmtwo}} -\size{\leaves\tty}$. Applying back rule $\tylam$ we obtain 
$\tyd_{(\la\vartwo\tmthree)\isub\var\tmtwo}$
		\[\infer*{\infer[\tylam]{  \tjudg{\tye}
				{\la\vartwo\tmthree\isub\var\tmtwo}{\arr{\ttytwo}{\lintytwo}}}
			{\tyd_{\tmthree\isub\var\tmtwo}\pof\tjudg{\tye,\vartwo:\ttytwo}{\tmthree\isub\var\tmtwo}{\lintytwo}}}{}
		\]
		which satisfies $\size{\tyd_{(\la\vartwo\tmthree)\isub\var\tmtwo}} = \size{\tyd_{\tmthree\isub\var\tmtwo}} +1 
=_{\ih} 
\size{\tyd_{\tmthree}} + \size{\tyd_{\tmtwo}} -\size{\leaves\tty} +1 = \size{\tyd_{\la\vartwo\tmthree}} + \size{\tyd_{\tmtwo}} 
-\size{\leaves\tty}$.

		\item \emph{Rule $\tyapp$}. Then $\tyd_\tm$ has the following shape:
		\[
		\infer[\tyapp]{
      \tjudg{\var:\tty_1\uplus \tty_2,\tye\uplus\tyetwo}{\tmthree\tmfour}{\linty}
		}{
      \tyd_\tmthree \pof \infer*{\tjudg{\var:\tty_1,\tye}{\tmthree}{\arr{\ttytwo}{\linty}}}{}
			& 
			\tyd_\tmfour \pof \infer*{\tjudg{\var:\tty_2,\tyetwo}{\tmfour}{\ttytwo}}{}
    }
		\]
		with $\tm=\tmthree\tmfour$, $\gty = \linty$, and $\tty = \tty_1\uplus\tty_2$. By \reflemma{tree-split-merge}, the 
hypothesis $\tyd_\tmtwo\pof\tjudg{}{\tmtwo}{\tty}$ splits into two derivations 
$\tyd_\tmtwo^1\pof\tjudg{}{\tmtwo}{\tty_1}$ and $\tyd_\tmtwo^2 \pof\tjudg{}{\tmtwo}{\tty_2}$ such that 
$\size{\tyd_\tmtwo} = \size{\tyd_\tmtwo^1} + \size{\tyd_\tmtwo^2}$. By \ih, there exist:
		\begin{enumerate}
		 \item $\tyd_{\tmthree\isub\var\tmtwo} \pof \tjudg{\tye}{\tmthree\isub\var\tmtwo}{\arr{\ttytwo}{\linty}}$ 
such that $\size{\tyd_{\tmthree\isub\var\tmtwo}} = 
\size{\tyd_{\tmthree}} + \size{\tyd_{\tmtwo}^1} -\size{\leaves\tty_1}$, and 
		 \item $\tyd_{\tmfour\isub\var\tmtwo} \pof \tjudg{\tyetwo}{\tmfour\isub\var\tmtwo}{\ttytwo}$ such that 
$\size{\tyd_{\tmfour\isub\var\tmtwo}} = 
\size{\tyd_{\tmfour}} + \size{\tyd_{\tmtwo}^2} -\size{\leaves\tty_2}$.
		\end{enumerate}
    Note that $(\tmthree\tmfour)\isub\var\tmtwo=\tmthree\isub\var\tmtwo\tmfour\isub\var\tmtwo$. Then the derivation 
$\tyd_{(\tmthree\tmfour)\isub\var\tmtwo}$ is defined as follows:
    \[
		\infer[\tyapp]{
      \tjudg{\tye\uplus\tyetwo}{\tmthree\isub\var\tmtwo\tmfour\isub\var\tmtwo}{\linty}
		}{
      \tyd_{\tmthree\isub\var\tmtwo} \pof \tjudg{\tye}{\tmthree\isub\var\tmtwo}{\arr{\ttytwo}{\linty}}
			& &
			\tyd_{\tmfour\isub\var\tmtwo} \pof \tjudg{\tyetwo}{\tmfour\isub\var\tmtwo}{\ttytwo}
    }
		\]
		for which 
		\[\begin{array}{rll}
      \size{\tyd_{(\tmthree\tmfour)\isub\var\tmtwo}} & = & \size{\tyd_{\tmthree\isub\var\tmtwo}} + 
\size{\tyd_{\tmfour\isub\var\tmtwo}} +1 
		   \\
		   & =_{\ih}& \size{\tyd_{\tmthree}} + \size{\tyd_{\tmtwo}^1} 
		   -\size{\leaves\tty_1} + \size{\tyd_{\tmfour}} + 
\size{\tyd_{\tmtwo}^2} -\size{\leaves\tty_2} +1 
      \\
      & =& (\size{\tyd_{\tmthree}} + \size{\tyd_{\tmfour}} +1) + (\size{\tyd_{\tmtwo}^1} + 
\size{\tyd_{\tmtwo}^2}) -( \size{\leaves\tty_1}  + \size{\leaves\tty_2})
      \\
      & = & \size{\tyd_{\tmthree\tmfour}} + \size{\tyd_{\tmtwo}} - \size{\leaves\tty}
		  \end{array}\]

    \item \emph{Rule $\tynone$}. Then $\gty = \emmset$, and $\tty = \emmset$. It goes as 
for the second 
variable case ($\tm = \vartwo$). Namely, the hypothesis 
$\tyd_\tmtwo\pof\tjudg{}{\tmtwo}{\tty}$ is necessarily obtained by 
applying another $\tynone$ rule, and $\size{\tyd_\tmtwo} = 0$. The typing 
derivation $\tyd_{\tm\isub\var\tmtwo}$ is also a single $\tynone$ rule for the term $\tm\isub\var\tmtwo$. Note that 
$\tyd_{\tm\isub\var\tmtwo}$  satisfies the statement because
$\size{\tyd_{\tm\isub\var\tmtwo}} = \size{\tyd_\tm} = \size{\tyd_\tm} + 0 -0 =
\size{\tyd_{\tm}} + \size{\tyd_{\tmtwo}} -\size{\leaves\tty}$.
    
    \item \emph{Rule $\tymany$}. Then $\tyd_\tm$ has the following shape:
    \[
      \infer[\tymany]{
        \tjudg{\mset{\uplus_{i=1}^n\tye_i}, 
        \var:\mset{\uplus_{i=1}^n\tty_i}}{\tm}{\mset{\gty_1,\mydots,\gty_n}}
        }{
        \infer*{\tjudg{\tye_i, \var:\tty_i}{\tm}{\gty_i}}{} & 1\leq i\leq n
        }
    \]
By \reflemma{tree-split-merge}, the 
hypothesis $\tyd_\tmtwo\pof\tjudg{}{\tmtwo}{\tty}$ splits into $n$ derivations 
$\tyd_\tmtwo^i\pof\tjudg{}{\tmtwo}{\tty_i}$ with $i\in\set{1,\mydots, n}$ such that $\size{\tyd_\tmtwo} 
= \sum_{i=1}^n \size{\tyd_\tmtwo^i}$. By \ih, there exist $n$ derivations $\tyd_{\tm\isub\var\tmtwo}^i \pof 
\tjudg{\tye_i}{\tm\isub\var\tmtwo}{\gty_i}$ 
such that $\size{\tyd_{\tm\isub\var\tmtwo}}^i = 
\size{\tyd_{\tm}}^i + \size{\tyd_{\tmtwo}^i} -\size{\leaves\tty_i}$.
Then the derivation 
$\tyd_{\tm\isub\var\tmtwo}$ is defined as follows:
    \[
      \infer[\tymany]{
        \tjudg{\mset{\uplus_{i=1}^n\tye_i}}{\tm\isub\var\tmtwo}{\mset{\gty_1,\mydots,\gty_n}}
        }{
        \tyd_{\tm\isub\var\tmtwo}^i \pof \infer*{\tjudg{\tye_i}{\tm\isub\var\tmtwo}{\gty_i}}{} & 1\leq i\leq n
        }
    \]
		for which 
		\[\begin{array}{rll}
      \size{\tyd_{\tm\isub\var\tmtwo}} & = & \sum_{i=1}^n\size{\tyd_{\tm\isub\var\tmtwo}}^i 
		   \\
		   & =_{\ih}& \sum_{i=1}^n(\size{\tyd_{\tm}}^i + 
		   \size{\tyd_{\tmtwo}^i}-\size{\leaves\tty_i})
      \\
      & =& \sum_{i=1}^n\size{\tyd_{\tm}}^i + 
      \sum_{i=1}^n\size{\tyd_{\tmtwo}^i}-\sum_{i=1}^n\size{\leaves\tty_i}
      \\
      & = & \size{\tyd_{\tm}} + \size{\tyd_{\tmtwo}} - \size{\leaves\tty}
		  \end{array}\]
	\end{itemize}
\end{proof}

\begin{proposition}[Quantitative Subject Reduction]
\label{prop:q-subj-red}
	If $\tm$ is closed, $\tyd\pof\tjudg{}{\tm}{\linty}$, and $\tm\towh\tmtwo$ then there exists 
	$\tydtwo\pof\tjudg{}{\tmtwo}{\linty}$ such that $\size\tyd> \size\tydtwo$.
\end{proposition}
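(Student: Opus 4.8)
This is the quantitative subject reduction feeding the completeness direction of the termination characterization, and its core is already available: the Quantitative substitution lemma (\reflemma{q-sub-lemma}) and the tree splitting lemma (\reflemma{tree-split-merge}) do the work, so I only need a structural analysis of $\tyd$ that exposes the head redex inside the derivation. By the definition of $\towh$, I would write $\tm = (\lambda x.p)\,q\,r_1\cdots r_h$, with reduct $\tmtwo = (p\isub{x}{q})\,r_1\cdots r_h$, for some variable $x$, terms $p,q,r_1,\dots,r_h$, and $h\geq 0$; since $\tm$ is closed, $q$ and every $r_i$ are closed and $p$ has at most $x$ free, so by \reflemma{relevance} all judgments in $\tyd$ carry an empty type environment, except possibly for a binding of $x$ in the body of $\lambda x.p$.

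\textbf{Steps.} \emph{(1) Peeling the spine.} As $\tm$ is an application typed by the linear type $\linty$, the last rule of $\tyd$ is forced to be $\tyapp$ (no other rule can conclude such a judgment). Iterating $h$ times along the left premise --- each intermediate judgment still types an application with a linear (iterated arrow) type --- I extract sub-derivations $\tyd_i\pof\tjudg{}{r_i}{\tty_i}$ for $1\leq i\leq h$ and a sub-derivation typing $(\lambda x.p)\,q$ with the linear type $\lintytwo\defeq\arr{\tty_1}{(\cdots(\arr{\tty_h}{\linty})\cdots)}$ (which is just $\linty$ when $h=0$). \emph{(2) Peeling the redex.} One further $\tyapp$ step splits the latter into $\tyd_q\pof\tjudg{}{q}{\tty}$ and a sub-derivation typing $\lambda x.p$ with $\arr{\tty}{\lintytwo}$; since this is an \emph{arrow} type, the last rule there can only be $\tylam$ --- not $\tylamstar$, whose conclusion type is $\initty$, and not $\tymany$, which concludes a tree type --- with premise $\tyd_p\pof\tjudg{x:\tty}{p}{\lintytwo}$. \emph{(3) Substitution.} Applying \reflemma{q-sub-lemma} to $\tyd_p$ and $\tyd_q$ yields $\tyd'\pof\tjudg{}{p\isub{x}{q}}{\lintytwo}$ with $\size{\tyd'}=\size{\tyd_p}+\size{\tyd_q}-\size{\leaves\tty}$. \emph{(4) Reassembly.} Re-applying the $h$ rules $\tyapp$ with right premises $\tyd_1,\dots,\tyd_h$ on top of $\tyd'$ gives the desired $\tydtwo\pof\tjudg{}{\tmtwo}{\linty}$. \emph{(5) Size accounting.} The rewritten portion of $\tyd$ contains $h+1$ rules $\tyapp$ and one rule $\tylam$, hence $\size\tyd=(h+2)+\size{\tyd_p}+\size{\tyd_q}+\sum_{i=1}^{h}\size{\tyd_i}$, whereas $\size\tydtwo=h+\size{\tyd'}+\sum_{i=1}^{h}\size{\tyd_i}=h+\size{\tyd_p}+\size{\tyd_q}-\size{\leaves\tty}+\sum_{i=1}^{h}\size{\tyd_i}$; therefore $\size\tyd-\size\tydtwo=2+\size{\leaves\tty}\geq 2>0$, which is the claimed strict decrease.

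\textbf{Main obstacle.} The only genuinely delicate parts are the spine analysis of steps (1)--(2), which I would phrase as an induction on $h$ (base case $\tm=(\lambda x.p)\,q$, inductive step closing the root case under the head context $\ctxhole\,r_1\cdots r_h$), and the observation in step (2) that the head abstraction cannot be typed by $\tylamstar$: this relies crucially on the fact that, once the argument $q$ has been peeled off, $\lambda x.p$ must receive an \emph{arrow} type. Everything else is routine bookkeeping; in particular the strict inequality falls out ``for free'' from the constant $2$ counting the single erased $\tyapp$ and the single erased $\tylam$ (the $\tymany$ and $\tynone$ rules, not counted by $\size{\cdot}$, are irrelevant). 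A minor care point is the degenerate case $\tty=\emmset$ (when $x$ does not occur in $p$), where $\tyd_q$ is a single $\tynone$ rule and the size equation still holds because $\size{\leaves\emmset}=0$.
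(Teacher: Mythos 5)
Your proposal is correct and follows essentially the same route as the paper's proof: the paper proceeds by induction on $\tm\towh\tmtwo$ (base case the top-level redex handled by the quantitative substitution lemma, inductive case peeling one application off the spine), which is exactly your explicit decomposition of the head context $\ctxhole\,r_1\cdots r_h$ phrased as an induction on $h$. Your size accounting, including the forced $\tyapp$/$\tylam$ shape of the redex sub-derivation and the strict decrease by $2+\size{\leaves\tty}$, matches the paper's computation.
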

\begin{proof}
	By induction on $\tm\towh\tmtwo$.
	\begin{itemize}
		\item \emph{Base case, step at top level}: $\tm=(\la\var\tmthree)\tmfour \towh\tmthree\isub\var\tmfour=\tmtwo$. 
Note that $\tmfour$ is closed because $\tm$ is closed by hypothesis. Then $\tyd$ has the following shape.
		\[
		\infer[\tyapp]{
      \tjudg{}{(\la\var\tmthree)\tmfour}{\linty}
    }{
      \infer[\tylam]{
        \tjudg{}{\lambda\var.\tmthree}{\arr{\tty}{\linty}}
      }{
        \tyd_\tmthree \pof \infer*{\tjudg{\var:\tty}{\tmthree}{\linty}}{}
      }
			& 
			\tyd_\tmfour \pof \infer*{\tjudg{}{\tmfour}{\tty}}{}
    }
		\]
		We can apply the quantitative substitution lemma (\reflemma{q-sub-lemma}) to the sub-derivations $\tyd_\tmthree$ 
and $\tyd_\tmfour$ obtaining a derivation $\tyd_{\tmthree\isub\var\tmfour} \pof 
\tjudg{}{\tmthree\isub\var\tmtwo}{\linty}$ such that $\size{\tyd_{\tmthree\isub\var\tmfour}} = \tyd_{\tmthree} + 
\tyd_{\tmfour} -\size{\leaves\tty}< \tyd_{\tmthree} + 
\tyd_{\tmfour} +2 = \size\tyd$.

		\item \emph{Inductive case, step on the left of the root application}: 
		$\tm=\tmthree\tmfour\towh \tmthree'\tmfour=\tmtwo$ with $\tmthree \towh \tmthree'$. Then $\tyd$ has the following 
shape.
		\[
		\infer[\tyapp]{\tjudg{}{\tmthree\tmfour}{\linty 
		}}{
		 \tyd_\tmthree \pof \infer*{\tjudg{}{\tmthree}{\arr{\tty}{\linty}}}{}
			&
			\tyd_\tmfour \pof \infer*{\tjudg{}{\tmfour}{\tty}}{}}
		\]
		Applying the \ih to the left sub-derivation $\tyd_\tmthree$, we obtain $\tydtwo_{\tmthree'} \pof 
\tjudg{}{\tmthree'}{\arr{\tty}{\linty}}$ such that $\size{\tyd_\tmthree} > 
\size{\tydtwo_{\tmthree'}}$. Then $\tydtwo$ is 
defined as follows.
		\[\tydtwo \defeq
		\infer[\tyapp]{\tjudg{}{\tmthree'\tmfour}{\linty 
		}}{
      \tydtwo_{\tmthree'} \pof 
      \infer*{\tjudg{}{\tmthree'}{\arr{\tty}{\linty}}}{}
			&
			\tyd_\tmfour \pof \infer*{\tjudg{}{\tmfour}{\tty}}{}}
		\]
  for which $\size{\tyd} = \size{\tyd_\tmthree} +1 +\size{\tyd_\tmfour}>_{\ih} \size{\tydtwo_{\tmthree'}} + 1 +\size{\tyd_\tmfour}= \size{\tydtwo}$, as 
required.
	\end{itemize}
\end{proof}

\begin{theorem}[Correctness of tree types for \ccbn]
 If $\tyd\pof\tjudg{}{\tm}{\linty}$ then $\tm$ is \ccbn terminating.
\end{theorem}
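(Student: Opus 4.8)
The plan is to use the quantitative subject reduction property (\refprop{q-subj-red}) as a strictly decreasing termination measure along weak head reduction, together with well-foundedness of $\mathbb{N}$. First I would record the auxiliary observation that $\towh$ preserves closedness: in the top-level rule $(\la\vartwo\tmthree)\tmfour\tmthree_1\ldots\tmthree_h\towh\tmthree\isub\vartwo\tmfour\tmthree_1\ldots\tmthree_h$, if the left-hand side is closed then $\la\vartwo\tmthree$, $\tmfour$, and each $\tmthree_i$ are closed, hence so is $\tmthree\isub\vartwo\tmfour\tmthree_1\ldots\tmthree_h$; the inductive case of $\towh$ (a step on the left of an application) obviously preserves closedness too. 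So if $\tm$ is closed and $\tm\towh\tmtwo$ then $\tmtwo$ is closed.

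Next comes the core argument. Given $\tyd\pof\tjudg{}{\tm}{\linty}$ with $\tm$ closed, consider any maximal $\towh$-reduction sequence $\tm=\tm_0\towh\tm_1\towh\tm_2\towh\cdots$ issued from $\tm$. By the closedness invariant every $\tm_i$ is closed, so \refprop{q-subj-red} applies at each step and yields, inductively, derivations $\tyd_i\pof\tjudg{}{\tm_i}{\linty}$ with $\size{\tyd_i}>\size{\tyd_{i+1}}$ for all $i$. Thus the sequence of natural numbers $\size{\tyd_0}>\size{\tyd_1}>\size{\tyd_2}>\cdots$ is strictly decreasing, hence finite since $\mathbb{N}$ is well-founded; therefore the reduction sequence itself is finite.

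Finally, since $\towh$ is deterministic on \ccbn terms (recalled in the preliminaries), the maximal $\towh$-reduction sequence from $\tm$ is unique, and by the previous paragraph it is finite, so it ends in a $\towh$-normal form. This is exactly what it means for $\tm$ to be \ccbn terminating, which concludes the proof.

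As for the difficulty, there is essentially no hard step left: \refprop{q-subj-red} already does all the real work, and what remains is the standard ``strictly decreasing measure into a well-founded order implies termination'' schema. The only point that requires (minimal) attention is the closedness invariant, which is needed precisely because \refprop{q-subj-red} is stated only for closed terms; it is a one-line check handled above.
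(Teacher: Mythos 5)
Your proposal is correct and follows essentially the same route as the paper: the paper also derives termination from quantitative subject reduction (\refprop{q-subj-red}), phrasing the well-foundedness argument as an induction on $\size\tyd$ rather than as an explicit strictly decreasing sequence of natural numbers. Your extra check that $\towh$ preserves closedness is a harmless (and slightly more careful) addition that the paper leaves implicit.
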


\begin{proof}
By induction on $\size\tyd$ and case analysis of whether $\tm$ $\towh$-reduces. Cases:
\begin{enumerate}
  \item \emph{$\tm$ does not reduce}. Then it is $\towh$-normal. 
 
  \item \emph{$\tm \towh \tmtwo$ for some $\tmtwo$}. By quantitative subject reduction 
(\refprop{q-subj-red}), there exists $\tydtwo\pof\tjudg{}{\tmtwo}{\linty}$ such that $\size\tyd>\size\tydtwo$. Then we 
can apply the \ih to $\tydtwo$, obtaining that $\tmtwo$ is \ccbn normalizing. Therefore, so is $\tm$.
\end{enumerate} 
\end{proof}

\paragraph{Completeness.} The completeness of the type system is easier to prove, because there is no need to develop 
the quantitative analysis, not having to show termination of a relation.

\begin{lemma}[Anti-substitution]
\label{l:q-anti-sub-lemma}
	Let $\tyd \pof\tjudg{\tye}{\tm\isub\var\tmtwo}{\gty}$ with $\tmtwo$ closed. Then there exist
	\begin{itemize}
	 \item a tree type $\tty$,
	 \item a derivation $\tyd_\tm\pof\tjudg{\tye,\var:\tty}{\tm}{\gty}$, and 
	 \item a derivation $\tyd_\tmtwo\pof\tjudg{}{\tmtwo}{\tty}$.
	\end{itemize}
\end{lemma}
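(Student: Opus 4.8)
The plan is to prove the lemma by induction on the derivation $\tyd$, singling out the case $\tm = \var$ \emph{before} doing the case analysis on the last rule of $\tyd$. The asymmetry with the dual statement \reflemma{q-sub-lemma} (which is proved by induction on the term) is twofold: when $\tm = \var$ one has $\tm\isub\var\tmtwo = \tmtwo$, so $\tyd$ is an arbitrary derivation for the closed term $\tmtwo$ and tells us nothing about $\tm$; and the induction is on $\tyd$ rather than on $\tm$ because rule $\tymany$ leaves the subject unchanged, so a term induction would not decrease. Since $\tmtwo$ is closed, so is $\tm\isub\var\tmtwo$, hence by \reflemma{relevance} we have $\dom\tye \subseteq \fv\tm\setminus\set{\var}$, so that $\tye,\var:\tty$ is always a well-formed environment, and moreover every subderivation of $\tyd$ typing $\tmtwo$ has empty environment---so all the ``nodified'' environments occurring in \reflemma{tree-split-merge} are empty and that lemma applies verbatim.

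For $\tm = \var$ I would take $\tty \defeq \mset{\gty}$ (with the degenerate exception $\tty \defeq \emmset$ when $\gty = \emmset$): then $\tyd_\tmtwo$ is a unary $\tymany$ rule applied to $\tyd \pof \tjudg{}{\tmtwo}{\gty}$, and $\tyd_\tm$ is a ``canonical'' derivation of $\tjudg{\var:\mset{\gty}}{\var}{\gty}$, built by a sub-induction on $\gty$ (a single $\tyvar$ rule if $\gty$ is linear, an application of $\tymany$ to the canonical derivations of $\tjudg{\var:\mset{\gty_i}}{\var}{\gty_i}$ if $\gty = \mset{\gty_1,\mydots,\gty_n}$, for which the nodified union of the premise environments is exactly $\var:\mset{\gty}$). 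When $\tm \neq \var$ I would case-split on the last rule of $\tyd$. Rules $\tyvar$, $\tylamstar$ and $\tynone$ are immediate, with $\tty \defeq \emmset$, $\tyd_\tmtwo$ the obvious $\tynone$ rule, and $\tyd_\tm$ either $\tyd$ itself (when $\tm$ is a variable other than $\var$) or the obvious one-rule derivation on $\tm$. For $\tylam$, with $\tm = \la\vartwo.\tmthree$ and $\vartwo\notin\set{\var}\cup\fv\tmtwo$ by $\alpha$-renaming, the premise types $\tmthree\isub\var\tmtwo$; applying the \ih to it and to $\tmthree$ gives $\tty$, $\tyd_\tmthree \pof \tjudg{\tye,\vartwo:\ttytwo,\var:\tty}{\tmthree}{\lintytwo}$ and $\tyd_\tmtwo \pof \tjudg{}{\tmtwo}{\tty}$, and re-applying $\tylam$ concludes. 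For $\tyapp$, with $\tm = \tmthree\tmfour$, the two premises type $\tmthree\isub\var\tmtwo$ and $\tmfour\isub\var\tmtwo$; the \ih gives $\tty_1,\tyd_\tmthree,\tyd_\tmtwo^1$ and $\tty_2,\tyd_\tmfour,\tyd_\tmtwo^2$, so I set $\tty\defeq\tty_1\uplus\tty_2$, take $\tyd_\tm$ to be $\tyapp$ on $\tyd_\tmthree$ and $\tyd_\tmfour$, and obtain $\tyd_\tmtwo$ by merging $\tyd_\tmtwo^1$ and $\tyd_\tmtwo^2$ with the right-to-left direction of \reflemma{tree-split-merge}.

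The only genuinely delicate case is $\tymany$: there $\gty = \mset{\gty_1,\mydots,\gty_n}$, $\tye = \mset{\uplus_j\tye_j}$, and all premises $\tyd_j \pof \tjudg{\tye_j}{\tm\isub\var\tmtwo}{\gty_j}$ type the \emph{same} subject---which is exactly why the induction has to be on the derivation, so that the \ih still applies to each $(\tyd_j,\tm)$, yielding $\tty_j$, $\tyd_\tm^j \pof \tjudg{\tye_j,\var:\tty_j}{\tm}{\gty_j}$ and $\tyd_\tmtwo^j \pof \tjudg{}{\tmtwo}{\tty_j}$. Re-applying $\tymany$ to the $\tyd_\tm^j$ produces a derivation whose environment assigns $\mset{\uplus_j\tty_j}$ to $\var$, so I set $\tty\defeq\mset{\uplus_j\tty_j}$; and $\tyd_\tmtwo$ is built by first merging $\tyd_\tmtwo^1,\mydots,\tyd_\tmtwo^n$ into $\tjudg{}{\tmtwo}{\uplus_j\tty_j}$ via \reflemma{tree-split-merge} and then applying one more unary $\tymany$ to reach $\tjudg{}{\tmtwo}{\mset{\uplus_j\tty_j}}$. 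The point that needs the most care, both here and in the $\tm = \var$ case, is precisely this bookkeeping of the extra $\mset{\cdot}$ layer created by $\tymany$: one must keep the tree shape of $\tty$ and the shape of $\tyd_\tmtwo$ in sync. There is no quantitative information to track, which is why---as remarked above---the completeness side is lighter than the correctness side; this lemma then combines with subject expansion and typability of normal forms to give the completeness half of the characterization theorem.
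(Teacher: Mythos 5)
Your proof is correct in substance and uses the same key ingredients as the paper's (the split/merge \reflemma{tree-split-merge}, re-application of each rule, careful tracking of the extra $\mset{\cdot}$ layer introduced by $\tymany$), but the induction is organized differently. The paper argues by lexicographic induction on the pair $(\tm,\gty)$: the $\tymany$ case decreases the second component (same subject, smaller type), the $\tylam$/$\tyapp$ cases decrease the first, and $\tm=\var$ is a base case treated only at \emph{linear} types, with $\tty\defeq\mset\linty$; tree types for $\tm=\var$ are absorbed into the $\tymany$ case. You instead do a structural induction on $\tyd$ and peel off $\tm=\var$ uniformly for every $\gty$ via a ``canonical'' derivation of $\tjudg{\var:\mset\gty}{\var}{\gty}$. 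Your induction is arguably the more natural one, and your handling of the nodification in the $\tymany$ case---setting $\tty\defeq\mset{\uplus_j\tty_j}$ and adding one extra unary $\tymany$ on the $\tmtwo$ side---is more careful than the paper's own bookkeeping, which leaves $\tty=\uplus_j\tty_j$ unnodified while the environment of $\tyd_\tm$ acquires the extra nesting level.

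One sub-case of your construction does not go through as written: the canonical derivation of $\tjudg{\var:\mset\gty}{\var}{\gty}$ need not exist when $\gty$ contains $\emmset$ as an element at some depth. For instance $\gty=\mset\emmset$ is derivable for any closed $\tmtwo$ (a unary $\tymany$ over $\tynone$), but the only way to derive $\var:\mset{\emmset}$ as a subject type is a unary $\tymany$ over a $\tynone$ rule, whose conclusion has \emph{empty} environment---nodification only wraps variables in $\dom\tye$, i.e.\ those mapped to non-empty trees, so the environment can never be $\var:\mset{\mset\emmset}$. The lemma itself is unaffected (here $\tty\defeq\emmset$ works), but your recipe $\tty\defeq\mset\gty$ selects a $\tty$ for which $\tyd_\tm$ is not derivable. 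The easiest patch is the paper's arrangement: treat $\tm=\var$ directly only for linear $\gty$ (and for $\gty=\emmset$), and let non-empty tree types with $\tm=\var$ fall through to your $\tymany$ case, whose induction hypothesis already covers them and automatically drops the empty components from the union.
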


\begin{proof}
	By lexycographic induction  on $(\tm, \gty)$. We first deal with  the case 
	in which $\gty$ is a tree type $\tty$. We look at the last rule of $\tyd$. 
	Cases:
	\begin{itemize}
    \item \emph{Rule $\tynone$}. Then $\gty = \emmset$. The statement holds with respect to $\tty \defeq \emmset$, $\tyd_\tm \defeq \tyd$ and $\tyd_\tmtwo$ being another $\tynone$ rule of term $\tmtwo$.
    
    \item \emph{Rule $\tymany$}. Then $\tyd$ has the following shape:
       \[
      \infer[\tymany]{
        \tjudg{\mset{\uplus_{i=1}^n\tye_i}}{\tm\isub\var\tmtwo}{\mset{\gty_1,\mydots,\gty_n}}
        }{
        \tyd^i \pof \infer*{\tjudg{\tye_i}{\tm\isub\var\tmtwo}{\gty_i}}{} & 1\leq i\leq n
        }
    \]
By \ih (2nd component), for $1\leq i\leq n$ there exist $\tty_i$ and derivations $\tyd_{\tm}^i \pof 
\tjudg{\tye_i, \var:\tty_i}{\tm}{\gty_i}$ and $\tyd_{\tmtwo}^i \pof \tjudg{}{\tmtwo}{\tty_i}$. 
By \reflemma{tree-split-merge}, the derivations $\tyd_{\tmtwo}^i$ merge into a derivation $\tyd_{\tmtwo} \pof \tjudg{}{\tmtwo}{\tty}$ where $\tty \defeq \tty_1\uplus\mydots\uplus\tty_n$. The derivation $\tyd_\tm$ is instead obtained as follows.
     \[
      \infer[\tymany]{
        \tjudg{\mset{\uplus_{i=1}^n\tye_i}, 
        \var:\mset{\uplus_{i=1}^n\tty_i}}{\tm}{\mset{\gty_1,\mydots,\gty_n}}
        }{
        \infer*{\tjudg{\tye_i, \var:\tty_i}{\tm}{\gty_i}}{} & 1\leq i\leq n
        }
    \]
	\end{itemize}
	
	Now, we assume $\gty$ to be a linear type $\linty$, and look at the cases for the last rule of $\tyd$.
	\begin{itemize}
		\item \emph{Variable}. Two sub-cases:
		\begin{enumerate}
		 \item $\tm=\var$: then $\tm\isub\var\tmtwo = \tmtwo$. The statements holds by taking 
		 \begin{itemize}
			 \item $\tty \defeq \mset \linty$
			 \item $\tyd_\tm$ as an axiom assigning type $\linty$ to $\var$, and 
			 \item $\tyd_\tmtwo\defeq \tyd$.
		 \end{itemize}
		 \item $\tm=\vartwo$: then $\tm\isub\var\tmtwo = \vartwo$. The statements holds by taking 
		 \begin{itemize}
			 \item $\tty \defeq \emmset$
			 \item $\tyd_\tm$ as an axiom assigning type $\linty$ to $\vartwo$, and 
			 \item $\tyd_\tmtwo$ as a $\tynone$ rule of term $\tmtwo$.
		 \end{itemize}
		\end{enumerate}

		\item \emph{Rule $\tylamstar$}. Then $\tm = \la\vartwo\tmthree$ and $\gty = \initty$. The statements holds by taking 
		 \begin{itemize}
			 \item $\tty \defeq \emmset$
			 \item $\tyd_\tm$ as a $\tylamstar$ rule of term $\la\vartwo\tmthree$, and 
			 \item $\tyd_\tmtwo$ as a $\tynone$ rule of term $\tmtwo$.
		 \end{itemize}

 		\item \emph{Rule $\tylam$}. Then $\tyd$ has the following shape:
 		\[\infer*{\infer[\tylam]{  \tjudg{\tye}
				{\la\vartwo\tmthree\isub\var\tmtwo}{\arr{\ttytwo}{\lintytwo}}}
			{\tyd_{\tmthree\isub\var\tmtwo}\pof\tjudg{\tye,\vartwo:\ttytwo}{\tmthree\isub\var\tmtwo}{\lintytwo}}}{}
		\]
		with 
 		$\tm=\la\vartwo\tmthree$ and $\linty=\arr{\ttytwo}{\lintytwo}$. By \ih (1st component), there exists $\tty$ and derivations $\tyd_\tmthree \pof \tjudg{\tye,\var:\tty,\vartwo:\ttytwo}{\tmthree}{\lintytwo}$ and $\tyd_\tmtwo\pof\tjudg{}{\tmtwo}{\tty}$. Applying back rule $\tylam$ to $\tyd_\tmthree$, we obtain 
$\tyd_{\la\vartwo\tmthree}$ as follows.
		\[\infer*{\infer[\tylam]{  \tjudg{\tye,\var:\tty}
				{\lambda\vartwo.\tmthree}{\arr{\ttytwo}{\lintytwo}}}
			{\tyd_\tmthree 
\pof \tjudg{\tye,\var:\tty,\vartwo:\ttytwo}
				{\tmthree}{\lintytwo}}}{}
		\]

		\item \emph{Rule $\tyapp$}. Then $\tyd$ has the following shape:
		 \[
		\infer[\tyapp]{
      \tjudg{\tye\uplus\tyetwo}{\tmthree\isub\var\tmtwo\tmfour\isub\var\tmtwo}{\linty}
		}{
      \tyd_{\tmthree\isub\var\tmtwo} \pof \tjudg{\tye}{\tmthree\isub\var\tmtwo}{\arr{\ttytwo}{\linty}}
			& &
			\tyd_{\tmfour\isub\var\tmtwo} \pof \tjudg{\tyetwo}{\tmfour\isub\var\tmtwo}{\ttytwo}
    }
		\]
		with $\tm=\tmthree\tmfour$ and $\tty = \tty_1\uplus\tty_2$. By \ih (1st component), there exist:
		\begin{enumerate}
		 \item a tree type $\tty_1$ and derivations $\tyd_{\tmthree} \pof \tjudg{\tye, \var: \tty_1}{\tmthree}{\arr{\ttytwo}{\linty}}$ and $\tyd_\tmtwo^1\pof\tjudg{}{\tmtwo}{\tty_1}$;
		 \item a tree type $\tty_2$ and derivations $\tyd_{\tmfour} \pof \tjudg{\tyetwo, \var: \tty_2}{\tmfour}{\ttytwo}$ and $\tyd_\tmtwo^2\pof\tjudg{}{\tmtwo}{\tty_2}$.
		\end{enumerate}
By \reflemma{tree-split-merge}, the derivations $\tyd_\tmtwo^1$ and $\tyd_\tmtwo^2$ merge into a derivation $\tyd_\tmtwo\pof\tjudg{}{\tmtwo}{\tty}$ with $\tty \defeq \tty_1 \uplus \tty_2$. 
The derivation $\tyd_\tm$ is instead obtained as follows.
   		\[
		\infer[\tyapp]{
      \tjudg{\var:\underbrace{\tty_1\uplus \tty_2}_{\tty},\tye\uplus\tyetwo}{\tmthree\tmfour}{\linty}
		}{
      \tyd_\tmthree \pof \infer*{\tjudg{\var:\tty_1,\tye}{\tmthree}{\arr{\ttytwo}{\linty}}}{}
			& 
			\tyd_\tmfour \pof \infer*{\tjudg{\var:\tty_2,\tyetwo}{\tmfour}{\ttytwo}}{}
    }
		\]
\end{itemize}
\end{proof}

\begin{proposition}[Subject expansion]
\label{prop:subj-exp}
	If $\tm$ is closed, $\tyd\pof\tjudg{}{\tmtwo}{\linty}$, and $\tm\towh\tmtwo$ then there exists 
$\tydtwo\pof\tjudg{}{\tm}{\linty}$.
\end{proposition}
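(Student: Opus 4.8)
The plan is to prove subject expansion by induction on the derivation of $\tm \towh \tmtwo$, following exactly the case structure of quantitative subject reduction (\refprop{q-subj-red}) but running the anti-substitution lemma (\reflemma{q-anti-sub-lemma}) where that proof uses the substitution lemma. Since $\towh$ is generated by the single schema contracting a head redex possibly followed by a spine of arguments, there are two cases: the contraction happens at top level with no trailing arguments (base case), or the step takes place on the left of the outermost application (inductive case, which peels one trailing argument).

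\emph{Base case}: $\tm = (\la\var\tmthree)\tmfour \towh \tmthree\isub\var\tmfour = \tmtwo$. As $\tm$ is closed, so are $\tmthree\isub\var\tmfour$ and $\tmfour$; hence by \reflemma{relevance} the given derivation has shape $\tyd\pof\tjudg{}{\tmthree\isub\var\tmfour}{\linty}$. Applying \reflemma{q-anti-sub-lemma} to $\tyd$ yields a tree type $\tty$ together with derivations $\tyd_\tmthree\pof\tjudg{\var:\tty}{\tmthree}{\linty}$ and $\tyd_\tmfour\pof\tjudg{}{\tmfour}{\tty}$ (the environment of $\tyd_\tmthree$ containing only $\var$, again by \reflemma{relevance}). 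Applying rule $\tylam$ to $\tyd_\tmthree$ gives $\tjudg{}{\la\var\tmthree}{\arr\tty\linty}$, and then rule $\tyapp$ with $\tyd_\tmfour$ as right premise produces the required $\tydtwo\pof\tjudg{}{(\la\var\tmthree)\tmfour}{\linty}$.

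\emph{Inductive case}: $\tm = \tmthree\tmfour \towh \tmthree'\tmfour = \tmtwo$ with $\tmthree\towh\tmthree'$. Since $\tmtwo$ is an application, by \reflemma{relevance} the derivation $\tyd$ must end in a $\tyapp$ rule with premises $\tyd_{\tmthree'}\pof\tjudg{}{\tmthree'}{\arr\tty\linty}$ and $\tyd_\tmfour\pof\tjudg{}{\tmfour}{\tty}$ for some tree type $\tty$. As $\tm$ is closed, $\tmthree$ is closed, so the induction hypothesis applies to $\tmthree\towh\tmthree'$ and $\tyd_{\tmthree'}$, giving $\tydtwo_\tmthree\pof\tjudg{}{\tmthree}{\arr\tty\linty}$. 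Reapplying $\tyapp$ to $\tydtwo_\tmthree$ and $\tyd_\tmfour$ yields $\tjudg{}{\tmthree\tmfour}{\linty}$, as required.

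No genuine obstacle is expected: once \reflemma{q-anti-sub-lemma} is available the argument is routine, the only points requiring care being the propagation of closedness (needed both to invoke the anti-substitution lemma, which assumes the substituted term is closed, and to apply the induction hypothesis in the inductive case) and the appeal to \reflemma{relevance} to fix the shape of $\tyd$. This proposition, together with the trivial fact that every $\towh$-normal form—an abstraction—is typable by $\tylamstar$, then supplies the completeness direction of the characterization theorem for tree types.
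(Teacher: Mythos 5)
Your proposal is correct and follows essentially the same route as the paper's proof: induction on $\tm\towh\tmtwo$, with the anti-substitution lemma handling the top-level contraction and the induction hypothesis applied to the left premise of the final $\tyapp$ rule in the case of a step under trailing arguments. The only cosmetic difference is that you make explicit the appeals to relevance and the propagation of closedness, which the paper leaves implicit.
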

\begin{proof}
	By induction on $\tm\towh\tmtwo$.
	\begin{itemize}
		\item \emph{Base case, step at top level}: $\tm=(\la\var\tmthree)\tmfour \towh\tmthree\isub\var\tmfour=\tmtwo$. 
Note that $\tmfour$ is closed because $\tm$ is closed. The derivation in the hypothesis is $\tyd \pof 
\tjudg{}{\tmthree\isub\var\tmfour}{\linty}$.
Then by the anti-substitution lemma (\reflemma{q-anti-sub-lemma}) we obtain a tree type $\tty$ and two derivations 
$\tyd_\tmthree \pof \tjudg{\var:\tty}{\tmthree}{\linty}$ and $\tyd_\tmfour \pof \tjudg{}{\tmfour}{\tty}$. The derivation 
$\tydtwo$ of the statement is then defined as follows:
		\[
		\infer[\tyapp]{
      \tjudg{}{(\la\var\tmthree)\tmfour}{\linty}
    }{
      \infer[\tylam]{
        \tjudg{}{\lambda\var.\tmthree}{\arr{\tty}{\linty}}
      }{
        \tyd_\tmthree \pof \infer*{\tjudg{\var:\tty}{\tmthree}{\linty}}{}
      }
			& 
			\tyd_\tmfour \pof \infer*{\tjudg{}{\tmfour}{\tty}}{}
    }
		\]

		\item \emph{Inductive case, step on the left of the root application}: 
		$\tm=\tmthree\tmfour\towh \tmthree'\tmfour=\tmtwo$ with $\tmthree \towh \tmthree'$. Then $\tyd$ has the following 
shape.
		\[
		\infer[\tyapp]{\tjudg{}{\tmthree'\tmfour}{\linty 
		}}{
		 \tyd_{\tmthree'} \pof \infer*{\tjudg{}{\tmthree'}{\arr{\tty}{\linty}}}{}
			&
			\tyd_\tmfour \pof \infer*{\tjudg{}{\tmfour}{\tty}}{}}
		\]
		Applying the \ih to the left sub-derivation $\tyd_{\tmthree'}$, we obtain $\tydtwo_\tmthree \pof 
\tjudg{}{\tmthree}{\arr\tty\linty}$. Then $\tydtwo$ is 
defined as follows.
		\[\tydtwo \defeq
		\infer[\tyapp]{\tjudg{}{\tmthree\tmfour}{\linty 
		}}{
      \tydtwo_\tmthree \pof \infer*{\tjudg{}{\tmthree}{\arr\tty\linty}}{}
			&
			\tyd_\tmfour \pof \infer*{\tjudg{}{\tmfour}{\tty}}{}}
		\]
	\end{itemize}
\end{proof}

\begin{theorem}[Completeness of tree types for \ccbn]
 If $\tm$ is \ccbn terminating then there exists a tree type derivation $\tyd\pof\tjudg{}{\tm}{\linty}$.
\end{theorem}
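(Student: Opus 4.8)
The plan is to follow the standard recipe for completeness of intersection/multi type systems, now for tree types, using the two ingredients already in place: the (trivial) typability of $\towh$-normal forms and subject expansion (Proposition~\ref{prop:subj-exp}). First I would use the hypothesis that $\tm$ is \ccbn terminating to fix a finite reduction sequence $\tm = \tm_0 \towh \tm_1 \towh \cdots \towh \tm_k$ with $\tm_k$ a $\towh$-normal form. Since $\tm$ is closed and $\towh$ preserves closedness (the meta-level substitution fired by a head step introduces no free variables because the argument is closed), every $\tm_i$ is closed; and since in \ccbn the normal forms are exactly the abstractions, $\tm_k = \la\var\tmtwo$ for some variable $\var$ and term $\tmtwo$.

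Then I would prove, by induction on $k - i$ (that is, walking backwards along the reduction sequence), that each $\tm_i$ admits a tree type derivation $\tjudg{}{\tm_i}{\linty}$ for some linear type $\linty$. For the base case $i = k$, the single rule $\tylamstar$ yields $\tjudg{}{\la\var\tmtwo}{\initty}$, so $\linty = \initty$ works. For the inductive step, assume $\tyd \pof \tjudg{}{\tm_{i+1}}{\linty}$ is derivable; since $\tm_i$ is closed and $\tm_i \towh \tm_{i+1}$, subject expansion (Proposition~\ref{prop:subj-exp}) produces a derivation $\tydtwo \pof \tjudg{}{\tm_i}{\linty}$. Instantiating at $i = 0$ gives the required derivation $\tyd \pof \tjudg{}{\tm}{\linty}$, completing the proof.

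As for the main obstacle: there is essentially none of substance in this direction. The whole content sits in subject expansion, which in turn rests on the anti-substitution lemma (Lemma~\ref{l:q-anti-sub-lemma}) and the tree splitting/merging lemma (Lemma~\ref{l:tree-split-merge}), all already established. Unlike the correctness direction, no quantitative weight on derivations is needed, because one is not proving termination of a rewriting relation but merely transporting typability along an \emph{already finite} reduction. The only points demanding (minor) care are the two basic facts about \ccbn invoked above---preservation of closedness under $\towh$ and the characterization of $\towh$-normal forms as abstractions---both of which follow immediately from the definition of $\towh$ restricted to closed terms, and are in any case recalled in the preliminaries.
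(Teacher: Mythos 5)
Your proposal is correct and follows exactly the paper's own argument: induction on the length of the reduction to weak head normal form, with the base case discharged by $\tylamstar$ and the inductive step by subject expansion (Proposition~\ref{prop:subj-exp}). The extra remarks on closedness preservation and normal forms being abstractions are fine but are indeed already covered by the preliminaries.
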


\begin{proof}
Let $\tm \towh^n \tmtwo$ the reduction of $\tm$ to weak head normal form. By induction on $n$. Cases:
\begin{enumerate}
 \item If $n= 0$ then $\tm = \tmtwo$ is a weak head normal form, that is, an abstraction. Then it is typable with rule 
$\tylamstar$.

  \item If $n>0$ then $\tm \towh \tmthree \towh^{n-1} \tmtwo$. By \ih, there 
  exists 
$\tydthree\pof\tjudg{}{\tmthree}{\linty}$. By subject expansion \refprop{subj-exp}, there exists 
$\tydtwo\pof\tjudg{}{\tm}{\linty}$.
\end{enumerate}
\end{proof}

\section{Proofs from Section~\ref{sect:bisimulation}}
In the first part of this section we prove the T-exhaustible state invariant 
for the \TIAM, then use it to extract 
\LIAM states from \TIAM ones, and finally prove the strong bisimulation between 
the two machines.

In the second part we deal with showing that the \TIAM never loops on type 
derivations. The key tool shall be a 
loop-preserving bisimulation between \TIAM states of the type derivation of 
$\tm$ and $\tmtwo$ if $\tm \towh \tmtwo$.

\subsubsection{T-Exhaustible Invariant}
We present an example of type derivation for the term 
$\tm=(\la\vartwo{\la\var{\var\vartwo}})\mathsf{I}(\la\varthree\varthree)$, the 
same 
example used in Section~\ref{sec:IJK}. We use it to explain the next 
technical definitions. We have annotated the occurrences of $\initty$ with 
natural numbers, so that they represent the run on the type derivation.


We start by defining the notions of typed tests used to define T-exhaustible 
states.

\paragraph*{Type Positions and Generalized States} To define tests, we have to 
consider 
a slightly more general notion of \TIAM state. In \refsect{TIAM}, a state 
is a quadruple $(\tyd, \ruleoc, \ltyctx, \pol)$ where $\ruleoc$ is an 
occurrence 
of a judgement 
$\tjudg{\tye}{\tmtwo}{\linty}$ in $\tyd$, $\pol$ is a direction, and $\ltyctx$ 
is 
a linear
type context isolating an occurrence of 
$\initty$ in $\linty$. The generalization simply is to consider linear type 
contexts 
$\ltyctx$ such that $\ltyctxp{\lintytwo} = \linty$ 
for some $\lintytwo$, that is, not necessarily isolating $\initty$. A pair 
$(\lintytwo, \ltyctx)$ such that $\ltyctxp{\lintytwo} = 
\linty$ is called a position in $\linty$. 

Note that the \TIAM can be naturally adapted to this more general notion of 
state, that follows an arbitrary formula 
$\lintytwo$, not necessarily $\initty$
---it can be found in \reffig{genTIAM}, 
and it amounts to simply replace $\initty$ 
with $\lintytwo$.

To easily manage \TIAM states we also use a concise notations, writing 
$\tjudg{}{\tm}{\linty,\ltyctx}$ for a state 
$\state= (\tyd, \ruleoc, (\linty,\ltyctx), \pol)$ where $\ruleoc$ is 
$\tjudg{\tye}{\tm}{\ltyctxp\linty}$ for some $\tye$, potentially 
specifying the direction via colors and under/over-lining.

\begin{figure}[t]

\begin{center}\footnotesize
\begin{tabular}{c}
$\begin{array}{ccc||ccc}
	\infer{\tjudg{}{\red{\tm\tmtwo}}{\ltyctxp{\lintytwo_\uppt} (=\linty)}} 
	{\tjudg{}{\tm}{\arr{\tty}{\linty}} & \vdash} 
	&\tomachdotone&
	\infer{\tjudg{}{\tm\tmtwo}{\linty 
		}}{\tjudg{}{\red\tm}{\arr{\tty}{\ltyctxp{\lintytwo_\uppt}}} & 
		\vdash}
		&
	\infer{\tjudg{}{\red{\lambda\var.\tm}}{\arr{\tty} 
			{\ltyctxp{\lintytwo_\uppt}}}} 
	{\tjudg{}{\tm}{\linty (= \ltyctxp{\lintytwo})}}
	& \tomachdottwo &
	\infer{\tjudg{}{\lambda\var.\tm}{\arr{\tty}{\linty}}} 
	{\tjudg{}{\red\tm}{\ltyctxp{\lintytwo_\uppt}}}
	 \\[8pt]\hhline{======}&&&\\

	\infer{\tjudg{}{\tm\tmtwo}{\linty(= \ltyctxp{\lintytwo}) 
		}}{\tjudg{}{\blue\tm}{\arr{\tty}{\ltyctxp{\lintytwo_\downpt}}} & 
		\vdash}
		
	&\tomachdotthree&
		\infer{\tjudg{}{\blue{\tm\tmtwo}}{\ltyctxp{\lintytwo_\downpt}}} 
		{\tjudg{}{\tm}{\arr{\tty}{\linty}} & \vdash}
		&
		\infer{\tjudg{}{\lambda\var.\tm}{\arr{\tty}{\linty}}} 
		{\tjudg{}{\blue\tm}{\ltyctxp{\lintytwo_\downpt} (=\linty)}}
		 & \tomachdotfour &
		\infer{\tjudg{}{\blue{\lambda\var.\tm}}{\arr{\tty} 
				{\ltyctxp{\lintytwo_\downpt}}}} 
		{\tjudg{}{\tm}{\linty}}
		\\[8pt]\hhline{======}&&&\\
		\infer*{\infer{\tjudg{}{\la\var\ctxp{\var}} 
			{\arr{\leafctxp{\linty_i}}\lintythree}}{}}
	{\infer[i]{\tjudg{}{\red\var}{\ltyctxp{\lintytwo_\uppt}_i (= \linty_i)}}{}}   
	&\tomachvar&
	 \infer*{\infer{\tjudg{}{\blue{\la\var\ctxp{\var}}} 
			{\arr{\leafctxp{\ltyctxp{\lintytwo_\downpt}_i}}\lintythree}}{}}
	{\infer[i]{\tjudg{}{\var}{\linty_i}}{}}
	&
	
	\infer*{\infer{\tjudg{}{\red{\la\var\ctxp{\var}}} 
			{\arr{\leafctxp{\ltyctxp{\lintytwo_\uppt}_i}}\lintythree}}{}}
	{\infer[i]{\tjudg{}{\var}{\linty_i (=\ltyctxp{\lintytwo}_i)}}{}}
	 & \tomachbttwo &
	 \infer*{\infer{\tjudg{}{\la\var\ctxp{\var}} 
			{\arr{\leafctxp{\linty_i}}\lintythree}}{}}
	{\infer[i]{\tjudg{}{\blue\var}{\ltyctxp{\lintytwo_\downpt}_i}}{}} 
	\\[8pt]\hhline{======}\\
		\end{array}$
		\\
	$\begin{array}{ccccccc }
	%
%
	%
	
		%
		\infer{\tjudg{}{\tm\tmtwo}{\linty}} 
		{\tjudg{}{\blue\tm}{\arr{\leafctxp{\ltyctxp{\lintytwo_\downpt}_i}}{\linty}}
			& \infer=[\tymany]{\tjudg{}{\tmtwo}{\tty 
			(=\leafctxp{\ltyctxp{\lintytwo}_i)}} }{
			\ldots\ \tjudg{}{\tmtwo}{\ltyctxp{\lintytwo}_i}\ \ldots
			 } }
		& \tomacharg &
		\infer{\tjudg{}{\tm\tmtwo}{\linty}} 
		{\tjudg{}{\tm}{\arr{\tty}{\linty}}
			& 
			\infer=[\tymany]{\tjudg{}{\tmtwo}{\leafctxp{\ltyctxp{\lintytwo}_i}}
			 }{
			\ldots\ \tjudg{}{\red\tmtwo}{\ltyctxp{\lintytwo_\uppt}_i}\ \ldots
			 } }
	
		\\[8pt]\hhline{===}\\

		\infer{\tjudg{}{\tm\tmtwo}{\linty}} 
		{\tjudg{}{\tm}{\arr{\tty}{\linty}}
			& 
			\infer=[\tymany]{\tjudg{}{\tmtwo}{\leafctxp{\ltyctxp{\lintytwo}_i}
			 (=\tty)} }{
			\ldots\ \tjudg{}{\blue\tmtwo}{\ltyctxp{\lintytwo_\downpt}_i}\ \ldots
			 } }
		 & \tomachbtone &
		 		\infer{\tjudg{}{\tm\tmtwo}{\linty}} 
		{\tjudg{}{\red\tm}{\arr{\leafctxp{\ltyctxp{\lintytwo_\uppt}_i}}{\linty}}
			& \infer=[\tymany]{\tjudg{}{\tmtwo}{\tty} }{
			\ldots\ \tjudg{}{\tmtwo}{\ltyctxp{\lintytwo}_i}\ \ldots
			 } }
		\end{array}$
		\end{tabular}
		
		\end{center}

	\vspace{-8pt}
	\caption{The transitions of the (Generalized) Tree $\IAMold$ (\TIAM).}
	\label{fig:genTIAM}  
\end{figure}

\paragraph*{\TIAM Tests} Given a \TIAM state $\state= (\tyd, \ruleoc, 
(\linty,\ltyctx), \pol)$, the underlying idea is that 
the judgement occurrence $\ruleoc$ encodes the log of the \LIAM, while the type 
context $\ltyctx$ encodes the 
tape. It is then natural to define two kinds of test, one for judgements and 
one for type contexts.

The intuition is that a test focuses on (the occurrence of) a leaf 
$\lintytwo$ 
of a tree $\tty$ related to 
$\state$, and that these leaf elements play the role of logged positions in 
the \LIAM. These leaf elements 
are of two kinds:
\begin{enumerate}
	\item \emph{Elements containing $\ruleoc$}: those in which the focused 
	judgment $\ruleoc$ itself is contained, 
	corresponding to the logged positions in the log of the \LIAM. Note that 
	the positions on the log are those for which 
	the \LIAM has previously found the corresponding arguments. In the \TIAM 
	these arguments are exactly those in which the 
	focused judgment is contained.
	
	\item \emph{Elements appearing in $\ltyctx$}: those in 
	the right-hand type of $\state$ in which the focused type $\linty$ is 
	contained, 
	corresponding to the logged positions on 
	the tape of the \LIAM. They correspond to \LIAM queries for which the 
	argument has not yet been found, or positions to 
	which the \LIAM is backtracking to.
\end{enumerate}
Each one of these elements is then identified by a judgement occurrence 
$\ruleoc'$ and a position $(\lintytwo,\ltyctxtwo)$ 
in the right-hand type of $\ruleoc'$. 

\begin{definition}[Focus]
	A \emph{focus} $\focus$ in a derivation $\tyd$ is a pair $\focus = 
	(\ruleoc, 
	(\linty,\ltyctx))$ of a judgement occurrence 
	$\ruleoc$ and of a type position $(\linty,\ltyctx)$ in the right-hand type 
	$\ltyctxp\linty$ of $\ruleoc$.
\end{definition}

The intuition is that exhausting a test $\state_{\ruleoc, (\linty,\ltyctx)}$ in 
$\tyd$ shall amount to retrieve the 
axiom of 
$\tyd$ of type $\linty$ that would be substituted by that sequence element of 
type 
$\linty$ by reducing $\tyd$ via 
cut-elimination---the 
definition of exhaustible tests is given below, after the definition of tests.

\begin{definition}[Judgement tests]
	Let $\state=(\tyd, \ruleoc, (\linty,\ltyctx), \pol)$ be a \TIAM state. Let 
	$r_i$ be $i$-th $\tymany$ rule tree found traversing $\tyd$ 
	by descending from the focused judgment $\ruleoc$ towards the final 
	judgment of $\tyd$. Let $\ruleoc_i$ be the 
	topmost traversed judgment of $r_i$ in such a descent. 
	Let $\ruleoc_i$ be $\tjudg{\tye}{\tm}{\lintytwo}$. Then 
	$\state_{\focus}^i = (\tyd,\ruleoc_i,(\lintytwo,\ctxhole), 
	\downpt)$ is the $i$-th judgement test of $\state$, having as focus $\focus 
	\defeq (\ruleoc_i,(\lintytwo,\ctxhole))$. 
\end{definition}
We 
often
omit the judgement from the focus, writing simply 
$\state_{(\lintytwo,\ctxhole)}$, 
and even concisely 
note $\state_{\focus}$ as 
$\tjudg{}{\blue\tm}{\lintytwo,\ctxhole_\downpt}$.

Note that judgement tests always have type context $\ctxhole$. According to the 
intended correspondance judgement/ 
log and type context/tape between the \TIAM and the \LIAM, having type context 
$\ctxhole$ corresponds to the fact that 
the log tests of the \LIAM have an empty tape.


%

\paragraph*{Type (Context) Tests} While judgement tests depend only on the 
judgement occurrence $\ruleoc$ of a state 
$\state = (\tyd, \ruleoc, (\linty,\ltyctx), \pol)$, type context 
tests---dually---fix $\ruleoc$ and depend only on the 
type 
context $\ltyctx$ of $\state$, that is, they all focus on sequence elements of 
the form $(\ruleoc, (\lintytwo,\ltyctxtwo))$ 
where $\ltyctxtwop\lintytwo = \ltyctxp\linty$ and $\ltyctx = 
\ltyctxtwop\ltyctxthree$ for 
some type context $\ltyctxthree$. Namely, 
there is one type context test (shortened to \emph{type test}) for every 
flattened tree (\ie\ sequence) 
in which the hole of $\ltyctx$ is contained. We need some notions about type 
contexts, in particular a notion of level 
analogous to the one for term contexts.

\paragraph*{Terminology About Type Contexts} Define type contexts $\ltyctx_n$ of 
level $n\in\nat$ as follows:

\[\begin{array}{lclr}
	\ltyctx_0 &\defeq &\ctxhole \mid \arr\tty\ltyctx_0
	\\
	\ltyctx_{n+1} &\defeq &\arr{\leafctxp{\ltyctx_n}}\linty \mid 
	\arr\tty\ltyctx_{n+1}	
\end{array}\]
Clearly, every type context $\ltyctx$ can be seen as a type context $\ltyctx_n$ 
for a unique $n$, and viceversa a type 
context of level $n$ is also simply a type context---the level is then 
sometimes omitted.
A \emph{prefix} of a context $\ltyctx$ is a context $\ltyctxtwo$ such that 
$\ltyctxtwop\ltyctxthree = \ltyctx$ for some 
$\ltyctxthree$. Given $\ltyctx$ of level $n>0$, there is a smallest prefix 
context $\ltyctx|_i$ of level $0<i\leq 
n$, and it has the form 
$\ltyctxtwo\ctxholep{\arr\leafctx\linty}$ for a type 
context $\ltyctxtwo$ 
of level $i-1$.


\begin{definition}[Type tests]
	Let $\state=(\tyd, \ruleoc, (\linty,\ltyctx), \pol)$ be a \TIAM state and 
	$n$ 
	be the level of $\ltyctx$. The sequence 
	of directed prefixes $\DiPref\ltyctx$ of $\ltyctx$ is the sequence of pairs  
	$(\ltyctxtwo,\poltwo)$, where $\ltyctxtwo$ is a prefix of $\ltyctx$, defined 
	as follows:
	\[\begin{array}{lclll}
		\DiPref\ltyctx & \defeq & \mset\cdot & \mbox{if }n=0
		\\
		\DiPref\ltyctx & \defeq & \mset{(\ltyctx|_1,\uppt^0), 
		\ldots,(\ltyctx|_n,\uppt^{n-1})} &\mbox{if }n > 0	
	\end{array}\]
	The $i$-th directed prefix (from left to right) $(\ltyctxtwo,\poltwo)$ in 
	$\DiPref\ltyctx$ induces the type test 
	$\state_{\focus}^i \defeq (\tyd, \ruleoc, (\ltyctxthreep\linty,\ltyctxtwo), 
	\poltwo)$ of $\state$ and focus $\focus \defeq 
	(\ruleoc,(\ltyctxthreep\lintytwo,\ltyctxtwo))$, where $\ltyctxthree$ is the 
	unique type context such that $\ltyctx = \ltyctxtwop\ltyctxthree$.
\end{definition}

According to the idea that type tests correspond to 
the tape tests of the \LIAM,  note that the first element (on the left) of the 
sequence $\DiPref\ltyctx$ has $\uppt$ 
direction, and that the direction alternates along the sequence. This is the 
analogous to the fact that the tape test 
associated to the first logged position on the tape 
(from left to right) has always direction $\downp$, and passing to the test of 
the next logged position on the tape switches the direction.


\begin{definition}[State respecting a focus]
	Let $\focus=(\ruleoc, (\linty,\ltyctx))$ be a focus. A \TIAM state $\state$ 
	respects $\focus$ if it is an axiom 
	$\tjudg{}{\blue\var}{\ctxholep\linty_\downpt}$ for some variable $\var$ 
	(the 
	typing context of $\state$, which is omitted 
	by convention, is $\var:\mset\linty$).
\end{definition}

\begin{definition}[T-Exhaustible states]
	The set $\exstates_T$ of T-exhaustible states is the 
	smallest set such that if $\state\in\exstates_T$, then for each type or 
	judgement test of $\state_\focus$ of focus $\focus$ there exists a run 	
	$\run: \state_f \totiam^*\tomachbttwo\statetwo$ where $\statetwo$ respects 
	$\focus$ and for 
	the shortest such run $\statetwo\in\exstates_T$.
\end{definition}

\begin{lemma}[T-exhaustible invariant]
	\label{l:S-invariant-siam}
	Let $\tm$ be a closed term, $\tyd\pof\tjudg{\tye}{\tm}{\linty}$ a tree 
	type derivation for it, and $\runtwo:\ \tjudg{}{\tm}{\ctxholep\linty_\uppt} 
	\totiam^k
	\state$ an initial \TIAM run. Then $\state$ is T-exhaustible. 
\end{lemma}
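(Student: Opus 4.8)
The plan is to prove the invariant by induction on the length $k$ of the initial run $\runtwo$. If $k=0$, then $\state$ is the initial state: its type context is $\ctxhole$, of level $0$, so $\DiPref{\ctxhole}$ is empty and $\state$ has no type tests; and its focused judgment is the root of $\tyd$, so no $\tymany$ rule lies below it and $\state$ has no judgment tests either. Hence the closure condition defining $\exstates_T$ holds vacuously and $\state$ is T-exhaustible. For the inductive step I would take $\runtwo$ of the form $\ldots\totiam^{k}\state\totiam\statetwo$ with $\state$ T-exhaustible by the \ih, and argue by cases on the transition $\state\totiam\statetwo$, each time describing the tests of $\statetwo$ in terms of those of $\state$ and transporting the certifying runs. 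The book-keeping rests on three facts: a judgment test depends only on the focused judgment occurrence of a state, a type test only on its type context, and the axiom-sequence part of \reflemma{relevance} aligns the $i$-th leaf of a tree type with the $i$-th axiom on the corresponding variable.

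The eight transitions fall into three groups according to their effect on the logged positions that tests certify. For the search transitions $\tomachdotone$--$\tomachdotfour$, no $\tymany$ rule is crossed and the level of the type context is unchanged, so the family of (abstract) tests is preserved: the judgment tests of $\statetwo$ have exactly the same test states as those of $\state$, and each type test of $\statetwo$ has a test state that is a few $\totiam$-steps away from, and certified along the same forward run as, the corresponding one of $\state$ (the extra $\resm$/arrow layer being consumed first). For the argument transitions $\tomacharg$ and $\tomachbtone$, exactly one $\tymany$ rule tree is crossed, so the outermost type test of $\state$ and the innermost judgment test of $\statetwo$ trade places (sharing the same focused type), while all other tests are unchanged; the certifying run of the new test is obtained by prefixing the relevant $\tomacharg$/$\tomachbtone$ step to the certifying run the \ih gives for the old one, so it ends in the same T-exhaustible state.

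The crux is the binder/axiom transitions $\tomachvar$ and $\tomachbttwo$. Under $\tomachvar$, $\state$ is a red axiom on a variable $\var$ --- say its $i$-th axiom, of judgment $\mathcal{J}$ --- and $\statetwo$ focuses on the conclusion of the $\tylam$ binding $\var$, with its type context wrapped in the leaf context that, by \reflemma{relevance}, singles out that axiom. On the tape this packages a whole block of log positions into one fresh tape position; on tests it deletes the judgment tests of $\state$ sitting above the $\tylam$ and creates exactly one new type test in $\statetwo$, namely the outermost one. I claim this test is certified by a \emph{single} $\tomachbttwo$ step: matching $\tomachbttwo$ against the freshly introduced leaf context, the step lands precisely on $\state$ read with direction $\downpt$ and empty type context, which respects the new focus; moreover this endpoint shares the judgment occurrence $\mathcal{J}$ with $\state$ and carries no type tests, so its set of tests is contained in that of $\state$ and it is T-exhaustible by the \ih. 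All remaining tests of $\statetwo$ coincide with tests of $\state$. The transition $\tomachbttwo$ is dual: the single consumed type test of $\state$ comes, by the \ih, with a certifying run whose endpoint is a T-exhaustible state that shares a judgment occurrence with $\statetwo$ and has empty type context; its tests are exactly the block of judgment tests recreated on $\statetwo$, which therefore inherit certifying runs from it.

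The main obstacle is precisely this last case, together with the co-inductive book-keeping of $\exstates_T$. One must keep the two direction conventions straight (the \TIAM's $\uppt/\downpt$ are swapped with respect to the \LIAM's $\downp/\upp$), match the leaf context introduced by $\tomachvar$ to the length of the packaged log via \reflemma{relevance}, use bi-determinism of the \TIAM to see that the endpoint of a certifying run for a given test is uniquely determined, and then, for each such endpoint, produce an already-handled state --- typically $\state$ itself, or one with the same judgment occurrence and empty type context --- whose test-set contains that of the endpoint, so that T-exhaustibility propagates. The remaining verifications are routine, and, this invariant being the tree-indexed refinement of the sequence-indexed invariant for the \SIAM, the whole argument runs exactly as in the companion development, essentially by replacing ``sequences'' with ``trees'' throughout.
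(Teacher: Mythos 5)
Your proposal is correct and follows essentially the same route as the paper's proof: induction on the run length with a vacuous base case, a case analysis on the last transition grouped exactly as you describe, transport of certifying runs via (bi-)determinism for the search and argument transitions, and, for $\tomachvar$/$\tomachbttwo$, the observation that the freshly created outermost type test is exhausted in a single $\tomachbttwo$ step by a state sharing the judgment occurrence (hence the judgment tests) of the pre-transition state. The only loose phrasing is that the ``remaining'' type tests in the binder cases do not literally coincide with tests of the previous state but are one transition away with alternating directions, which is precisely the even/odd index bookkeeping the paper spells out and which your final paragraph already anticipates.
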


\begin{proof}
		By
	induction on $k$. For $k=0$ there is nothing to prove because the initial 
	state $\state_0 = \tjudg{}{\tm}{\ctxholep\linty_\uppt}$ has no 
	judgement nor type tests. Then suppose
	$\runtwo':\state_0\totiam^{k-1}\statetwo$ and that the run continues with 
	$\statetwo\totiam\state$. By \ih, $\statetwo$ is 
T-exhaustible.

\emph{Terminology}: when a test state satisfies the clause in the definition of T-exhaustible states we say that it is \emph{positive}. 

	 Cases of 
	$\statetwo\totiam\state$:
	
	\begin{itemize}
		\item Case $\tomachdotone$.
		\[\begin{array}{clc}
		\statetwo=\infer{\tjudg{}{\red{\tm\tmtwo}}{\ltyctxp{\initty_{\uppt}}(=\linty)}}
		{\tjudg{}{\tm}{\arr{\tty}{\linty}} & \vdash} &
		\tomachdotone &
		\infer{\tjudg{}{\tm\tmtwo}{\linty 
			}}{\tjudg{}{\red\tm}{\arr{\tty}{\ltyctxp{\initty_{\uppt}}}} & 
			\vdash}=\state
		\end{array}\]
		\begin{itemize}
			\item \emph{Judgement tests.} Note that $\state$ has the same judgement tests of $\statetwo$, which are 
positive by the \ih
			\item \emph{Type tests.} We first consider the type tests of direction $\uppt$. 
			Let us $\red{\state_\focus}$ be one of them. We observe that there 
			is a 
			corresponding type test $\red{\statetwo_\focus}$ of $\statetwo$, that by \ih it is positive, and that 
			$\red{\statetwo_\focus}\totiam\red{\state_\focus}$. Since the machine is deterministic also $\red{\state_\focus}$ 
is positive. Let us now consider a type test $\blue{\state_\focus}$ of direction $\downpt$. We observe 
			that there is a 
			corresponding type test $\blue{\statetwo_\focus}$ of $\statetwo$, that it is positive by \ih, and that 
			$\blue{\state_\focus}\to\blue{\statetwo_\focus}$. Then $\blue{\state_\focus}$ is positive.
		\end{itemize}
		\item Case $\tomachdottwo$. Identical to the previous one.
		\item Case $\tomachvar$.
		\[
		\begin{array}{clc}
		\statetwo=\infer*{\infer{\tjudg{}{\la\var\ctxp{\var}} 
				{\arr{\leafctxp{\linty_i}}\lintytwo}}{}}
		{\infer[i]{\tjudg{}{\red\var}{\ltyctxp{\initty_\uppt}_i (= 
		\linty_i)}}{}}   
		&\tomachvar&
		\infer*{\infer{\tjudg{}{\blue{\la\var\ctxp{\var}}} 
				{\arr{\leafctxp{\ltyctxp{\initty_\downpt}_i}}\lintytwo}}{}}
		{\infer[i]{\tjudg{}{\var}{\linty_i}}{}}=\state		
		\end{array}
		\]
		\begin{itemize}
			\item \emph{Judgement tests.} Judgement tests of $\state$ are a subset 
			of judgement tests of $\statetwo$ and thus positive by \ih
			
			\item \emph{Type tests.} Let $n$ be the level of $\ltyctx$. Let $\state^j$ be the type test of $\state$ 
associated to the $j$-th triple in 
$\DiPref{\arr{\leafctxp{\ltyctx_i}}\lintytwo}$. Three 
cases, depending on the index $j$ of $\state^j$:
\begin{enumerate}
 \item $j = 1$: then $\state^1$ is $\tjudg{}{\red{\la\var\ctxp{\var}} }
			{\ltyctxp{\initty}_{i\uppt},\arr\leafctx\lintytwo}$.
			 Note that 
$\state^1\tomachbttwo\,\tjudg{} 
			{\blue\var}{\ltyctxp{\initty}_{i\downpt},\ctxhole}$, which has no type tests 
			and has the same judgement tests of $\statetwo$, which by 
			\ih are positive. Hence, $\state^1$ is T-exhaustible.
 
    \item \emph{$j$ is even}: for 
			$\blue\state^j$ (of direction $\downpt$) there is a corresponding type test $\red\statetwo^{j-1}$ of odd index 
of
			$\statetwo$, having 
direction $\uppt$ and such that 
			$\red\statetwo^{j-1}\tomachvar\blue\state^j$.  Thus one can 
			conclude by \ih and determinism of the \TIAM.
    
    \item \emph{$j \neq 1$ is odd}: for 
			$\red\state^j$ (of direction $\uppt$) there is a 
			corresponding type test $\blue\statetwo^{j-1}$ of even index of
			$\statetwo$, having direction $\downpt$ and such that 
			$\red\state^j\tomachbttwo \blue\statetwo^{j-1}$. Thus one can 
			conclude by \ih
\end{enumerate}

		\end{itemize}
		\item Case $\tomachbttwo$.
		\[\begin{array}{clc}
		\statetwo=\infer*{\infer{\tjudg{}{\red{\la\var\ctxp{\var}}} 
				{\arr{\leafctxp{\ltyctxp{\initty_\uppt}_i}}\lintytwo}}{}}
		{\infer[i]{\tjudg{}{\var}{\linty_i (=\ltyctxp{\initty}_i)}}{}}
		& \tomachbttwo &
		\infer*{\infer{\tjudg{}{\la\var\ctxp{\var}} 
				{\arr{\leafctxp{\linty_i}}\lintytwo}}{}}
		{\infer[i]{\tjudg{}{\blue\var}{\ltyctxp{\initty_\downpt}_i}}{}}=\state
		\end{array}\]
		\begin{itemize}
			\item \emph{Judgement tests.} The first type test of $\statetwo$ is 
			$\statetwo^1 \defeq \tjudg{}{\red{\la\var\ctxp{\var}}} 
			{\ltyctxp{\initty}_{i\uppt},\arr\leafctx\lintytwo}$.
			 Note that 
$\statetwo^1\tomachbttwo\,\tjudg{} 
			{\blue\var}{\ltyctxp{\initty}_{i\downpt},\ctxhole} \eqdef \statethree$ and that $\statethree$ exhausts 
$\statetwo^1$, and it is the first such state. Since $\statetwo^1$ is positive, $\statethree$ is T-exhaustible. Note 
that $\statethree$ has the same judgment tests of $\state$, which are then 
positive.
			
			\item \emph{Type tests.} For each odd type test 
			$\red\state^i$ of $\state$ (whose direction is $\uppt$), the corresponding even 
			type test $\blue\statetwo^{i+1}$ of $\statetwo$ has direction $\downpt$, is positive by \ih, and such that 
			$\red\state^i\tomachvar\blue\statetwo^{i+1}$. Then $\red\state^i$ is positive. For each even type 
			test 
			$\blue\state^i$ of $\state$ (whose direction is $\downpt$), the corresponding odd 
			type test $\red\statetwo^{i+1}$ of $\statetwo$ has direction $\uppt$, is positive by \ih, and such that 
			$\red\statetwo^{i+1}\tomachbttwo\blue\state^i$. Then $\state^i$ is positive by determinism of the 
\TIAM.
		\end{itemize}
		
		\item Cases $\tomachdotthree$ and $\tomachdotfour$. They are identical to case 
		$\tomachdotone$.
		
		\item Case $\tomacharg$.
		\[\small\begin{array}{clc}
		\statetwo=\infer{\tjudg{}{\tm\tmtwo}{\linty}} 
		{\tjudg{}{\blue\tm}{\arr{\leafctxp{\ltyctxp{\initty_\downpt}_i}}{\linty}}
			& \infer=[\tymany]{\tjudg{}{\tmtwo}{\tty 
			(=\leafctxp{\ltyctxp{\initty}_i)}} }{
				\ldots\ \tjudg{}{\tmtwo}{\ltyctxp{\initty}_i}\ \ldots
		} }
		& \tomacharg &
		\infer{\tjudg{}{\tm\tmtwo}{\linty}} 
		{\tjudg{}{\tm}{\arr{\tty}{\linty}}
			& \infer=[\tymany]{\tjudg{}{\tmtwo}{\leafctxp{\ltyctxp{\initty}_i}} 
			}{
				\ldots\ \tjudg{}{\red\tmtwo}{\ltyctxp{\initty_\uppt}_i}\ \ldots
		} }=\state
		\end{array}\]
		\begin{itemize}
			\item \emph{Judgement tests.} Judgement tests of $\state$ are those of 
			$\statetwo$, which are positive by \ih, plus 
			$\state^\tmtwo \defeq \tjudg{}{\blue\tmtwo}{\ltyctxp{\initty}_{i\downpt},\ctxhole}$. 
			Please note that 
			$\state^\tmtwo\tomachbtone\,\tjudg{}{\red\tm}{\ltyctxp{\initty}_{i\uppt},
				\arr\leafctx{\linty}}\eqdef\statetwo^t$. 
				Now,
			$\statetwo^t$ is a type test of $\statetwo$ and by \ih is 
			positive. Then $\state^\tmtwo$ is positive. 
			
			\item \emph{Type tests.} For each odd type test 
			$\red\state^i$ of $\state$ (whose direction is $\uppt$), the corresponding even 
			type test $\blue\statetwo^{i+1}$ of $\statetwo$ has direction $\downpt$, is positive by \ih, and such that 
			$\blue\statetwo^{i+1}\tomacharg\red\state^i$. Then $\state^i$ is positive by determinism of the 
\TIAM. For each even type 
			test 
			$\blue\state^i$ of $\statetwo$ (whose direction is $\downpt$), the corresponding odd 
			type test $\red\statetwo^{i+1}$ of $\statetwo$ has direction $\uppt$, is positive by \ih, and such that 
			$\blue\state^i\tomachbtone\red\statetwo^{i+1}$. Then $\blue\state^i$ is positive.
		\end{itemize}
		
		\item Case $\tomachbtone$.
		\[\small\begin{array}{clc}
		\statetwo=\infer{\tjudg{}{\tm\tmtwo}{\lintytwo}} 
		{\tjudg{}{\tm}{\arr{\tty}{\lintytwo}}
			& \infer=[\tymany]{\tjudg{}{\tmtwo}{\leafctxp{\ltyctxp{\initty}_i} 
			(=\tty)} }{
				\ldots\ \tjudg{}{\blue\tmtwo}{\ltyctxp{\initty_\downpt}_i}\ 
				\ldots
		} }
		& \tomachbtone &
		\infer{\tjudg{}{\tm\tmtwo}{\lintytwo}} 
		{\tjudg{}{\red\tm}{\arr{\leafctxp{\ltyctxp{\initty_\uppt}_i}}{\lintytwo}}
			& \infer=[\tymany]{\tjudg{}{\tmtwo}{\tty} }{
				\ldots\ \tjudg{}{\tmtwo}{\ltyctxp{\initty}_i}\ \ldots
		} }=\state
		\end{array}\]
		\begin{itemize}
			\item \emph{Judgement tests.} All judgement tests of $\state$ are judgement 
			test of $\statetwo$, which are this way positive by \ih
			
			\item \emph{Type tests.} The first type test of $\state$ is 
			$\state^1 \defeq \tjudg{}{\red\tm}{\ltyctxp{\initty}_{i\uppt}, 
				\arr{\mset{\ldots \ctxhole\ldots}}{\linty}}$. 
			Please note that $\statetwo^\tmtwo \defeq \tjudg{}{\blue\tmtwo} {\ltyctxp{\initty}_{i\downpt},\ctxhole}$ is a 
judgement test of $\statetwo$ such that $\statetwo^\tmtwo \tomachbtone \state^1$. By \ih, 
$\statetwo^\tmtwo$ is positive. By determinism of the \TIAM, $\state^1$ is positive.

			For each odd type test 
			$\red\state^i$ of $\state$ (whose direction is $\uppt$), the corresponding even 
			type test $\blue\statetwo^{i-1}$ of $\statetwo$ has direction $\downpt$, is positive by \ih, and such that 
			$\blue\statetwo^{i-1}\tomachbtone\red\state^i$. Then $\red\state^i$ is positive by determinism 
of the \TIAM. For each even type 
			test 
			$\blue\state^i$ of $\statetwo$ (whose direction is $\downpt$), the corresponding odd 
			type test $\red\statetwo^{i-1}$ of $\statetwo$ has direction $\uppt$, is positive by \ih, and such that 
			$\blue\state^i\tomacharg\red\statetwo^{i-1}$. Then $\blue\state^i$ is positive.
			
		\end{itemize}
	\end{itemize}
\end{proof}

\subsubsection{Extracting \LIAM States from \TIAM T-Exhaustible States, and the 
\LIAM/\TIAM Strong Bisimulation} 
From T-exhaustible states one is able to \emph{extract} \LIAM states, as the 
following definition shows. Please note that the definition is well-founded, 
precisely because the objects are T-exhaustible states. Indeed, the induction 
principle used to define T-exhaustability allows recursive definition on 
T-exhaustible states to be well-behaved. 
\begin{definition}[Extraction of logged positions]
	Let $\state$ be an T-exhaustible \TIAM state in a derivation $\tyd$, $\tm$ 
	be the final term in $\tyd$, and $\state_\focus$ be a judgement or type 
	test of $\state$. Since 
	$\state$ is 
	T-exhaustible, there is an exhausting run 
	$\state_\focus\totiam^+ \statetwo
	\in\exstates_T$. Let $\var$ be the variable of $\statetwo$. Then the logged 
	position extracted from 
	$\state_\focus$ is 
	$\elpos{\state_\focus} \defeq 
	(\var,\la\var\ctxtwo_n,\elpos{\statetwo^1}\cdot\ldots\cdot\elpos{\statetwo^n})$,
	 where 
	$\ctxtwo_n$ is the context (of level $n$)
	retrieved traversing $\tyd$ from $\statetwo$ to 
	the binder of $\l\var$ of $\var$ in $\tm$ and $\statetwo^i$ is the 
	$i$-th judgement test of $\statetwo$.
\end{definition}

\begin{definition}[Extraction of logs, tapes, and states]
	Let $\state=(\tyd, \ruleoc, (\linty,\ltyctx), \pol)$ be an T-exhaustible 
	\TIAM 
	state where $\tm$ is the final term in 
	$\tyd$, and $\ruleoc$ is $\tjudg{\tye}{\tmtwo}{\ltyctxp\linty}$. The \LIAM 
	state extracted from $\state$  is 
	$\estate{\state} \defeq 
	\nopolstate{\tmtwo}{\ctx_\state}{\etape\state}{\elog{\state}}{\pol}$\footnote{We
	 leave the color of $\pol$ unchanged, in the sense that $\estate\state$ is 
	 red/blue if 
	 $\state$ is red/blue, \ie $\downp$ becomes $\uppt$ and $\upp$ becomes 
	 $\downpt$.} 
	where
	\begin{itemize}
		\item \emph{Context}: $\ctx_\state$ is the only term context such that 
		$\tm = \ctx_\state\ctxholep\tmtwo$;
		\item \emph{Log}:
		$\elog{\state}\defeq\lpos_1\cdots\lpos_i\cdots\lpos_n$ where $\lpos_i = 
		\elpos{\state^i_\focus}$ where 
		$\state^i_\focus$ is the $i$-th judgement test of $\state$.		
		\item \emph{Tape}: $\etape\state = \etapeauxs{\ltyctx,0}$ where 
		$\etapeauxs{\ltyctx,i}$ is the auxiliary function 
		defined by induction on $\ltyctx$ as follows.
		\[\begin{array}{lcl}
			\etapeauxs{\ctxhole,i} &\defeq &\stempty 
			\\
			\etapeauxs{\arr\tty{\ltyctx}, i} & \defeq & \resm\cdot 
			\etapeauxs{\ltyctx,i}
			\\
			\etapeauxs{\arr{\leafctxp\ltyctx}\lintytwo,i}
			& \defeq & \elpos{\state^i_\focus}\cdot\etapeauxs{\ltyctx, i+1}
		\end{array}
		\]
		where $\state^i_\focus$ is the $i$-th type test of $\state$.
	\end{itemize}
	We use $\bisimtypes$ for the extraction relation between T-exhaustible 
	\TIAM states and \LIAM states defined as $(\state, \estate{\state}) 
	\in\bisimtypes$.
\end{definition}

First of all, we show that the extracted state respects the \LIAM invariant 
about the length of the log.

\begin{lemma}
	\label{l:extraction-length}
	Let $\state$ be an T-exhaustible \TIAM state and $\estate{\state} = 
	\nopolstate{\tm}{\ctx_\state}{\etape\state}{\elog{\state}}{\pol}$ the \LIAM 
	state extracted from it. Then the level of 
	$\ctx_\state$ is exactly the length of $\elog{\state}$, that is, $(\tm, 
	\ctx_\state,\elog{\state})$ is a logged 
	position.
\end{lemma}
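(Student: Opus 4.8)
The plan is to reduce the claim to a purely syntactic fact about $\tyd$: that the level of $\ctx_\state$ equals the number $n$ of $\tymany$ rule trees met while descending in $\tyd$ from $\ruleoc$ to the final judgement. This suffices because, by definition, $\elog\state = \lpos_1\cdots\lpos_n$ has exactly one logged position $\lpos_i = \elpos{\state^i_\focus}$ for each judgement test $\state^i_\focus$ of $\state$, and, again by definition, there is exactly one judgement test per $\tymany$ rule tree crossed in that descent; hence the length of $\elog\state$ is precisely that number $n$, and the syntactic fact gives at once that $(\tm,\ctx_\state,\elog\state)$ is a logged position.

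The syntactic fact I would prove by induction on the number of typing rules of $\tyd$ that lie below $\ruleoc$, carrying along the invariant (legitimate for a \TIAM state, and preserved by the descent below) that $\ruleoc$ is typed by a linear type. If $\ruleoc$ is the final judgement then $\ctx_\state = \ctxhole$ has level $0$ and no $\tymany$ rule tree is crossed, so both quantities are $0$. Otherwise let $r$ be the rule of which $\ruleoc$ is a premise and $\ruleoc'$ its conclusion, with subject $\tmtwo'$ and $\tm = \ctx'\ctxholep{\tmtwo'}$. Since $\ruleoc$ has a linear type, $r$ is $\tylam$, or $\tyapp$ with $\ruleoc$ in functional position, or $\tymany$. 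In the first two cases $\ruleoc'$ is again linear-typed, the hole of $\ctx_\state$ differs from that of $\ctx'$ only by a leading abstraction or a leading application in functional position --- neither of which changes the level --- and $r$ is neither a $\tymany$ rule nor the bottom of a $\tymany$ rule tree, so descending from $\ruleoc$ and from $\ruleoc'$ meets the same number of $\tymany$ rule trees; the \ih applied to $\ruleoc'$ closes these cases.

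The case $r = \tymany$ is where the tree structure of the type system is genuinely used, and is the only mildly delicate point of the proof. Let $B$ be the maximal block of consecutive $\tymany$ rules of $\tyd$ having $\ruleoc$ as one of its leaves; its bottommost judgement carries a tree type, and inspecting the typing rules shows that the only rule that can stand below a tree-typed judgement with a non-empty derivation above it is $\tyapp$, with that judgement as its \emph{right} (argument) premise --- here one also uses that a $\tyapp$ rule whose right premise is a $\tynone$ rule has no subderivation above it and is thus never traversed. Hence below $B$ there is a unique $\tyapp$ rule, whose conclusion $\ruleoc''$ is linear-typed and in whose context $\ctx''$ the hole sits in argument position; the internal $\tymany$ rules of $B$ do not move the hole, so the level of $\ctx_\state$ is the level of $\ctx''$ plus $1$, while descending from $\ruleoc$ down to $\ruleoc''$ crosses exactly one $\tymany$ rule tree, namely $B$. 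Since $\ruleoc''$ has strictly fewer rules below it, the \ih applied to it gives that the level of $\ctx''$ equals the number of $\tymany$ rule trees below $\ruleoc''$, and adding $1$ on both sides concludes. Everything beyond this structural observation is routine bookkeeping on the shape of $\tyd$.
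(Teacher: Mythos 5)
Your proof is correct and follows essentially the same route as the paper's: both quantities are identified with the number of $\tymany$ rule trees (equivalently, of $\tyapp$ rules entered from their argument premise) met when descending from $\ruleoc$ to the final judgement. The paper states this correspondence in two sentences, whereas you make the induction and the structural observation that every maximal $\tymany$ block sits on the right premise of a $\tyapp$ rule explicit; this is a legitimate, more detailed rendering of the same argument.
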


\begin{proof}
	The length of $\elog{\state}$ is the number of judgement tests of $\state$, 
	which is the number of $\tymany$ rule trees, and thus of
	$\tyapp$ rules, traversed descending from the focused judgement $\ruleoc$ 
	of 
	$\state$ to the final judgement of $\tyd$. 
	The level of $\ctx_\state$ is the number of arguments in which the hole of 
	$\ctx_\state$ is contained, which are 
	exactly the number of 
	$\tyapp$ rules traversed descending from $\ruleoc$ to the final judgement 
	of 
	$\tyd$.
\end{proof}

\begin{proposition}[\TIAM-\LIAM bisimulation]
	Let $\tm$ a closed and $\towh$-normalizable term, and 
	$\tyd\pof\tjudg{}{\tm}{\initty}$ a type derivation. Then 
	$\bisimtypes$ is a strong bisimulation between T-exhaustible \TIAM states 
	on $\tyd$ and \LIAM states on $\tm$. 
	Moreover, if $\state_\tyd \bisimtypes \state_\l$ then $\state_\tyd$ is 
	\TIAM reachable if and only if $\state_\l$ is 
	\LIAM reachable.
\end{proposition}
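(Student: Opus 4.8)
The plan is a standard strong-bisimulation argument carried out diagram by diagram, exploiting that $\bisimtypes$ is the graph of the total extraction function on T-exhaustible \TIAM states (\reflemma{S-invariant-siam} and \reflemma{extraction-length}) and that both the \TIAM and the \LIAM are bi-deterministic. I would first isolate three boundary facts. \emph{Initial states correspond}: the initial \TIAM state $\state_0$, sitting on the root judgement of $\tyd$ with empty type context and direction $\uppt$, has no judgement nor type tests, so its extracted log is empty and its extracted tape is $\stempty$; its term is $\tm$ and the only compatible term context is $\ctxhole$; and $\uppt$ maps to $\downp$; hence $\extr{\state_0}=\dstate{\tm}{\ctxhole}{\epsilon}{\epsilon}=\state_\tm$. \emph{Final states correspond}: a \TIAM state is final exactly when it focuses a $\tylamstar$ rule, and then the focused term is an abstraction of the atomic type $\initty$, forcing the type context to be $\ctxhole$, so the extracted tape is empty and $\extr\state$ has shape $\dstate{\la\var\tmtwo}{\ctx}{\epsilon}{\tlog}$, which is exactly the shape of \LIAM-final states from \cite{IamPPDPtoAppear}; the converse implication follows from the forward simulation below by contraposition. \emph{Transfer of T-exhaustibility}: the inductive step in the proof of \reflemma{S-invariant-siam} never uses reachability of the predecessor state, only its T-exhaustibility, so that same proof shows that $\totiam$ maps T-exhaustible states to T-exhaustible states; moreover tests, hence T-exhaustibility, do not depend on the direction of a state.

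The core is the forward simulation: for each of the eight \TIAM transitions $\state\totiam\state'$ I would compute $\extr\state$ and $\extr{\state'}$ from the definitions of extracted logs, tapes and states, and check $\extr\state\toliam\extr{\state'}$ via the \LIAM transition with the \emph{same} label --- the labels in \reffig{tiam} are chosen precisely so that this holds. For $\tomachdotone$--$\tomachdotfour$ the focused judgement moves along $\tyd$ without entering or leaving an argument sub-derivation, so the judgement tests, hence the extracted log, are unchanged, while the type context acquires or drops one layer of the form $\arr\tty\ctxhole$, which by the tape-extraction clauses inserts or removes a single $\resm$ at the head of the extracted tape; together with the evident move between the term positions $\tm\tmtwo$/$\tm$ or $\la\var\tm$/$\tm$, this is exactly the effect of the matching \LIAM rule. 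The delicate cases are $\tomacharg$, $\tomachbtone$, $\tomachvar$ and $\tomachbttwo$, which move exactly one logged position between the log and the tape: here one must check that the logged position removed from one stack is the one added to the other. I would extract this from the homonymous cases of the proof of \reflemma{S-invariant-siam}, which already exhibit the relevant test-to-test transitions --- e.g.\ in case $\tomacharg$ the new head judgement test of $\state'$ reaches the head type test of $\state$ in one $\tomachbtone$-step, and in case $\tomachvar$ the new head type test reaches, in one $\tomachbttwo$-step, an axiom having the same judgement tests as the source state. Since logged positions are extracted through the exhausting run and determinism makes the relevant exhausting runs coherent, the two logged positions coincide, and the residual bookkeeping --- notably that the length of the restored log fragment equals the level of the term context between the relevant binder and variable occurrence, by \reflemma{extraction-length} --- reproduces the corresponding \LIAM clause verbatim. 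I expect this label-by-label identification of extracted logged positions, which forces one to unfold the leaf-context and directed-prefix machinery underlying the tests, to be the main obstacle.

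Granting these ingredients, $\bisimtypes$ is a strong bisimulation: the forward direction is the case analysis above, and the backward direction is automatic because $\bisimtypes$ is a function, both machines are deterministic, and the finality correspondence guarantees that $\extr\state$ can step iff $\state$ can, so every square can be completed. For the ``moreover'', if $\state_\tyd$ is \TIAM-reachable then transporting an initial \TIAM run $\state_0\totiam^k\state_\tyd$ through the initial-state correspondence and forward simulation gives an initial \LIAM run $\state_\tm\toliam^k\extr{\state_\tyd}=\state_\l$. Conversely, let $\state_\tm\toliam^k\state_\l$ be an initial \LIAM run with $\state_\tyd\bisimtypes\state_\l$, and write $\overline{s}$ for the direction-reverse of a state. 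Reversing that run by reversibility of the \LIAM, and using that extraction commutes with direction-reversal (tests ignore the direction) so that $\overline{\state_\tyd}\bisimtypes\overline{\state_\l}$, forward simulation applied $k$ times produces a \TIAM run $\overline{\state_\tyd}\totiam^k\statetwo$ with $\extr\statetwo=\overline{\state_\tm}=\ustate{\tm}{\ctxhole}{\epsilon}{\epsilon}$. But the unique \TIAM state with empty extracted log, empty extracted tape and direction $\downpt$ is $\overline{\state_0}$, so $\statetwo=\overline{\state_0}$ and the run is $\overline{\state_\tyd}\totiam^k\overline{\state_0}$; reversing it back --- the \TIAM is reversible as well, being bi-deterministic --- yields an initial \TIAM run ending in $\state_\tyd$. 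Hence $\state_\tyd$ is \TIAM-reachable, which completes the proof.
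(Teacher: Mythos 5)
Your proposal is correct and follows essentially the same route as the paper's proof: the core is the identical eight-case forward simulation in which the extracted logged positions for the delicate transitions ($\tomachvar$, $\tomachbttwo$, $\tomacharg$, $\tomachbtone$) are identified by replaying the corresponding cases of the T-exhaustible invariant, exactly as the paper does. The only divergences are peripheral---you derive the backward half from determinism plus the finality correspondence where the paper invokes a symmetric case analysis, and your reversibility argument for the converse reachability direction is in fact more explicit than the paper's one-line induction---but neither changes the substance.
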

\begin{proof}
  Assuming the bisimulation part of the statement, the moreover part follows from a trivial induction on the length of 
the initial run, since initial state are bisimilar and the bisimulation is exactly the fact that $\bisimtypes$ is 
stable by transitions.

For the bisimulation part, we consider each possible transitions. We focus on 
the half of the proof showing that \TIAM 
transitions are simulated by the \LIAM, the other half is essentially identical. 
	
	\begin{itemize}
		\item Case $\tomachdotone$.
		\[\begin{array}{clc}
		\statetwo=\infer{\tjudg{}{\red{\tm\tmtwo}}{\ltyctxp{\initty_{\uppt}}(=\linty)}}
		{\tjudg{}{\tm}{\arr{\tty}{\linty}} & \vdash} &
		\tomachdotone &
		\infer{\tjudg{}{\tm\tmtwo}{\linty 
			}}{\tjudg{}{\red\tm}{\arr{\tty}{\ltyctxp{\initty_{\uppt}}}} & 
			\vdash}=\state
		\\[8pt]	
			\bisimtypes&&
		\\[8pt]
		\estate\statetwo=\dstate{ \tm\tmtwo }{ \ctx_\statetwo }{ 
		\etape{\statetwo} }{ \elog{\statetwo} } 
		&\iamdap& 
		\dstate{ \tm }{ \ctxp{\ctxhole\tmthree} }{ \resm\cdot \etape{\statetwo} }{ 
			\elog{\statetwo} } = \state_\l
		\end{array}\]
		
			Note that $\ctx_\state = \ctx_\statetwo\ctxholep{\ctxhole\tmthree}$, $\elog\state = 
\elog{\statetwo}$, and $\etape{\state} = \resm\cdot\etape{\statetwo}$. Then, 
$\state_\l = \estate{\state}$, that is, 
$\state \bisimtypes \state_\l$.

    \item Case $\tomachdottwo$. Identical to the previous one.
    
		\item Case $\tomachvar$.
		\[\begin{array}{clc}
		\statetwo=\infer*{\infer{\tjudg{}{\la\var\ctxp{\var}} 
				{\arr{\leafctxp{\linty_i}}\lintytwo}}{}}
		{\infer[i]{\tjudg{}{\red\var}{\ltyctxp{\initty_\uppt}_i (= 
		\linty_i)}}{}}   
		&\tomachvar&
		\infer*{\infer{\tjudg{}{\blue{\la\var\ctxp{\var}}} 
				{\arr{\leafctxp{\ltyctxp{\initty_\downpt}_i}}\lintytwo}}{}}
		{\infer[i]{\tjudg{}{\var}{\linty_i}}{}}=\state		
		\\[8pt]
		\bisimtypes&&
		\\[8pt]
		\estate\statetwo=\dstate{ \var }{ 
		\underbrace{\ctxp{\l\var.\ctxtwo_n}}_{=\ctx_\statetwo} }{ 
		\etape\statetwo }{ 
			\underbrace{\tlog_n\cdot\tlog}_{=\elog\statetwo} } 
		&\tomachvar &
		\ustate{ \l\var.\ctxtwo_n\ctxholep\var}{ \ctx }{ 
			(\var,\l\var.\ctxtwo_n,\tlog_n)\cdot\etape\statetwo }{ \tlog } = \state_\l		
		\end{array}\]
		First of all, $\ctx_\statetwo$ has shape $\ctxp{\l\var.\ctxtwo_n}$ for 
		some $n$, as the descending path from the 
focused judgment to the final judgment passes through the showed $\tylam$ rule. 
Then $\ctx_{\state} = \ctx$.

About the log, 
by \reflemma{extraction-length} there is a correspondence between the level of 
term contexts and the length of the 
extracted log, so that $\elog\statetwo$ is at least of length $n$, that is, 
$\elog\statetwo = \tlog_n\cdot\tlog$, and 
$\elog\state = \tlog$.
		
		About the tape, note that $\etape\state = \elpos{\state^1_\focus}\cons\etapeaux{\ltyctx,1}\state$ where 
$\state^1_\focus$ is the first type test of $\state$. To show that $\estate\state = 
(\var,\l\var.\ctxtwo_n,\tlog_n)\cdot\etape\statetwo$ we have to show two things:
\begin{enumerate}
 \item $\elpos{\state^1_\focus} = 
(\var,\l\var.\ctxtwo_n,\tlog_n)$.
		Note that $\state^1_\focus$ is $\tjudg{}{\red{\la\var\ctxp{\var}} }
			{\ltyctxp{\initty}_{i\uppt},\arr\leafctx\lintytwo}$.
			 Note that 
$\state^1_\focus\tomachbttwo\,\tjudg{} 
			{\blue\var}{\ltyctxp{\initty}_{i\downpt},\ctxhole} = \statethree$, where $\statethree$ focusses on the same 
judgement of $\statetwo$, and that $\statethree$ is the state that T-exhausts 
$\state^1_\focus$. By definition of 
extraction, $\elpos{\state^1_\focus} = (\var,\l\var.\ctxtwo_n,\tlog_n)$. 

  \item $\etapeaux{\ltyctx,1}\state = \etape\statetwo$, that is, $\etapeaux{\ltyctx,1}\state = 
\etapeaux{\ltyctx,0}\statetwo$. Note that $\etapeaux{\ltyctx,1}\state$ and $\etapeaux{\ltyctx,0}\statetwo$ may differ only 
in the content of logged positions (obtained by extracting from tape tests), which is the only thing that depends on 
the direction and the state, the rest being uniquely determined by the type context $\ltyctx$. Here one has to repeat 
the reasoning done in the $\tomachbttwo$ case of the proof of the T-exhaustible 
invariant 
(\reflemma{S-invariant-siam}), that shows that the tape test of index $i>1$  for $\state$ and the one of index $i-1$ of 
$\statetwo$ exhaust on the same state, and thus induce the same logged position. Then $\etapeaux{\ltyctx,1}\state = 
\etape\statetwo$.
\end{enumerate}
Then $\estate\state = (\var,\l\var.\ctxtwo_n,\tlog_n)\cdot\etape\statetwo$, and so $\state_\l = \estate{\state}$, that 
is, $\state \bisimtypes \state_\l$.

		\item Case $\tomachbttwo$.
		\[\small\begin{array}{clc}
		\statetwo=\infer*{\infer{\tjudg{}{\red{\la\var\ctxp{\var}}} 
				{\arr{\leafctxp{\ltyctxp{\initty_\uppt}_i}}\lintytwo}}{}}
		{\infer[i]{\tjudg{}{\var}{\linty_i (=\ltyctxp{\initty}_i)}}{}}
		& \tomachbttwo &
		\infer*{\infer{\tjudg{}{\la\var\ctxp{\var}} 
				{\arr{\leafctxp{\linty_i}}\lintytwo}}{}}
		{\infer[i]{\tjudg{}{\blue\var}{\ltyctxp{\initty_\downpt}_i}}{}} =\state
		\\[8pt]
		\bisimtypes&&
		\\[8pt]
		\estate\statetwo=\dstate{ \la\var\ctxtwo_n\ctxholep{\var} }{ \ctx_\statetwo }{
			\underbrace{(\var,\la\var\ctxtwo_n,\tlog_n)\cons\etapeaux{\ltyctx,1}\statetwo}_{=\etape\statetwo} }{ \elog\statetwo 
}
		&	\tomachbttwo & 
		\ustate{ \var}{ \ctx_{\statetwo}\ctxholep{\la\var\ctxtwo_n} }{ \etapeaux{\ltyctx,1}\statetwo }{	\tlog_n 
\cons\elog\statetwo } = \statetwo_\l
		\end{array}\]
		
		About the tape of $\estate\statetwo$, note that $\etape\statetwo = 
\elpos{\statetwo^1_\focus}\cons\etapeaux{\ltyctx,1}\statetwo$ where $\state^1_\focus$ is 
the first type test of $\statetwo$. We have to show that $\state^1_\focus$ exhausts on $\var$, so that 
$\elpos{\state^1_\focus} = (\var,\l\var.\ctxtwo_n,\tlog_n)$ for some $\tlog_n$.
		Note that $\state^1_\focus$ is $\tjudg{}{\red{\la\var\ctxp{\var}} }
			{\ltyctxp{\initty}_{i\uppt},\arr\leafctx\lintytwo}$.
			 Note that 
$\state^1_\focus\tomachbttwo\,\tjudg{} 
			{\blue\var}{\ltyctxp{\initty}_{i\downpt},\ctxhole} = \statethree$, 
			where $\statethree$ focusses on the same 
judgement of $\state$, and that $\statethree$ is the state that S-exhausts $\state^1_\focus$. By definition of 
extraction, $\elpos{\state^1_\focus} = (\var,\l\var.\ctxtwo_n,\tlog_n)$ where $\tlog_n$ is the extraction of the first 
$n$ judgement tests of $\state$.  Then $\ctx_\state = \ctx_{\statetwo}\ctxholep{\la\var\ctxtwo_n}$ and $\elog\state = 
\tlog_n \cons\elog\statetwo$.

About the tape, for $\state$ we have to prove that $\etapeaux{\ltyctx,1}\statetwo = \etape\state = 
\etapeaux{\ltyctx,0}\state$. This is done as for $\tomachvar$, mimicking the 
reasoning in the proof of the T-exhaustible 
invariant (\reflemma{S-invariant-siam}).

Then, $\state_\l = \estate{\state}$, that is, $\state \bisimtypes \state_\l$.

		\item Cases $\tomachdotthree$ and $\tomachdotfour$. They are identical to case 
		$\tomachdotone$.
		
		\item Case $\tomacharg$.
		\[\small\begin{array}{clc}
		\statetwo=\infer{\tjudg{}{\tm\tmtwo}{\linty}} 
		{\tjudg{}{\blue\tm}{\arr{\leafctxp{\ltyctxp{\initty_\downpt}_i}}{\linty}}
			& \infer=[\tymany]{\tjudg{}{\tmtwo}{\tty 
			(=\leafctxp{\ltyctxp{\initty}_i)}} }{
				\ldots\ \tjudg{}{\tmtwo}{\ltyctxp{\initty}_i}\ \ldots
		} }
		& \tomacharg &
		\infer{\tjudg{}{\tm\tmtwo}{\linty}} 
		{\tjudg{}{\tm}{\arr{\tty}{\linty}}
			& \infer=[\tymany]{\tjudg{}{\tmtwo}{\leafctxp{\ltyctxp{\initty}_i}} 
			}{
				\ldots\ \tjudg{}{\red\tmtwo}{\ltyctxp{\initty_\uppt}_i}\ \ldots
		} }=\state
		\\[8pt]
		\bisimtypes&&
		\\[8pt]
		\estate\statetwo=\ustate{ \tm }{ \underbrace{\ctxtwop{\ctxhole\tmtwo}}_{=\ctx_{\statetwo}} }{ 
\underbrace{\elpos{\statetwo^1_\focus}\cdot\etapeaux{\ltyctx,1}{\statetwo}}_{=\etape\statetwo} }{ \elog\statetwo } 
		& \tomacharg &
		\dstate{ \tmtwo }{ \ctxtwop{\tm\ctxhole} }{ \etapeaux{\ltyctx,1}{\statetwo} }{ 
\elpos{\statetwo^1_\focus}\cdot\elog\statetwo } = \statetwo_\l
		\end{array}\]
		where $\statetwo^1_\focus$ is the first type test of $\statetwo$. Obviously, $\ctx_{\state} = \ctxtwop{\tm\ctxhole} 
$. For the log we have to show that $\elog\state$ is equal to $\elpos{\statetwo^1_\focus}\cdot\elog\statetwo $, which 
amounts to show that the first judgement test $\state^1$ of $\state$ exhausts on the same state as the first tape test 
$\statetwo^1_\focus$ of $\statetwo$. This is exactly the reasoning done in the 
proof of the T-exhaustible invariant. 
Similarly, one obtains that $\etapeaux{\ltyctx,1}\statetwo = \etape\state = 
\etapeaux{\ltyctx,0}\state$.

		\item Case $\tomachbtone$.
		\[\begin{array}{clc}
		\statetwo=\infer{\tjudg{}{\tm\tmtwo}{\linty}} 
		{\tjudg{}{\tm}{\arr{\tty}{\linty}}
			& \infer=[\tymany]{\tjudg{}{\tmtwo}{\leafctxp{\ltyctxp{\initty}_i} 
			(=\tty)} }{
				\ldots\ \tjudg{}{\blue\tmtwo}{\ltyctxp{\initty_\downpt}_i}\ 
				\ldots
		} }
		& \tomachbtone &
		\infer{\tjudg{}{\tm\tmtwo}{\linty}} 
		{\tjudg{}{\red\tm}{\arr{\leafctxp{\ltyctxp{\initty_\uppt}_i}}{\linty}}
			& \infer=[\tymany]{\tjudg{}{\tmtwo}{\tty} }{
				\ldots\ \tjudg{}{\tmtwo}{\ltyctxp{\initty}_i}\ \ldots
		} }=\state
			\\[8pt]
			\bisimtypes&&
			\\[8pt]
		\estate\statetwo=\ustate{ \tmtwo }{ \underbrace{\ctxtwop{\tm\ctxhole}}_{=\ctx_{\statetwo} }}{ \etape\statetwo }{ 
\underbrace{\elpos{\state^1_\focus}\cdot\tlog}_{=\elog\statetwo} }
		& \tomachbtone &
		\dstate{ \tm }{ \ctxtwop{\ctxhole\tmtwo} }{ \elpos{\state^1_\focus}\cdot\etape\statetwo }{ \tlog } 
= \statetwo_\l
		\end{array}\]
  where $\statetwo^1_\focus$ is the first judgement test of $\statetwo$. Obviously, $\ctx_{\state} = 
\ctxtwop{\ctxhole\tmtwo} 
$. For the log, there is nothing to prove. For the tape, we have to show that $\etape\state$ is equal to 
$\elpos{\state^1_\focus}\cdot\etape\statetwo$, which 
amounts to show two things. First, that the first tape test $\state^1$ of $\state$ exhausts on the same state as 
the first judgement test 
$\statetwo^1_\focus$ of $\statetwo$. Second, that $\etapeaux{\ltyctx,1}\state = \etape\statetwo = 
\etapeaux{\ltyctx,0}\statetwo$. Both points follow exactly the reasoning done 
in the proof of the T-exhaustible 
invariant.   
  \end{itemize}
\end{proof}

\subsection{The \TIAM is acyclic}
\label{ssect:acyclic-app}
First of all, we prove the abstract lemma that says that every state is 
reachable in a bi-deterministic transition 
system with only one initial state.

\begin{lemma}\label{lemma:bidet}
	Let $\tsys{}$ be an acyclic bi-deterministic transition system on a finite 
	set of states $\mathcal{S}$ and with only 
	one 	initial state $\state_i$. Then all states in $\mathcal{S}$ are 
	reachable from $\state_i$, and reachable only once.
\end{lemma}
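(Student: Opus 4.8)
The plan is to run a backward reachability (co-reachability) analysis, exploiting co-determinism together with acyclicity. First I would fix terminology: call a state \emph{initial} when it has no $\tsys{}$-predecessor, so the hypothesis ``only one initial state'' says that $\state_i$ is the unique predecessor-free state. Consequently every state $\state \neq \state_i$ has \emph{at least} one predecessor, and since $\tsys{}$ is bi-deterministic the reverse of the transition relation is a partial function, so $\state$ has \emph{exactly} one predecessor; write $p(\state)$ for it.

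Next, for an arbitrary state $\state$ I would consider the backward chain $\state^0 \defeq \state$ and $\state^{k+1} \defeq p(\state^k)$, defined as long as $\state^k \neq \state_i$. The key observation is that this chain is injective: if $\state^j = \state^k$ with $j < k$, then the forward segment $\state^k \tsys{} \state^{k-1} \tsys{} \cdots \tsys{} \state^{j+1} \tsys{} \state^j = \state^k$ is a cycle, contradicting acyclicity. Since $\mathcal{S}$ is finite, an injective chain has length at most $|\mathcal{S}|$, hence it must terminate; by construction it can only stop at a predecessor-free state, i.e. at $\state_i$. Reading the chain forwards then yields a run $\state_i = \state^m \tsys{} \cdots \tsys{} \state^0 = \state$, so $\state$ is reachable from $\state_i$. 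As $\state$ was arbitrary, every state of $\mathcal{S}$ is reachable.

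For the ``only once'' part I would invoke acyclicity directly: no run can pass through the same state twice, since the segment between two occurrences of a state would be a cycle. By forward determinism there is a unique maximal run starting from $\state_i$, and every initial run is one of its prefixes; therefore each state occurs at most once along it, and combined with the reachability just proved, exactly once. The main obstacle here is purely one of bookkeeping: making precise that ``initial'' is ``predecessor-free'' (so that termination of the backward chain \emph{must} land on $\state_i$), and recording that finiteness of $\mathcal{S}$ is exactly what rules out an infinite injective backward chain; once these are in place, the argument is a routine unwinding of bi-determinism and acyclicity, mirroring the reversibility-based reasoning already used for the \LIAM in the proof of Proposition~\ref{prop:inf}.
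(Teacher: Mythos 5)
Your proposal is correct and follows essentially the same route as the paper's proof: walk backwards using co-determinism, use acyclicity plus finiteness to force the backward chain to terminate at the unique predecessor-free state $\state_i$, and derive uniqueness of visits from acyclicity. You simply make explicit the injectivity of the backward chain and the reading of ``initial'' as ``predecessor-free'', which the paper leaves implicit.
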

\begin{proof}
	Let us consider a generic state $\state\in \mathcal{S}$ and show that it is 
	reachable from $\state_i$. If 
	$\state=\state_i$ we are done. 
	Otherwise, since the system is bi-deterministic we can deterministically go 
	backwards from $\state$. Since the set of states is finite and there are no 
	cycles, then the backward sequence must end on an initial state, that is, 
	on $\state_i$. 
	Thus $\state$ is reachable from $\state_i$. If a state is reachable twice, 
	then clearly there is a cycle, absurd.
\end{proof}

In order to prove that the \TIAM is acyclic, we need to show that if 
$\tm\towh\tmtwo$, then cycles are preserved between the tree type 
derivation $\tyd$ for $\tm$ and the sequence type derivation $\tydtwo$ for 
$\tmtwo$. One way to show this fact is building a (non-)termination-preserving 
bisimulation between states 
of $\tyd$ and states of $\tydtwo$. This idea has been already exploited 
in~\cite{IamPPDPtoAppear}, where bisimulations called \emph{improvements} are 
used to prove the correctness of the \LIAM, from which we now recall a few 
definitions.

\paragraph{Improvements.} A deterministic transition system (DTS) 
is a pair $\mathcal{S} =(S,\mathcal{T})$, where $S$ is a set of \emph{states} 
and 
$\mathcal{T}:S\rightharpoonup S$ a partial function.
If $\mathcal{T}(s)=s'$, then we write $s\rightarrow s'$, and if $s$ rewrites in 
$s'$ in $n$ steps then we write $s\rightarrow^n s'$. We note with 
$\mathcal{F}_S$ the set of final states, \ie 
the subset of $\mathcal{S}$ containing all $s\in\mathcal{S}$ such that 
$\mathcal{T}(s)$ is undefined. A state $s$ is terminating if there exists 
$n\geq 0$ and 
$s'\in\mathcal{F}_S$ such that $s\rightarrow^n s'$. We call $S_\downarrow$ the 
set of terminating states of $S$ and $S_\uparrow$ stands for $S\setminus 
S_\downarrow$. The \emph{evaluation length map} 
$|\cdot|:S\rightarrow\mathbb{N}\cup\{\infty\}$ is defined as $\size 
s\defeq n$ if $s\rightarrow^n s'$ and $s'\in\mathcal{F}_\mathcal{S}$, 
and $\size s \defeq \infty$ if $s\in\mathcal{S}_\uparrow$.

\begin{definition}[Improvements]
	Given two DTS $\mathcal{S}$ and $\mathcal{Q}$, a relation 
	$\mathcal{R}\subseteq S\times Q$ is 
	an \emph{improvement} if 
	given $(s,q)\in\mathcal{R}$ the following conditions hold.
	\begin{enumerate}
		\item \emph{Final state right}: if $q\in\mathcal{F}_\mathcal{Q}$, then 
		$s\rightarrow^n s'$, for some 
		$s'\in\mathcal{F}_\mathcal{S}$ and $n\geq 0$.
		\item \emph{Transition left}: if $s\rightarrow s'$, then there exists 
		$s'',q',n,m$ such 
		that $s'\rightarrow^m s''$, $q\rightarrow^n q'$, 
		$s''\mathcal{R}q'$ and $n\leq m+1$.
		\item \emph{Transition right}: if $q\rightarrow q'$, then there exists 
		$s',q'',n,m$ such that 
		$s\rightarrow^m s'$, $q'\rightarrow^n q''$, $s'\mathcal{R}q''$ and 
		$m\geq n+1$.
	\end{enumerate}
\end{definition}
What improves along an improvement is the number of transitions required to 
reach a final state, if any.
\begin{proposition}[\cite{IamPPDPtoAppear}]
	\label{prop:imp} 
	Let $\mathcal{R}$ be an improvement on two DTS $\mathcal{S}$ and 
	$\mathcal{Q}$, and $s\mathcal{R}q$. 
	\begin{enumerate}
		\item \label{p:imp-termination}
		\emph{Termination equivalence}: $s\in\mathcal{S}_\downarrow$ if and 
		only if $q\in\mathcal{Q}_\downarrow$.
		
		\item \label{p:imp-improv}
		\emph{Improvement}: $|s|\geq|q|$.
	\end{enumerate}
\end{proposition}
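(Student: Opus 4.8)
The plan is to deduce both items from two self-contained claims: \textbf{(a)} if $q$ terminates then $s$ terminates and $|s|\geq|q|$; and \textbf{(b)} if $q$ diverges then $s$ diverges. Granting these, termination equivalence follows from (a) together with the contrapositive of (b), and the inequality $|s|\geq|q|$ follows from (a) when $q$ terminates, while if $q$ diverges then $|q|=\infty$ and, by (b), $|s|=\infty$ as well, so $|s|\geq|q|$ holds trivially. Hence everything reduces to (a) and (b).

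To prove (a) I would induct on $|q|=k$. If $k=0$ then $q\in\mathcal{F}_\mathcal{Q}$, so the \emph{Final state right} clause gives $s\rightarrow^n s'$ with $s'\in\mathcal{F}_\mathcal{S}$; by determinism of $\mathcal{S}$ this is the run of $s$, hence $s$ terminates and $|s|=n\geq0=k$. If $k>0$ then $q$ is not final, so $q\rightarrow q'$ with $|q'|=k-1$, and applying \emph{Transition right} to this step yields $s\rightarrow^m s'$, $q'\rightarrow^n q''$, $s'\mathcal{R}q''$ and $m\geq n+1$. Determinism of $\mathcal{Q}$ forces $q''$ to be the unique state reached from $q'$ after $n$ steps, so $n\leq k-1$ and $|q''|=k-1-n<k$; the induction hypothesis applied to $s'\mathcal{R}q''$ then gives that $s'$ terminates with $|s'|\geq k-1-n$. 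Since $s\rightarrow^m s'$, determinism of $\mathcal{S}$ yields that $s$ terminates and $|s|=m+|s'|\geq(n+1)+(k-1-n)=k$, as needed.

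For (b), assume $q$ diverges and let $q=r_0\rightarrow r_1\rightarrow r_2\rightarrow\cdots$ be its unique infinite run. I would iterate \emph{Transition right} to build states $t_0=s,t_1,t_2,\ldots$ of $\mathcal{S}$ and indices $0=i_0<i_1<i_2<\cdots$ with $t_j\mathcal{R}r_{i_j}$ and $t_j\rightarrow^{m_j}t_{j+1}$, $m_j\geq1$: from the pair $t_j\mathcal{R}r_{i_j}$ and the step $r_{i_j}\rightarrow r_{i_j+1}$, the clause produces $t_j\rightarrow^{m_j}t_{j+1}$, $r_{i_j+1}\rightarrow^{n_j}q''$, $t_{j+1}\mathcal{R}q''$ and $m_j\geq n_j+1\geq1$; by determinism of $\mathcal{Q}$ one has $q''=r_{i_j+1+n_j}$, still on the infinite run, so we set $i_{j+1}=i_j+1+n_j$. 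Since every $m_j\geq1$, the state $s$ admits runs of every finite length $m_0+\cdots+m_{j-1}\geq j$, and by determinism of $\mathcal{S}$ this forces $|s|=\infty$, i.e.\ $s$ diverges.

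Finally I would note that the \emph{Transition left} clause of the definition is not used in this argument --- it is needed for other properties of improvements (e.g.\ closure under composition, or bounding $\mathcal{Q}$-runs from $\mathcal{S}$-runs). The only delicate point, and the one I expect to be the main nuisance, is the bookkeeping of step counts: using determinism to pin down the intermediate state reached after $n$ steps so that $|q''|=|q'|-n$ in (a), and checking that the construction in (b) never leaves the diverging $\mathcal{Q}$-run. Neither is a genuine obstacle, but both require care to state precisely.
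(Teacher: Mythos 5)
Your argument is correct. Note that the paper itself gives no proof of this proposition: it is imported by citation from the earlier work on the $\lambda$IAM, so there is no in-paper proof to compare against. Judged on its own terms, your decomposition into (a) ``$q$ terminates implies $s$ terminates with $|s|\geq|q|$'' and (b) ``$q$ diverges implies $s$ diverges'' is sound, and the two delicate points you flag are handled properly: in (a) determinism of $\mathcal{Q}$ indeed pins down $q''$ as the state reached after $n$ of the remaining $k-1$ steps, so $|q''|=k-1-n<k$ and the induction goes through with $|s|=m+|s'|\geq(n+1)+(k-1-n)=k$; in (b) determinism keeps $q''=r_{i_j+1+n_j}$ on the diverging run, and since each $m_j\geq n_j+1\geq 1$ the state $s$ admits arbitrarily long runs, which in a deterministic system forces $|s|=\infty$. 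Your observation that the \emph{Transition left} clause is never needed is also accurate: the direction ``$s$ terminates implies $q$ terminates'' comes for free as the contrapositive of (b), so \emph{Final state right} and \emph{Transition right} suffice for both items of the proposition.
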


\paragraph{Weak Head Contexts} Next, we need the notion of weak head context 
$\hctx$ defined as:
$$\hctx \defeq \ctxhole \mid \hctx \tm$$
Note that if $\tm \towh \tmtwo$ then $\tm = \hctxp{(\la\var\tmthree)\tmfour}$ 
and $\tmtwo = 
\hctxp{\tmthree\isub\var\tmfour}$.

\paragraph{Explaining the Bisimulation} Let us give an intuitive 
explanation of the improvement 
$\relf$ that we are going to build next.	Given two type derivations 
$\tyd\pof\tjudg{}{\hctxp{(\la\var\tmthree)\tmfour}}{\initty}$ and 
$\tydtwo\pof\tjudg{}{\hctxp{\tmthree\isub\var\tmfour}}{\initty}$, it is 
possible 
to define a relation $\relf$ between states of the former and of the latter. 
The key points are:
\begin{enumerate}
	\item each axiom for $\var$ in $\tyd$ is $\relf$-related with the 
	judgement for the argument $w$ that replaces it in $\tydtwo$.
	\item Both the judgement for $r$ and the one for $(\la\var r)w$ are 
	$\relf$-related to $r\isub\var w$.
	\item The judgement for $\la\var r$ is not $\relf$-related to any judgement 
	of $\tydtwo$.
	
\end{enumerate}

\paragraph{Defining $\relf$} In order to define $\relf$ formally, we enrich 
each type judgment (occurrence)
$\tjudg{}{\tm}{\ltyctxp\initty}$ with a context $\ctx$ such that $\ctxp\tm$ is 
the term in the final judgement of the derivation $\tyd$, obtaining 
$\tjudg{}{(\tm,\ctx)}{\ltyctxp\initty}$. 

\begin{definition}[Bisimulation $\relf$]
	The 
	definition of $\relf$ for $\tjudg{}{(\tm,\ctx)}{\ltyctxp\initty}$ has 4 
	clauses:

	\begin{itemize}
		\item 
		$\mathsf{rdx}$: the redex is in $\tm$, that is, $\tm = 
		\hctxp{(\la\var\tmtwo)\tmthree}$, and so $\ctx$ is a head 
		context $\hctxtwo$:
		$$\tjudg{}{(\hctxp{(\la\var\tmtwo)\tmthree},\hctxtwo)}{\ltyctxp\initty} 
		\,\relfrdx
		\tjudg{}{(\hctxp{\tmtwo\isub\var\tmthree},\hctxtwo)}{\ltyctxp\initty}$$
		
		\item $\mathsf{body}$: the term $\tm$ is part of the body of the 
		abstraction involved in the redex:
		$$\tjudg{}{(\tm,\hctxp{(\la\var\ctxtwo)\tmtwo})}{\ltyctxp\initty} 
		\,\relfbody 
		\tjudg{}{(\tm\isub\var\tmtwo,\hctxp{\ctxtwo\isub\var\tmtwo})}{\ltyctxp\initty}$$
		
		\item $\mathsf{arg}$: the term $\tm$ is part of the argument of the 
		redex:
		$$\tjudg{}{(\tm,\hctxp{(\la\var\ctxtwop\var)\ctxthree})}{\ltyctxp\initty}
		\,\relfarg 
		\tjudg{}{(\tm,\hctxp{\ctxtwo\isub\var{\ctxthreep\tm}\ctxholep\ctxthree})}{\ltyctxp\initty}$$
		
		\item $\mathsf{ext}$: The term $\tm$ is disjoint form the redex, that 
		then takes place only in $\ctx$:
		$$\tjudg{}{(\tm,\hctxtwop{\hctxp{(\la\var\tmthree)\tmtwo}\ctxtwo})}{\ltyctxp\initty}
		\relfext\,
		\tjudg{}{(\tm,\hctxtwop{\hctxp{\tmthree\isub\var\tmtwo}\ctxtwo})}{\ltyctxp\initty}$$
	\end{itemize}
\end{definition}
Please note that the only states of $\tyd$ which are not mapped to any state 
of 
$\tydtwo$ are those relative to the judgment 
$\tjudg{}{\la\var\tmthree}{\arr{\mset{\gtytwo_1...\gtytwo_n}}\linty}$.

\begin{proposition}
	$\relf$ is an improvement between \TIAM states.
\end{proposition}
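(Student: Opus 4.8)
The plan is a case analysis on the \TIAM transitions, verifying the three clauses of the definition of improvement---\emph{final state right}, \emph{transition left}, \emph{transition right}---for each of the four clauses $\relfrdx$, $\relfbody$, $\relfarg$, $\relfext$ of $\relf$. Throughout, $\tm = \hctxp{(\la\var\tmthree)\tmfour}$ and $\tmtwo = \hctxp{\tmthree\isub\var\tmfour}$, and I use the fact that the derivations $\tyd$ for $\tm$ and $\tydtwo$ for $\tmtwo$ are obtained from one another via the quantitative (anti-)substitution lemmas (\reflemma{q-sub-lemma}, \reflemma{q-anti-sub-lemma}): in $\tydtwo$ the $i$-th axiom on $\var$ occurring in the sub-derivation for $\tmthree$ has been replaced by the $i$-th copy of the sub-derivation typing $\tmfour$, and by the relevance lemma (\reflemma{relevance}) that $i$-th axiom is also the $i$-th leaf of the tree type $\tty$ assigned to $\tmfour$. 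The \emph{final state right} clause is then immediate: each clause of $\relf$ either keeps the focused subterm unchanged or applies the substitution $\isub\var\tmfour$ to it, and a substitution turns an abstraction typed by $\initty$ into an abstraction typed by $\initty$; hence $\relf$ maps $\tylamstar$-rules to $\tylamstar$-rules, i.e.\ final \TIAM states to final \TIAM states, so if $q$ is final then so is the related $s$ and the empty run suffices.

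For the \emph{transition left} and \emph{transition right} clauses, the crux is that the \TIAM on $\tyd$ performs exactly two kinds of double move that collapse to no move at all on $\tydtwo$, and it is these that account for the asymmetries $n\le m+1$ and $m\ge n+1$. The first is crossing the contracted redex: the pair $\tomachdotone$ then $\tomachdottwo$ (and its reverse) moves the \TIAM on $\tyd$ from the judgment for $(\la\var\tmthree)\tmfour$ to the one for $\tmthree$, passing through the judgment for $\la\var\tmthree$, which is precisely the unique state of $\tyd$ that $\relf$ maps to nothing; on $\tydtwo$ the redex is gone, hence zero steps. The second is travelling between an occurrence of $\var$ and the matching copy of $\tmfour$: the pair $\tomachvar$ then $\tomacharg$ (respectively $\tomachbtone$ then $\tomachbttwo$), again transiting through the judgment for $\la\var\tmthree$, connects an axiom on $\var$---which $\relfbody$ relates to the corresponding occurrence of $\tmfour$ in $\tydtwo$---to the root of the $i$-th copy of $\tmfour$, which in $\tydtwo$ \emph{is} that occurrence of $\tmfour$; again zero steps on $\tydtwo$. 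Everywhere else the two machines move in lockstep and the relevant clause of $\relf$ is preserved by direct inspection of the transition ($\relfext$ being entirely routine), and one checks, using bi-determinism of the \TIAM, that the extra unit of slack in the improvement exactly covers the two double moves: when $\tyd$ runs ahead one takes $m=1$, $n=0$; when $\tydtwo$ catches up only after $\tyd$ has crossed the redex one takes $q''=q'$, $n=0$, $m\ge 1$.

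The step I expect to be the main obstacle is the bookkeeping of type contexts in the $\relfarg$ (and, to a lesser extent, $\relfbody$) case. When the focus of the \TIAM lies inside a copy of the argument $\tmfour$, the linear, tree and leaf type contexts $\ltyctx$, $\ttyctx$, $\leafctx$ decorating its state in $\tyd$ live inside the sub-derivation of $\tyd$ for that copy of $\tmfour$, whereas in $\tydtwo$ the very same contexts decorate the sub-derivation plugged in place of the corresponding axiom on $\var$. One has to verify that entering and leaving that sub-derivation---transitions $\tomacharg$/$\tomachbtone$ across the $\tyapp$ rule and $\tomachvar$/$\tomachbttwo$ across the $\tylam$ rule---reshuffles these contexts in exactly the same way on the two machines, and this is exactly where the axiom--leaf bijection of \reflemma{relevance} is needed, to identify the $i$-th leaf of $\tty$ with the $i$-th axiom on $\var$ and hence with the $i$-th copy of $\tmfour$. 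The $\relfbody$ case is the easy counterpart, since there the substitution $\isub\var\tmfour$ only modifies the typing context, which the \TIAM ignores, and not the right-hand type that drives the transitions. Once all cases are dispatched the statement follows, and---via \refprop{imp} (termination equivalence)---it will in turn yield acyclicity of the \TIAM, by induction on the length of the $\towh$-reduction of $\tm$ to normal form, with the one-rule $\tylamstar$ derivation of a normal form as the trivially acyclic base case.
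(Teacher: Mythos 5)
Your proposal follows essentially the same route as the paper's proof: a case analysis on the four clauses of $\relf$ against the \TIAM transitions, with the same two collapsing double-moves (crossing the contracted redex via $\tomachdotone$ then $\tomachdottwo$, and the variable-to-argument hop via $\tomachvar$ then $\tomacharg$, resp.\ $\tomachbtone$ then $\tomachbttwo$, both transiting through the one judgment for $\la\var\tmthree$ that $\relf$ leaves unmapped) accounting for the slack $n\le m+1$ and $m\ge n+1$. The only small imprecision is in your \emph{final state right} argument: when $q$ is a final $\tylamstar$ state obtained by substituting an abstraction for a variable occurrence typed $\initty$, the $\relfbody$-related $s$ is a $\tyvar$ axiom that is \emph{not} final but reaches a final state in two steps---which is all the clause requires, so the argument goes through once stated that way.
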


\begin{proof}\footnote{Also this proof requires colors.}
We inspect the 4 cases of the definition of $\relf$.
\begin{itemize}
 \item Rule $\mathsf{rdx}: \tjudg{}{(\hctxp{(\la\var\tm)\tmtwo},\hctxtwo)}{\ltyctxp\initty} 
		\,\relfrdx
		\tjudg{}{(\hctxp{\tm\isub\var\tmtwo},\hctxtwo)}{\ltyctxp\initty}$. Cases for $\uppt$ (by cases of $\hctx$):
	\begin{itemize}		
		\item $\hctx=\ctxhole$. The diagram is closed by rule $\mathsf{body}$:
		\[\small
		\begin{array}{ccccc}
		\tjudg{}{(\red{(\la\var\tm)\tmtwo},\hctxtwo)}{\ltyctxp{\initty}} & \totiam 
		& 
		\tjudg{}{(\red{\la\var\tm},\hctxtwop{\ctxhole\tmtwo})}{\arr\tty\ltyctxp{\initty}}
		& \totiam & 
		\tjudg{}{(\red{\tm},\hctxtwop{(\la\var\ctxhole)\tmtwo})}{\ltyctxp{\initty}}\\[4pt]
		\relfrdx&&&&\relfbody\\[4pt]
		\tjudg{}{(\red{\tm\isub\var\tmtwo},\hctxtwo)}{\ltyctxp{\initty}} &&=&& 
\tjudg{}{(\red{\tm\isub\var\tmtwo},\hctxtwo)}{\ltyctxp{\initty}}
		\end{array}
		\]	
		
		\item $\hctx=\hctxthree\tmfive$. The diagram is closed by rule $\relfrdx$:
		\[
		\begin{array}{ccc}
		\tjudg{}{(\red{\hctxthreep\tmthree\tmfive},\hctxtwo)}{\ltyctxp{\initty}} 
		& \totiam & 
		\tjudg{}{(\red{\hctxthreep\tmthree},\hctxtwop{\ctxhole\tmfive})} 
		{\arr\tty{\ltyctxp{\initty}}}\\[4pt]
		\relfrdx&&\relfrdx\\[4pt]
		\tjudg{}{(\red{\hctxthreep\tmfour\tmfive},\hctxtwo)}{\ltyctxp{\initty}} 
		& 
		\totiam & 
		\tjudg{}{(\red{\hctxthreep\tmfour},\hctxtwop{\ctxhole\tmfive})} 
		{\arr\tty\ltyctxp{\initty}}
		\end{array}
		\]

		\end{itemize}
		
		Cases for $\downpt$ (by cases of $\hctxtwo$):
		\begin{itemize}
		\item $\hctxtwo=\ctxhole$. Both machines are stuck.
		\[
		\begin{array}{c}
		\tjudg{}{(\tmthree,\blue\ctxhole)}{\ltyctxp{\initty}}\\[4pt]
		\relfrdx\\[4pt]
		\tjudg{}{(\tmfour,\blue\ctxhole)}{\ltyctxp{\initty}}
		\end{array}
		\]
		\item $\hctxtwo=\hctxthreep{\ctxhole\tmfive}$. Two subcases depending 
		on the type context.
		If the focus is on the right of the arrow the diagram is closed by rule 
		$\mathsf{rdx}$.
		\[
		\begin{array}{ccc}
		\tjudg{}{(\tmthree,\blue{\hctxthreep{\ctxhole\tmfive}})} 
		{\arr\tty{\ltyctxp{\initty}}} & \totiam & 
		\tjudg{}{(\tmthree\tmfive,\blue{\hctxthree})} {{\ltyctxp{\initty}}}
		\\[4pt]
		\relfrdx&&\relfrdx\\[4pt]
		\tjudg{}{(\tmfour,\blue{\hctxthreep{\ctxhole\tmfive}})} 
		{\arr\tty{\ltyctxp{\initty}}} & \totiam &
		\tjudg{}{(\tmfour\tmfive,\blue{\hctxthree})} {{\ltyctxp{\initty}}}
		\end{array}
		\]
		If the focus is on the left of the arrow the diagram is closed by rule 
		$\mathsf{ext}$.
		\[
		\begin{array}{ccc}
		\tjudg{}{(\tmthree,\blue{\hctxthreep{\ctxhole\tmfive}})} 
		{\arr{\leafctxp{\ltyctxp{\initty}}}{\linty}} & \totiam & 
		\tjudg{}{(\red\tmfive,\hctxthreep{\tmthree\ctxhole})} {{\ltyctxp{\initty}}}
		\\[4pt]
		\relfrdx&&\relfext\\[4pt]
		\tjudg{}{(\tmfour,\blue{\hctxthreep{\ctxhole\tmfive}})} 
		{\arr{\leafctxp{\ltyctxp{\initty}}}{\linty}} & \totiam & 
		\tjudg{}{(\red\tmfive,\hctxthreep{\tmfour\ctxhole})} {{\ltyctxp{\initty}}}
		\end{array}
		\]
	\end{itemize}	
	
	\item Rule $\mathsf{body}$: $\tjudg{}{(\tm,\hctxp{(\la\var\ctxtwo)\tmtwo})}{\ltyctxp\initty} 
		\,\relfbody 
		\tjudg{}{(\tm\isub\var\tmtwo,\hctxp{\ctxtwo\isub\var\tmtwo})}{\ltyctxp\initty}$. Cases of $\uppt$ (by cases of 
$\tm$):
	\begin{itemize}
		\item $\tm =\tmthree\tmfour$. Trivially 
		closed by rule $\mathsf{body}$.
		\item  $\tm=\la\vartwo\tmthree$. If $\tm:\initty$ both machines are 
		stuck. If $\tm:\arr\tty\linty$, the diagram is trivially closed by rule 
		$\mathsf{body}$.
		\item $\tm=\var$. Diagram closed by rule $\mathsf{arg}$.
		\[
		\small\begin{array}{ccccl}
		\tjudg{}{(\red\var,\hctxp{(\la\var\ctxtwo)\tmtwo}}{\ltyctxp{\initty}} 
		&\totiam& 
		\tjudg{}{(\la\var\ctxtwop\var,\blue{\hctxp{\ctxhole\tmtwo}})} 
		{\arr{\leafctxp{\ltyctxp{\initty}}}\lintytwo}{} 
		& \totiam &
		\tjudg{}{(\red{\tmtwo},\hctxp{(\la\var\ctxtwop\var)\ctxhole})}
		{\ltyctxp{\initty}}\\[4pt]
		\relfbody&&&&\relfarg\\[4pt]
		
\tjudg{}{(\red{\tmtwo},\hctxp{\ctxtwo\isub\var\tmtwo})}{\ltyctxp{\initty}}&&=&&\tjudg{}{(\red{\tmtwo},\hctxp{
\ctxtwo\isub\var\tmtwo})}{\ltyctxp{\initty}}
		\end{array}
		\]
		\end{itemize}
		
		Cases of $\downpt$ (by cases of $\ctxtwo$):
	\begin{itemize}
		\item $\ctxtwo=\ctxhole$. The diagram is closed by rule $\mathsf{rdx}$
		\[
		\small\begin{array}{ccccl}
		\tjudg{}{(\tm,\blue{\hctxp{(\la\var\ctxhole)\tmtwo})}}{\ltyctxp\initty} 
		&\totiam& 
		\tjudg{}{(\la\var\tm,\blue{\hctxp{\ctxhole\tmtwo}})} 
		{\arr\tty{\ltyctxp\initty}}
		& \totiam &
		\tjudg{}{((\la\var\tm)\tmtwo,\blue\hctx)}
		{\ltyctxp{\initty}}\\[4pt]
		\relfbody&&&&\relfrdx\\[4pt]
		
		\tjudg{}{(\tm\isub\var\tmtwo,\blue{\hctx})}{\ltyctxp\initty}&&=&& 
		\tjudg{}{(\tm\isub\var\tmtwo,\blue{\hctx})}{\ltyctxp\initty}
		\end{array}
		\]
		
		\item $\ctxtwo=\ctxthreep{\la\vartwo\ctxhole}$, 
		$\ctxtwo=\ctxthreep{\ctxhole\tmthree}$ and 
		$\ctxtwo=\ctxthreep{\tmthree\ctxhole}$. The diagram is trivially 
		closed by rule $\mathsf{body}$.
	\end{itemize}
	
	\item Rule $\relfarg$: 
	$\tjudg{}{(\tm,\hctxp{(\la\var\ctxtwop\var)\ctxthree})}{\ltyctxp\initty}
	\,\relfarg 
	\tjudg{}{(\tm,\hctxp{\ctxtwo\isub\var{\ctxthreep\tm}\ctxholep\ctxthree})}{\ltyctxp\initty}$.
	Cases of $\uppt$ (by cases of $\tm$) are all trivial: they are closed by 
	rule $\relfarg$ itself. The only non trivial case for $\downpt$ (by cases 
	of 
	$\ctxthree$) is when $\ctxthree=\ctxhole$.
	\[
	\small\begin{array}{ccccl}
	\tjudg{}{(\tm,\blue{\hctxp{(\la\var\ctxtwop\var)\ctxhole}})}
	{\ltyctxp{\initty}}
	&\totiam& 
	\tjudg{}{(\red{\la\var\ctxtwop\var},\hctxp{\ctxhole\tm})} 
	{\arr{\leafctxp{\ltyctxp{\initty}}}\lintytwo}{} 
	& \totiam &
	\tjudg{}{(\var,\blue{\hctxp{(\la\var\ctxtwo)\tm}}}{\ltyctxp{\initty}} \\[4pt]
	\relfarg&&&&\relfbody\\[4pt]
	\tjudg{}{(\tm,\blue{\hctxp{
			\ctxtwo\isub\var\tm}})}{\ltyctxp{\initty}}&&=&&
	\tjudg{}{(\tm,\blue{\hctxp{\ctxtwo\isub\var\tm})}}{\ltyctxp{\initty}}
	\end{array}
	\]
	
	\item Rule $\relfext$: 
	$\tjudg{}{(\tm,\hctxtwop{\hctxp{(\la\var\tmthree)\tmtwo}\ctxtwo})}{\ltyctxp\initty}
	\relfext\,
	\tjudg{}{(\tm,\hctxtwop{\hctxp{\tmthree\isub\var\tmtwo}\ctxtwo})}{\ltyctxp\initty}$.
	Cases of $\uppt$ (by cases of $\tm$) are all trivial: they are closed by 
	rule $\relfext$ itself. The only non trivial case for $\downpt$ (by cases 
	of 
	$\ctxtwo$) is when $\ctxtwo=\ctxhole$. We put $\tmfive\defeq 
	\hctxp{(\la\var\tmthree)\tmtwo}$ and $\tmfour\defeq 
	\hctxp{\tmthree\isub\var\tmtwo}$.
	\[
	\begin{array}{ccc}
	 \tjudg{}{(\tm,\blue{\hctxtwop{\tmfive\ctxhole}})} {{\ltyctxp{\initty}}} & 
	 \totiam & \tjudg{}{(\red\tmfive,{\hctxtwop{\ctxhole\tm}})} 
	 {\arr{\leafctxp{\ltyctxp{\initty}}}{\linty}}
	\\[4pt]
	\relfext&&\relfrdx\\[4pt]
	\tjudg{}{(\tm,\blue{\hctxtwop{\tmfour\ctxhole}})} {{\ltyctxp{\initty}}} & 
	\totiam & 
	\tjudg{}{(\red\tmfour,{\hctxtwop{\ctxhole\tm}})} 
	{\arr{\leafctxp{\ltyctxp{\initty}}}{\linty}}
	\end{array}
	\]
	\end{itemize}
\end{proof}

\begin{corollary}
	If $\tyd\pof\tjudg{}{\hctxp{(\la\var\tmthree)\tmfour}}{\initty}$ contains a 
	cycle, the also 
	$\tydtwo\pof\tjudg{}{\hctxp{\tmthree\isub\var\tmfour}}{\initty}$ contains a 
	cycle.
\end{corollary}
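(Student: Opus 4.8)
The plan is to obtain the corollary directly from the preceding fact that $\relf$ is an improvement, combined with determinism of the \TIAM and finiteness of its state space, through \refprop{imp}. Assume $\tyd$ contains a cycle, i.e.\ a \TIAM state $\state$ with $\state\totiam^+\state$. First I would observe that, since the \TIAM on a fixed derivation is (bi-)deterministic, the unique maximal run starting from $\state$ is obtained by iterating the cycle and is therefore infinite; hence $\state$ is a non-terminating state of the \TIAM on $\tyd$.

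Next I would argue that the cycle necessarily visits a state in $\dom\relf$. By the remark following the definition of $\relf$, the only states of $\tyd$ outside $\dom\relf$ are those focused on the judgment occurrence typing the abstraction $\la\var\tmthree$ of the contracted redex, and this occurrence is \emph{unique}, because $(\la\var\tmthree)\tmfour$ sits in a weak head context, so $\la\var\tmthree$ is typed by a single $\tylam$ rule. A quick inspection of \reffig{tiam} shows that every \TIAM transition moves the focus to a syntactically different judgment occurrence; in particular there are no self-loops, so every cycle has length at least two, and two consecutive states of a run are never both focused on that single excluded judgment. Consequently the cycle contains a state $\state'\in\dom\relf$, which---lying on the cycle---is itself non-terminating.

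Finally I would pick $q$ with $\state'\relf q$, a \TIAM state on $\tydtwo$. By the termination-equivalence part of \refprop{imp}, $q$ is non-terminating in the \TIAM on $\tydtwo$ as well. Since $\tydtwo$ has finitely many judgment occurrences, each carrying finitely many type positions and two directions, the \TIAM on $\tydtwo$ is a deterministic transition system on a finite set of states; an infinite run there must revisit some state $\statetwo$, so $\statetwo\totiam^+\statetwo$, i.e.\ $\tydtwo$ contains a cycle. The routine ingredients are the two readings of \reffig{tiam}---every transition changes the focused judgment, and the state space is finite---while the genuinely delicate point is the second step: making sure the cycle cannot be hidden entirely within the handful of states excluded from $\dom\relf$, which is exactly where the head-redex hypothesis (forcing a single $\tylam$ rule for $\la\var\tmthree$) is used.
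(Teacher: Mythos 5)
Your proposal is correct and follows essentially the same route as the paper: use the improvement $\relf$, its preservation of non-termination, and the finiteness plus determinism of the \TIAM on $\tydtwo$ to turn a diverging state into a cycle. The only (minor) difference is how you ensure the argument does not get stuck on the states outside the domain of $\relf$ — you note that consecutive states of a cycle lie on distinct judgment occurrences while the paper simply observes that the single excluded judgment is not final — and both justifications work.
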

\begin{proof}
	If the run of the \TIAM on 
	$\tyd\pof\tjudg{}{\hctxp{(\la\var\tmthree)\tmfour}}{\initty}$ loops then 
	there exists a 
	state $\state_\tyd$ such that a computation 
	starting from $\state_\tyd$ diverges. Every state but 
	$\tjudg{}{(\la\var\tmthree,\hctxp{\ctxhole\tmfour})}{\ltyctxp\initty}$, 
	which however is not final, is 
	related by $\relf$ to a state $\state_{\tydtwo}$ of $\tsys\tydtwo$. Since 
	improvements preserve non-termination (\refpropp{imp}{termination}), also  
	$\state_{\tydtwo}$ diverges. Since 
	$\state_{\tydtwo}$ has a finite number of states, there must be a cycle.
\end{proof}
\begin{corollary}
	For each type derivation $\tyd\pof\tjudg{}{\tm}{\initty}$, $\tsys\tyd$ has 
	no cycles.
\end{corollary}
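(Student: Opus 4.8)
The plan is to prove the corollary by reducing acyclicity of $\tsys\tyd$ for an arbitrary $\tyd\pof\tjudg{}{\tm}{\initty}$ to the acyclicity of the derivation obtained after one $\towh$-step, using the cycle-transfer corollary just established. Concretely, I would argue by induction on $\size\tyd$, casing on whether $\tm$ weak-head reduces — exactly the structure of the correctness proof for tree types, so that \refprop{q-subj-red} (which strictly decreases $\size\tyd$ along $\towh$) supplies the well-founded measure.

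For the base case, assume $\tm$ is $\towh$-normal. Then $\tm$ is an abstraction (a closed \ccbn normal form), and the only rule assigning an abstraction the type $\initty$ is $\tylamstar$, so $\tyd$ consists of a single $\tylamstar$ rule. Since $\tylamstar$ induces no transition and there is no rule above or below it in $\tyd$, the \TIAM has no transition at all on $\tyd$: the transition relation of $\tsys\tyd$ is empty, hence $\tsys\tyd$ trivially contains no cycle.

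For the inductive case, $\tm\towh\tmthree$; writing the redex through a weak head context, $\tm=\hctxp{(\la\var r)w}$ and $\tmthree=\hctxp{r\isub\var w}$. By \refprop{q-subj-red} there is $\tydthree\pof\tjudg{}{\tmthree}{\initty}$ with $\size\tyd>\size\tydthree$, and this is precisely the derivation produced by the (quantitative) substitution lemma, hence the one with respect to which the improvement $\relf$ between \TIAM states of $\tyd$ and of $\tydthree$ is defined. By the induction hypothesis $\tsys\tydthree$ has no cycle, and the penultimate corollary states that a cycle in $\tsys\tyd$ would force a cycle in $\tsys\tydthree$; by contraposition $\tsys\tyd$ has no cycle, closing the induction.

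The real content has already been done upstream: that $\relf$ is an improvement — whence, by \refpropp{imp}{termination}, it preserves non-termination, which on a finite state space is the same as preserving cycles — and that the only \TIAM state of $\tyd$ not $\relf$-related to a state of $\tydthree$, namely the one on $\tjudg{}{\la\var r}{\arr{\tty}{\linty}}$, is non-final and so cannot by itself witness a cycle. The only point to be careful about in the present corollary is matching the data: \refprop{q-subj-red} must deliver the same $\tydthree$ that $\relf$, and therefore the cycle-transfer corollary, are stated about; since both go through the substitution lemma, this matches, and the induction hypothesis may legitimately be chained with cycle transfer.
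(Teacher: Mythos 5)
Your proof is correct and follows essentially the same route as the paper's: the paper also reduces to the (trivially acyclic) derivation of the normal form and pulls acyclicity back along the reduction sequence via the cycle-transfer corollary, merely leaving the induction implicit where you make it explicit with the measure $\size\tyd$ supplied by quantitative subject reduction. Your extra care about matching the derivation produced by subject reduction with the one the improvement $\relf$ is stated for is a legitimate (and slightly more scrupulous) reading of the same argument.
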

\begin{proof}
	Since $\tm$ is typable, then it has normal form, call it $\tmtwo$. Clearly 
	the type derivation for $\tmtwo$ has no cycles. By the previous corollary, 
	also $\tyd$ cannot have any of them.
\end{proof}

\begin{proposition}
	Let $\tm$ a closed term and $\tyd\pof\tjudg{}{\tm}{\initty}$ a tree type 
	derivation. Then every state of $\tyd$ is reached exactly once.
\end{proposition}
\begin{proof}
	Immediately by Lemma~\ref{lemma:bidet}.
\end{proof}

\section{Proofs from Section~\ref{sect:space}}
\begin{proposition}[Space of Single Extracted States]
	Let $\state=(\tyd, \ruleoc, \ltyctx, \pol)$ be a reachable \TIAM state. 
	Then $\ttm{\ltyctx}=\lm{\etape\state}$ and 
	$\tlm{\ruleoc}=\lm{\elog\state}$, and thus $\bsize\state = 
	\lm{\extr\state}$. Moreover,
	\begin{enumerate}
		\item if $\elog\state=\lpos_1\mydots\lpos_n$, and let $h_i$ be the 
		number 
		of $\tymany$ rules of the $i^{th}$ 
		$\tymany$ rule tree found descending from $\ruleoc$ to the root of 
		$\tyd$, 
		then $\lm{\lpos_i}=h_i$;
		\item for each extracted tape position $\lpos$, \ie for each $\gtyctx$ 
		such 
		that $\ltyctx=\gtyctxp{\arr{\leafctxp{\ltyctxtwop{\initty}}}\linty}$, 
		then 
		$\lm\lpos=\size\leafctx\cdot\indet$.
	\end{enumerate}
\end{proposition}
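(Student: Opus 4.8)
The statement is essentially a bookkeeping result: it says that the two "branch sizes" defined syntactically on a \TIAM state $\state=(\tyd,\ruleoc,\ltyctx,\pol)$ coincide with the actual \LIAM space measures $\lm\cdot$ of the extracted log and tape. The natural proof proceeds in two independent parts — one for the tape/type-context side ($\ttm\ltyctx = \lm{\etape\state}$), one for the log/judgment side ($\tlm\ruleoc = \lm{\elog\state}$) — and then combines them, since $\bsize\state \defeq \ttm\ltyctx + \tlm\ruleoc$ and $\lm{\extr\state} \defeq \lm{\elog\state} + \lm{\etape\state}$ by definition. The "moreover" items $(1)$ and $(2)$ are strengthenings that I would actually prove \emph{first}, since the two equalities follow from them by summing.

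\textbf{Part A: the extracted tape.} I would prove item $(2)$ by induction on the structure of the type context $\ltyctx$, unfolding the extraction schema $\etapeauxs{\cdot,i}$ from the definition of $\etape\state$ in parallel with the branch-size definition $\ttm\cdot$. The three clauses match up one-to-one: $\etapeauxs{\ctxhole,i}=\stempty$ against $\ttm\ctxhole=0$; the $\arr\tty\ltyctx$ clause adds a $\resm$ on the tape (contributing $1$ to $\lm\cdot$) against $\ttm{\arr\tty\ltyctx}=1+\ttm\ltyctx$; and the $\arr{\leafctxp\ltyctx}\lintytwo$ clause puts a logged position $\elpos{\state_\leafctx}$ on the tape against $\ttm{\arr{\leafctxp{\ltyctx}}\linty}=\ttm\leafctx + \ttm\ltyctx$ where descending into a leaf context $\mset{\ldots,\gtyctx,\ldots}$ contributes exactly one $\indet$ per surrounding sequence constructor. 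So the key local fact I need is: for the tape position $\lpos=\elpos{\state_\leafctx}$ extracted from a leaf context $\leafctx$ surrounding the hole, $\lm\lpos = \size\leafctx\cdot\indet$. By the space definition $\lm{(\var,\ctxtwo,\tlogtwo)}=\indet+\lm\tlogtwo$, this amounts to showing that the log component $\tlogtwo$ of that extracted logged position has $\lm\cdot$-size $(\size\leafctx-1)\cdot\indet$ — i.e., its own log is itself built from $\size\leafctx - 1$ nested sequence layers — which is exactly the recursive structure guaranteed by the T-exhaustible invariant (\reflemma{S-invariant-siam}) and the definition of $\elpos\cdot$ for test states. This is the subtle point: the \emph{size} of a logged position is not read off the type alone, only the number of $\indet$-contributions is, so I must appeal to the recursive nesting that extraction produces, not to an explicit description of $\tlogtwo$. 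Summing item $(2)$ over all leaf contexts surrounding the hole of $\ltyctx$, together with the $\resm$ contributions, yields $\lm{\etape\state}=\ttm\ltyctx$.

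\textbf{Part B: the extracted log.} Item $(1)$ and $\tlm\ruleoc = \lm{\elog\state}$ are proved symmetrically. By definition $\elog\state = \lpos_1\cdots\lpos_n$ with $\lpos_i = \elpos{\state^i_\focus}$, one logged position per $\tymany$ rule tree $r_i$ traversed descending from $\ruleoc$ to the root, and $\tlm\ruleoc = n\cdot\indet$. So I need $\lm{\lpos_i} = h_i$ where $h_i$ is the number of $\tymany$ rules \emph{inside} the $i$-th tree $r_i$. Again using $\lm{(\var,\ctxtwo,\tlog)}=\indet+\lm\tlog$, this reduces to showing that the log inside $\lpos_i$ has size $(h_i - 1)\cdot\indet$, obtained by recursively applying the same analysis to the judgment test $\state^i_\focus$ whose own judgment tests correspond to the $\tymany$ subtrees encountered — which is precisely the recursion furnished by the definition of extraction on T-exhaustible states, and well-founded by the induction principle for $\exstates_T$. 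Then $\lm{\elog\state} = \sum_{i=1}^n \lm{\lpos_i} = \sum_i h_i$; observing that $\sum_i h_i = n$ is \emph{not} what I want — rather, each $\lm{\lpos_i}=h_i$ and $\tlm\ruleoc = n\cdot\indet$, and the correspondence is through the identification $\indet = \log\size{\tm}$ only at the very end; here I should be careful that the local statement is in the formal sense that $\lm{\lpos_i}$ counts $h_i$ pointer-units, and $\tlm\ruleoc$ aggregates them, so that $\lm{\elog\state}=\tlm\ruleoc$ holds as an equality of expressions in $\indet$. Combining Parts A and B gives $\bsize\state = \ttm\ltyctx + \tlm\ruleoc = \lm{\etape\state} + \lm{\elog\state} = \lm{\extr\state}$.

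\textbf{Main obstacle.} The routine part is the structural induction matching the two sets of clauses; the genuine difficulty is justifying the \emph{size} of each individual extracted logged position — items $(1)$ and $(2)$ — because tree types record the \emph{number} of $\indet$-contributions but not the internal shape of the logged positions, so the recursion bottoms out only by unfolding the whole extraction process and invoking the T-exhaustible invariant to guarantee that the nesting of sequence constructors (resp.\ of $\tymany$ subtrees) is mirrored exactly in the nested log structure of the extracted position. I would handle this by a simultaneous induction: on the derivation of T-exhaustibility of $\state$ for the "moreover" claims, feeding into the structural inductions on $\ltyctx$ and on the descending path for Parts A and B. Reachability of $\state$ is used only to ensure T-exhaustibility via \reflemma{S-invariant-siam}, so the invariant is available throughout.
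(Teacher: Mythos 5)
Your route is genuinely different from the paper's. The paper proves all four claims simultaneously by induction on the length of the reachable run $\state_0\totiam^*\state$, checking each transition case: a tape entry is sized at the very moment it is created by $\tomachvar$, where the leaf context $\leafctx$ and the $\tymany$ rules enclosing the axiom are locally aligned, and the induction hypothesis on the preceding state carries all sizes forward unchanged through the remaining transitions. You instead propose a static argument: structural induction on $\ltyctx$ and on the descending path for the two equalities, fed by a recursion on the T-exhaustibility derivation for the sizes of the individual extracted logged positions.

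The gap is in that last recursion. Unfolding the definition of extraction, $\elpos{\state_\focus}=(\var,\la\var\ctxtwo_n,\elpos{\statetwo^1}\cdots\elpos{\statetwo^n})$ where $\statetwo$ is the axiom reached by the exhausting run, so $\lm{\elpos{\state_\focus}}=\indet+\sum_{i=1}^n\lm{\elpos{\statetwo^i}}$, and by your inductive hypothesis (item (1) at $\statetwo$) this equals $(1+\sum_{i=1}^n h'_i)\cdot\indet$ with $h'_i$ the $\tymany$-counts of the trees below $\statetwo$. To conclude item (2) you must show $1+\sum_{i=1}^n h'_i=\size\leafctx$, where $\leafctx$ is the leaf context sitting in $\ltyctx$ at the \emph{current} judgment $\ruleoc$. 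This relates a type position at $\ruleoc$ to the $\tymany$ rules enclosing a possibly distant axiom, and it does \emph{not} follow from the T-exhaustible invariant, which only guarantees that exhausting runs exist and end on exhaustible axioms; it says nothing about depths or sizes. What you need are two further sublemmas: (i) every \TIAM transition preserves the depth of the tracked occurrence inside its surrounding leaf context, so that the depth of the hole of $\leafctx$ at $\ruleoc$ equals the depth of the corresponding leaf in the tree type at the binder reached by the exhausting run; and (ii) that latter depth equals $1+\sum_i h'_i$, because rule $\tymany$ adds exactly one nesting level to the type environment (a refinement of the axiom-sequences lemma). Both are true and provable by induction on the exhausting run and on the derivation respectively, so your plan is completable; but as written the crucial mirroring is asserted and attributed to the wrong invariant rather than proved. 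The paper's run-length induction obtains exactly this for free, which is why reachability appears in the hypothesis for more than just securing T-exhaustibility.
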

\begin{proof}
	We proceed by induction on the length of the run 
	$\runtwo:\state_0\totiam^*\state$. If the length is $0$, then 
	$\state=\state_0=(\tyd, \ruleoc, \ctxhole, \uppt)$, 
	$\ruleoc=\tjudg{}{\tm}{\initty}$ and is the root of $\run$. Then 
	$\ttm{\ctxhole}=0=\lm{\etape\state}$ and 
	$\tlm{\ruleoc}=0=\lm{\elog\state}$. 
	Otherwise, $\runtwo:\state_0\totiam^n\statetwo\totiam\state$. We analyze 
	the different cases of the last transition.
	\begin{itemize}
		\item Case $\iamdap$.
		\[\statetwo=\infer{\tjudg{}{\red{\tm\tmtwo}}{\ltyctxp{\initty_{\uppt}}(=\linty)}}
		{\tjudg{}{\tm}{\arr{\tty}{\linty}} & \vdash} 
		\tomachdotone 
				\infer{\tjudg{}{\tm\tmtwo}{\linty 
			}}{\tjudg{}{\red\tm}{\arr{\tty}{\ltyctxp{\initty_{\uppt}}}} & 
			\vdash}=\state\]
		 The log is unchanged. 
		$\etape\state=\resm\cons\etape\statetwo$. Thus  
		$\ttm{\arr\tty\ltyctx}=\ttm{\ltyctx}+1=_{\ih}\lm{\etape\statetwo}+1=\lm{\etape\state}$.
		\item Case $\iamdlamone$. Equivalent to the previous one.
		\item Case $\iamdvar$.
		\[
		\statetwo=\infer*{\infer{\tjudg{}{\la\var\ctxp{\var}} 
				{\arr{\leafctxp{\linty_i}}\lintytwo}}{}}
		{\infer[i]{\tjudg{}{\red\var}{\ltyctxp{\initty_\uppt}_i (= 
					\linty_i)}}{}}   
		\tomachvar
		\infer*{\infer{\tjudg{}{\blue{\la\var\ctxp{\var}}} 
				{\arr{\leafctxp{\ltyctxp{\initty_\downpt}_i}}\lintytwo}}{}}
		{\infer[i]{\tjudg{}{\var}{\linty_i}}{}}=\state	
		\]
		Let us set $k\defeq\size\leafctx-1$. We observe that $k$ is exactly the 
		number of rules $\tymany$ which the judgment $i$ lies in until the 
		judgment $\ruleoc$ corresponding to the binder. We have 
		$\estate\statetwo=\dstate{\var}{\ctxtwop{\la\var\ctx}} 
		{\etape{\ltyctx_i}}{\elog i\cdot\elog\ruleoc}$ and 
		$\estate\state=\ustate{\la\var\ctxp\var}{\ctxtwo} 
		{(\var,\la\var\ctx,\elog i)\cdot\etape{\ltyctx_i}}{\elog\ruleoc}$. By 
		\ih we have $k\cdot \indet=\lm{\elog i}$. Then
		$\ttm{\arr{\leafctxp{\ltyctx_i}}\lintytwo}=\indet+k\cdot 
		\indet+\ttm{\ltyctx_i}=\indet+\lm{\elog 
		i}+\lm{\etape\statetwo}=\lm{(\var,\la\var\ctx,\elog
		 i)}+\lm{\etape\statetwo}=\lm{\etape\state}$. About the log, it 
		 suffices to note that $\tlm{\ruleoc}=_{\ih}\lm{\elog{\state}}$.
		\item Case $\iamdlamtwo$.
		\[
		\statetwo=\infer*{\infer{\tjudg{}{\red{\la\var\ctxp{\var}}} 
				{\arr{\leafctxp{\ltyctxp{\initty_\uppt}_i}}\lintytwo}}{}}
		{\infer[i]{\tjudg{}{\var}{\linty_i (=\ltyctxp{\initty}_i)}}{}}
		 \tomachbttwo 
		\infer*{\infer{\tjudg{}{\la\var\ctxp{\var}} 
				{\arr{\leafctxp{\linty_i}}\lintytwo}}{}}
		{\infer[i]{\tjudg{}{\blue\var}{\ltyctxp{\initty_\downpt}_i}}{}}= \state	
		\]
		Let us set $k\defeq\size\leafctx-1$. We observe that $k$ is exactly the 
		number of rules $\tymany$ which the judgment $i$ lies in until the 
		judgment $\ruleoc$ corresponding to the binder. We have 
		$\estate\statetwo=\dstate{\la\var\ctxp\var}{\ctxtwo} 
		{(\var,\la\var\ctx,\elog i)\cdot\etape{\ltyctx_i}}{\elog\ruleoc}$ and 
		$\estate\state=\ustate{\var}{\ctxtwop{\la\var\ctx}} 
		{\etape{\ltyctx_i}}{\elog i\cdot\elog\ruleoc}$. 
		$\ttm{\ltyctx_i}=\ttm{\arr{\leafctxp{\ltyctxp{\initty_\uppt}_i}}\lintytwo}
		 -\size{\leafctx}\cdot\indet=_{\ih}\lm{\etape{\statetwo}} 
		 -\lm{(\var,\la\var\ctx,\elog i)}=\lm{(\var,\la\var\ctx,\elog 
		 i)\cdot\etape{\ltyctx_i}} 
		 -\lm{(\var,\la\var\ctx,\elog i)}=\lm{\etape{\ltyctx_i}}$. About the 
		 log, since $\lm{\elog{i}}=k\cdot\indet$ by \ih and 
		 $\tlm{\ruleoc}=_{\ih}\lm{\elog{\ruleoc}}$, then 
		 $\tlm{\ruleoc\ctxholep{i}}=\tlm{\ruleoc}+k\cdot\indet=\lm{\elog{\ruleoc}}+\lm{\elog{i}}
		  =\lm{\elog{\state}}$.
		  \item Cases $\iamuapltwo$ and $\iamulam$. Equivalent to case 
		  $\iamdap$.
		  \item Case $\tomacharg$.
		  \[
		  \statetwo=\infer{\tjudg{}{\tm\tmtwo}{\linty}} 
		  {\tjudg{}{\blue\tm}{\arr{\leafctxp{\ltyctxp{\initty_\downpt}_i}}{\linty}}
		  	& \infer=[\tymany]{\tjudg{}{\tmtwo}{\tty 
		  	(=\leafctxp{\ltyctxp{\initty}_i)}} }{
		  		\ldots\ \tjudg{}{\tmtwo}{\ltyctxp{\initty}_i}\ \ldots
		  	} }
		  	 \tomacharg 
		  	\infer{\tjudg{}{\tm\tmtwo}{\linty}} 
		  	{\tjudg{}{\tm}{\arr{\tty}{\linty}}
		  		& 
		  		\infer=[\tymany]{\tjudg{}{\tmtwo}{\leafctxp{\ltyctxp{\initty}_i}}
		  		 }{
		  			\ldots\ \tjudg{}{\red\tmtwo}{\ltyctxp{\initty_\uppt}_i}\ 
		  			\ldots
		  		} }= \state
		  \]
		  $\estate{\statetwo}=\ustate{\tm}{\ctx}{\etape{\leafctxp{\ltyctx_i}}} 
		  {\elog{\statetwo}}= \ustate{\tm}{\ctx}{\lpos\cdot\etape{\ltyctx_i}} 
		  {\elog{\statetwo}}$ and 
		  $\estate{\state}=\dstate{\tmtwo}{\ctxtwo}{\etape{\ltyctx_i}} 
		  {\elog{\state}}=\dstate{\tmtwo}{\ctxtwo}{\etape{\ltyctx_i}} 
		  {\lpos\cdot\elog{\ruleoc'}}$. We have by \ih 
		  $\lm{\lpos}+\ttm{\ltyctx_i}=\size{\leafctx}\cdot\indet+\ttm{\ltyctx_i}=\ttm{\leafctxp{\ltyctx_i}}=\lm{\etape{\leafctxp{\ltyctx_i}}}=
		  \lm{\lpos\cdot\etape{\ltyctx_i}}=\lm{\lpos}+ 
		  \lm{\etape{\ltyctx_i}}$. About the log, we have 
		  $\tlm{\ruleoc}=\tlm{\ruleoc'}+\size{\leafctx}\cdot\indet=_{\ih} 
		  \lm{\elog{\ruleoc'}}+\lm{\lpos}=\lm{\elog{\ruleoc}}$.
		  \item Case $\iamuapr$. Equivalent to the previous one.
	\end{itemize}
\end{proof}

\begin{lemma}
	Let $\gty$ be a type. Then 
	\[\ssize\gty=\max_{\gtyctx|\gty=\gtyctxp\initty}\ttm\gtyctx\]
\end{lemma}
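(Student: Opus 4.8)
\end{lemma}

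\begin{proof}
The plan is to prove the identity by structural induction on $\gty$, following the mutually recursive grammar of linear and tree types; the single ingredient beyond routine unfolding is a preliminary \emph{shape analysis} of type contexts, which I would establish first. By inspecting the grammars of linear and tree type contexts, the contexts $\gtyctx$ with $\gtyctxp\initty=\gty$ are exactly: the single context $\ctxhole$, when $\gty=\initty$; the contexts $\arr\tty\ltyctx$ with $\ltyctxp\initty=\linty$, together with the contexts $\arr{\ttyctx}\linty$ whose tree part $\ttyctx$ fills $\initty$ into $\tty$, when $\gty=\arr\tty\linty$; and the contexts $\mset{\gty_1,\mydots,\ttyctx,\mydots,\gty_n}$ whose tree part $\ttyctx$ fills $\initty$ into $\gty_i$ for some $i$, when $\gty=\mset{\gty_1,\mydots,\gty_n}$. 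In particular arrow types are filled only by the two linear shapes listed and tree types only by the tree shape listed, which is exactly what lets the inductive hypothesis, once applied to a tree type, range over precisely the tree contexts filling that type.

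With the shape analysis in hand, each inductive case is a short unfolding of the definitions of $\ssize\cdot$ and $\ttm\cdot$. If $\gty=\initty$, both sides are $0$. If $\gty=\arr\tty\linty$, the shape analysis together with the clauses $\ttm{\arr\tty\ltyctx}=1+\ttm\ltyctx$ and $\ttm{\arr{\ttyctx}\linty}=\ttm{\ttyctx}$ identifies the set of values $\ttm\gtyctx$ over contexts $\gtyctx$ filling $\initty$ into $\arr\tty\linty$ as the union of $\{1+\ttm\ltyctx\}$ over linear contexts $\ltyctx$ filling $\initty$ into $\linty$ and of $\{\ttm{\ttyctx}\}$ over tree contexts $\ttyctx$ filling $\initty$ into $\tty$; taking maxima, using that $\max$ commutes with binary union and that $\max\{1+x\mid x\in S\}=1+\max S$, and invoking the inductive hypothesis on $\linty$ and on $\tty$, this maximum equals $\max\{\ssize\linty+1,\,\ssize\tty\}=\ssize{\arr\tty\linty}$. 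If $\gty=\mset{\gty_1,\mydots,\gty_n}$, the shape analysis together with the clause $\ttm{\mset{\mydots,\gtyctx,\mydots}}=\indet+\ttm\gtyctx$ identifies the set of values as the union over $i$ of the sets $\{\indet+\ttm\gtyctx\}$ over contexts $\gtyctx$ filling $\initty$ into $\gty_i$; invoking the inductive hypothesis on each $\gty_i$ gives $\max_i(\indet+\ssize{\gty_i})=\indet+\max_i\ssize{\gty_i}=\ssize{\mset{\gty_1,\mydots,\gty_n}}$.

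The step I expect to require the most care, and the main (if mild) obstacle, is the interaction with empty maxima, which is precisely what forces the shape analysis to be spelled out exactly and the two definitions to be checked on degenerate types. When $\gty$ has no occurrence of $\initty$ (for instance $\gty=\emmset$, or $\gty=\mset{\emmset}$), the set $\{\gtyctx\mid\gtyctxp\initty=\gty\}$ is empty and $\ssize\gty$ also involves a $\max$ over an empty index set; I would handle this by adopting the convention $\max\emptyset=-\infty$ (so $\indet+\max\emptyset=\max\emptyset$), under which the two empty maxima agree. One then records the elementary fact, by a sub-induction on linear types, that every linear type contains an occurrence of $\initty$, so that in the arrow case the set of linear contexts filling $\initty$ into $\linty$ is nonempty and the two identities about $\max$ used above legitimately apply; in the multiset case, the indices $i$ for which $\gty_i$ has no occurrence of $\initty$ contribute $-\infty$ to both sides and can simply be dropped. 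Beyond this bookkeeping the argument is a direct calculation, so the shape analysis of type contexts is the only genuinely substantive ingredient.
\end{proof}
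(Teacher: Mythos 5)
Your proof is correct and follows essentially the same route as the paper's: a structural induction on the type, splitting the set of contexts $\gtyctx$ according to the last grammar production of $\gty$ and commuting $\max$ with the $+1$ and $+\indet$ increments before invoking the inductive hypothesis. The only addition is your explicit treatment of empty maxima for degenerate types such as $\emmset$, which the paper glosses over (elsewhere it simply takes $\ssize{\emmset}=0$ by convention); this is careful bookkeeping rather than a different argument.
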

\begin{proof}
	We proceed by induction on the structure of $\gty$.
	\begin{itemize}
		\item Case $\gty=\initty$. Then there is only one 
		$\gtyctx$ such that $\gty=\gtyctxp\initty$, \ie $\gtyctx=\ctxhole$.
		\item Case $\gty=\arr\tty\linty$. By \ih 
		$\ssize\tty=\max_{\ttyctx|\tty=\ttyctxp\initty}\ttm{\ttyctx}$ and 
		$\ssize\linty=\max_{\ltyctx|\linty=\ltyctxp\initty}\ttm{\ltyctx}$. We 
		have 
		that 
		$\{\gtyctx|\gty=\gtyctxp\initty\}=\{\arr\ttyctx\linty|\tty=\ttyctxp\initty\}
		\cup\{\arr\tty\ltyctx|\linty=\ltyctxp\initty\}$. Then
		\begin{equation*}\begin{split}\ssize{\gty}&=\ssize{\arr\tty\linty}=\max\{\ssize{\tty},
		\ssize{\linty}+1 
		\}= \max\left\lbrace 
		\max_{\ttyctx|\tty=\ttyctxp\initty}\ttm{\ttyctx}, 
		\max_{\ltyctx|\linty=\ltyctxp\initty}\ttm{\ltyctx}+1 
		\right\rbrace\\
		&=\max\{\{\ttm{\ttyctx}|\tty=\ttyctxp\initty\},\{\ttm{\ltyctx}+1| 
		\linty=\ltyctxp\initty\}\}\\
		&=\max\{\{\ttm{\arr\ttyctx\linty}|\tty=\ttyctxp\initty\}, 
		\{\ttm{\arr\tty\ltyctx}| \linty=\ltyctxp\initty\}\}
		=\max_{\gtyctx|\gty=\gtyctxp\initty}\ttm{\gtyctx}
		\end{split}\end{equation*}
		\item Case $\gty=\mset{\gty_1,\mydots,\gty_n}$. By \ih\ for each $1\leq 
		i\leq n$,
		$\ssize{\gty_i}=\max_{\gtyctx|\gty_i=\gtyctxp\initty}\ttm{\gtyctx}$.
		Then
		\begin{equation*}\begin{split}
		\ssize\gty&=\ssize{\mset{\gty_1,\mydots\gty_n}}=\indet+\max_{i}\{\ssize{\gty_i}\}
		=\indet+ \max_{i}\left\lbrace 
		\max_{\gtyctx|\gty_i=\gtyctxp\initty}\ttm{\gtyctx}
		\right\rbrace
		= \max_{i}\left\lbrace \indet+
		\max_{\gtyctx|\gty_i=\gtyctxp\initty}\ttm{\gtyctx}
		\right\rbrace\\
		&= \max_{i}\left\lbrace 
		\{\ttm{\mset{\gty_1\cdots\gtyctx_i\cdots\gty_n}}|\gty_i=\gtyctx_i\ctxholep\initty\}\right\rbrace
				=\max_{\gtyctx|\gty=\gtyctxp\initty}\ttm{\gtyctx}
				\end{split}\end{equation*}
	\end{itemize}
\end{proof}

\begin{lemma}[Weights bound extracted tapes]
	Let $\tyd:\wtjudg{\tye}{w}{\tm}{\gty}$ be a weighted derivation and $\mathcal{J}$ be the set of all the judgments  
	occurring in $\tyd$. 
	Then 
	\[w\geq\max_{\tjudg{\tyetwo}{\tmtwo}{\gtytwo}\in\mathcal{J}}\ssize{\gtytwo}\]
\end{lemma}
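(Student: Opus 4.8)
The plan is to prove the statement by structural induction on the weighted derivation $\tyd$ of \reffig{space-weigths}, reading it as: for every weighted derivation $\tyd\pof\wtjudg{\tye}{w}{\tm}{\gty}$ and every judgment $\tjudg{\tyetwo}{\tmtwo}{\gtytwo}$ occurring in $\tyd$ (the final one included), $w\geq\ssize{\gtytwo}$. In each inductive case one applies the \ih to the sub-derivation(s), obtaining that each premise weight bounds all the right-hand types occurring in the corresponding sub-derivation, and then one only has to check that the weight assigned to the conclusion dominates the premise weights \emph{and} the size $\ssize{\cdot}$ of the conclusion's own type.

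For the axiom rules there is nothing above the conclusion, so only the conclusion must be checked: in $\tyvar$ we have $w=\ssize{\linty}$ with sole right-hand type $\linty$; in $\tylamstar$ we have $w=0=\ssize{\initty}$; in $\tynone$ we have $w=0$ and the conclusion type $\emmset$, for which $\ssize{\emmset}=0$ — here one reads $\ssize{\cdot}$ on the empty sequence as a maximum over the empty set of $\initty$-decompositions of $\emmset$, consistently with \reflemma{weightthree}. For $\tylam$, letting $w'$ be the premise weight, the \ih gives $w'\geq\ssize{\gtytwo}$ for every right-hand type $\gtytwo$ occurring above; the conclusion weight $\max\{w',\ssize{\arr{\tty}{\linty}}\}$ then dominates both $w'$ (hence all those types) and $\ssize{\arr{\tty}{\linty}}$ (the conclusion type). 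The case $\tymany$ is analogous but uses the shape of $\ssize{\cdot}$ on sequences: with premises of weights $w_1,\mydots,w_n$ and right-hand types $\gty_1,\mydots,\gty_n$, the \ih yields $w_i\geq\ssize{\gty_i}$, so the conclusion weight $\indet+\max_{i}\{w_i\}$ is both $\geq\indet+\max_{i}\{\ssize{\gty_i}\}=\ssize{\mset{\gty_1,\mydots,\gty_n}}$ and $\geq w_i$ for each $i$, covering every right-hand type above.

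The only case carrying a (minor) obstacle is $\tyapp$, with premises $\wtjudg{\tye}{w}{\tm}{\arr{\tty}{\linty}}$ and $\wtjudg{\tyetwo}{v}{\tmtwo}{\tty}$ and conclusion $\wtjudg{\tye\uplus\tyetwo}{\max\{w,v\}}{\tm\tmtwo}{\linty}$: the weight $\max\{w,v\}$ already dominates, by the \ih, all right-hand types occurring strictly above the conclusion, but its rule does \emph{not} mention $\ssize{\cdot}$, so the bound on the conclusion type $\linty$ must be recovered indirectly. One uses the \ih on the left premise, $w\geq\ssize{\arr{\tty}{\linty}}=\max\{\ssize{\tty},\ssize{\linty}+1\}\geq\ssize{\linty}+1>\ssize{\linty}$, whence $\max\{w,v\}\geq w>\ssize{\linty}$. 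This is essentially the whole difficulty of the proof: everywhere else the desired inequality is either an axiom computation or follows because $\ssize{\cdot}$ of the conclusion type occurs literally in the weight of $\tylam$ or is absorbed in the $\indet+\max$ structure of $\tymany$, whereas for $\tyapp$ the conclusion type is the codomain of the functional premise and the bound must be read off its arrow type. Care must also be taken, as noted above, with the degenerate $\tynone$/empty-sequence reading of $\ssize{\emmset}$ as $0$.
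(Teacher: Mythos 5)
Your proposal is correct and follows essentially the same route as the paper's proof: a structural induction on the weighted derivation where each case checks that the conclusion weight dominates both the premise weights and the size of the conclusion's own type, with the $\tyapp$ case handled exactly as in the paper via the observation that $\ssize{\linty}\leq\ssize{\arr{\tty}{\linty}}$ (which you merely unfold a bit more explicitly through the definition $\ssize{\arr{\tty}{\linty}}=\max\{\ssize{\tty},\ssize{\linty}+1\}$). Your side remark on reading $\ssize{\emmset}$ as $0$ matches the paper's treatment of the $\tynone$ case.
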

\begin{proof}
	We proceed by induction on $\tyd$.
	\begin{itemize}
		\item Case \tyvar. This case is trivial.
		\[\infer[\tyvar]{\wtjudg{\var:\mset{\linty}}{\ssize{\linty}}{\var}{\linty}}{}\]
		\item Case $\tylamstar$. Also this case is trivial, since 
		$\ssize{\initty}=0$. 
		\[\infer[\tylamstar]{\wtjudg{\tye}{0}{\lambda\var.\tm}{\initty}}{}\]
		\item Case $\tylam$. The thesis follows by the \ih\ applied to $v$.
		\[\infer[\tylam]{\wtjudg{\tye}{\max\{v,\,\ssize{\arr{\tty}{\linty}}\}} 
			{\lambda\var.\tm}{\arr{\tty}{\linty}}}{\wtjudg{\tye,\var:\tty}{v}{\tm}{\linty}}\]
		\item Case \tyapp. The thesis follows by the \ih\ applied to $u$ and 
		$v$ 
		and the fact that $\ssize{\linty}\leq\ssize{\arr\tty\linty}$.
		\[\infer[\tyapp]{\wtjudg{\tye\uplus\tyetwo}{\max\{u,v\}}{\tm\tmtwo}{\linty
			}}{\wtjudg{\tye}{u}{\tm}{\arr{\tty}{\linty}} & 
			\wtjudg{\tyetwo}{v}{\tmtwo}{\tty}}\]
		\item Case $\tymany$. The $\tyd$ has the following shape.
		\[\infer[\tymany]{\wtjudg{\mset{\uplus_{i=1}^n\tye_i} 
		}{\indet+\max_{i}\set{v_i}} 
			{\tm}{\mset{\gty_1,\mydots,\gty_n}}}
		{\wtjudg{\tye_i}{v_i}{\tm}{\gty_i} & 1\leq i\leq n}\]
		By \ih, $v_i \geq \ssize{\gty_i}$, so $w$ is $\geq$ of the weight of the right-hand type of any internal judgement 
of $\tyd$. We only have to show that $w$ also bounds the weight of $\mset{\gty_1,\mydots,\gty_n}$. Note that	$w = 
\indet+\max_{i}\set{v_i}\geq_{\ih}	\indet+\max_{i}\set{\ssize{\gty_i}}=:\ssize{\mset{\gty_1,\mydots,\gty_n}}$.
    
    \item Case $\tynone$. Trivial since $\ssize{\emmset} = 0$.
    \[\infer[\tynone]{\wtjudg{}{0}{\tm}{\emmset}}{}\]
	\end{itemize}
\end{proof}

\begin{lemma}[Weights bound also extracted logs]
	Let $\tyd:\wtjudg{\tye}{w}{\tm}{\gty}$ be a weighted derivation. Then $w\geq v+\tlm{\ruleoc}$ for every weighted judgment $\ruleoc$ $\wtjudgone{v}$ in $\tyd$.
\end{lemma}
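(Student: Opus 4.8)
The plan is to argue by induction on the structure of the weighted derivation $\tyd$. The quantity to keep under control is $\tlm{\ruleoc}$, \ie $\indet$ times the number of $\tymany$ rule applications crossed when descending from $\ruleoc$ to the root of $\tyd$; the only arithmetical fact driving the argument is how this quantity changes when a rule is appended at the bottom of $\tyd$. Precisely, if $\tyd$ is obtained by applying a rule $R$ to subderivations $\tyd_1,\mydots,\tyd_k$ and $\ruleoc$ is a judgment occurring inside some $\tyd_j$, then the branch size of $\ruleoc$ computed in $\tyd$ equals the branch size of $\ruleoc$ computed in $\tyd_j$ whenever $R$ is not $\tymany$, and equals it plus $\indet$ when $R$ is $\tymany$. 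Moreover the root judgment of $\tyd$ always has branch size $0$, so there the statement reduces to $w\geq w$; this also settles the base cases $\tyvar$, $\tylamstar$, $\tynone$, where the root is the only judgment.

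For the inductive step with $R$ equal to $\tylam$ or $\tyapp$: the conclusion weight $w$ is the maximum of the weights $w_1,\mydots,w_k$ of the premises (for $\tylam$ one additionally maxes with $\ssize{\arr\tty\linty}$, which only increases $w$), hence $w\geq w_j$ for each $j$. Given an internal judgment $\ruleoc$ with weight $v$ sitting in $\tyd_j$, applying the \ih\ to $\tyd_j$ yields $w_j\geq v+\tlm{\ruleoc}$ with the branch size read in $\tyd_j$; since $R$ is not a $\tymany$ rule this branch size coincides with the one read in $\tyd$, so $w\geq w_j\geq v+\tlm{\ruleoc}$, as required.

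The crux is the case $R=\tymany$, where the last rule has conclusion weight $\indet+\max_i\set{w_i}$, with $w_1,\mydots,w_n$ the weights of its premises. For an internal judgment $\ruleoc$ with weight $v$ occurring in the $i$-th premise subderivation, the \ih\ applied there gives $w_i\geq v+\tlm{\ruleoc}$, the branch size being computed in that subderivation; the branch size of $\ruleoc$ computed in $\tyd$ is larger by exactly $\indet$, since the descent now additionally crosses the bottom $\tymany$ rule. Therefore $w=\indet+\max_i\set{w_i}\geq\indet+(v+\tlm{\ruleoc})=v+\tlm{\ruleoc}$, now with the branch size read in $\tyd$. I do not expect a real obstacle here: the single delicate point is exactly this $+\indet$ shift of the branch size across a $\tymany$ rule, which is precisely compensated by the $+\indet$ that the weighted $\tymany$ rule contributes to the weight; the remaining bookkeeping mirrors the proof of Lemma~\ref{l:weightfour}.
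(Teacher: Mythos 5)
Your proof is correct and matches the paper's argument in substance: the paper inducts on the length of the path from $\ruleoc$ down to the root (stepping through the rule of which $\ruleoc$ is a premise), whereas you do a structural induction peeling the bottom rule of $\tyd$, but both reduce to the same two observations --- the conclusion weight dominates the premise weights, and the $+\indet$ contributed by a $\tymany$ rule to the weight exactly compensates the $+\indet$ shift of $\tlm{\ruleoc}$ across that rule. The difference in how the induction is organized is immaterial, so this is essentially the paper's proof.
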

\begin{proof}
By induction on the length $n$ of path from $\ruleoc$ to the final judgement of $\tyd$. If $n=0$ then $\tlm{\ruleoc} 
= 0$ and $w = v$, giving $w = v = v + \tlm{\ruleoc}$. If $n>0$ then we look at the rule of which $\ruleoc$ is a 
premise. Let $\ruleoc'$ $\wtjudgone{v'}$ be the concluding judgement of such a rule. By \ih, $w\geq 	
v'+\tlm{\ruleoc'}$. Now, for all rules but $\tymany$ we have that $v'\geq v$ and $\tlm{\ruleoc'} = \tlm\ruleoc$, so that 
$w\geq 	v'+\tlm{\ruleoc'}\geq v+\tlm{\ruleoc}$. For $\tymany$, we have $v' \geq \indet + v$ and $\tlm{\ruleoc'} = 
\tlm\ruleoc - \indet$, so that
$$w\geq 	v'+\tlm{\ruleoc'}\geq \indet + v +\tlm{\ruleoc} - \indet = v +\tlm{\ruleoc}.$$
\end{proof}

Let $\states\tyd$ be the set of \TIAM states during the execution of $\tyd$.

\begin{theorem}[\LIAM space bounds]
\label{thm:space-bounds}
	Let $\tyd\pof\wtjudg{}{\weight}{\tm}{\initty}$ be a weighted tree types derivation. Then $\lm{\extr\state} \leq w$ 
for every $\state\in \states\tyd$.
\end{theorem}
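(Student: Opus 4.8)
The plan is to obtain the bound by assembling \refprop{space-single-states} with the three weight lemmas \reflemma{weightthree}, \reflemma{weightfour}, and \reflemma{weight}; no fresh induction is required here, since all the inductive content is already packaged into those statements. First I would fix an arbitrary state $\state=(\tyd,\ruleoc,\ltyctx,\pol)\in\states\tyd$. Such a state occurs in the run of the \TIAM on $\tyd$ by definition of $\states\tyd$, hence is reachable (indeed, by the proposition that every state of $\tyd$ is reached exactly once, \emph{all} states of $\tyd$ arise this way), so \refprop{space-single-states} applies and gives
\[
\lm{\extr\state}\;=\;\bsize\state\;=\;\ttm{\ltyctx}+\tlm{\ruleoc}.
\]
It therefore suffices to prove $\ttm{\ltyctx}+\tlm{\ruleoc}\le w$.

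Write $\ruleoc$ as the weighted judgment $\wtjudg{\tye}{v}{\tmtwo}{\linty}$, where $\linty$ is a linear type with $\ltyctxp{\initty}=\linty$. Since $\ltyctx$ is then one of the type contexts $\gtyctx$ satisfying $\gtyctxp{\initty}=\linty$, \reflemma{weightthree} yields $\ttm{\ltyctx}\le\ssize{\linty}$. Next I would bound $\ssize{\linty}$ by the local weight $v$: the sub-derivation of $\tyd$ rooted at $\ruleoc$ is itself a weighted derivation, whose final judgment is $\ruleoc$ (of weight $v$), so applying \reflemma{weightfour} to \emph{that} sub-derivation — and noting that $\ruleoc$ is among its judgments — gives $v\ge\ssize{\linty}$, whence $\ttm{\ltyctx}\le v$. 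Finally, \reflemma{weight} applied to $\tyd$ and the judgment occurrence $\ruleoc$ gives $w\ge v+\tlm{\ruleoc}$. Chaining these inequalities,
\[
\lm{\extr\state}\;=\;\ttm{\ltyctx}+\tlm{\ruleoc}\;\le\;\ssize{\linty}+\tlm{\ruleoc}\;\le\;v+\tlm{\ruleoc}\;\le\;w,
\]
which is the claim.

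The one step that deserves care is the middle inequality $v\ge\ssize{\linty}$, i.e. that a judgment's own weight already dominates the weight of its right-hand type. This is not literally the statement of \reflemma{weightfour}, but it follows by reading that lemma on the sub-derivation ending in $\ruleoc$ rather than on $\tyd$ itself — one only has to check that this sub-derivation is again a legitimate weighted derivation and that its own final judgment $\ruleoc$ is counted among its judgments. Everything else — reachability of $\state$, the reduction via \refprop{space-single-states}, and the final chain of inequalities — is routine bookkeeping once the weight lemmas are in hand, so I expect no real obstacle beyond this point.
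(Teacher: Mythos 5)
Your proposal is correct and follows essentially the same route as the paper: reduce to $\bsize\state=\ttm\ltyctx+\tlm\ruleoc$ via \refprop{space-single-states}, then chain \reflemma{weightthree}, \reflemma{weightfour} (read on the sub-derivation rooted at $\ruleoc$), and \reflemma{weight}. The one step you flag as delicate—that the judgment's own weight dominates $\ssize{\ltyctxp\initty}$—is exactly how the paper uses \reflemma{weightfour} as well, only you make the restriction to the sub-derivation explicit where the paper leaves it implicit.
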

\begin{proof}
We prove the bound using $\bsize\state$ instead of $\lm{\extr\state}$, and obtain the statement because $\bsize\state$ 
instead of $\lm{\extr\state}$ by \refprop{space-single-states}. Let $\state=(\tyd, \ruleoc, \ltyctx, \pol)\in 
\states\tyd$ be a \TIAM state and let 
$\weighttwo$ be its weight. By \reflemma{weight}, $\weight\geq \tlm{\ruleoc}+\weighttwo$. By \reflemma{weightfour}, 
$\weighttwo \geq \ssize{\ltyctxp\initty}$, and by \reflemma{weightthree} $\ssize{\ltyctxp\initty} \geq 
\ttm\ltyctx$. Then $\weighttwo \geq \ttm\ltyctx$. Therefore, $\weight\geq \tlm{\ruleoc}+\ttm{\ltyctx} = \bsize\state$. 
\end{proof}

\begin{proposition}[Weight witness]
Let $\tyd:\wtjudg{\tye}{w}{\tm}{\gty}$ be a weighted derivation and $\gty\neq\emmset$. Then there exists a 
	\TIAM state $\state$ over $\tyd$ such that 	$w=\bsize\state$.
\end{proposition}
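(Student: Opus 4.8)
The plan is to proceed by induction on the structure of the weighted derivation $\tyd$, exhibiting at each step a \TIAM state $\state$ whose branch size $\bsize\state = \ttm{\ltyctx} + \tlm{\ruleoc}$ realizes the weight $w$. The key observation is that the weight of a judgment is always built as a maximum of contributions coming from (a) the size of a type occurring \emph{at} the judgment, and (b) the weights of the judgments \emph{above}; by \reflemma{weightthree}, a type size $\ssize\gty$ is itself the maximum over type contexts $\gtyctx$ with $\gty = \gtyctxp\initty$ of $\ttm\gtyctx$. So a witnessing state should be one whose $\ltyctx$ is an extremal type context for the relevant type occurrence and whose $\ruleoc$ is the judgment sitting at the point where the $\indet$-contributions of the $\tymany$-rules descending to the root have accumulated maximally.

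First I would set up the induction on $\tyd$ with cases on the last rule. For $\tyvar$, the judgment is $\wtjudg{\var:\mset{\linty}}{\ssize{\linty}}{\var}{\linty}$, and by \reflemma{weightthree} there is a linear context $\ltyctx$ with $\linty = \ltyctxp\initty$ and $\ttm\ltyctx = \ssize\linty$; the state $\state = (\tyd,\ruleoc,\ltyctx,\pol)$ on this axiom has $\tlm\ruleoc = 0$, hence $\bsize\state = \ssize\linty = w$. For $\tylamstar$ and $\tynone$ the hypothesis $\gty \neq \emmset$ rules out $\tynone$, and for $\tylamstar$ we have $w = 0 = \bsize\state$ at that very judgment with empty context. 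For $\tylam$, $w = \max\{v, \ssize{\arr\tty\linty}\}$: if the max is $v$, apply the \ih to the premise to get a witness state \emph{above} $\ruleoc$, and note $\tlm{\cdot}$ and $\ttm{\cdot}$ are unchanged by crossing a $\tylam$ downward; if the max is $\ssize{\arr\tty\linty}$, pick the extremal type context at $\ruleoc$ itself. For $\tyapp$, $w = \max\{u,v\}$ and we take whichever premise achieves it and invoke the \ih there, observing that the witnessing state lying in a sub-derivation of the application still has the same $\bsize$ when viewed inside $\tyd$ (crossing $\tyapp$ downward does not add $\tymany$-rules nor enlarge the type context). The delicate case is $\tymany$, where $w = \indet + \max_i\{v_i\}$: take the index $i_0$ achieving the max, apply the \ih to the $i_0$-th premise to obtain a state $\state_0$ with $\bsize{\state_0} = v_{i_0}$, and then reinterpret $\state_0$ inside $\tyd$; crossing the single $\tymany$ node on the way down to the root adds exactly one $\indet$ to $\tlm{\ruleoc}$, so the branch size of the reinterpreted state is $v_{i_0} + \indet = w$, as required.

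The main obstacle I anticipate is the bookkeeping in the $\tymany$ case — more precisely, making precise what ``reinterpret $\state_0$ inside $\tyd$'' means and checking the $\bsize$ accounting carefully. The subtlety is that the $\ih$ witness for a premise $\tyd_i$ is a state of $\tyd_i$, but a $\tymany$ rule tree (the iterated $\tymany$'s collapsing the tree structure of the tree type) may sit between $\tyd_i$ and the conclusion of the $\tymany$ step, and one must verify that descending through the \emph{whole} such tree of $\tymany$-rules to the conclusion judgment of the rule contributes precisely $\indet$ to $\tlm{\cdot}$ relative to what it contributed for $\state_0$ inside $\tyd_i$ — i.e.\ that the rule genuinely adds one nesting level and not more. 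This is exactly the content of how $\tymany$ ``nodifies'' the type environment and adds one $\indet$ to the weight, matching the definition $\tlm{\ruleoc} \defeq n\cdot\indet$ with $n$ the number of $\tymany$ \emph{rule trees} descended (here one more such tree than from $\state_0$'s vantage point). Once this single-step increment is checked, the case closes, and the induction — hence the proposition — follows; note the witness state produced is automatically \TIAM-reachable by the earlier reachability result, so it is a genuine \TIAM state over $\tyd$.
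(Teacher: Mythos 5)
Your proposal is correct and follows essentially the same route as the paper's proof: induction on the derivation, using Lemma~\ref{l:weightthree} to pick an extremal type context in the base/$\tylam$ cases, propagating the inductive witness through $\tylam$ and $\tyapp$ unchanged, and in the $\tymany$ case reinterpreting the witness of the maximizing premise inside the larger derivation so that the one additional $\tymany$ rule on the path to the root contributes exactly $\indet$ to $\tlm{\cdot}$. The "obstacle" you flag in the $\tymany$ case resolves exactly as you suspect (and as the paper does), since the induction peels off one $\tymany$ rule at a time and each such rule adds precisely one $\indet$ to both the weight and the branch size.
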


\begin{proof}
	We proceed by induction on the structure of $\tyd$.
	\begin{itemize}
		\item Case \tyvar:
		\[\infer[\tyvar]{\wtjudg{\var:\mset{\linty}}{\ssize{\linty}}{\var}{\linty}}{}\]
		 There is no log and thus $s_L=0$, and by 
		\reflemma{weightthree}, 
		$\ssize\linty=\max_{\ltyctx|\linty=\ltyctxp\initty}\ttm{\ltyctx}$.
		
		\item Case $\tylamstar$: 
		\[\infer[\tylamstar]{\wtjudg{\tye}{0}{\lambda\var.\tm}{\initty}}{}\]
		There is no log and thus $s_L=0$, and $\ssize{\initty}=0=\ttm{\ctxhole}$. 
		
		\item Case $\tylam$:
		\[\infer[\tylam]{\wtjudg{\tye}{\max\{v,\,\ssize{\arr{\tty}{\linty}}\}} 
			{\lambda\var.\tm}{\arr{\tty}{\linty}}}{\wtjudg{\tye,\var:\tty}{v}{\tm}{\linty}}\]
		There are two sub-cases:
		\begin{enumerate}
		 \item $\weight =v\geq\ssize{\arr{\tty}{\linty}}$: then the statement follows by the \ih 
		 \item $\weight = \ssize{\arr{\tty}{\linty}} >v$, by \reflemma{weightthree}, there is a state $\state=(\tyd, 
\ruleoc, \ltyctx, \pol)$ over the concluding judgement $ \ruleoc =
\tjudg{\tye}{\lambda\var.\tm}{\arr{\tty}{\linty}}$ for which $\ssize{\arr{\tty}{\linty}} = \ttm{\ltyctx}$. Since for the 
concluding judgement $\tlm\ruleoc=0$, we obtain 
$$\weight = \ssize{\arr{\tty}{\linty}} = \ttm{\ltyctx} = \ttm{\ltyctx} + 
\tlm\ruleoc = \bsize\state.$$
		\end{enumerate}

		\item Case \tyapp: 
		\[\infer[\tyapp]{\wtjudg{\tye\uplus\tyetwo}{\max\{u,v\}}{\tm\tmtwo}{\linty
			}}{\wtjudg{\tye}{u}{\tm}{\arr{\tty}{\linty}} & 
			\wtjudg{\tyetwo}{v}{\tmtwo}{\tty}}\]
			The thesis follows by the \ih\ applied to $u$ if 
		$u\geq v$ and to $v$ otherwise.
		
		\item Case $\tymany$: 
		\[\infer[\tymany]{\wtjudg{\mset{\uplus_{i=1}^n\tye_i}
			}{\indet+\max_{i}\set{v_i}} 
			{\tm}{\mset{\gty_1,\mydots,\gty_n}}}
		{\wtjudg{\tye_i}{v_i}{\tm}{\gty_i} & 1\leq i\leq n}\]
		Let us set $m\defeq\max_{i}\set{v_i}$. 
		Then, we apply the \ih to the sub-derivation $\tyd'$ with weight $m$. 
		Let us call $\statetwo=(\tyd', \ruleoc, \gtyctx, \pol)$ the state 
		obtained through the \ih. Then $\ttm{\gtyctx}+\tlm{\ruleoc}=m$. Let us 
		now consider the same state in the new type derivation $\tyd$, which 
		includes the $\tymany$ rule, $\state=(\tyd, \ruleoc, \gtyctx, \pol)$. 
		Now $\ttm{\gtyctx}+\tlm{\ruleoc}=\indet+m=\indet+\max_{i}\set{v_i}$.
		
		\item Case $\tynone$: 
		\[\infer[\tynone]{\wtjudg{}{0}{\tm}{\emmset}}{}\]
		Impossible, because by hypothesis $\gty\neq\emmset$.
	\end{itemize}
\end{proof}

\begin{corollary}[\LIAM exact bound via tree types derivations]
	Let $\tyd\pof\wtjudg{}{\weight}{\tm}{\initty}$ be a tree types derivation and $\runtwo$
	the complete \LIAM run on $\tm$. Then $\spacem{\runtwo}=\weight$.
\end{corollary}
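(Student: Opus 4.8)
The plan is to prove the two inequalities $\spacem{\runtwo}\leq\weight$ and $\spacem{\runtwo}\geq\weight$ and then combine them. Both directions are obtained by transporting statements between the \TIAM on $\tyd$ and the \LIAM on $\tm$ through the strong bisimulation of \refprop{str-bisim-typed}, together with the identity $\bsize\state=\lm{\extr\state}$ of \refprop{space-single-states}. Note first that, $\tm$ being typable, it is \ccbn-normalizable by correctness of tree types, so by the implementation theorem the complete \LIAM run $\runtwo$ from $\state_\tm$ is indeed finite, and by definition $\spacem{\runtwo}=\max_{\state_\l\in\runtwo}\lm{\state_\l}$.

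For the upper bound I would argue state by state: any $\state_\l$ occurring along $\runtwo$ is a reachable \LIAM state, hence, by the \emph{moreover} part of \refprop{str-bisim-typed}, it is the extraction $\extr\state$ of some reachable \TIAM state $\state\in\states\tyd$; then $\lm{\state_\l}=\lm{\extr\state}=\bsize\state$ by \refprop{space-single-states}, and $\bsize\state\leq\weight$ by the \LIAM space bound (Theorem~\ref{thm:space-bounds}). Taking the maximum over the $\state_\l$ in $\runtwo$ gives $\spacem{\runtwo}\leq\weight$.

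For the lower bound I would invoke the weight witness. Since the final type of $\tyd$ is $\initty\neq\emmset$, \refprop{witness} supplies a \TIAM state $\state$ over $\tyd$ with $\bsize\state=\weight$. By the proposition asserting that every state of $\tyd$ is reached exactly once, $\state$ is \TIAM-reachable, so $\extr\state$ is \LIAM-reachable by \refprop{str-bisim-typed}, and hence it occurs somewhere in $\runtwo$ (the \LIAM being deterministic, every reachable state lies on the unique complete run from $\state_\tm$). Then $\spacem{\runtwo}\geq\lm{\extr\state}=\bsize\state=\weight$, and combining with the previous inequality yields $\spacem{\runtwo}=\weight$. The only point requiring a little care is precisely this last one — that the witness state is actually visited by the complete run — which rests on determinism plus termination of the \LIAM on $\tm$; alternatively one can run the whole argument on the \TIAM side, where the complete run visits \emph{all} states of $\tyd$ by construction, and then pull it back along the bisimulation. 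Everything else is a routine chaining of the already-established results.
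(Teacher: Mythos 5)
Your proposal is correct and follows essentially the same route as the paper: the upper bound via the \LIAM space bound theorem, and the lower bound via the weight witness combined with $\lm{\extr\state}=\bsize\state$ from \refprop{space-single-states}. The only difference is that you make explicit the reachability bookkeeping (that the witness state is actually visited by the complete run, via the acyclicity/reachability proposition and the bisimulation), which the paper's two-line proof leaves implicit.
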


\begin{proof}
By \refthm{space-bounds}, $\spacem{\runtwo}\leq\weight$.
 By \refprop{witness}, there exists a state $\state$ of the \TIAM over $\tyd$ such that $\bsize{\state} = w$. By 
\refprop{space-single-states}, $\lm{\extr\state} = \bsize{\state}$. Therefore, 
$\spacem{\runtwo}=\weight$.
\end{proof}

\section{Proofs from Section~\ref{sect:inefficiency}}
We give the type schema for the Turing's fixed point combinator $\Theta$, 
needed for the encoding of Turing machines,
defined as follows:
\[
\Theta\defeq\theta\theta\qquad\qquad\text{where 
}\quad\theta\defeq\la\var{\la\vartwo{\vartwo(\var\var\vartwo)}}
\]
 In doing so, we can safely assume, when typing $\Theta$, that
the argument we plan to pass to it, will use its argument linearly, and let
us attribute $\Theta$ a type with this simplifying
assumption in mind.

Consider a list of types $\tyl\defeq\linty_k,\mydots,\linty_0$,
type type $\linty_i$ being the type one would like to attribute to
$\Theta\tm$ after $i$ unfolding steps inside the recursion.
We can first of all type $\Theta$ in the following way:
\[
\begin{array}{lll}\Theta:\tyF_0^{\tyl}\defeq{\arr{\tyt_0^{\tyl}}{\linty_0}} 
&\text{ 
	where }& 
\tyt_0^{\tyl}\defeq\mset{\tyy_0^{\tyl}} \\
&\text{ and }&
\tyy_0^{\tyl}\defeq\arr\emmset\linty_0
\end{array}
\]
Considering that 
$\Theta\tm\towh(\la\vartwo{\vartwo(\Theta\vartwo)})\tm\towh\tm(\Theta\tm)$, 
$\vartwo:\tyt_0^{\tyl}$ and $\tm:\tyt_0^{\tyl}$.
But this is not the end of the story. What if recursion is unfolded
more than once? These type schemes can be inductively defined to
accommodate the general case:
\[
\begin{array}{rcl}
\Theta:\tyF_{n+1}^{\tyl}&=&\arr{\tyt_{n+1}^{\tyl}}{{\linty_{n+1}}}\\[3pt]
\text{where}\quad\tyt_{n+1}^{\tyl}&=&\mset{\tyy_{n+1}^{\tyl},\mset{\tyt_{n}^{\tyl}}}\\[3pt]
\text{and}\quad\tyy_{n+1}^{\tyl}&=&\arr{\mset{{\linty_n}}}{{\linty_{n+1}}}
\end{array}
\] 
Again, we note that the $n+2$ leaves of $\tyt_{n+1}^{\tyl}$ correspond to the 
fact that $\tm:\tyt_{n+1}^{\tyl}$ is evaluated $n+2$ times. Moreover, notice 
that
$\tyt_{n}$, seen as a tree type, is sparse, having a topology identical to
the one in Figure~\ref{fig:secondfigure}. Moreover, tree types can be used to 
derive the
space consumption of the \LIAM when used to evaluate $\Theta$.

\begin{lemma}\label{lemma:vartree}
	For each $n\geq 0$, and for each list of types $\tyl$ such that 
	$\size\tyl\geq n+1$, $\wtjudg{\mset{\tyt_n^{\tyl}}}{\geq 
	(2n+1)\indet}{\vartwo}{\tyt_n^{\tyl}}$.
\end{lemma}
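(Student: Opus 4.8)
The plan is to argue by induction on $n$, building explicitly the (essentially forced) weighted derivation of $\wtjudg{\mset{\tyt_n^{\tyl}}}{w_n}{\vartwo}{\tyt_n^{\tyl}}$ and tracking its weight $w_n$. The side hypothesis $\size\tyl\geq n+1$ will play no role in the arithmetic: it is needed only so that $\tyt_n^{\tyl}$, which mentions $\linty_0,\mydots,\linty_n$, is a well-formed type. The one preliminary fact I will use is that $\ssize\linty\geq 0$ for every linear type $\linty$, an immediate induction on $\linty$ since $\ssize\initty=0$ and $\ssize{\arr\tty\linty}=\max\{\ssize\tty,\ssize\linty+1\}\geq\ssize\linty+1$; in particular $\ssize{\tyy_0^{\tyl}}=\ssize{\arr\emmset{\linty_0}}=\max\{\ssize\emmset,\ssize{\linty_0}+1\}\geq\ssize{\linty_0}+1\geq 1>0$.

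For the base case $n=0$, recall $\tyt_0^{\tyl}=\mset{\tyy_0^{\tyl}}$. I would start from the axiom $\tyvar$ deriving $\wtjudg{\vartwo:\mset{\tyy_0^{\tyl}}}{\ssize{\tyy_0^{\tyl}}}{\vartwo}{\tyy_0^{\tyl}}$ and apply one unary $\tymany$ rule. Since $\tymany$ nodifies the environment, the conclusion is $\wtjudg{\vartwo:\mset{\mset{\tyy_0^{\tyl}}}}{\indet+\ssize{\tyy_0^{\tyl}}}{\vartwo}{\mset{\tyy_0^{\tyl}}}$, i.e. exactly $\wtjudg{\mset{\tyt_0^{\tyl}}}{\indet+\ssize{\tyy_0^{\tyl}}}{\vartwo}{\tyt_0^{\tyl}}$, and $w_0=\indet+\ssize{\tyy_0^{\tyl}}\geq\indet=(2\cdot 0+1)\indet$ by the preliminary fact.

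For the inductive step, assume $\tyd_n\pof\wtjudg{\mset{\tyt_n^{\tyl}}}{w_n}{\vartwo}{\tyt_n^{\tyl}}$ with $w_n\geq(2n+1)\indet$, and recall $\tyt_{n+1}^{\tyl}=\mset{\tyy_{n+1}^{\tyl},\mset{\tyt_n^{\tyl}}}$. Applying a unary $\tymany$ to $\tyd_n$ yields $\tyd_n'\pof\wtjudg{\vartwo:\mset{\mset{\tyt_n^{\tyl}}}}{\indet+w_n}{\vartwo}{\mset{\tyt_n^{\tyl}}}$; let $\tyd_y$ be the axiom deriving $\wtjudg{\vartwo:\mset{\tyy_{n+1}^{\tyl}}}{\ssize{\tyy_{n+1}^{\tyl}}}{\vartwo}{\tyy_{n+1}^{\tyl}}$. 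A binary $\tymany$ with left premise $\tyd_y$ and right premise $\tyd_n'$ then derives $\vartwo$ with type $\mset{\tyy_{n+1}^{\tyl},\mset{\tyt_n^{\tyl}}}=\tyt_{n+1}^{\tyl}$; its environment is $\mset{(\vartwo:\mset{\tyy_{n+1}^{\tyl}})\uplus(\vartwo:\mset{\mset{\tyt_n^{\tyl}}})}=\mset{\vartwo:\mset{\tyy_{n+1}^{\tyl},\mset{\tyt_n^{\tyl}}}}=\mset{\tyt_{n+1}^{\tyl}}$, as required; and its weight is $w_{n+1}=\indet+\max\{\ssize{\tyy_{n+1}^{\tyl}},\,\indet+w_n\}\geq\indet+(\indet+w_n)\geq 2\indet+(2n+1)\indet=(2(n+1)+1)\indet$, closing the induction.

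The only delicate point, and the one I would be careful about, is the bookkeeping of type environments across the two $\tymany$ rules in the step: one has to check that the $\uplus$-merge of the premise environments, followed by the nodifying $\mset\cdot$ of $\tymany$, reproduces exactly $\mset{\tyt_{n+1}^{\tyl}}$, which is the computation displayed above; the rest is routine arithmetic with $\max$ and $\indet$. It is also worth recording, for the later use of this lemma inside the derivation of $\Theta:\tyF_n^{\tyl}$, that the derivation built above is in fact the unique one assigning $\tyt_n^{\tyl}$ to the variable $\vartwo$: the only rules producing a judgment on a variable are $\tyvar$ and $\tymany$, and in the rigid (ordered) setting the shape of $\tyt_n^{\tyl}$ as a tree forces the $\tymany$ applications level by level down to the leaves, so $w_n$ is not merely a lower bound for some derivation but the actual weight of the sub-derivation occurring when typing $\Theta$.
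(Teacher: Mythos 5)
Your proof is correct and follows essentially the same route as the paper's: induction on $n$, with the base case given by a $\tyvar$ axiom under a unary $\tymany$, and the step by a binary $\tymany$ whose left premise is the axiom for $\tyy_{n+1}^{\tyl}$ and whose right premise is a unary $\tymany$ over the inductive hypothesis, yielding the same weight computation $\indet+\indet+w_n\geq(2(n+1)+1)\indet$. The extra bookkeeping on environments and the uniqueness remark are sound but not needed beyond what the paper records.
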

\begin{proof}
	We proceed by induction on $n$. If $n=0$, we can type $\vartwo$ as follows:
	\[
	\infer{\wtjudg{\vartwo:\mset{\mset{\arr\emmset{\linty_0}}}}{\geq \indet}
		{\vartwo}{\mset{\arr\emmset{\linty_0}}}}{
		\infer{\wtjudg{\vartwo:\mset{\arr\emmset{\linty_0}}}{\geq 0}
			{\vartwo}{\arr\emmset{\linty_0}}}{}}
	\]
	Case $n+1$:
	\[
	\infer{\wtjudg{\vartwo:\mset{\mset{\tyy_{n+1}^{\tyl},\mset{\tyt_{n}^{\tyl}}}}}{\geq(2(n+1)+1)\indet}{\vartwo}
		{\mset{\tyy_{n+1}^{\tyl},\mset{\tyt_{n}^{\tyl}}}}}{
		\infer{\wtjudg{\vartwo:\mset{\tyy_{n+1}^{\tyl}}}{\geq 
		0}{\vartwo}{\tyy_{n+1}^{\tyl}}}{}&
		\infer{\wtjudg{\vartwo:\mset{\mset{\tyt_{n}^{\tyl}}}}{\geq(2n+2)\indet}{\vartwo}{\mset{\tyt_{n}^{\tyl}}}}{
			\infer{\wtjudg{\vartwo:\mset{\tyt_{n}^{\tyl}}}{\geq(2n+1)\indet}{\vartwo}{\tyt_{n}^{\tyl}}}{\ih}}}
	\]
\end{proof}

\begin{proposition}
	For each $n\geq 0$, and for each list of types $\tyl$ such that 
	$\size\tyl\geq n+1$, $\wtjudg{}{\geq (2n+1)\indet}{\Theta}{\tyF_n^{\tyl}}$.
\end{proposition}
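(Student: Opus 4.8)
The plan is to exhibit a weighted tree type derivation $\tyd\pof\wtjudg{}{w}{\Theta}{\tyF_n^\tyl}$ and then bound $w$ from below. For the first step I would build $\tyd$ by induction on $n$, following the recursive shape of $\tyF_n^\tyl$. Write $\Theta=\theta\theta$ with $\theta=\la\var{\la\vartwo{\vartwo(\var\var\vartwo)}}$. The left occurrence of $\theta$ is typed $\arr{S_n}{\tyF_n^\tyl}$ and the right one $S_n$, where $S_n$ is an auxiliary tree type defined by a mutual recursion with $\tyF_n^\tyl$ (one natural choice is $S_0\defeq\emmset$ and $S_{n+1}\defeq\mset{\arr{S_n}{\tyF_n^\tyl},S_n}$); inside the body of the left $\theta$, the head occurrence of $\vartwo$ receives the linear leaf $\tyy_n^\tyl$ of $\tyt_n^\tyl$, the subterm $\var\var$ is typed $\tyF_{n-1}^\tyl$, and the trailing $\vartwo$ of $\var\var\vartwo$ is typed with the tree type $\tyt_{n-1}^\tyl$ and environment $\vartwo:\mset{\tyt_{n-1}^\tyl}$ --- which is exactly the derivation supplied by Lemma~\ref{lemma:vartree} at index $n-1$ (the environment is forced to have this very tree shape by Lemma~\ref{l:relevance}). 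For $n=0$ the body degenerates to $\var\var\vartwo:\emmset$ via $\tynone$, so no recursion is needed. This bookkeeping for the self-application is the fiddly part, but it is routine; the hypothesis $\size\tyl\geq n+1$ guarantees that all the types $\linty_0,\ldots,\linty_n$ referenced in $\tyt_n^\tyl$ are available.

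Granted $\tyd$, the lower bound on $w$ is essentially a type-size computation. The key remark is that inside the left $\theta$ the abstraction $\la\vartwo{\vartwo(\var\var\vartwo)}$ is the body of the outer $\la\var$, hence it is assigned by a $\tylam$ rule precisely the type $\arr{\tyt_n^\tyl}{\linty_n}=\tyF_n^\tyl$; by the weight rule for $\tylam$ in Figure~\ref{fig:space-weigths} the weight $v$ of that judgment satisfies $v\geq\ssize{\arr{\tyt_n^\tyl}{\linty_n}}=\max\{\ssize{\tyt_n^\tyl},\ssize{\linty_n}+1\}\geq\ssize{\tyt_n^\tyl}$.

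It then remains to prove $\ssize{\tyt_n^\tyl}\geq(2n+1)\indet$, which I would do by induction on $n$. For $n=0$, $\ssize{\tyt_0^\tyl}=\ssize{\mset{\tyy_0^\tyl}}=\indet+\ssize{\tyy_0^\tyl}\geq\indet$, using that $\ssize{\gty}\geq0$ for every type $\gty$ (a trivial induction on types). For the step, from $\tyt_{n+1}^\tyl=\mset{\tyy_{n+1}^\tyl,\mset{\tyt_n^\tyl}}$ one computes $\ssize{\tyt_{n+1}^\tyl}=\indet+\max\{\ssize{\tyy_{n+1}^\tyl},\indet+\ssize{\tyt_n^\tyl}\}\geq 2\indet+\ssize{\tyt_n^\tyl}\geq 2\indet+(2n+1)\indet=(2(n+1)+1)\indet$ by the induction hypothesis. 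Hence $v\geq\ssize{\tyt_n^\tyl}\geq(2n+1)\indet$.

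Finally I would conclude with Lemma~\ref{l:weight}: applied to $\tyd$ and to the internal judgment discussed above (of weight $v$), it yields $w\geq v+\tlm{\ruleoc}\geq v\geq(2n+1)\indet$, since $\tlm{\ruleoc}\geq0$. This gives $\wtjudg{}{\geq(2n+1)\indet}{\Theta}{\tyF_n^\tyl}$, as required. The main obstacle is the first step, the explicit construction of the self-application derivation $\tyd$ (tracking the auxiliary types $S_n$ and invoking Lemma~\ref{lemma:vartree} at the successive unfolding levels); once $\tyd$ is in hand the weight estimate is immediate, the conceptual point being that the sparse tree $\tyt_n^\tyl$ has depth, hence type-size, linear in $n$, and this size surfaces at the inner $\la\vartwo$-abstraction of $\theta$.
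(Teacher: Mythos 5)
Your weight computation is correct and is in fact a nice, robust variant of the paper's: once a derivation with the right types is in hand, the lower bound $(2n+1)\indet$ falls out of $\ssize{\tyt_n^{\tyl}}\geq(2n+1)\indet$ via the $\tylam$ weight rule and Lemma~\ref{l:weight} (one could even apply Lemma~\ref{l:weightfour} directly to the root type $\tyF_n^{\tyl}$), without threading weights through the whole induction as the paper does. The problem is the part you defer as ``routine'': the construction of the derivation for the self-application $\theta\theta$. This is where the actual difficulty of the statement lives, and your sketch of it does not go through. The culprit is the nodification in rule $\tymany$: its conclusion environment is $\mset{\uplus_i\tye_i}$, so when the body judgment $\var\var\vartwo:\linty_n$ is wrapped into $\mset{\linty_n}$ to feed the head occurrence $\vartwo:\arr{\mset{\linty_n}}{\linty_{n+1}}$, the $\var$-environment acquires an extra nesting level, and the contribution of the second occurrence of $\var$ (typed with the tree type $S_n$) is the nodified union of the \emph{leaf} environments of that sub-derivation, not $\mset{S_n}$ itself --- in particular any $\emmset$ components of $S_n$ leave no trace in the environment. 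Already for $n=1$ your recursion $S_{n+1}\defeq\mset{\arr{S_n}{\tyF_n^{\tyl}},S_n}$ fails to close: the environment actually produced is $\mset{\mset{\arr{\emmset}{\tyF_0^{\tyl}}}}$, which is not $\mset{S_1}=\mset{\mset{\arr{\emmset}{\tyF_0^{\tyl}},\emmset}}$. A correct $S_n$ exists but must be defined through the environment-extraction of the $\var:S_{n-1}$ sub-derivation, which is exactly the bookkeeping you were hoping to avoid.

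The paper sidesteps all of this with a device you missed: since $\Theta\towh\la\vartwo{\vartwo(\Theta\vartwo)}=:\Theta_2$ and tree types are closed under subject expansion, it suffices to type $\Theta_2$. In $\Theta_2$ the recursion sits in the \emph{subterm} $\Theta$ of $\Theta\vartwo$, so the induction hypothesis $\Theta:\tyF_n^{\tyl}$ can be plugged in directly as the left premise of an application, with Lemma~\ref{lemma:vartree} supplying the right premise $\vartwo:\tyt_n^{\tyl}$; no self-application ever has to be typed explicitly. I would recommend either adopting that route, or, if you insist on typing $\theta\theta$ directly, actually carrying out the computation of $S_n$ --- it is not routine, and as stated your derivation does not exist.
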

\begin{proof}
	Since 
	$\Theta\to\la\vartwo\vartwo(\Theta\vartwo)=:\Theta_2$
	and types are preserved by reduction and expansion, we type $\Theta_2$. 
	We proceed by induction on $n$. If $n=0$, we can  type $\Theta_{(2)}$ 
	as follows:
	\[
	\infer{\wtjudg{}{\geq\indet}{\Theta_2\defeq\la\vartwo{\vartwo(\Theta\vartwo)}}{
			\arr{\mset{\arr{\emmset}{\linty_0}}}{\linty_0}}}
	{\infer{\wtjudg{\vartwo:\mset{\arr{\emmset}{\linty_0}}}{\geq 0} 
			{\vartwo(\Theta\vartwo)}{\linty_0}}
		{\infer{\wtjudg{\vartwo:\mset{\arr{\emmset}{\linty_0}}}{\geq 
		0}{\vartwo}{\arr{\emmset}{\linty_0}}}{}
			& \infer{}{}}}
	\]
	Now, we prove that $\Theta$ can be typed by $\tyF_{n+1}^{\tyl}$, 
	knowing that by \ih\ it can be typed with $\tyF_{n}^{\tyl}$. 
	%
	%
	%
	%
	%
	
	\[
	\infer{\wtjudg{}{\geq(2(n+1)+1)\indet}{\Theta_2\defeq\la\vartwo\vartwo(\Theta\vartwo)}
	 {\arr{\mset{\tyy_{n+1}^{\tyl},\mset{\tyt_n^{\tyl}}}}
			{{\linty_{n+1}}}=:\tyF_{n+1}^{\tyl}}}
	{\infer{\wtjudg{\vartwo:\mset{\tyy_{n+1}^{\tyl},\mset{\tyt_n^{\tyl}}}}{\geq(2n+2)\indet}
	 {\vartwo(\Theta\vartwo)}{{\linty_{n+1}}}}{
			\infer{\wtjudg{\vartwo:\mset{\tyy_{n+1}^{\tyl}}}{\geq 
			\indet}{\vartwo}{\tyy_{n+1}^{\tyl}
					\defeq\arr{\mset{\linty_n}}{\linty_{n+1}}}}{}
			& 
			\infer{\wtjudg{\vartwo:\mset{\mset{\tyt_{n}^{\tyl}}}}{\geq(2n+2)\indet}
				{\Theta\vartwo}{\mset{\linty_n}}}{ 
				\infer{\wtjudg{\vartwo:\mset{\tyt_{n}^{\tyl}}}{\geq(2n+1)\indet}
				 {\Theta\vartwo}{\linty_n}}
				{
					\infer{\wtjudg{}{\geq(2n+1)\indet}{\Theta}{\arr{\tyt_{n}^{\tyl}}
							{\linty_n}=:\tyF_n^{\tyl}}}{\ih} & 
					\infer{\wtjudg{\vartwo:\mset{\tyt_{n}^{\tyl}}}{\geq(2n+1)\indet}{\vartwo}{\tyt_{n}^{\tyl}}}{\text{Lemma
						}\ref{lemma:vartree}}
				}}}}
				\]
			\end{proof}
\subsection{The Witness of Space Efficiency}

Let us consider the following families of terms:
\[
\tm_0\defeq\la\var\var\qquad\qquad\tm_{n+1}\defeq(\la\var\var\var)\tm_n
\]
One immediately realizes that $\tm_n$ needs an exponential number of $\beta$ 
steps $\#\beta(\tm_n)$ to reduce to normal form and that the size of $\tm_n$ is 
linear in $n$. More precisely, 
$\#\beta(\tm_n)=2^{n+1}-2$. In order to measure the space consumption on the 
\LIAM, we proceed by giving $\tm_n$ a suitable family of types.
\begin{proposition}
	For each linear type $\linty=\arr{\mset{\lintytwo}}{\lintytwo}$, 
	$\wtjudg{}{2n\indet}{\tm_n}{\linty}$.
\end{proposition}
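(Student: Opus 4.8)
The plan is to prove a parametrized strengthening by induction on $n$ and then specialize. For $k\geq 0$ define a tower of linear types by $\linty^{(0)}\defeq\linty=\arr{\mset\lintytwo}{\lintytwo}$ and $\linty^{(k+1)}\defeq\arr{\mset{\linty^{(k)}}}{\linty^{(k)}}$; unfolding the type–weight clauses of \reffig{space-weigths} gives $\ssize{\linty^{(k)}}=(k+1)\indet+\ssize\lintytwo$. The statement I would actually establish is: for all $n,k\geq0$ there is a weighted derivation $\tyd_{n,k}\pof\wtjudg{}{(2n+k+1)\indet+\ssize\lintytwo}{\tm_n}{\linty^{(k)}}$. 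Taking $k=0$ yields $\wtjudg{}{(2n+1)\indet+\ssize\lintytwo}{\tm_n}{\linty}$, which is in particular $\geq 2n\indet$ — the announced linear-in-$n$ bound — and, combined with the exact-bounds corollary (weight of a derivation $=$ space of the corresponding \LIAM run), shows that the \LIAM evaluates $\tm_n$ in space linear in $n$ even though it performs $2^{n+1}-2$ $\beta$-steps. (That $\tm_n$ is typable at all is immediate from subject expansion, \refprop{subj-exp}, since $\tm_n\towh^*\la\var\var$, but the weight requires the explicit derivations $\tyd_{n,k}$.)

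For the base case $n=0$, $\tm_0=\la\var\var$ is typed by a single $\tyvar$ followed by $\tylam$, giving $\tm_0:\linty^{(k)}$ with weight $\max\{\ssize{\linty^{(k-1)}},\ssize{\linty^{(k)}}\}=\ssize{\linty^{(k)}}=(k+1)\indet+\ssize\lintytwo$ (for $k=0$ the premise is $\var:\mset\lintytwo\vdash\var:\lintytwo$). For the step, assume $\tyd_{n,k}$ for all $k$ and build $\tyd_{n+1,k}$ from $\tm_{n+1}=(\la\var\var\var)\tm_n$. The key point is that in any typing of $\la\var\var\var$ with target $\linty^{(k)}$ the function occurrence of $\var$ in the inner application must get the linear type $\arr{\mset{\linty^{(k)}}}{\linty^{(k)}}=\linty^{(k+1)}$ and the argument occurrence a one-leaf tree $\mset{\linty^{(k)}}$, so — merging the two axioms through $\tyapp$ (using \reflemma{relevance}) — the type of $\var$ is forced to be $\tty_k\defeq\mset{\linty^{(k+1)},\mset{\linty^{(k)}}}$ and $\la\var\var\var:\arr{\tty_k}{\linty^{(k)}}$. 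I then assemble: (i) the derivation of $\la\var\var\var:\arr{\tty_k}{\linty^{(k)}}$ by $\tyvar$, a one-premise $\tymany$, $\tyapp$ and $\tylam$, whose weight reads off as $\max\{(k+2)\indet+\ssize\lintytwo,\ssize{\arr{\tty_k}{\linty^{(k)}}}\}=(k+3)\indet+\ssize\lintytwo$; (ii) a derivation of $\tm_n:\tty_k$ via a two-premise $\tymany$ with premises $\tyd_{n,k+1}$ and a one-premise $\tymany$ over $\tyd_{n,k}$ (producing $\tm_n:\mset{\linty^{(k)}}$), of weight $\indet+\max\{(2n+k+2)\indet+\ssize\lintytwo,\ \indet+(2n+k+1)\indet+\ssize\lintytwo\}=(2n+k+3)\indet+\ssize\lintytwo$. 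Applying $\tyapp$ to (i) and (ii) gives $\tm_{n+1}:\linty^{(k)}$ with weight $\max\{(k+3)\indet+\ssize\lintytwo,(2n+k+3)\indet+\ssize\lintytwo\}=(2(n+1)+k+1)\indet+\ssize\lintytwo$, closing the induction.

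The only genuine difficulty is guessing the invariant: one cannot close the induction at a fixed target type, because typing $\tm_{n+1}$ uses $\tm_n$ both ``one level up'' (as $\arr{\mset{\linty^{(k)}}}{\linty^{(k)}}$) and ``at the same level'' (as $\linty^{(k)}$), which forces the tower $\linty^{(k)}$ — exactly parallel to the role of $\tyt_n^{\tyl}$ for $\Theta$ in Lemma~\ref{lemma:vartree}. After that it is a mechanical unfolding of the weight annotations; the bookkeeping point to keep in mind is that every $\tymany$ contributes an additive $\indet$, including the one-premise ones that merely nodify a type to match what the previous term's derivation outputs, and these are what accumulate the $2n\indet$ (roughly $n$ from the growing left spine of the arrows $\linty^{(k)}$ and $n$ from the chain of $\tymany$-nodifications), while $\ssize\lintytwo$ and one extra $\indet$ are lower-order and absorbed into the stated bound. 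If one wants the weight to be \emph{exactly} $(2n+1)\indet+\ssize\lintytwo$ rather than merely bounded below, one additionally checks that $\tyd_{n,k}$ is optimal, which follows because $\tty_k$ is forced up to reshapings that do not decrease $\ssize{\cdot}$: by \reflemma{relevance} and $\tm_n\towh^*\la\var\var$, any type of $\tm_n$ of shape $\arr{\mset{\hat\gty}}{\linty^{(k)}}$ must have $\hat\gty=\linty^{(k)}$.
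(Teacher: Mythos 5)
Your construction is essentially the paper's: the derivation tree you build for $\tm_{n+1}$ (a $\tylam$/$\tyapp$/$\tyvar$ block for $\la\var{\var\var}$, applied to a two-premise $\tymany$ whose branches type $\tm_n$ once ``one level up'' and once, under a one-premise $\tymany$, ``at the same level'') is rule-for-rule the one in the Appendix, and your explicit tower $\linty^{(k)}$ is just an unfolding of the paper's device of stating the proposition for \emph{every} $\lintytwo$ and instantiating the induction hypothesis twice, at $\lintytwo$ and at $\linty$ itself. The one substantive divergence is in the weight accounting. You apply the clauses of \reffig{space-weigths} literally, so the axioms on $\var:\linty^{(k)}$ contribute $\ssize{\linty^{(k)}}=(k+1)\indet+\ssize\lintytwo$ and you land on $(2n+1)\indet+\ssize\lintytwo$ at $k=0$ rather than the stated $2n\indet$; the paper instead invokes the \emph{generalized} machine that follows $\linty$ in place of $\initty$, under which $\linty$ is treated as atomic, the $\tyvar$ leaves get weight $0$, and the annotations collapse to exactly $2n\indet$. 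So strictly speaking you prove the right statement up to an additive $\indet+\ssize\lintytwo$, which is all that is needed for the $\Theta(n\log n)$ conclusion, while the literal equality $2n\indet$ only holds under the paper's (explicitly flagged) convention --- a convention your version makes visible rather than hides, which is arguably a feature. Two small caveats: the closing remark about $\tty_k$ being ``forced'' is neither needed (the proposition only asserts existence of a derivation of that weight) nor fully argued, and your phrasing of the $k=0$ conclusion as a lower bound ``$\geq 2n\indet$'' points the wrong way --- the interest of this family is that the weight, hence the space, is \emph{at most} linear in $n$ despite the exponential number of $\beta$-steps, which your exact computation of the weight does deliver.
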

%

Case $n=0$:
\[
\infer[\tylam]{\tjudgw{}{0}{\la\var\var}{\arr{\mset{\linty}}{\linty}}}{
	\infer[\tyvar]{\tjudg{\var:\mset{\linty}}{\var}{\linty}}{}}
\]
Please notice that in this case we are using the generalized machine that can 
follow also types different from $\initty$, namely 
$\arr{\mset{\linty}}{\linty}$ in this case. This is why the judgment is marked 
with weight $0$: it is a final state.

Case $n+1$:
\[{
	\infer[\tyapp]{\tjudgw{}{2(n+1)\indet}{\tm_{n+1}\defeq(\la\var\var\var)\tm_n}{\linty=:\arr{\mset{\lintytwo}}{\lintytwo}}}{
		\infer[\tylam]{\tjudgw{}{2\indet}{\la\var{\var\var}}{\arr{\mset{\arr{\mset{\linty}}{\linty},\mset{\linty}}}{\linty}}}{
			\infer[\tyapp]{\tjudgw{\var:\mset{\arr{\mset{\linty}}{\linty},\mset{\linty}}}{\indet}{\var\var}{\linty}}{
				\infer[\tyvar]{\tjudgw{\var:\mset{\arr{\mset{\linty}}{\linty}}}{\indet}{\var}{\arr{\mset{\linty}}{\linty}}}{}
				& 	
				\infer[\tymanys]{\tjudgw{\var:\mset{\mset{\linty}}}{\indet}{\var}{\mset{\linty}}}
				{\infer[\tyvar]{\tjudgw{\var:\mset{\linty}}{0}{\var}{\linty}}{}}}}
		&
		\infer[\tymanys]{\tjudgw{}{2(n+1)\indet}{\tm_n}{\mset{\arr{\mset{\linty}}
					{\linty},\mset{\linty}}}}{
			\infer{\tjudgw{}{2n\indet}{\tm_n}{\arr{\mset{\linty}}
					{\linty}}}{\ih} & 
					\infer[\tymanys]{\tjudgw{}{(2n+1)\indet}{\tm_n}{{\mset{\linty}}
					}}{\infer{\tjudgw{}{2n\indet}{\tm_n}{{\linty}}}{\ih}}}}
}\]
Since $\indet=\log(\size{\tm_n})=\log(n)$, we have that the space consumption 
of the \LIAM is $\Theta(n\log n)$, \ie quasi-logarithmic in $\#\beta$.

\end{document}